\newcommand{\bR}{\mathbb{R}}
\newcommand{\mD}{\mathcal{D}}
\newcommand{\mL}{\mathcal{L}}
\newcommand{\mG}{\mathcal{G}}
\newcommand{\mN}{\mathcal{N}}
\newcommand{\mP}{\mathcal{P}}
\newcommand{\mQ}{\mathcal{Q}}
\newcommand{\mR}{\mathcal{R}}
\newcommand{\mU}{\mathcal{U}}
\newcommand{\TV}{d_{\text{TV}}}
\newcommand{\KL}{d_{\text{KL}}}
\newcommand{\bP}{\mathbb{P}}
\newcommand{\bE}{\mathbb{E}}
\newcommand{\pr}[1]{\textsc{#1}}
\newtheorem{theorem}{Theorem}[section]
\newtheorem{definition}[theorem]{Definition}
\newtheorem{fact}[theorem]{Fact}
\newtheorem{proposition}[theorem]{Proposition}
\newtheorem{corollary}[theorem]{Corollary}
\newtheorem{conjecture}[theorem]{Conjecture}
\newtheorem{lemma}[theorem]{Lemma}
\newtheorem{remark}[theorem]{Remark}
\newcommand{\ind}{\mathds{1}}
\newcommand{\la}{\langle}
\newcommand{\ra}{\rangle}
\newcommand{\cS}{\mathcal{S}}
\newcommand{\cF}{\mathcal{F}}
\newcommand{\cD}{\mathcal{D}}
\newcommand{\cZ}{\mathcal{Z}}
\newcommand{\cX}{\mathcal{X}}
\newcommand{\Dh}{\widehat{D}}
\renewcommand{\P}{\mathbb{P}}
\newcommand{\E}{\mathbb{E}}
\newcommand{\Ot}{\tilde{O}}
\newcommand{\inv}{^{-1}}
\newcommand{\lr}{\mathsf{LR}}
\newcommand{\lrd}{\lr^{\leq D}}
\newcommand{\kpds}{{$k$-\textsc{pds}}}
\newcommand{\kpc}{{$k$-\textsc{pc}}}
\newenvironment{fminipage}%
  {\begin{Sbox}\begin{minipage}}%
  {\end{minipage}\end{Sbox}\fbox{\TheSbox}}
\newenvironment{algbox}[0]{\vskip 0.2in
\noindent 
\begin{fminipage}{6.3in}
}{
\end{fminipage}
\vskip 0.2in
}
\begin{document}

\title{Average-Case Lower Bounds for Learning Sparse Mixtures, \\ Robust Estimation and Semirandom Adversaries}

\author{Matthew Brennan
\thanks{Massachusetts Institute of Technology. Department of EECS. Email: \texttt{brennanm@mit.edu}.}
\and 
Guy Bresler
\thanks{Massachusetts Institute of Technology. Department of EECS. Email: \texttt{guy@mit.edu}.}
}
\date{\today}

\maketitle

\begin{abstract}

This paper develops several average-case reduction techniques to show new hardness results for three central high-dimensional statistics problems, implying a statistical-computational gap induced by robustness, a detection-recovery gap and a universality principle for these gaps. A main feature of our approach is to map to these problems via a common intermediate problem that we introduce, which we call Imbalanced Sparse Gaussian Mixtures.
We assume the planted clique conjecture for a version of the planted clique problem where the position of the planted clique is mildly constrained, and from this obtain the following computational lower bounds that are tight against efficient algorithms:
\begin{itemize}
\item \textbf{Robust Sparse Mean Estimation:} Estimating an unknown $k$-sparse mean of $n$ samples from a $d$-dimensional Gaussian is a gapless problem, with the truncated empirical mean achieving the optimal sample complexity of $n = \tilde{\Theta}(k)$. However, if an $\epsilon$-fraction of these samples are corrupted, the best known polynomial time algorithms require $n = \tilde{\Omega}(k^2)$ samples. We give a reduction showing that this sample complexity is necessary, providing the first average-case complexity evidence for a conjecture of \cite{li2017robust, balakrishnan2017computationally}. Our reduction also shows that this statistical-computational gap persists even for algorithms estimating within a suboptimal $\ell_2$ rate of $\tilde{O}(\sqrt{\epsilon})$, rather than the minimax rate of $O(\epsilon)$.
\item \textbf{Semirandom Community Recovery:} The problem of finding a $k$-subgraph with elevated edge density $p$ within an Erd\H{o}s-R\'{e}nyi graph on $n$ vertices with edge density $q = \Omega(p)$ is a canonical problem believed to have different computational limits for detection and for recovery. The conjectured detection threshold has been established through reductions from planted clique, but the recovery threshold has remained elusive.
 We give a reduction showing that the detection and recovery thresholds coincide when the graph is perturbed by a \emph{semirandom adversary}, in the case where $q$ is constant. 
 This yields the first average-case evidence towards the recovery conjectures of \cite{chen2016statistical, hajek2015computational, brennan2018reducibility, brennan2019universality}.
\item \textbf{Universality of $k$-to-$k^2$ Gaps in Sparse Mixtures:} Extending the techniques for our other two reductions, we also show a universality principle for computational lower bounds at $n = \tilde{\Omega}(k^2)$ samples for learning $k$-sparse mixtures, under mild conditions on the likelihood ratios of the planted marginals.
Our result extracts the underlying problem structure leading to a gap, independently of distributional specifics, and is a first step towards explaining the ubiquity of $k$-to-$k^2$ gaps in high-dimensional statistics.
\end{itemize}
Our reductions produce structured and Gaussianized versions of an input graph problem, and then rotate these high-dimensional Gaussians by matrices carefully constructed from hyperplanes in $\mathbb{F}_r^t$. For our universality result, we introduce a new method to perform an algorithmic change of measure tailored to sparse mixtures. We also provide evidence that the mild promise in our variant of planted clique does not change the complexity of the problem.
\end{abstract}

\pagebreak

\tableofcontents

\pagebreak


\section{Introduction}



A primary aim of the field of mathematical statistics is to determine how much data is needed for various estimation tasks, and to analyze the performance of practical algorithms. 
For a century, the focus has been on \emph{information-theoretic} limits. However, the study of high-dimensional structured estimation problems over the last two decades has revealed that the much more relevant quantity -- the amount of data needed by \emph{computationally efficient} algorithms -- may be significantly higher than what is achievable without computational constraints. 

Because data in real-world problems is not adversarially generated, the mathematical analysis of estimation problems typically assumes a probabilistic model on the data. In computer science, combinatorial problems with random inputs have been studied since the 1970's \cite{karp1977probabilistic,kuvcera1977expected}. In the 1980's, Levin's theory of average-case complexity \cite{levin1986average} crystallized the notion of an average-case reduction and obtained completeness results.
Average-case hardness reductions are notoriously delicate, requiring that a distribution over instances in a conjecturally hard problem be mapped precisely to the target distribution, making gadget-based reductions from worst-case complexity ineffective. For this reason, much of the recent work showing hardness for statistical problems shows hardness for \emph{restricted models of computation} (or equivalently, classes of algorithms), such as statistical query algorithms \cite{feldman2013statistical,feldman2015complexity,diakonikolas2017statistical,diakonikolas2019efficient}, sum of squares \cite{barak2016nearly,hopkins2017power}, classes of circuits \cite{razborov1997natural,rossman2008constant,rossman2014monotone}, local algorithms \cite{gamarnik2017limits,linial1992locality}, message-passing \cite{zdeborova2016statistical,lesieur2015mmse,lesieur2016phase,krzakala2007gibbs,ricci2018typology}, and others. Another line of work attempts to understand computational limitations via properties of the energy landscape of solutions to estimation problems \cite{achlioptas2008algorithmic, gamarnik2017high,arous2017landscape, arous2018algorithmic,ros2019complex,chen2019suboptimality, gamarnik2019landscape}. 

This paper develops several average-case reduction techniques to show new hardness results for three central high-dimensional statistics problems. We assume the planted clique conjecture for a version of the planted clique problem where the position of the planted clique is mildly constrained. As discussed below, planted clique is known to be hard for all well-studied restricted models of computation, and we confirm in Section~\ref{sec:evidencekpc} for statistical query algorithms and low-degree polynomial tests 
that this hardness remains unchanged for our modification.
Aside from the advantage of being future-proof against new classes of algorithms, showing that a problem of interest is hard by reducing from planted clique
effectively \emph{subsumes} hardness in the restricted models and thus gives much stronger evidence for hardness.

%

\subsection{Results}

We show within the framework of average-case complexity that robustness, in the form of model misspecification or constrained adversaries, introduces statistical-computational gaps in several natural problems. In doing so, we develop a number of new techniques for average-case reductions and arrive at a universality class of problems with statistical-computational gaps. 
 \emph{Robust problem formulations} are somewhere between worst-case and average-case, and indeed are strictly harder to solve than the corresponding purely average-case formulations, which is captured in our reductions. The first problem we consider is robust sparse mean estimation, the problem of sparse mean estimation in Huber's contamination model \cite{huber1992robust, huber1965robust}.  Lower bounds for robust sparse mean estimation have been shown for statistical query algorithms \cite{diakonikolas2017statistical}, but to the best of our knowledge our reductions are the first average-case hardness results that capture \emph{hardness induced by robustness}. The second problem, semirandom community recovery, is the planted dense subgraph problem under a semirandom (monotone) adversary \cite{feige2001heuristics,feige2000finding}. It is surprising that reductions from planted clique can capture the shift in hardness threshold due to semirandom adversaries, since the semirandom version of
 planted clique has the same threshold as the original \cite{feige2000finding}. 

Both of these robust problems are mapped to via an intermediate problem that we introduce, Imbalanced Sparse Gaussian Mixtures (ISGM). We also develop a new technique to reduce from ISGM to a \emph{universality class} of sparse mixtures, which shows that the $k$ to $k^2$ statistical-computational gaps in many statistical estimation problems are not coincidental and in particular do not depend on specifics of the distributions involved. The fact that all of our reductions go through ISGM unifies the techniques associated to each of the reductions and illustrates the utility of constructing reductions that use natural problems as intermediates, a perspective espoused in \cite{brennan2018reducibility}. By constructing precise distributional maps preserving the canonical simple-versus-simple problem formulations, the maps can be composed, which greatly facilitates reducing to new problems. 

This work is closely related to a recent line of research showing lower bounds for average-case problems in computer science and statistics based on the planted clique (\pr{pc}) conjecture. The \pr{pc} conjecture asserts that there is no polynomial time algorithm to detect a clique of size $k = o(\sqrt{n})$ randomly planted in the Erd\H{o}s-R\'{e}nyi graph $\mG(n, 1/2)$. The \pr{pc} conjecture has been used to show lower bounds for testing $k$-wise independence \cite{alon2007testing}, biclustering \cite{ma2015computational, cai2015computational, caiwu2018}, community detection \cite{hajek2015computational}, RIP certification \cite{wang2016average, koiran2014hidden}, matrix completion \cite{chen2015incoherence}, sparse PCA \cite{berthet2013optimal, berthet2013complexity, wang2016statistical, gao2017sparse, brennan2019optimal}, universal submatrix detection \cite{brennan2019universality} and to give a web of average-case reductions to a number of additional problems in \cite{brennan2018reducibility}. The present paper considerably grows the set of statistical problems to which there are precise distribution-preserving average-case reductions.  

In this work, we assume the hardness of \pr{pc} up to $k = o(\sqrt{n})$ under a mild promise on the vertex set of its hidden clique. Specifically, we assume that we are given a partition of its vertex set into $k$ sets of equal size with the guarantee that the hidden clique has exactly one vertex in each set. This mild promise does not asymptotically affect any reasonable notion of entropy of the distribution over the hidden vertex set, scaling it by $1-o(1)$. In Section \ref{sec:evidencekpc}, we provide evidence that this $k$-partite promise does not affect the complexity of planted clique, by showing hardness up to $k = o(\sqrt{n})$ for low-degree polynomial tests and statistical query algorithms. In the cryptography literature, this form of promise is referred to as a small amount of information leakage about the secret. The hardness of the Learning With Errors (LWE) problem has been shown to be robust to leakage \cite{goldwasser10}, and it is left as an interesting open problem to show that a similar statement holds true for \textsc{pc}. Our main results are that this $k$-partite \pr{pc} conjecture ($k\pr{-pc}$) implies the statistical-computational gaps outlined below.

\paragraph{Robust Sparse Mean Estimation.} In sparse mean estimation, the observations $X_1, X_2, \dots, X_n$ are $n$ independent samples from $\mN(\mu, I_d)$ where $\mu$ is an unknown $k$-sparse vector in $\mathbb{R}^d$ and the task is to estimate $\mu$ closely in $\ell_2$ norm. This is a gapless problem, with the efficiently computable truncated empirical mean achieving the optimal sample complexity of $n = \tilde{\Theta}(k)$. However, if an $\epsilon$-fraction of these samples are adversarially corrupted, the best known polynomial time algorithms require $n = \tilde{\Omega}(k^2)$ samples while $n = \tilde{\Omega}(k)$ remains the information-theoretically optimal sample complexity. We give a reduction showing that robust sparse mean estimation is $k$\pr{-pc} hard if $n = \tilde{o}(k^2)$. This yields the first average-case evidence for this statistical-computational gap, which was originally conjectured in \cite{li2017robust, balakrishnan2017computationally}. Our reduction also shows that this statistical-computational gap persists even for algorithms estimating within a suboptimal $\ell_2$ rate of $\tilde{O}(\sqrt{\epsilon})$, rather than the minimax rate of $O(\epsilon)$. The reductions showing these lower bounds can be found in Sections \ref{sec:redisgm} and \ref{sec:robust}. 

\paragraph{Semirandom Community Recovery.} In the planted dense subgraph model of single community recovery, one observes a graph sampled from $\mG(n, q)$ with a random subgraph on $k$ vertices replaced with a sample from $\mG(k, p)$, where $p > q$ are allowed to vary with $n$ and satisfy that $p = O(q)$. Detection and recovery of the hidden community in this model have been studied extensively \cite{arias2014community, butucea2013detection, verzelen2015community, hajek2015computational,chen2016statistical, hajek2016information, montanari2015finding, candogan2018finding} and this model has emerged as a canonical example of a problem with a detection-recovery computational gap. While it is possible to efficiently detect the presence of a hidden subgraph of size $k=\tilde \Omega(\sqrt{n})$ if $(p - q)^2/q(1 - q) = \tilde{\Omega}(n^2/k^4)$, the best known polynomial time algorithms to \emph{recover} the subgraph require a higher signal of $(p - q)^2/q(1 - q) = \tilde{\Omega}(n/k^2)$.

In each of \cite{hajek2015computational, brennan2018reducibility} and \cite{brennan2019universality}, it has been conjectured that the recovery problem is hard below this threshold of $\tilde{\Theta}(n/k^2)$. This \pr{pds} Recovery Conjecture was even used in \cite{brennan2018reducibility} as a hardness assumption to show detection-recovery gaps in other problems including biased sparse PCA and Gaussian biclustering. A line of work has tightly established the conjectured detection threshold through reductions from the \pr{pc} conjecture \cite{hajek2015computational, brennan2018reducibility, brennan2019universality}, while the recovery threshold has remained elusive. Planted clique maps naturally to the detection threshold in this model, so it seems unlikely that the \pr{pc} conjecture could also yield lower bounds at the tighter recovery threshold, given that recovery and detection are known to be equivalent for \pr{pc} \cite{alon2007testing}.

We show that the $k\pr{-pc}$ conjecture implies the \pr{pds} Recovery Conjecture for \textit{semirandom} community recovery in the regime where $q = \Theta(1)$. Specifically, we show that the computational barrier in the detection problem shifts to the recovery threshold when perturbed by a semirandom adversary. This yields the first average-case evidence towards the \pr{pds} recovery conjecture as stated in \cite{chen2016statistical, hajek2015computational, brennan2018reducibility, brennan2019universality}. The reduction showing this lower bound is in Section \ref{sec:semirandom}.

\paragraph{Universality of $k$-to-$k^2$ Gaps in Sparse Mixtures.} 

Several sparse estimation problems exhibit $k$-to-$k^2$ statistical-computational gaps, such as sparse PCA. 
By extending our reductions for the problems above and introducing a new gadget performing an algorithmic change of measure, we also show a universality principle for computational lower bounds at $n = \tilde{\Omega}(k^2)$ samples for learning $k$-sparse mixtures. This implies that all problems with the same \emph{structure} exhibit the same gap, independent of the specifics of the distributions, and is a first step towards explaining the ubiquity of $k$-to-$k^2$ gaps in high-dimensional statistics.

The reduction gadget is a 3-input variant of the 2-input rejection kernel framework introduced in \cite{brennan2018reducibility} and extended in \cite{brennan2019universality}. Its guarantees are general and yield lower bounds that only require high probability bounds on the likelihood ratios of the planted marginals. The universality class of lower bounds we obtain includes the $k$-to-$k^2$ gap in the spiked covariance model of sparse PCA, recovering lower bounds obtained in \cite{berthet2013optimal, berthet2013complexity, wang2016statistical, gao2017sparse, brennan2018reducibility, brennan2019optimal}. It also includes the sparse Gaussian mixture models considered in \cite{azizyan2013minimax, verzelen2017detection, fan2018curse} and shows computational lower bounds for these problems that are tight against algorithmic achievability.

As noted earlier, average-case reductions are notoriously brittle due to the difficulty in exactly mapping to natural target distributions. Universality principles from average-case complexity must strongly overcome this obstacle by mapping precisely to every natural target distribution within an entire universality class. The rejection kernels we introduce provide a simple recipe for obtaining these universality results from an initial reduction to a single well-chosen member of the universality class, in the context of learning sparse mixtures. Our reduction can be found in Section \ref{sec:universality}. 


\paragraph{Techniques.}
To obtain our lower bounds from $k\pr{-pc}$, we introduce several techniques for average-case reductions. Our main approach is simple, but powerful: we rotate Gaussianized forms of the input graph by carefully designed matrices with orthonormal rows. By choosing the right rotation matrices, we can manipulate the planted signal in a precisely controlled manner and map approximately in total variation to our target distributions. The matrices we use in our reductions are constructed from hyperplanes in $\mathbb{F}_r^t$ for certain prime numbers $r$. In order to arrive at a Gaussianized form of the input graph and permit this approach, we need a number of average-case reduction primitives tailored to the $k$-partite promise in $k\pr{-pc}$, as presented in Section \ref{sec:redisgm}. Analyzing the total variation guarantees of these reduction primitives proves to be technically involved. Our techniques are outlined in more detail in Section \ref{subsec:techniques}.

\subsection{Related Work}
\paragraph{Robustness.}
The study of robust estimation under model misspecification began with Huber's contamination model \cite{huber1992robust, huber1965robust} and observations of Tukey \cite{tukey1975mathematics}. Classical robust estimators have typically either been computationally intractable or heuristic \cite{huber2011robust, tukey1975mathematics, yatracos1985rates}. Recently, the first dimension-independent error rates for robust mean estimation were obtained in \cite{diakonikolas2016robust} and a logarithmic dependence was obtained in \cite{lai2016agnostic}. This sparked an active line of research into robust algorithms for high-dimensional problems and has led to algorithms for robust variants of clustering, learning of mixture models, linear regression, estimation of sparse functionals and more \cite{awasthi2014power, diakonikolas2018robustly, klivans2018efficient, diakonikolas2019efficient, li2017robust, balakrishnan2017computationally, charikar2017learning}. Another notion of robustness that has been studied in computer science is against \emph{semirandom} adversaries, who may modify a random instance of a problem in certain constrained ways that heuristically appear to increase the signal strength \cite{blum1995coloring}. For example, in a semirandom variant of planted clique, the adversary takes an instance of planted clique and may only remove edges outside of the clique. Algorithms robust to semirandom adversaries have been exhibited for the stochastic block model \cite{feige2001heuristics}, planted clique \cite{feige2000finding}, unique games \cite{kolla2011play}, correlation clustering \cite{mathieu2010correlation, makarychev2015correlation}, graph partitioning \cite{makarychev2012approximation} and clustering mixtures of Gaussians \cite{vijayaraghavan2018clustering}.

The papers \cite{feige2001heuristics} and \cite{david2016effect}, show lower bounds in semirandom problems from worst-case hardness, and \cite{moitra2016robust} shows that semirandom adversaries can shift the information-theoretic threshold for recovery in the 2-block stochastic block model. Recent work also shows lower bounds from worst-case hardness for the optimal $\ell_2$ error that can be attained by polynomial time algorithms in sub-Gaussian mean estimation without the sparsity constraint \cite{hopkins2019hard}. Our work builds on the framework for average-case reductions introduced in \cite{brennan2018reducibility, brennan2019universality, brennan2019optimal}, making use of several of the average-case primitives in these papers.

\paragraph{Restricted models of computation.} In addition to the aforementioned work showing statistical-computational gaps through average-case reductions from the \pr{pc} conjecture, there is a line of work showing lower bounds in restricted models of computation. Lower bounds in the sum of squares (SOS) hierarchy have been shown for a variety of average-case problems, including planted clique \cite{deshpande2015improved, raghavendra2015tight, hopkins2016integrality, barak2016nearly}, sparse PCA \cite{ma2015sum}, sparse spiked Wigner and tensor PCA \cite{hopkins2017power}, maximizing random tensors \cite{bhattiprolu2017sum} and random CSPs \cite{kothari2017sum}. Tight lower bounds have been shown in the statistical query model for planted clique \cite{feldman2013statistical}, random CSPs \cite{feldman2015complexity} and learning Gaussian graphical models \cite{lu2018edge}. Lower bounds against meta-algorithms based on low-degree polynomials were shown for the $k$-block stochastic block model in \cite{hopkins2017efficient} and for random optimization problems related to PCA in \cite{bandeira2019computational}. 

More closely related to our results are the statistical query lower bounds of \cite{diakonikolas2017statistical} and \cite{fan2018curse}, which show statistical query lower bounds for some of the problems that we consider. Specifically, \cite{diakonikolas2017statistical} shows that statistical query algorithms require $n = \tilde{\Omega}(k^2)$ samples to solve robust sparse mean estimation and \cite{fan2018curse} shows tight statistical query lower bounds for sparse mixtures of Gaussians. We obtain lower bounds for sparse mixtures of Gaussians as an intermediate in reducing to robust sparse mean estimation. It also falls within our universality class for sparse mixtures.

\paragraph{Average-case reductions.} There have been a number of average-case reductions in the literature starting with different assumptions than the \pr{pc} conjecture. Hardness conjectures for random CSPs have been used to show hardness in improper learning complexity \cite{daniely2014average}, learning DNFs \cite{daniely2016complexityDNF} and hardness of approximation \cite{feige2002relations}.

Another related reference is the reduction in \cite{cai2015computational}, which proves a detection-recovery gap in the context of sub-Gaussian submatrix localization based on the hardness of finding a planted $k$-clique in a random $n/2$-regular graph. This degree-regular formulation of $\pr{pc}$ was previously considered in \cite{deshpande2015finding} and differs in a number of ways from \pr{pc}. For example, it is unclear how to generate a sample from the degree-regular variant in polynomial time. We remark that the reduction of \cite{cai2015computational}, when instead applied the usual formulation of \pr{pc} produces a matrix with highly dependent entries. Specifically, the sum of the entries of the output matrix has variance $n^2/\mu$ where $\mu \ll 1$ is the mean parameter for the submatrix localization instance whereas an output matrix with independent entries of unit variance would have a sum of entries of variance $n^2$. Note that, in general, any reduction beginning with $\pr{pc}$ that also preserves the natural $H_0$ hypothesis cannot show the existence of a detection-recovery gap, as any lower bounds for localization would also apply to detection.

\subsection{Outline of the Paper}

In Section \ref{sec:summary}, we formulate the estimation and recovery tasks we consider as detection problems, present our main results and give an overview of our techniques. In Section \ref{sec:avgreductionstv}, we establish our computational model and give some preliminaries on reductions in total variation. In Section \ref{sec:redisgm}, we give our main reduction to the intermediate problem of imbalanced sparse Gaussian mixtures. In Section \ref{sec:robust}, we deduce our lower bounds for robust sparse mean estimation. In Section \ref{sec:semirandom}, we give our reduction to semirandom community recovery. In Section \ref{sec:universality}, we introduce symmetric 3-ary rejection kernels and apply them with our reduction to sparse Gaussian mixtures to produce our universal lower bound. In Section \ref{sec:evidencekpc}, we provide evidence for the $k\pr{-pc}$ conjecture based on low-degree polynomials and lower bounds against statistical query algorithms.

\subsection{Notation}

We adopt the following notation. Let $\mL(X)$ denote the distribution law of a random variable $X$ and given two laws $\mL_1$ and $\mL_2$, let $\mL_1 + \mL_2$ denote $\mL(X + Y)$ where $X \sim \mL_1$ and $Y \sim \mL_2$ are independent. Given a distribution $\mathcal{P}$, let $\mathcal{P}^{\otimes n}$ denote the distribution of $(X_1, X_2, \dots, X_n)$ where the $X_i$ are i.i.d. according to $\mathcal{P}$. Similarly, let $\mathcal{P}^{\otimes m \times n}$ denote the distribution on $\mathbb{R}^{m \times n}$ with i.i.d. entries distributed as $\mathcal{P}$. Given a finite or measurable set $\mathcal{X}$, let $\text{Unif}[\mathcal{X}]$ denote the uniform distribution on $\mathcal{X}$. Let $\TV$, $\KL$ and $\chi^2$ denote total variation distance, KL divergence and $\chi^2$ divergence, respectively. Let $\mN(\mu, \Sigma)$ denote a multivariate normal random vector with mean $\mu \in \mathbb{R}^d$ and covariance matrix $\Sigma$, where $\Sigma$ is a $d \times d$ positive semidefinite matrix. Let $[n] = \{1, 2, \dots, n\}$ and $\mG_n$ be the set of simple graphs on $n$ vertices. Let $\mathbf{1}_S$ denote the vector $v \in \mathbb{R}^n$ with $v_i = 1$ if $i \in S$ and $v_i = 0$ if $i \not \in S$ where $S \subseteq [n]$. Let $\pr{mix}_{\epsilon}(\mD_1, \mD_2)$ denote the $\epsilon$-mixture distribution formed by sampling $\mD_1$ with probability $(1 - \epsilon)$ and $\mD_2$ with probability $\epsilon$.

\section{Problem Formulations and Main Lower Bounds}
\label{sec:summary}

\subsection{Detection Problems, Robustness and Adversaries}

We begin by describing our general setup for detection problems and the notions of robustness and types adversaries that we consider. In a detection task $\mP$, the algorithm is given a set of observations and tasked with distinguishing between two hypotheses:
\begin{itemize}
\item a \emph{uniform} hypothesis $H_0$ corresponding to the natural noise distribution for the problem; and
\item a \emph{planted} hypothesis $H_1$, under which observations are generated from the same noise distribution but with a latent planted structure.
\end{itemize}
Both $H_0$ and $H_1$ can either be simple hypotheses consisting of a single distribution or a composite hypothesis consisting of multiple distributions. Our problems typically are such that either: (1) both $H_0$ and $H_1$ are simple hypotheses; or (2) both $H_0$ and $H_1$ are composite hypotheses consisting of the set of distributions that can be induced by some constrained adversary. The robust estimation literature contains a number of adversaries capturing different notions of model misspecification. We consider the following three central classes of adversaries:
\begin{enumerate}
\item \textbf{$\epsilon$-corruption}: A set of samples $(X_1, X_2, \dots, X_n)$ is an $\epsilon$-corrupted sample from a distribution $\mD$ if they can be generated by giving a set of $n$ samples drawn i.i.d. from $\mD$ to an adversary who then changes at most $\epsilon n$ of them arbitrarily.
\item \textbf{Huber's contamination model}: A set of samples $(X_1, X_2, \dots, X_n)$ is an $\epsilon$-contaminated of $\mD$ in Huber's model if
$$X_1, X_2, \dots, X_n \sim_{\text{i.i.d.}} \pr{mix}_{\epsilon}(\mD, \mD_O)$$
where $\mD_O$ is an unknown outlier distribution chosen by an adversary. Here, $\pr{mix}_{\epsilon}(\mD, \mD_O)$ denotes the $\epsilon$-mixture distribution formed by sampling $\mD$ with probability $(1 - \epsilon)$ and $\mD_O$ with probability $\epsilon$.
\item \textbf{Semirandom adversaries}: Suppose that $\mD$ is a distribution over collections of observations $\{ X_i \}_{i \in I}$ such that an unknown subset $P \subseteq I$ of indices correspond to a planted structure. A sample $\{ X_i \}_{i \in I}$ is semirandom if it can be generated by giving a sample from $\mD$ to an adversary who is allowed decrease $X_i$ for any $i \in I \backslash P$. Some formulations of semirandom adversaries in the literature also permit increases in $X_i$ for any $i \in P$. Our lower bounds apply to both adversarial setups.
\end{enumerate}
All adversaries in these models of robustness are computationally unbounded and have access to randomness. Given a single distribution $\mD$ over a set $X$, any one of these three adversaries produces a set of distributions $\pr{adv}(\mD)$ that can be obtained after corruption. When formulated as detection problems, the hypotheses $H_0$ and $H_1$ are of the form $\pr{adv}(\mD)$ for some $\mD$. We remark that $\epsilon$-corruption can simulate contamination in Huber's model at a slightly smaller $\epsilon'$ within $o(1)$ total variation. This is because a sample from Huber's model has $\text{Bin}(n, \epsilon')$ samples from $\mD_O$. An adversary resampling $\min\{\text{Bin}(n, \epsilon'), \epsilon n\}$ samples from $\mD_O$ therefore simulates Huber's model within a total variation distance bounded by standard concentration for the Binomial distribution.

As discussed in \cite{brennan2018reducibility} and \cite{hajek2015computational}, when detection problems need not be composite by definition, average-case reductions to natural simple vs. simple hypothesis testing formulations are stronger and technically more difficult. In these cases, composite hypotheses typically arise because a reduction gadget precludes mapping to the natural simple vs. simple hypothesis testing formulation. We remark that simple vs. simple formulations are the hypothesis testing problems that correspond to average-case decision problems $(L, \mathcal{D})$ as in Levin's theory of average-case complexity. A survey of average-case complexity can be found in \cite{bogdanov2006average}.

Given an observation $X$, an algorithm $\mathcal{A}(X) \in \{0, 1\}$ \emph{solves} the detection problem \emph{with nontrivial probability} if there is an $\epsilon > 0$ such that its worst Type I$+$II error over the hypotheses $H_0$ and $H_1$ satisfies that
$$\limsup_{n \to \infty} \left( \sup_{P \in H_0} \bP_{X \sim P}[\mathcal{A}(X) = 1] + \sup_{P \in H_1} \bP_{X \sim P}[\mathcal{A}(X) = 0] \right) \le 1 - \epsilon$$
where $n$ is the parameter indicating the size of $X$. We refer to this quantity as the asymptotic Type I$+$II error of $\mathcal{A}$ for the problem $\mP$. If the asymptotic Type I$+$II error of $\mathcal{A}$ is zero, then we say $\mathcal{A}$ \emph{solves} the detection problem $\mP$. A simple consequence of this definition is that if $\mathcal{A}$ achieves asymptotic Type I$+$II error $1 - \epsilon$ for a composite testing problem with hypotheses $H_0$ and $H_1$, then it also achieves this same error on the simple problem with hypotheses $H_0$ and $H_1' : X \sim P$ where $P$ is any mixture of the distributions in $H_1$. When stating detection problems, we adopt the convention that any parameters such as $n$ implicitly refers to a sequence $n = (n_t)$. For notational convenience, we drop the index $t$. The asymptotic Type I$+$II error of a test for a parameterized detection problem is defined as $t \to \infty$.

\subsection{Formulations as Detection Problems}

In this section, we formulate robust sparse mean estimation, semirandom community recovery and our general learning sparse mixtures as detection problems. More precisely, for each problem $\mP$ that we consider, we introduce a detection variant $\mP'$ such that a blackbox for $\mP$ also solves $\mP'$.

\paragraph{Robust Sparse Mean Estimation.} In robust sparse mean estimation, the observed vectors $X_1, X_2, \dots, X_n$ are an $\epsilon$-corrupted set of $n$ samples from $\mN(\mu, I_d)$ where $\mu$ is an unknown $k$-sparse vector in $\mathbb{R}^d$. The task is to estimate $\mu$ in the $\ell_2$ norm by outputting $\hat{\mu}$ with $\| \hat{\mu} - \mu \|_2$ small. Without $\epsilon$-corruption, the information-theoretically optimal number of samples is $n = \Theta(k \log d/\rho^2)$ in order to estimate $\mu$ within $\ell_2$ distance $\rho$, and this is efficiently achieved by truncating the empirical mean. As discussed in \cite{li2017robust, balakrishnan2017computationally}, for $\| \mu - \mu' \|_2$ sufficiently small, it holds that $\TV\left( \mN(\mu, I_d), \mN(\mu', I_d) \right) = \Theta(\| \mu - \mu' \|_2)$. Furthermore, an $\epsilon$-corrupted set of samples can simulate distributions within $O(\epsilon)$ total variation from $\mN(\mu, I_d)$. Therefore $\epsilon$-corruption can simulate $\mN(\mu', I_d)$ if $\|\mu' - \mu\|_2 = O(\epsilon)$ and it is impossible to estimate $\mu$ with $\ell_2$ distance less than this $O(\epsilon)$.

This implies that the minimax rate of estimation for $\mu$ is $O(\epsilon)$, even for very large $n$. As shown in \cite{li2017robust, balakrishnan2017computationally}, the information-theoretic threshold for estimating at this rate in the $\epsilon$-corrupted model remains at $n = \Theta(k \log d/\epsilon^2)$ samples. However, the best known polynomial-time algorithms from \cite{li2017robust, balakrishnan2017computationally} require $n = \tilde{\Theta}(k^2 \log d/\epsilon^2)$ samples to estimate $\mu$ within $\epsilon \sqrt{\log \epsilon^{-1}}$ in $\ell_2$. One of our main results is to show that this $k$-to-$k^2$ statistical-computational gap induced by robustness follows from the $k$-partite planted clique conjecture. We show this by giving an average-case reduction to the following detection formulation of robust sparse mean estimation.

\begin{definition}[Detection Formulation of Robust Sparse Mean Estimation]
For any $\tau = \omega(\epsilon)$, the hypothesis testing problem $\pr{rsme}(n, k, d, \tau, \epsilon)$ has hypotheses
\begin{align*}
H_0 : (X_1, X_2, \dots, X_n) &\sim_{\textnormal{i.i.d.}} \mN(0, I_d) \\
H_1 : (X_1, X_2, \dots, X_n) &\sim_{\textnormal{i.i.d.}} \pr{mix}_{\epsilon}\left( \mN(\mu_R, I_d), \mD_O \right)
\end{align*}
where $\mD_O$ is any adversarially chosen outlier distribution on $\mathbb{R}^d$ and where $\mu_R \in \mathbb{R}^d$ is any random $k$-sparse vector satisfying $\| \mu_R \|_2 \ge \tau$ holds almost surely.
\end{definition}

This is a formulation of robust sparse mean estimation in Huber's contamination model, and therefore lower bounds for this problem imply corresponding lower bounds under $\epsilon$-corruption. We also directly consider a detection variant $\pr{c-rsme}$ formulated in $\epsilon$-corruption model. The condition that $\tau = \omega(\epsilon)$ ensures that any algorithm
achieving $\ell_2$ error $\| \mu - \mu' \|_2$ at the minimax rate of $O(\epsilon)$ can also solve the detection problem. Our lower bounds also apply to this detection problem with the 
requirement that $\tau = \omega(r)$ for $r$ much larger than $\epsilon$, up to approximately $\tilde{\Theta}(\sqrt{\epsilon})$. Lower bounds in this case show gaps for estimators achieving a suboptimal rate of estimation that is only $O(r)$ instead of $O(\epsilon)$.

\paragraph{Planted Dense Subgraph and Community Recovery.}
In single community recovery, one observes a graph $G$ drawn from the planted dense subgraph (\pr{pds}) distribution $\mG(n, k, p, q)$ with edge densities $p = p(n) > q = q(n)$ that can vary with $n$. To sample from $\mG(n, k, p, q)$, first $G$ is sampled from $\mG(n, q)$ and then a $k$-subset of $[n]$ is chosen uniformly at random and the induced subgraph of $G$ on $S$ is replaced with an independently sampled copy of $\mG(k, p)$. The regime of interest is when $p$ and $q$ are converging to one another at some rate and satisfy that $p/q = O(1)$. The task is to recover the latent index set $S$, either exactly by outputting $\hat{S} = S$ or partially by outputting $\hat{S}$ with $|\hat{S} \cap S| = \Omega(k)$. As shown in \cite{chen2016statistical}, a polynomial-time convexified maximum likelihood algorithm achieves exact recovery when
$$\pr{snr} = \frac{(p - q)^2}{q(1 - q)} \gtrsim \frac{n}{k^2}$$
This is the best known polynomial time algorithm and has a statistical-computational gap from the information-theoretic limit of $\pr{snr} = \tilde{\Theta}(1/k)$.

\newcommand{\colorImp}{gray}
\newcommand{\colorEasy}{green}
\newcommand{\colorHard}{blue}

\begin{figure*}[t!]
\centering
\begin{tikzpicture}[scale=0.45]
\tikzstyle{every node}=[font=\footnotesize]
\def\xmin{0}
\def\xmax{16}
\def\ymin{0}
\def\ymax{11}

\draw[->] (\xmin,\ymin) -- (\xmax,\ymin) node[right] {$\beta$};
\draw[->] (\xmin,\ymin) -- (\xmin,\ymax) node[above] {$\alpha$};

\node at (15, 0) [below] {$1$};
\node at (7.5, 0) [below] {$\frac{1}{2}$};
\node at (0, 0) [left] {$0$};
\node at (0, 10) [left] {$2$};
\node at (0, 5) [left] {$1$};
\node at (0, 3.33) [left] {$\frac{2}{3}$};
\node at (10, 0) [below] {$\frac{2}{3}$};

\filldraw[fill=\colorHard!25, draw=\colorHard] (0, 0) -- (7.5, 0) -- (10, 3.33) -- (0, 0);
\filldraw[fill=\colorEasy!25, draw=\colorEasy] (7.5, 0) -- (15, 10) -- (15, 0) -- (7.5, 0);
\filldraw[fill=\colorImp!25, draw=\colorImp] (0, 0) -- (10, 3.33) -- (15, 10) -- (0, 10) -- (0, 0);

\node at (3.75, 9.5) {\textit{Community Detection}};
\node at (12.2, 6)[rotate=54, anchor=south] {$\pr{snr} \asymp \frac{n^2}{k^4}$};
\node at (4, 1.2)[rotate=20, anchor=south] {$\pr{snr} \asymp \frac{1}{k}$};
\node at (6.5, 6) {IT impossible};
\node at (12, 2) {poly-time};
\node at (6.5, 1.25) {PC-hard};
\end{tikzpicture}
\begin{tikzpicture}[scale=0.45]
\tikzstyle{every node}=[font=\footnotesize]
\def\xmin{0}
\def\xmax{16}
\def\ymin{0}
\def\ymax{11}

\draw[->] (\xmin,\ymin) -- (\xmax,\ymin) node[right] {$\beta$};
\draw[->] (\xmin,\ymin) -- (\xmin,\ymax) node[above] {$\alpha$};

\node at (15, 0) [below] {$1$};
\node at (7.5, 0) [below] {$\frac{1}{2}$};
\node at (0, 0) [left] {$0$};
\node at (0, 10) [left] {$2$};
\node at (0, 5) [left] {$1$};

\filldraw[fill=red!25, draw=red] (7.5, 0) -- (15, 5) -- (10, 3.33) -- (7.5, 0);
\filldraw[fill=\colorHard!25, draw=\colorHard] (0, 0) -- (7.5, 0) -- (10, 3.33) -- (0, 0);
\filldraw[fill=\colorEasy!25, draw=\colorEasy] (7.5, 0) -- (15, 5) -- (15, 0) -- (7.5, 0);
\filldraw[fill=\colorImp!25, draw=\colorImp] (0, 0) -- (15, 5) -- (15, 10) -- (0, 10) -- (0, 0);

\node at (3.75, 9.5) {\textit{Community Recovery}};
\node at (11.3, 1.1)[rotate=33, anchor=south] {$\pr{snr} \asymp \frac{n}{k^2}$};
\node at (4, 1.2)[rotate=20, anchor=south] {$\pr{snr} \asymp \frac{1}{k}$};
\node at (7.5, 6) {IT impossible};
\node at (13, 0.75) {poly-time};
\node at (6.5, 1.25) {PC-hard};
\node at (11, 3) {open};
\end{tikzpicture}

\caption{Prior computational and statistical barriers in the detection and recovery of a single hidden community from the \pr{pc} conjecture \cite{hajek2015computational, brennan2018reducibility, brennan2019universality}. The axes are parameterized by $\alpha$ and $\beta$ where $\pr{snr} = \frac{(p - q)^2}{q(1 - q)} = \tilde{\Theta}(n^{-\alpha})$ and $k = \tilde{\Theta}(n^\beta)$. The red region is conjectured to be computationally hard but no $\textsc{pc}$ reductions showing this hardness are known.}
\label{fig:pdsdetrecgap}
\end{figure*}

The natural detection variant of this problem is planted dense subgraph with varying $p$ and $q$, which has the hypotheses
$$H_0: G \sim \mG(n, q) \quad \textnormal{and} \quad H_1 : G \sim \mG(n, k, p, q)$$
Unlike the recovery problem, this detection problem can be solved at the lower threshold of $\pr{snr} \gtrsim n^2/k^4$ by thresholding the total number of edges in the observed graph. No polynomial time algorithm beating this threshold by a $\text{poly}(n, k)$ factor is known. The information-theoretic threshold for detection is $\pr{snr} = \tilde{\Theta}(\min\{1/k, n^2/k^4 \})$ and thus the problem has no statistical-computational gap when $k \gtrsim n^{2/3}$. Average-case lower bounds based on planted clique at the conjectured computational threshold of $\pr{snr} \gtrsim n^2/k^4$ were shown in the regime $p = cq = \Theta(n^{-\alpha})$ for some constant $c > 1$ in \cite{hajek2015computational} and generally for $p$ and $q$ with $p/q = O(1)$ in \cite{brennan2018reducibility, brennan2019universality}. Conjectured phase diagrams for the detection and recovery variants are shown in Figure \ref{fig:pdsdetrecgap}.

This pair of problems for recovery and detection of a single community has emerged as a canonical example of a problem with a detection-recovery computational gap, which is significant in all regimes of $q$ as long as $p/q = O(1)$. \cite{hajek2015computational, brennan2018reducibility} and \cite{brennan2019universality} all conjecture that the recovery problem is hard below the threshold of $\pr{snr} = \tilde{\Theta}(n/k^2)$. This \pr{pds} Recovery Conjecture was even used in \cite{brennan2018reducibility} as a hardness assumption to show detection-recovery gaps in other problems including biased sparse PCA and Gaussian biclustering. It is unlikely that the \pr{pds} recovery conjecture can be shown to follow from the $\pr{pc}$ conjecture, since it is a very different problem from detection which does have tight $\pr{pc}$ hardness. In fact, a reduction in total variation from $\pr{pc}$ to $\pr{pds}$ at the recovery threshold that faithfully maps corresponding hypotheses $H_i$ is impossible given the $\pr{pc}$ conjecture, since the \pr{pds} detection problem is easy for $n/k^2 \gg \pr{snr} \gtrsim n^2/k^4$.

We show the intriguing result that the \pr{pds} Recovery Conjecture is true for \textit{semirandom} community recovery in the regime where $q = \Theta(1)$ given the $k\pr{-pc}$ conjecture. To do this, we give an average-case reduction to the following semirandom detection formulation of community recovery.

\begin{definition}[Detection Formulation of Semirandom Community Recovery]
The hypothesis testing problem $\pr{semi-cr}(n, k, p, q)$ has observation $G \in \mG_n$ and hypotheses
\begin{align*}
&H_0 : G \sim P \quad \textnormal{for some } P \in \pr{adv}\left(\mG(n, q)\right) \\
&H_1 : G \sim P \quad \textnormal{for some } P \in \pr{adv}\left(\mG(n, k, p, q)\right)
\end{align*}
where $\pr{adv}\left(\mG(n, k, p, q)\right)$ is the set of distributions induced by a semirandom adversary that can only remove edges outside of the planted dense subgraph $S$. In the non-planted case, the set $\pr{adv}\left(\mG(n, q)\right)$ corresponds to an adversary that can remove any edges.
\end{definition}

An algorithm $\mathcal{A}$ solving semirandom community recovery exactly can threshold the edge density within the output set of vertices $\hat{S}$. The semirandomness of the adversary along with concentration bounds ensures that this solves $\pr{semi-cr}$. 
We remark that the convexified maximum likelihood algorithm from \cite{chen2016statistical} continues to solve the community recovery problem at the same threshold under a semirandom adversary by a simple monotonicity argument.

\paragraph{Learning Sparse Mixtures.} 
Our third main result is a universality principle for statistical-computational gaps in learning sparse mixtures. The sparse mixture setup we consider includes sparse PCA in the spiked covariance model, learning sparse mixtures of Gaussians, sparse group testing and distributions related to learning graphical models. To define the detection formulation of our generalized sparse mixtures problem, we will need the notion of computable pairs from \cite{brennan2019universality}. For brevity in this section, we defer this definition until Section \ref{subsec:universalitybounds}. Our general sparse mixtures detection problem is the following simple vs. simple hypothesis testing problem.

\begin{definition}[Generalized Learning Sparse Mixtures]
Let $n$ be a parameter and $\mD$ be a symmetric distribution on a subset of $\mathbb{R}$. Suppose that $\{\mP_{\mu}\}_{\mu \in \mathbb{R}}$ and $\mQ$ are distributions on a measurable space $(W, \mathcal{B})$ such that $(\mP_{\mu}, \mQ)$ is a computable pair for each $\mu \in \mathbb{R}$. Let the general sparse mixture $\pr{glsm}_{H_1}(n, k, d, \{\mP_{\mu}\}_{\mu \in \mathbb{R}}, \mQ, \mD)$ be the distribution on $X_1, X_2, \dots, X_n \in W^d$ sampled as follows:
\begin{enumerate}
\item Sample a single $k$-subset $S \subseteq [d]$ uniformly at random; and
\item For each $i \in [n]$, choose some $\mu \sim \mD$ and independently sample $(X_i)_j \sim \mP_\mu$ if $j \in S$ and $(X_i)_j \sim \mQ$ otherwise.
\end{enumerate}
The problem $\pr{glsm}(n, k, d, \{\mP_{\mu}\}_{\mu \in \mathbb{R}}, \mQ, \mD)$ has observations $X_1, X_2, \dots, X_n$ and hypotheses
\begin{align*}
&H_0 : X_1, X_2, \dots, X_n \sim_{\textnormal{i.i.d.}} \mQ^{\otimes d} \\
&H_1 : (X_1, X_2, \dots, X_n) \sim \pr{glsm}_{H_1}\left(n, k, d, \{\mP_{\mu}\}_{\mu \in \mathbb{R}}, \mQ, \mD\right)
\end{align*}
\end{definition}

\subsection{Main Average-Case Lower Bounds}

We now state our main average-case lower bounds for robust sparse mean estimation, semirandom community recovery and general learning of sparse mixtures. We first formally introduce the $k\pr{-pc}$ conjecture. Our reductions will all start with the more general problem of $k$-partite planted dense subgraph with constant edge densities, defined as follows. For simplicity of analysis we will always assume that $p$ and $q$ are fixed constants, however we remark that our reductions can handle the more general case where only $q$ is constant and $p$ is tending towards $q$. 

\begin{definition}[$k$-partite Planted Dense Subgraph]
Let $k$ divide $n$ and $E$ be known a partition of $[n]$ with $[n] = E_1 \cup E_2 \cup \cdots \cup E_k$ and $|E_i| = n/k$ for each $i$. Let $\mG_E(n, k, p, q)$ denote the distribution on $n$-vertex graphs formed by sampling $G \sim \mG(n, q)$ and planting an independently sampled copy of $H \sim G(k, p)$ on the $k$-vertex subgraph with one vertex chosen uniformly at random from each $E_i$ in $G$. The hypothesis testing problem $k\pr{-pds}(n, k, p, q)$ has hypotheses
$$H_0: G \sim \mG(n, q) \quad \textnormal{and} \quad H_1 : G \sim \mG_E(n, k, p, q)$$
\end{definition}

The $k$-partite variant of planted clique $k\pr{-pc}(n, k, p)$ is then $k\pr{-pds}(n, k, 1, p)$ in this notation. Note that the edges within each $E_i$ are irrelevant and independent of the hidden vertex set of the clique. We remark that $E$ can be any fixed partition of $[n]$ without changing the problem. The $k\pr{-pc}$ conjecture is formally stated as follows.

\begin{conjecture}[$k\pr{-pc}$ Conjecture]
Fix some constant $p \in (0, 1)$. Suppose that $\{ \mathcal{A}_{t} \}$ is a sequence of randomized polynomial time algorithms $\mathcal{A}_t : \mG_{n_t} \to \{0, 1\}$ where $n_t$ and $k_t$ are increasing sequences of positive integers satisfying that $k_t = o(\sqrt{n_t})$ and $k_t$ divides $n_t$ for each $t$. Then if $G_t$ is an instance of $k\pr{-pc}(n_t, k_t, p)$, it holds that
$$\liminf_{t \to \infty} \left( \bP_{H_0}\left[\mathcal{A}_t(G) = 1\right] + \bP_{H_1}\left[\mathcal{A}_t(G) = 0\right] \right) \ge 1$$ 
\end{conjecture}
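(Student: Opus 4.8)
Since the $k\pr{-pc}$ Conjecture is a computational hardness assumption rather than a provable statement, what one can and should do is give evidence that the $k$-partite promise on the location of the hidden clique does not change the complexity of planted clique. Concretely, I would show that $k\pr{-pds}(n,k,p,q)$ — and hence $k\pr{-pc}(n,k,p) = k\pr{-pds}(n,k,1,p)$ — is hard up to $k = o(\sqrt n)$ for the two best-studied restricted models of computation for which ordinary planted clique is known to be hard: low-degree polynomial tests and statistical query algorithms. The strategy in both cases is to compare directly against the unconstrained planted dense subgraph problem and to observe that imposing the one-vertex-per-part promise can only weaken the signal seen by these models.

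For the low-degree bound, I would bound $\|\lrd\|$, the norm of the degree-$D$ component of the likelihood ratio $L = d\bP_{H_1}/d\bP_{H_0}$, where $H_0 = \mG(n,q)$ is a product measure over the $\binom{n}{2}$ potential edges. Expanding $L$ in the Fourier basis $\{\chi_\alpha\}$ indexed by edge subsets $\alpha$, a short computation gives $\widehat L(\alpha) = \rho^{|\alpha|}\,\bP_S[V(\alpha)\subseteq S]$, where $\rho$ is a per-edge bias parameter, $S$ is the planted set, and $V(\alpha)$ is the vertex support of $\alpha$. For the unconstrained problem $\bP_S[V(\alpha)\subseteq S] \approx (k/n)^{|V(\alpha)|}$, while in the $k$-partite case this probability is $0$ whenever two vertices of $V(\alpha)$ lie in a common part $E_i$ and otherwise equals $(k/n)^{|V(\alpha)|}$ up to lower-order corrections. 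Since the Fourier coefficients are nonnegative and dominated term by term, $\|\lrd\|_{k\text{-pds}} \le \|\lrd\|_{\pr{pds}}$, so the known low-degree upper bounds for planted clique transfer verbatim: they are $O(1)$ for $D$ up to order $\log n$ when $k$ approaches $\sqrt n$, and up to $n^{\Omega(1)}$ when $k = n^{1/2 - \Omega(1)}$. This shows degree-$D$ polynomial tests fail in the same regime as for ordinary planted clique.

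For the statistical query bound, I would compute the statistical dimension with respect to average correlation, exactly as is done for planted clique. The only input that changes is the prior over hidden locations: it is now uniform over the partition-respecting $k$-subsets rather than over all $k$-subsets. The pairwise correlation between the planted distributions for locations $S$ and $S'$ depends only on $|S\cap S'|$ and is unchanged; and a uniformly random pair of partition-respecting $k$-subsets has intersection of size $O(1)$ with overwhelming probability, just as in the unconstrained model. Hence the statistical dimension, and therefore the SQ lower bound, matches that of planted clique up to $k = o(\sqrt n)$.

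The step requiring the most care is the SQ argument: one must verify that restricting the prior to partition-respecting subsets does not place non-negligible mass on high-overlap pairs $S, S'$ (it does not, but this needs a small-ball type estimate tailored to the partition) and that the statistical dimension bound degrades gracefully enough to reach $k = o(\sqrt n)$. More fundamentally, I do not expect a black-box total-variation reduction from ordinary planted clique to $k\pr{-pc}$: rejection sampling on the event that the clique meets each $E_i$ exactly once succeeds only with exponentially small probability, and the usual symmetrization and cloning primitives do not obviously produce the one-vertex-per-part structure while exactly preserving the null law $\mG(n,q)$. Establishing such a reduction — the analogue of leakage-resilience for planted clique — is left as an open problem, which is why the low-degree and statistical query arguments are the right target here.
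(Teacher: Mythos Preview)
Your proposal is correct and matches the paper's approach essentially line by line: the paper, like you, observes that this is a conjecture rather than a theorem and supplies evidence via low-degree polynomial tests (showing the $k$-partite Fourier coefficients are term-by-term dominated by those of ordinary planted clique, since they vanish whenever $V(\alpha)$ hits a part twice and otherwise equal $(k/n)^{|V(\alpha)|}$) and via statistical query lower bounds (noting that the pairwise correlation bound $\chi_D(D_S,D_T)\le 2^{|S\cap T|}k^2/n^2$ is unchanged and redoing the overlap-count estimate for partition-respecting subsets). The paper also explicitly leaves the black-box leakage-resilience reduction from \textsc{pc} to $k$\textsc{-pc} as an open problem, exactly as you anticipate.
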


The $k\pr{-pds}$ conjecture at a fixed pair of constant edge densities $0 < q < p \le 1$ can be formulated analogously and generalizes the $k\pr{-pc}$ conjecture. There is a plethora of evidence in the literature for the ordinary $\pr{pc}$ conjecture. Spectral algorithms, approximate message passing, semidefinite programming, nuclear norm minimization and several other polynomial-time combinatorial approaches all appear to fail to solve $\pr{pc}$ exactly when $k = o(n^{1/2})$ \cite{alon1998finding, feige2000finding, mcsherry2001spectral, feige2010finding, ames2011nuclear, dekel2014finding, deshpande2015finding, chen2016statistical}. Lower bounds against low-degree sum of squares relaxations \cite{barak2016nearly} and statistical query algorithms \cite{feldman2013statistical} have also been shown up to $k = o(n^{1/2})$. In Section \ref{sec:evidencekpc}, we show that lower bounds for ordinary $\pr{pc}$ against low-degree polynomials and statistical query algorithms extend easily to $k\pr{-pc}$.

We now state our main lower bounds based on either the $k\pr{-pc}$ conjecture or the $k\pr{-pds}$ conjecture at a fixed pair of constant edge densities $0 < q < p \le 1$. These theorem statements are reproduced in Sections \ref{sec:robust}, \ref{sec:semirandom} and \ref{sec:universality}, respectively. We first give our general lower bound for \pr{rsme}, which also applies to weak algorithms only able to estimate up to $\ell_2$ rates of approximately $\sqrt{\epsilon}$.

\begin{theorem}[General Lower Bound for $\pr{rsme}$]
Let $(n, k, d, \epsilon)$ be any parameters satisfying that $k^2 = o(d)$, $\epsilon \in (0, 1)$ and $n$ satisfies that $n = o(\epsilon^3 k^2)$ and $n = \Omega(k)$. If $c > 0$ is some fixed constant, then there is a parameter $\tau = \Omega(\sqrt{\epsilon/(\log n)^{1 + c}})$ such that any randomized polynomial time test for $\pr{rsme}(n, k, d, \tau, \epsilon)$ has asymptotic Type I$+$II error at least 1 assuming either the $k$\pr{-pc} conjecture or the $k$\pr{-pds} conjecture for some fixed edge densities $0 < q < p \le 1$.
\end{theorem}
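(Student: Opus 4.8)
The plan is to exhibit an average-case reduction from $k\pr{-pds}(N, K, q, p)$ (equivalently $k\pr{-pc}$, taking $p=1$) to $\pr{rsme}(n, k, d, \tau, \epsilon)$ with the claimed parameters, so that a polynomial-time test for $\pr{rsme}$ pulls back to one for $k\pr{-pds}$, contradicting the conjecture. Since the paper states that all reductions pass through the intermediate problem of Imbalanced Sparse Gaussian Mixtures (ISGM), I would decompose the argument into two pieces: (i) the reduction of Section~\ref{sec:redisgm} from $k\pr{-pds}$ to an appropriate instance of ISGM, whose total-variation guarantees I take as given; and (ii) a reduction from that ISGM instance to $\pr{rsme}(n,k,d,\tau,\epsilon)$. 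Step (ii) is the heart of the matter for this theorem. An ISGM instance should produce $n$ vectors in $\mathbb{R}^d$ that, under $H_1$, are a mixture: an $(1-\epsilon)$-fraction are Gaussian with a common $k$-sparse mean supported on the planted coordinate set, and the remaining $\epsilon$-fraction carry the "imbalance" — these we will absorb into the adversarial outlier distribution $\mD_O$ in the definition of $\pr{rsme}$. Under $H_0$ the vectors are i.i.d.\ $\mN(0, I_d)$. The key quantitative bookkeeping is to track the signal: the planted mean has $\ell_2$ norm on the order of $\tau$, and the constraint $n = o(\epsilon^3 k^2)$ (equivalently $\tau = \Omega(\sqrt{\epsilon/(\log n)^{1+c}})$ after the dust settles) is exactly what the clique-size constraint $K = o(\sqrt{N})$ translates into under the distributional map.

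Concretely, in step (ii) I would start from the ISGM output, which I expect to be (close in total variation to) a sample where the planted coordinates carry a mixture of the form $\pr{mix}_\epsilon(\mN(\rho, 1), \nu)$ per coordinate for some per-coordinate signal $\rho$ and some imbalance law $\nu$, and the non-planted coordinates carry pure $\mN(0,1)$. The natural move is to interpret the $\epsilon$-fraction of "heavy" samples (those drawn from the imbalanced component) as the Huber contamination: on a random $\epsilon$-fraction of the $n$ vectors, replace the planted-coordinate entries with a draw from the outlier law, so that the remaining $(1-\epsilon)$-fraction are exactly i.i.d.\ $\mN(\mu_R, I_d)$ with $\mu_R = \rho\cdot\mathbf{1}_S$, giving $\|\mu_R\|_2 = \rho\sqrt{k}$. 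Setting $\tau = \rho \sqrt{k}$ and chasing the relation between $\rho$, the clique parameters, and $n$ through the ISGM reduction yields $\tau$ of the stated order. The remaining work is to verify: that the adversarially-chosen $\mD_O$ in the $\pr{rsme}$ definition is general enough to realize whatever imbalance law ISGM hands us (it is — $\mD_O$ is arbitrary); that $\| \mu_R \|_2 \ge \tau$ holds almost surely (it does, with equality, since the planted signal is deterministic given $S$); and that the $H_0$ images match exactly, i.e.\ the map sends $k\pr{-pds}$-$H_0$ to i.i.d.\ $\mN(0,I_d)$ within $o(1)$ total variation.

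The main obstacle, as the abstract and technique section flag, is controlling the total variation of the ISGM-producing reduction in step (i): the $k$-partite promise forces a delicate Gaussianization of the graph that respects the part-structure, followed by a rotation built from hyperplanes in $\mathbb{F}_r^t$ to sculpt the planted mean into a genuinely $k$-sparse vector with the right norm while keeping the null distribution exactly Gaussian. Since that analysis is imported from Section~\ref{sec:redisgm}, the self-contained obstacle for this theorem is the parameter accounting: one must check that the regime $n = o(\epsilon^3 k^2)$, $n = \Omega(k)$, $k^2 = o(d)$ is simultaneously compatible with (a) a valid choice of prime $r$ and dimension $t$ for the rotation gadget, (b) the clique being subcritical ($K = o(\sqrt N)$) so the conjecture applies, and (c) the concentration bounds needed so the $\min\{\mathrm{Bin}(n,\epsilon), \epsilon n\}$-style truncation simulating Huber's model from $\epsilon$-corruption costs only $o(1)$ in total variation. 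Finally I would argue the contrapositive: a poly-time test with asymptotic Type I$+$II error below $1$ for $\pr{rsme}(n,k,d,\tau,\epsilon)$, composed with the (poly-time, $o(1)$-TV) reduction, gives a poly-time test with asymptotic error below $1$ for $k\pr{-pds}$ at a subcritical clique size, contradicting the $k\pr{-pc}$ (or $k\pr{-pds}$) conjecture; hence no such test exists and the asymptotic Type I$+$II error is at least $1$.
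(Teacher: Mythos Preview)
Your proposal is correct and follows essentially the same approach as the paper: reduce $k\pr{-pds}$ to $\pr{isgm}$ via Theorem~\ref{thm:isgmreduction}, then observe that the resulting $\pr{isgm}$ instance is \emph{already} an $\pr{rsme}$ instance in Huber's model with $\mD_O$ taken to be the minority mixture component. Two small clarifications: step (ii) requires no additional reduction machinery or truncation argument---the $\pr{isgm}_{H_1}$ law is literally of the form $\pr{mix}_{1/r}(\mN(\mu\cdot\mathbf 1_S,I_d),\cdot)$, and since $1/r<\epsilon$ one simply absorbs part of the majority component into $\mD_O$ to match the target contamination level; and the planted support has size $k'\le k$ (not $k$), so $\tau=\mu\sqrt{k'}$, with the entire technical content being the explicit choice of $r,t,k',N$ that simultaneously forces $k'=o(\sqrt N)$, meets the hypotheses of Theorem~\ref{thm:isgmreduction}, and yields $\tau\asymp\sqrt{\epsilon/(w\log n)}$.
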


Specializing this theorem to the case where $\epsilon = 1/\text{polylog}(n)$, we establish the tight $k$-to-$k^2$ gap for $\pr{rsme}$ up to $\text{polylog}(n)$ factors, as stated in the following corollary.

\begin{corollary}[Tight $k$-to-$k^2$ gap for $\pr{rsme}$]
Let $(n, k, d, \epsilon)$ be any parameters satisfying that $k^2 = o(d)$, $\epsilon = \Theta((\log n)^{-c})$ for some constant $c > 1$ and $n$ satisfies that $n = o(k^2 (\log n)^{-3c})$ and $n = \Omega(k)$. Then there is a parameter $\tau = \omega(\epsilon)$ such that any randomized polynomial time test for $\pr{rsme}(n, k, d, \tau, \epsilon)$ has asymptotic Type I$+$II error at least 1 assuming either the $k$\pr{-pc} conjecture or the $k$\pr{-pds} conjecture for some fixed edge densities $0 < q < p \le 1$.
\end{corollary}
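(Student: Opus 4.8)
The plan is to derive the corollary directly from the General Lower Bound for $\pr{rsme}$ by specializing to $\epsilon = \Theta((\log n)^{-c})$ and choosing the free constant in that theorem small enough that the radius it produces beats the minimax $\ell_2$ rate $O(\epsilon)$. To avoid a clash of notation I keep $c$ for the exponent in the corollary (so $\epsilon = \Theta((\log n)^{-c})$ with $c > 1$) and write $c_0$ for the constant appearing in the statement of the General Lower Bound.

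First I would check that the parameters $(n, k, d, \epsilon)$ of the corollary satisfy the hypotheses of the General Lower Bound. The requirements $k^2 = o(d)$ and $n = \Omega(k)$ are assumed verbatim. Since $c > 1$ we have $\epsilon = \Theta((\log n)^{-c}) \to 0$, so $\epsilon \in (0,1)$ for all large $n$. Finally $\epsilon^3 k^2 = \Theta\!\left(k^2 (\log n)^{-3c}\right)$, so the corollary's assumption $n = o\!\left(k^2 (\log n)^{-3c}\right)$ is precisely the condition $n = o(\epsilon^3 k^2)$ needed in the theorem. Hence, for any fixed $c_0 > 0$, the General Lower Bound yields a parameter $\tau = \Omega\!\left(\sqrt{\epsilon / (\log n)^{1 + c_0}}\right)$ such that every randomized polynomial-time test for $\pr{rsme}(n, k, d, \tau, \epsilon)$ has asymptotic Type I$+$II error at least $1$ under the $k\pr{-pc}$ conjecture (or the $k\pr{-pds}$ conjecture at constant edge densities).

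It remains to pick $c_0$ so that this $\tau$ is $\omega(\epsilon)$, which is exactly what makes $\pr{rsme}(n, k, d, \tau, \epsilon)$ a legitimate instance and what witnesses that the $k$-to-$k^2$ gap persists against estimators attaining the minimax rate $O(\epsilon)$. Substituting $\epsilon = \Theta((\log n)^{-c})$ into the bound on $\tau$ gives $\tau = \Omega\!\left((\log n)^{-(c + 1 + c_0)/2}\right)$, hence $\tau/\epsilon = \Omega\!\left((\log n)^{(c - 1 - c_0)/2}\right)$. Because $c > 1$, the interval $(0, c - 1)$ is nonempty; choosing any fixed $c_0$ in it — say $c_0 = (c-1)/2$ — makes the exponent $(c - 1 - c_0)/2$ strictly positive, so $\tau/\epsilon \to \infty$, i.e.\ $\tau = \omega(\epsilon)$. (Equivalently, $\epsilon (\log n)^{1 + c_0} = \Theta\!\left((\log n)^{1 + c_0 - c}\right) \to 0$, which is exactly why $\tau = \Omega(\sqrt{\epsilon / (\log n)^{1 + c_0}})$ already forces $\tau = \omega(\epsilon)$ rather than contradicting it.) With this choice of $c_0$ the General Lower Bound gives precisely the statement of the corollary.

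I do not expect a genuine obstacle here: the argument is a bookkeeping specialization of the main theorem, and the only delicate point is that the hypothesis $c > 1$ (rather than merely $c > 0$) is what creates room to take $c_0 < c - 1$, so that the polylogarithmic loss $\sqrt{1/(\log n)^{1 + c_0}}$ in the general bound is overwhelmed by $\sqrt{1/\epsilon} = \Theta((\log n)^{c/2})$ and $\tau/\epsilon$ diverges.
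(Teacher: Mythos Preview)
Your proposal is correct and takes essentially the same approach as the paper: the paper presents this corollary without a separate proof, simply remarking that it follows by ``specializing this theorem to the case where $\epsilon = 1/\text{polylog}(n)$,'' and your argument is precisely that specialization with the bookkeeping spelled out. Your only addition is making explicit the choice $c_0 < c-1$ so that $\tau/\epsilon \to \infty$, which is exactly the point the paper leaves implicit when it notes that $c > 1$ is required.
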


In Section \ref{sec:robust}, we also show how to alleviate the dependence of our general sample complexity lower bound of $n = \Omega(\epsilon^3 k^2)$ on $\epsilon$ as a tradeoff with how far $\tau$ is above the minimax rate of $O(\epsilon)$. We now state our lower bounds for $\pr{semi-cr}$ and universality principle for $\pr{glsm}$.

\begin{theorem}[The \pr{pds} Recovery Conjecture holds for $\pr{semi-cr}$]
Fix any constant $\beta \in [1/2, 1)$. Suppose that $\mathcal{B}$ is a randomized polynomial time test for $\pr{semi-cr}(n, k, p, q)$ for all $(n, k, p, q)$ with $k = \Theta(n^\beta)$ and
$$\frac{(p - q)^2}{q(1 - q)} \leq \overline{\nu} \quad \textnormal{and} \quad \min\{q, 1 - q \} = \Omega(1) \quad \textnormal{where} \quad \overline{\nu} = o\left(\frac{n}{k^2 \log n}\right)$$
Then $\mathcal{B}$ has asymptotic Type I$+$II error at least 1 assuming either the $k$\pr{-pc} conjecture or the $k$\pr{-pds} conjecture for some fixed edge densities $0 < q' < p' \le 1$.
\end{theorem}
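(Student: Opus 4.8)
The plan is to reduce $k\pr{-pds}(N, k, p_0, q_0)$ (and in particular $k\pr{-pc}$, which is the case $p_0 = 1$) to $\pr{semi-cr}(n, k, p, q)$ in the regime $q = \Theta(1)$ by first passing through the intermediate problem $\pr{isgm}$ of imbalanced sparse Gaussian mixtures, exactly as set up in Section~\ref{sec:redisgm}. The composition structure is: $k\pr{-pds} \leadsto \pr{isgm} \leadsto \pr{semi-cr}$, where the first arrow is the main reduction of Section~\ref{sec:redisgm}, and only the second arrow is new to this theorem. Concretely, after the Gaussianization step we have $n$ samples in $\mathbb{R}^d$ that under $H_1$ are an imbalanced sparse mixture supported on a hidden $k$-subset of coordinates, and under $H_0$ are i.i.d.\ standard Gaussian. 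The second step converts these Gaussian samples back into a graph: I would assign one Gaussian sample (or a block of samples/coordinates) to each \emph{pair} of vertices of the target graph, and threshold an appropriate linear statistic of that sample to decide whether to place an edge. The thresholds are chosen so that a non-planted (pure-$\mQ$, i.e.\ $\mN(0,1)$) coordinate produces an edge with probability exactly $q$, while a planted coordinate — whose marginal is a mixture shifted away from $0$ — produces an edge with probability at least $p$. Crucially, because the planted marginal stochastically dominates the null marginal (the imbalanced mixture puts extra mass on the positive tail), the induced law on the $H_1$ graph stochastically dominates $\mG(n,k,p,q)$: it is $\mG(n,k,p,q)$ composed with an edge-\emph{addition} inside the clique and edge-\emph{removal} outside — which is exactly a semirandom perturbation. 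This is the conceptual heart of why reducing from planted clique can reach the recovery threshold: the reduction naturally overshoots in a monotone direction, and the semirandom adversary is precisely what absorbs that overshoot, so we land in $\pr{adv}(\mG(n,k,p,q))$ rather than in the detection-easy regime.

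The quantitative bookkeeping is where the parameters $\overline{\nu} = o(n/(k^2\log n))$ and $k = \Theta(n^\beta)$ enter. The $k\pr{-pds}\leadsto\pr{isgm}$ reduction produces, for $N$-vertex input graphs with $k$-cliques, an ISGM instance with $d \asymp \mathrm{poly}(N)$ coordinates, $k$ planted coordinates, and a mixture separation parameter that scales like $1/\sqrt{\log}$ of the relevant sizes; the number of samples one can support while staying within $o(1)$ total variation of the target is controlled by the $\tilde\Theta(k^2)$ barrier inherent to that reduction. Tracking this through the thresholding map, the signal-to-noise ratio $(p-q)^2/(q(1-q))$ achievable at the output is exactly of order $n/(k^2\,\mathrm{polylog}(n))$ — matching the claimed bound $\overline{\nu} = o(n/(k^2\log n))$ — and the constraint $q = \Theta(1)$ is what makes the Gaussian-to-Bernoulli thresholding lossless enough to preserve this: when $q$ is constant, a constant-probability event has $\Theta(1)$ "resolution," so small shifts in the Gaussian mean translate cleanly into the desired $(p-q)$ gap, whereas for $q\to 0$ one would need the finer rejection-kernel machinery and lose factors. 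I would verify that the $H_0$ sides match exactly (pure $\mN(0,1)$ samples threshold to exactly $\mG(n,q)$, which is in $\pr{adv}(\mG(n,q))$ trivially), and that the $H_1$ side lands in $\pr{adv}(\mG(n,k,p,q))$ via the stochastic-domination/monotone-coupling argument above, with the hidden vertex set inheriting the one-vertex-per-part structure from $k\pr{-pds}$ (which is harmless since the target $\pr{semi-cr}$ problem's $S$ is a uniform $k$-subset, and the $k$-partite $S$ is within $1-o(1)$ in the relevant total-variation/entropy sense, or alternatively one re-randomizes).

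The key structural steps, in order: (1) invoke the Section~\ref{sec:redisgm} reduction to obtain, from a $k\pr{-pds}$ instance on $N$ vertices, an ISGM instance on $n$ samples in $d$ coordinates with $k$ planted coordinates and separation $\rho \asymp (\log n)^{-O(1)}$, valid in total variation when $n = \tilde o(k^2)$; (2) define the coordinate-to-edge thresholding map, choosing the threshold so the null edge probability is exactly $q$; (3) compute the planted edge probability $p$ and the resulting $\pr{snr}$, confirming it reaches $\Omega(n/(k^2\,\mathrm{polylog}\,n))$, hence covers any $\overline\nu = o(n/(k^2\log n))$; (4) establish the monotone coupling showing the $H_1$ output law lies in $\pr{adv}(\mG(n,k,p,q))$ and the $H_0$ output law lies in $\pr{adv}(\mG(n,q))$; (5) compose total-variation bounds and conclude via the contrapositive that a polynomial-time $\pr{semi-cr}$ test of asymptotic Type~I$+$II error below $1$ would break $k\pr{-pc}$ (resp.\ $k\pr{-pds}$). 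I expect the main obstacle to be step~(3)–(4) done \emph{simultaneously}: one must choose the thresholding scheme so that it is both tight enough to achieve the $n/k^2$-order SNR (no polylog loss beyond what is allowed) \emph{and} monotone in the right direction on every sample, so that the output genuinely factors as "clean $\pr{pds}$" $\circ$ "semirandom adversary" rather than as some distribution that is merely close to but not in $\pr{adv}(\mG(n,k,p,q))$ — the definition of $\pr{semi-cr}$ requires exact membership in the adversarial family, not total-variation proximity, so the coupling has to be exact, and getting exactness while the upstream ISGM reduction is only approximate (within $o(1)$ TV) requires care about where the approximation is absorbed (it goes into the overall TV budget, not into the adversary class). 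A secondary technical point is matching the $k$-partite hidden-set structure to the uniform-$k$-subset requirement of the target, which I would handle either by a symmetrization/re-randomization step or by noting it does not affect the asymptotic Type~I$+$II error analysis.
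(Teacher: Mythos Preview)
Your proposal routes the reduction through $\pr{isgm}$ and then thresholds samples to produce edges. This is not what the paper does, and there is a structural reason your route would not work as stated.

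The paper's reduction $k$\textsc{-pds-to-semi-cr} does \emph{not} pass through $\pr{isgm}$. Instead, after $\pr{To-}k\textsc{-Partite-Submatrix}$ and $\textsc{Gaussianize}$ it holds an $m\times m$ Gaussianized planted-submatrix instance $M_{\text{G}}\sim \mu\cdot \mathbf{1}_U\mathbf{1}_U^\top + \mN(0,1)^{\otimes m\times m}$, and then rotates by $H_{3,\ell}$ \emph{on both rows and columns} in blocks: $(M_{\text{R}})_{F_i'',F_j''}=H_{3,\ell}(M_{\text{P}})_{F_i',F_j'}H_{3,\ell}^\top$. This two-sided rotation is the key mechanism. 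It simultaneously (i) blows up the planted community from size $k$ to $\Theta(3^\ell k)$, (ii) reduces the planted mean from $\mu$ to $\Theta(\mu\cdot 3^{-\ell})$, and (iii) produces a rank-one signal $\mu\cdot vv^\top$ where $v=\mathcal H\mathbf 1_{U'}$ has \emph{three} distinct values (positive on a set $S'$, negative on a set $S$, zero elsewhere). After thresholding at $\tfrac12\mu\cdot 3^{-\ell}$, the edge probabilities come out as $1/2+\mu_3$ on $S\times S$, $1/2-\mu_2$ on $S\times S'$, $1/2-\mu_1$ elsewhere in $V$, and $1/2$ outside $V$. These are exactly the target graph distributions $\pr{tg}_{H_1}$ and $\pr{tg}_{H_0}$, which are then shown to lie in $\pr{adv}(\mG(n,k,p,q))$ and $\pr{adv}(\mG(n,q))$ by an explicit edge-removal adversary. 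The signal calculation $\nu\asymp\sqrt{n/(wk^2\log n)}$ comes directly from $\mu\cdot 3^{-\ell}$ with $3^\ell\asymp N^\beta/k'$ and $N=wk'^2$.

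Your $\pr{isgm}$ route has the problem that $\pr{isgm}$ is an asymmetric samples-$\times$-coordinates object: $n$ samples in $\mathbb R^d$ with planted structure on a $k$-subset of \emph{coordinates}. Thresholding entries of that gives an $n\times d$ bipartite Bernoulli matrix with a planted \emph{biclique}-type block, not a symmetric $n\times n$ graph with a planted dense subgraph. Your sketch of ``assign one sample to each pair of vertices'' does not say how the one-dimensional planted set $S\subseteq[d]$ becomes a two-dimensional $S\times S$ edge structure, and there is no evident way to do this while preserving independence across edges and landing inside $\pr{adv}(\mG(n,k,p,q))$. The two-sided rotation is precisely what manufactures the symmetric $vv^\top$ structure needed for a graph; the one-sided rotation that produces $\pr{isgm}$ does not.

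Your intuitions about monotone coupling (step (4)) and about where the TV approximation is absorbed are correct and match the paper: the output law is shown to be \emph{exactly} in the adversary class, and the $o(1)$ TV slack from the earlier Gaussianization/submatrix steps is carried in the overall error budget via Lemma~\ref{lem:tvacc}. But the architecture of the reduction---two-sided block rotation by $H_{3,\ell}$ rather than the one-sided rotation that yields $\pr{isgm}$---is the missing idea.
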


\begin{theorem}[Universality of $n = \tilde{\Omega}(k^2)$ for \pr{glsm}]
Let $(n, k, d)$ be parameters such that $n = o(k^2)$ and $k^2 = o(d)$. Suppose that $(\mD, \mQ, \{ \mP_{\nu} \}_{\nu \in \mathbb{R}})$ satisfy
\begin{itemize}
\item $\mD$ is a symmetric distribution about zero and $\bP_{\nu \sim \mD}[\nu \in [-1, 1]] = 1 - o(n^{-1})$; and
\item for all $\nu \in [-1, 1]$, it holds that
$$\frac{1}{\sqrt{k \log n}} \gg \left| \frac{d\mP_{\nu}}{d\mQ} (x) - \frac{d\mP_{-\nu}}{d\mQ} (x) \right| \quad \textnormal{and} \quad \frac{1}{k \log n} \gg \left|\frac{d\mP_{\nu}}{d\mQ} (x) + \frac{d\mP_{-\nu}}{d\mQ} (x) - 2 \right|$$
with probability at least $1 - o(n^{-3} d^{-1})$ over each of $\mP_{\nu}, \mP_{-\nu}$ and $\mQ$.
\end{itemize}
Then if $\mathcal{B}$ is a randomized polynomial time test for $\pr{glsm}\left(n, k, d, \{ \mP_{\nu} \}_{\nu \in \mathbb{R}}, \mQ, \mD \right)$. Then $\mathcal{B}$ has asymptotic Type I$+$II error at least 1 assuming either the $k$\pr{-pc} conjecture or the $k$\pr{-pds} conjecture for some fixed edge densities $0 < q < p \le 1$.
\end{theorem}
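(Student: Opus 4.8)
The plan is to reduce from the intermediate problem of Imbalanced Sparse Gaussian Mixtures (ISGM), whose $k\pr{-pc}$-hardness at $n = \tilde o(k^2)$ samples is established in Section~\ref{sec:redisgm}, to the target $\pr{glsm}\left(n, k, d, \{\mP_{\nu}\}_{\nu \in \mathbb{R}}, \mQ, \mD\right)$. The structural similarity is clear: ISGM is itself a sparse mixture in which the planted coordinates are drawn from a one-dimensional Gaussian mixture $\pr{mix}$ of $\mN(\pm\mu, 1)$-type marginals against a Gaussian null. The whole game is to replace, coordinate-by-coordinate and sample-by-sample, each planted Gaussian-mixture marginal by the prescribed pair $(\mP_{\nu}, \mP_{-\nu})$ and each null Gaussian coordinate by $\mQ$, while controlling the total variation error. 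For this I would invoke the 3-ary symmetric rejection kernel gadget announced in the ``Techniques'' paragraph and developed in Section~\ref{sec:universality}: given three reference distributions it produces, from a single real-valued input whose law is (approximately) one of the two input hypotheses, an output whose law is (approximately) one of the two target hypotheses. The symmetry in the hypotheses of this theorem — $\mD$ symmetric about zero, and the two likelihood-ratio conditions being stated in terms of $\frac{d\mP_{\nu}}{d\mQ} \pm \frac{d\mP_{-\nu}}{d\mQ}$ — is exactly what a \emph{symmetric} 3-ary rejection kernel needs, since the sign $\mathrm{sign}(\nu)$ must be hidden.

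The steps, in order, are as follows. First, apply the Section~\ref{sec:redisgm} reduction to an instance of $k\pr{-pc}$ (or $k\pr{-pds}$ at constant densities) to obtain a sample from $\pr{isgm}$ on $n$ samples in dimension $d$ with sparsity $k$, valid because $n = o(k^2)$ and $k^2 = o(d)$; record that under $H_0$ every entry is $\mN(0,1)$ (or whatever the canonical ISGM null is) and under $H_1$ the planted $k \times n$ block has each entry an independent draw from the one-dimensional mixture with a latent scalar playing the role of $\nu$. Second, construct the symmetric 3-ary rejection kernel $\mathsf{RK}$ calibrated to the triple consisting of the ISGM planted marginal family, the ISGM null marginal, and the targets $\{\mP_{\nu}\}$, $\mQ$: concretely $\mathsf{RK}$ reads one real coordinate, and outputs a sample in $W$ whose law is within small total variation of $\mP_{\nu}$ when fed the ISGM planted marginal at parameter $\nu$, and within small total variation of $\mQ$ when fed the ISGM null. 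Third, apply $\mathsf{RK}$ independently to each of the $nd$ coordinates; a union bound over coordinates turns the per-coordinate total variation error into an $nd$-fold error, so the per-coordinate error must be pushed below roughly $n^{-1}d^{-1}$ up to log factors — this is precisely why the hypotheses demand the likelihood-ratio bounds to hold with probability $1 - o(n^{-3}d^{-1})$ and why the $\mathD$-mass outside $[-1,1]$ is $o(n^{-1})$. Fourth, verify that the composed map sends $H_0$ of $k\pr{-pds}$ to within $o(1)$ total variation of $\mQ^{\otimes d \times n}$ and $H_1$ to within $o(1)$ of $\pr{glsm}_{H_1}$, using the composability of total-variation-approximate reductions from Section~\ref{sec:avgreductionstv}; finally conclude by contraposition that a polynomial-time test for $\pr{glsm}$ with small asymptotic Type I$+$II error yields one for $k\pr{-pds}$, contradicting the conjecture.

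The main obstacle is the construction and analysis of the symmetric 3-ary rejection kernel under the \emph{quantitatively weak} promise in the theorem. A rejection kernel works by an accept/reject scheme whose acceptance probability is tuned to the ratio of densities; here it must simultaneously (a) be symmetric, so that it cannot leak the sign of $\nu$ — forcing the design to depend only on the even and odd parts $\frac{d\mP_{\nu}}{d\mQ} \pm \frac{d\mP_{-\nu}}{d\mQ}$, which is why those two combinations (and not the raw ratios) appear in the hypotheses — and (b) achieve per-coordinate total variation error $o(n^{-1}d^{-1})$ given only that the odd part is $\ll 1/\sqrt{k\log n}$ and the even-part deviation from $2$ is $\ll 1/(k\log n)$ with the stated high probability. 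Tracking how the $1/\sqrt{k\log n}$ and $1/(k\log n)$ slacks propagate — the first appearing with a square because of the $n = o(k^2)$ sample budget (so that $n \cdot (1/\sqrt{k\log n})^2 = o(1/\log n)$), the second linearly — through the chi-squared / total-variation accounting of the rejection kernel, and confirming that the ``bad event'' from the high-probability clauses contributes negligibly after the $nd$-fold union bound, is the technically delicate heart of the argument. Everything else — the initial reduction to ISGM, composability, and the final contrapositive — is either already proved in the excerpt or routine.
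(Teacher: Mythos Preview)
Your outline is correct and follows the paper's approach: reduce $k\pr{-pds}$ to $\pr{isgm}$ with $r=2$ and $\epsilon=1/2$ via Theorem~\ref{thm:isgmreduction}, then apply the symmetric 3-ary rejection kernel entrywise to land in $\pr{glsm}$, with the $nd$-fold tensorization forcing the $o(n^{-3}d^{-1})$ tail probability.

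Two points of mechanism are worth sharpening, as your heuristics for them are slightly off. First, the $\nu_i$ are \emph{not} inherited from the ISGM latent scalar; they are sampled fresh from $\mD$ inside the reduction (see Step~2 of Figure~\ref{fig:universalityreduction}). The ISGM latent is only a sign $\pm 1$, and the 3-ary kernel sends the three ISGM marginals $\mN(\mu,1),\mN(-\mu,1),\mN(0,1)$ to $\mP_{\nu_i},\mP_{-\nu_i},\mQ$ respectively; the symmetry of $\mD$ then absorbs the unknown sign. Second, the specific thresholds $1/\sqrt{k\log n}$ and $1/(k\log n)$ do not come from the sample-budget accounting you sketch (``$n\cdot(1/\sqrt{k\log n})^2$''). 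They arise because the ISGM reduction delivers a signal $\mu \asymp 1/\sqrt{k\log n}$, and the Gaussian truncation $\pr{tr}_\tau$ of Lemma~\ref{lem:truncgauss} converts this into ternary parameters $\mu_1 = \Theta(\mu)$ and $\mu_2 = \Theta(\mu^2)$; the 3-\textsc{srk} acceptance test in Lemma~\ref{lem:srk} then requires the odd part of the likelihood ratio to be bounded by $2|\mu_1|$ and the even-part deviation by $2|\mu_2|/\max\{a,1-a\}$, which is exactly the pair of hypotheses in $\pr{uc}(n,k,d)$. The truncation step---compressing the real-valued Gaussian coordinate to $\{-1,0,1\}$ before feeding the kernel---is the concrete device you left implicit.
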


Note that $\mD$ and the indices of $\mP_{\nu}$ can be reparameterized without changing the underlying problem. The assumption that $\mD$ is symmetric and mostly supported on $[-1, 1]$ is for convenience. When the likelihood ratios are relatively concentrated the dependence of the two conditions above on $n$ and $d$ is nearly negligible. These conditions almost exclusively depend on $k$, implying that they will not require a stronger dependence between $n$ and $k$ to produce hardness than the $n = \tilde{o}(k^2)$ condition that arises from our reductions. Thus these conditions do show a universality principle for the computational sample complexity of $n = \tilde{\Theta}(k^2)$. 

\subsection{Overview of Techniques}
\label{subsec:techniques}

We now overview the average-case reduction techniques we introduce to show our lower bounds.

\paragraph{Rotating Gaussianized Instances by $H_{r, t}$.} We give a simplified overview of our rotation method in the case of robust sparse mean estimation. Observe that if $M = \tau \cdot \mathbf{1}_S \mathbf{1}_S^\top + \mN(0, 1)^{\otimes n \times n}$, $S$ is a $k$-subset and $R$ is an orthogonal matrix then $MR^\top$ is distributed as $\tau \cdot \mathbf{1}_S (R\mathbf{1}_S)^\top + \mN(0, 1)^{\otimes n \times n}$ since the rows of the noise term have i.i.d. $\mN(0, 1)$ entries and thus are isotropic. This simple property allows us to manipulate the latent structure $S$ upon a rotation by $R$ while at the same time, crucially preserve the independence among the entries of the noise distribution. Our key insight is to carefully construct $R$ based on the geometry of $\mathbb{F}_r^t$. Let $P_1, P_2, \dots, P_{r^t}$ be an enumeration of the points in $\mathbb{F}_r^t$ and $V_1, V_2, \dots, V_\ell$, where $\ell = \frac{r^t - 1}{r - 1}$, be an enumeration of the hyperplanes in $\mathbb{F}_r^t$. Now take $H_{r, t}$ to be the $\ell \times r^t$ matrix
$$(H_{r, t})_{ij} = \frac{1}{\sqrt{r^t(r - 1)}} \cdot \left\{ \begin{matrix} 1 & \text{if } P_j \not \in V_i \\ 1 - r & \text{if } P_j \in V_i \end{matrix} \right.$$
It can be verified that the rows of $H_{r, t}$ are an orthonormal basis of the subspace they span. This choice of $H_{r, t}$ has two properties crucial to our reductions: (1) it contains exactly two values; and (2) each of its columns, other than the column corresponding to $P_i = 0$, approximately contains a $1 - r^{-1}$ vs. $r^{-1}$ mixture of these two values. Taking $R$ to rotate by $H_{r, t}$ on blocks of indices of $M$, each containing exactly one element of $S$, then produces an instance of robust sparse mean estimation with $\epsilon \approx r^{-1}$ and planted mean vector approximately given by $\tau \cdot \mathbf{1}_S / \sqrt{r^t(r - 1)}$. A proper choice of parameters causes this to approximately be $\mu = \mathbf{1}_S \cdot \sqrt{\epsilon/wk \log n}$ for a slow-growing function $w$ and thus $\| \mu \|_2 = \sqrt{\epsilon/\log n}$, which for polylogarithmically small $\epsilon$ is larger than the $O(\epsilon)$ minimax rate. Carrying out this strategy involves a number of technical obstacles related to the counts of each type of entry, the fact that the $H_{r, t}$ are not square, dealing with the column corresponding to zero and other distributional issues that arise.

In our reduction to semirandom community recovery we rotate along both rows and columns and in smaller blocks, and also only require $r = 3$. Our universality result uses exactly the reduction described above for $r = 2$ as a subroutine.

\paragraph{$k$-Partite Average-Case Primitives.} The discussion above overlooks several issues, notably including: (1) mapping from the input graph problem to a matrix of the form $\tau \cdot \mathbf{1}_S \mathbf{1}_S^\top + \mN(0, 1)^{\otimes n \times n}$; and (2) ensuring that we can rotate by $H_{r, t}$ on blocks, each containing exactly one element of $S$. Property (2) is essential to mapping in distribution to an instance of robust sparse mean estimation. Achieving (1) involves more than an algorithmic change of measure -- the initial adjacency matrix of the graph is symmetric and lacks diagonal entries. Symmetry can be broken using a cloning gadget, but planting the diagonal entries so that they are properly distributed is harder. In particular, doing this while maintaining the $k$-partite promise in order to achieve (2) requires embedding in a larger submatrix and an involved analysis of the distances between product distributions with binomial marginals. We break this first Gaussianization step into a number of primitives, extending the framework introduced in \cite{brennan2018reducibility} and \cite{brennan2019universality}.

\paragraph{3-ary Symmetric Rejection Kernels.} To show our universality result, we need to carry out an algorithmic change of measure with three target distributions. To do this, we introduce a new more involved variant of the rejection kernels from \cite{brennan2018reducibility} and \cite{brennan2019universality} in order to map from three inputs to three outputs. The analysis of these rejection kernels is made possible due to symmetries in the initial distributions that crucially rely on using the rotations reduction described above as a subroutine. We remark that the fact that $H_{r, t}$ contains exactly two distinct values turns out to be especially important here to ensure that there are only three possible input distributions to these rejection kernels.

\section{Average-Case Reductions in Total Variation}
\label{sec:avgreductionstv}

\subsection{Reductions in Total Variation and the Computational Model}

We give approximate reductions in total variation to show that lower bounds for one hypothesis testing problem imply lower bounds for another. These reductions yield an exact correspondence between the asymptotic Type I$+$II errors of the two problems. This is formalized in the following lemma, which is Lemma 3.1 from \cite{brennan2018reducibility} stated in terms of composite hypotheses $H_0$ and $H_1$. The main quantity in the statement of the lemma can be interpreted as the smallest total variation distance between the reduced object $\mathcal{A}(X)$ and the closest mixture of distributions from either $H_0'$ or $H_1'$. The proof of this lemma is short and follows from the definition of total variation.

\begin{lemma}[Lemma 3.1 in \cite{brennan2018reducibility}] \label{lem:3a}
Let $\mP$ and $\mP'$ be detection problems with hypotheses $H_0, H_1$ and  $H_0', H_1'$, respectively. Let $X$ be an instance of $\mathcal{P}$ and let $Y$ be an instance of $\mP'$. Suppose there is a polynomial time computable map $\mathcal{A}$ satisfying
$$\sup_{P \in H_0} \inf_{\pi \in \Delta(H_0')} \TV\left( \mL_{P}(\mathcal{A}(X)), \bE_{P' \sim \pi} \, \mL_{P'}(Y) \right) + \sup_{P \in H_1} \inf_{\pi \in \Delta(H_1')} \TV\left( \mL_{P}(\mathcal{A}(X)), \bE_{P' \sim \pi} \, \mL_{P'}(Y) \right) \le \delta$$
If there is a randomized polynomial time algorithm solving $\mP'$ with Type I$+$II error at most $\epsilon$, then there is a randomized polynomial time algorithm solving $\mP$ with Type I$+$II error at most $\epsilon + \delta$.
\end{lemma}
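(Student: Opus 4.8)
The plan is to exhibit the reduction algorithm explicitly and then bound its Type I and Type II errors separately, using only the total-variation contraction property of tests. Let $\mathcal{B}$ be the promised randomized polynomial-time algorithm solving $\mP'$ with Type I$+$II error at most $\epsilon$, and define the candidate algorithm $\mathcal{C}$ for $\mP$ to be the composition $\mathcal{C} = \mathcal{B} \circ \mathcal{A}$: on input $X$, compute $Y' = \mathcal{A}(X)$ (drawing fresh internal randomness for $\mathcal{A}$ if it is randomized) and output $\mathcal{B}(Y')$. Since $\mathcal{A}$ is polynomial-time computable its output has size polynomial in $|X|$, so $\mathcal{C}$ runs in randomized polynomial time.

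The single fact I will use repeatedly is the elementary consequence of the definition of total variation that for any, possibly randomized, test $T$ from the instance space of $\mP'$ to $\{0,1\}$ and any two distributions $\mu, \nu$ on that space, $\bigl| \bP_{Y \sim \mu}[T(Y) = 1] - \bP_{Y \sim \nu}[T(Y) = 1] \bigr| \le \TV(\mu, \nu)$; this holds because $\nu \mapsto \bP_{Y \sim \nu}[T(Y) = 1]$ is an affine functional valued in $[0,1]$, where the acceptance probability already marginalizes over $T$'s coins. I also record that the acceptance probability of $T$ under a mixture $\bE_{P' \sim \pi}\,\mL_{P'}(Y)$ equals $\bE_{P' \sim \pi}\bP_{Y \sim P'}[T(Y) = 1] \le \sup_{P' \in H_0'}\bP_{Y \sim P'}[T(Y) = 1]$, and analogously for rejection probabilities over $H_1'$.

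Now bound the Type I error. Fix $P \in H_0$ and $\eta > 0$, and pick $\pi_P \in \Delta(H_0')$ with $\TV\bigl(\mL_P(\mathcal{A}(X)), \bE_{P' \sim \pi_P}\,\mL_{P'}(Y)\bigr) \le \delta_0(P) + \eta$, where $\delta_0(P)$ denotes the infimum over $\Delta(H_0')$ appearing in the hypothesis. Applying the contraction inequality with $T = \mathcal{B}$, $\mu = \mL_P(\mathcal{A}(X))$, $\nu = \bE_{P' \sim \pi_P}\,\mL_{P'}(Y)$ gives
$$\bP_{X \sim P}[\mathcal{C}(X) = 1] \le \bE_{P' \sim \pi_P}\bP_{Y \sim P'}[\mathcal{B}(Y) = 1] + \delta_0(P) + \eta \le \sup_{P' \in H_0'}\bP_{Y \sim P'}[\mathcal{B}(Y) = 1] + \delta_0(P) + \eta.$$
Symmetrically, for $P \in H_1$ one obtains $\bP_{X \sim P}[\mathcal{C}(X) = 0] \le \sup_{P' \in H_1'}\bP_{Y \sim P'}[\mathcal{B}(Y) = 0] + \delta_1(P) + \eta$, with $\delta_1(P)$ the corresponding infimum over $\Delta(H_1')$. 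Taking the supremum over $P$ in each hypothesis and adding the two bounds: the first two terms sum to the Type I$+$II error of $\mathcal{B}$, which is at most $\epsilon$, while $\sup_{P \in H_0}\delta_0(P) + \sup_{P \in H_1}\delta_1(P) \le \delta$ by assumption, so the Type I$+$II error of $\mathcal{C}$ is at most $\epsilon + \delta + 2\eta$. Letting $\eta \downarrow 0$ finishes the proof.

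There is no real obstacle here — this is why the paper describes the proof as short. The only points needing care are: (i) the infima over $\Delta(H_0')$ and $\Delta(H_1')$ need not be attained, handled by the $\eta$-slack and then a limit; (ii) bookkeeping that the internal randomness of $\mathcal{B}$ (and of $\mathcal{A}$, if randomized) is folded into the laws $\mL_P(\mathcal{A}(X))$ and into $\bP_{Y \sim \nu}[\mathcal{B}(Y) = 1]$, so the contraction inequality applies verbatim; and (iii) the composition $\mathcal{B} \circ \mathcal{A}$ remains polynomial time, which is immediate from the polynomial size blow-up of $\mathcal{A}$'s output.
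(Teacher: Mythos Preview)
Your proof is correct and is exactly the short argument the paper alludes to: compose $\mathcal{B}$ with $\mathcal{A}$, use that acceptance probabilities differ by at most the total variation distance, and push the mixture through via affinity to compare against the worst-case Type I and Type II errors of $\mathcal{B}$. The paper does not spell out the proof beyond noting it ``is short and follows from the definition of total variation,'' and your write-up is precisely this.
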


If $\delta = o(1)$, then given a blackbox solver $\mathcal{B}$ for $\mathcal{P}'_D$, the algorithm that applies $\mathcal{A}$ and then $\mathcal{B}$ solves $\mathcal{P}_D$ and requires only a single query to the blackbox. We now outline the computational model and conventions we adopt throughout this paper. An algorithm that runs in randomized polynomial time refers to one that has access to $\text{poly}(n)$ independent random bits and must run in $\text{poly}(n)$ time where $n$ is the size of the instance of the problem. For clarity of exposition, in our reductions we assume that explicit expressions can be exactly computed and that we can sample a biased random bit $\text{Bern}(p)$ in polynomial time. We also assume that the oracles described in Definition \ref{def:computable} can be computed in $\text{poly}(n)$ time. For simplicity of exposition, we assume that we can sample $\mN(0, 1)$ in $\text{poly}(n)$ time. 

\subsection{Properties of Total Variation}

The analysis of our reductions will make use of the following well-known facts and inequalities concerning total variation distance.

\begin{fact} \label{tvfacts}
The distance $\TV$ satisfies the following properties:
\begin{enumerate}
\item (Tensorization) Let $P_1, P_2, \dots, P_n$ and $Q_1, Q_2, \dots, Q_n$ be distributions on a measurable space $(\mathcal{X}, \mathcal{B})$. Then
$$\TV\left( \prod_{i = 1}^n P_i, \prod_{i = 1}^n Q_i \right) \le \sum_{i = 1}^n \TV\left( P_i, Q_i \right)$$
\item (Conditioning on an Event) For any distribution $P$ on a measurable space $(\mathcal{X}, \mathcal{B})$ and event $A \in \mathcal{B}$, it holds that
$$\TV\left( P(\cdot | A), P \right) = 1 - P(A)$$
\item (Conditioning on a Random Variable) For any two pairs of random variables $(X, Y)$ and $(X', Y')$ each taking values in a measurable space $(\mathcal{X}, \mathcal{B})$, it holds that
$$\TV\left( \mL(X), \mL(X') \right) \le \TV\left( \mL(Y), \mL(Y') \right) + \bE_{y \sim Y} \left[ \TV\left( \mL(X | Y = y), \mL(X' | Y' = y) \right)\right]$$
where we define $\TV\left( \mL(X | Y = y), \mL(X' | Y' = y) \right) = 1$ for all $y \not \in \textnormal{supp}(Y')$.
\end{enumerate}
\end{fact}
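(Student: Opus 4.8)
The plan is to establish the three parts independently, each straight from an elementary characterization of total variation. For the tensorization bound in part (1), I would use the coupling characterization $\TV(P,Q)=\inf\{\P[U\neq V]\}$ over all couplings $(U,V)$ of $(P,Q)$: choose an optimal coupling $(U_i,V_i)$ of $(P_i,Q_i)$ for each $i$, make them mutually independent, and note that $(U_1,\dots,U_n)\sim\prod_i P_i$ while $(V_1,\dots,V_n)\sim\prod_i Q_i$, so this is a coupling of the two product laws and $\TV(\prod_i P_i,\prod_i Q_i)\le\P[\exists i\colon U_i\neq V_i]\le\sum_i\P[U_i\neq V_i]=\sum_i\TV(P_i,Q_i)$ by a union bound. (A telescoping hybrid argument using the triangle inequality together with $\TV(P_1\otimes R,Q_1\otimes R)=\TV(P_1,Q_1)$ works equally well.) For part (2), I would use $\TV(\mu,\nu)=\sup_B|\mu(B)-\nu(B)|$. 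Taking $B=A$ and using $P(A\mid A)=1$ gives the lower bound $\TV(P(\cdot\mid A),P)\ge 1-P(A)$. For the matching upper bound, fix any measurable $B$, write $P(B)=P(B\cap A)+P(B\cap A^c)$ and $P(B\mid A)=P(B\cap A)/P(A)$, so that $P(B\mid A)-P(B)=P(B\cap A)\bigl(\tfrac1{P(A)}-1\bigr)-P(B\cap A^c)$; since $0\le P(B\cap A)\le P(A)$ and $0\le P(B\cap A^c)\le 1-P(A)$, this difference lies in $[-(1-P(A)),\,1-P(A)]$, and taking the supremum over $B$ finishes part (2).

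For part (3), I would first invoke data processing: $X$ is the coordinate projection of $(X,Y)$, so the supremum definition gives $\TV(\mL(X),\mL(X'))\le\TV(\mL(X,Y),\mL(X',Y'))$ at once. I would then disintegrate $\mu:=\mL(X,Y)$ and $\nu:=\mL(X',Y')$ over the $Y$-coordinate, $\mu(dx,dy)=\mu_Y(dy)\,\mu_{X|y}(dx)$ and $\nu(dx,dy)=\nu_Y(dy)\,\nu_{X|y}(dx)$ with $\mu_{X|y}=\mL(X\mid Y=y)$ and $\nu_{X|y}=\mL(X'\mid Y'=y)$, choosing $\nu_{X|y}$ arbitrarily on the $\nu_Y$-null set where it is not defined. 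For a measurable set $C$ with sections $C_y$, the telescoping identity $\mu(C)-\nu(C)=\int\bigl[\mu_{X|y}(C_y)-\nu_{X|y}(C_y)\bigr]\mu_Y(dy)+\int\nu_{X|y}(C_y)\,(\mu_Y-\nu_Y)(dy)$ holds regardless of that choice. Bounding the first integral by $\int\TV(\mu_{X|y},\nu_{X|y})\,\mu_Y(dy)=\bE_{y\sim Y}\bigl[\TV(\mL(X\mid Y=y),\mL(X'\mid Y'=y))\bigr]$ and the second by $\TV(\mu_Y,\nu_Y)=\TV(\mL(Y),\mL(Y'))$ — using $\nu_{X|y}(C_y)\in[0,1]$ and the fact that the positive and negative parts of $\mu_Y-\nu_Y$ both have total mass $\TV(\mu_Y,\nu_Y)$ — and then taking the supremum over $C$ yields the stated inequality. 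The convention $\TV(\mL(X\mid Y=y),\mL(X'\mid Y'=y))=1$ for $y\notin\mathrm{supp}(Y')$ only weakens the bound, since the true value there is at most $1$.

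None of this requires ideas beyond the standard descriptions of total variation, so I do not expect a genuine obstacle; the only point needing care is the measure-theoretic bookkeeping in part (3) — the existence of regular conditional distributions, which is automatic on the Borel (indeed Polish) spaces arising in this paper, and the treatment of the $\nu_Y$-null set on which $\mL(X'\mid Y'=y)$ is undefined, which is precisely what the stated convention absorbs. In the discrete case all three parts reduce to elementary manipulations of finite sums of absolute values — for instance part (3) becomes the chain-rule inequality $\sum_{x,y}|\mu(x,y)-\nu(x,y)|\le\sum_y\mu_Y(y)\sum_x|\mu_{X|y}(x)-\nu_{X|y}(x)|+\sum_y|\mu_Y(y)-\nu_Y(y)|$, obtained by adding and subtracting $\mu_Y(y)\nu_{X|y}(x)$ inside each absolute value — and no measure theory is needed.
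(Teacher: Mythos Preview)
Your proposal is correct. The paper states this as a ``well-known fact'' without proof, so there is no paper argument to compare against; your treatment of each part is standard and sound. The coupling/union-bound argument for (1), the sup-over-events computation for (2), and the disintegrate-and-telescope argument for (3) are all valid, and your handling of the convention in (3) is accurate: since the add-and-subtract identity holds for any measurable choice of $\nu_{X|y}$ on the $\nu_Y$-null set, bounding the conditional TV there by $1$ can only enlarge the right-hand side.
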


Given an algorithm $\mathcal{A}$ and distribution $\mP$ on inputs, let $\mathcal{A}(\mP)$ denote the distribution of $\mathcal{A}(X)$ induced by $X \sim \mP$. If $\mathcal{A}$ has $k$ steps, let $\mathcal{A}_i$ denote the $i$th step of $\mathcal{A}$ and $\mathcal{A}_{i\text{-}j}$ denote the procedure formed by steps $i$ through $j$. Each time this notation is used, we clarify the intended initial and final variables when $\mathcal{A}_{i}$ and $\mathcal{A}_{i\text{-}j}$ are viewed as Markov kernels. The next lemma from \cite{brennan2019universality} encapsulates the structure of all of our analyses of average-case reductions. Its proof is simple and included in Appendix \ref{sec:appendix} for completeness.

\begin{lemma}[Lemma 4.2 in \cite{brennan2019universality}] \label{lem:tvacc}
Let $\mathcal{A}$ be an algorithm that can be written as $\mathcal{A} = \mathcal{A}_m \circ \mathcal{A}_{m-1} \circ \cdots \circ \mathcal{A}_1$ for a sequence of steps $\mathcal{A}_1, \mathcal{A}_2, \dots, \mathcal{A}_m$. Suppose that the probability distributions $\mP_0, \mP_1, \dots, \mP_m$ are such that $\TV(\mathcal{A}_i(\mP_{i-1}), \mP_i) \le \epsilon_i$ for each $1 \le i \le m$. Then it follows that
$$\TV\left( \mathcal{A}(\mP_0), \mP_m \right) \le \sum_{i = 1}^m \epsilon_i$$
\end{lemma}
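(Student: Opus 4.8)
The plan is to prove this by induction on the number of steps $m$, using only the triangle inequality for $\TV$ together with the data processing inequality for $\TV$ under a (possibly randomized) map, the latter being an immediate consequence of item 3 of Fact \ref{tvfacts}.

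First I would isolate the data processing step. Fix any single step $\mathcal{A}_j$, viewed as a Markov kernel from its input space to its output space, and let $P, Q$ be two distributions on that input space. Applying item 3 of Fact \ref{tvfacts} with $Y \sim P$, $X = \mathcal{A}_j(Y)$, and $Y' \sim Q$, $X' = \mathcal{A}_j(Y')$, we have $\mL(X \mid Y = y) = \mL(X' \mid Y' = y)$ for every $y$, since both are the law of $\mathcal{A}_j$ applied to the point $y$. Hence the conditional expectation term vanishes and we obtain $\TV(\mathcal{A}_j(P), \mathcal{A}_j(Q)) \le \TV(P, Q)$.

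Then I would run the induction. The base case $m = 1$ is precisely the hypothesis $\TV(\mathcal{A}_1(\mP_0), \mP_1) \le \epsilon_1$. For the inductive step, write $\mathcal{A} = \mathcal{A}_m \circ \mathcal{A}'$ with $\mathcal{A}' = \mathcal{A}_{m-1} \circ \cdots \circ \mathcal{A}_1$, so that by the induction hypothesis $\TV(\mathcal{A}'(\mP_0), \mP_{m-1}) \le \sum_{i=1}^{m-1} \epsilon_i$. Then
\[
\TV\left( \mathcal{A}(\mP_0), \mP_m \right) \le \TV\left(\mathcal{A}_m(\mathcal{A}'(\mP_0)), \mathcal{A}_m(\mP_{m-1})\right) + \TV\left(\mathcal{A}_m(\mP_{m-1}), \mP_m\right)
\]
by the triangle inequality. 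The first term is at most $\TV(\mathcal{A}'(\mP_0), \mP_{m-1}) \le \sum_{i=1}^{m-1}\epsilon_i$ by the data processing step above, and the second is at most $\epsilon_m$ by hypothesis; adding the two bounds yields the claim.

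There is essentially no obstacle here: the statement is a routine telescoping bound, and the only point needing a word of justification is the data processing inequality, which is why I would extract it first from Fact \ref{tvfacts} before assembling the induction.
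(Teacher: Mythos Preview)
Your proof is correct and follows essentially the same approach as the paper's own proof: induction on $m$, peeling off the last step $\mathcal{A}_m$, and combining the triangle inequality with the data-processing inequality for $\TV$. The only difference is cosmetic --- you explicitly derive the data-processing inequality from item 3 of Fact~\ref{tvfacts}, whereas the paper simply invokes it by name.
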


\section{Reduction to Imbalanced Sparse Mixtures}
\label{sec:redisgm}

\begin{figure}[t!]
\begin{algbox}
\textbf{Algorithm} $k$\textsc{-pds-to-isgm}

\vspace{1mm}

\textit{Inputs}: $k$\pr{-pds} instance $G \in \mG_N$ with dense subgraph size $k$ that divides $N$, partition $E$ of $[N]$ and edge probabilities $0 < q < p \le 1$, a slow growing function $w(N) = \omega(1)$, target $\pr{isgm}$ parameters $(n, k, d, \mu, \epsilon)$ satisfying that $\epsilon = 1/r$ for some prime number $r$, $wn \le k \cdot \frac{r^t - 1}{r - 1}$ for some $t \in \mathbb{N}$, $d \ge m$ and $kr^t \ge m$ where $m$ is the smallest multiple of $k$ larger than $\left( \frac{p}{Q} + 1 \right) N$ where $Q = 1 - \sqrt{(1 - p)(1 - q)} + \mathbf{1}_{\{ p = 1\}} \left( \sqrt{q} - 1 \right)$, $\mu \le c/\sqrt{r^t(r - 1) \log(kmr^t)}$ for a sufficiently small constant $c > 0$ and $n, kr^t \le \text{poly}(N)$

\begin{enumerate}
\item \textit{Symmetrize and Plant Diagonals}: Compute $M_{\text{PD1}} \in \{0, 1\}^{m \times m}$ with partition $F'$ of $[m]$ as
$$M_{\text{PD1}} \gets \pr{To-}k\textsc{-Partite-Submatrix}(G)$$
applied with initial dimension $N$, edge probabilities $p$ and $q$ and target dimension $m$.
\item \textit{Pad}: Form $M_{\text{PD2}} \in \{0, 1\}^{m \times kr^t}$ by adding $kr^t - m$ new columns sampled i.i.d. from $\text{Bern}(Q)^{\otimes m}$ to $M_{\text{PD1}}$. Let $F_i$ be $F_i'$ with $r^t - m/k$ of the new columns. Randomly permute the row indices of $M_{\text{PD2}}$ and the column indices of $M_{\text{PD2}}$ within each part $F_i$.
\item \textit{Gaussianize}: Compute $M_{\text{G}} \in \mathbb{R}^{m \times kr^t}$ as
$$M_{\text{G}} \gets \textsc{Gaussianize}(M_{\text{PD2}})$$
applied with probabilities $p$ and $Q$ and $\mu_{ij} = \sqrt{r^t(r - 1)} \cdot \mu$ for all $(i, j) \in [m] \times [kr^t]$.
\item \textit{Construct Rotation Matrix}: Form the $\ell \times r^t$ matrix $H_{r, t}$ where $\ell = \frac{r^t - 1}{r - 1}$ as follows
\begin{enumerate}
\item[(1)] Let $V_1, V_2, \dots, V_{\ell}$ be an enumeration of the hyperplanes of $\mathbb{F}_r^t$ and $P_1, P_2, \dots, P_{r^t}$ be an enumeration of the points in $\mathbb{F}_r^t$.
\item[(2)] For each pair $(i, j) \in [\ell] \times [r^t]$, set the $(i, j)$th entry of $H_{r, t}$ to be
$$(H_{r, t})_{ij} = \frac{1}{\sqrt{r^t(r - 1)}} \cdot \left\{ \begin{matrix} 1 & \text{if } P_j \not \in V_i \\ 1 - r & \text{if } P_j \in V_i \end{matrix} \right.$$
\end{enumerate}
\item \textit{Sample Rotation}: Fix a partition $[k\ell] = F_1' \cup F_2' \cup \cdots \cup F_k'$ into $k$ parts each of size $\ell$ and compute the matrix $M_{\text{R}} \in \mathbb{R}^{m \times k\ell}$ where
$$(M_{\text{R}})_{F_i'} = (M_{\text{G}})_{F_i} H_{r, t}^\top \quad \text{for each } i \in [k]$$
where $Y_{F_i}$ denotes the submatrix of $Y$ restricted to the columns with indices in $F_i$.
\item \textit{Permute and Output}: Form $X \in \mathbb{R}^{d \times n}$ by choosing $n$ columns of $M_{\text{R}}$ uniformly at random, randomly embedding the resulting matrix as $m$ rows of $X$ and sampling the remaining $d - m$ rows of $X$ i.i.d. from $\mN(0, I_n)$. Output the columns $(X_1, X_2, \dots, X_n)$ of $X$.
\end{enumerate}
\vspace{0.5mm}

\end{algbox}
\caption{Reduction from $k$-partite planted dense subgraph to exactly imbalanced sparse Gaussian mixtures.}
\label{fig:isgmreduction}
\end{figure}

In this section, we give our reduction from $k$\pr{-pds} to the key intermediate problem \pr{isgm}, which we will reduce from in subsequent sections to obtain several of our main computational lower bounds. We also introduce several average-case reduction subroutines that will be used in our reduction to semirandom community recovery. The problem \pr{isgm}, imbalanced sparse Gaussian mixtures, is a simple vs. simple hypothesis testing problem defined formally below. A similar distribution was also used in \cite{diakonikolas2017statistical} to construct an instance of robust sparse mean estimation inducing the tight statistical-computational gap in the statistical query model.

\begin{definition}[Imbalanced Sparse Gaussian Mixtures]
Given some $\mu \in \mathbb{R}$ and $\epsilon \in (0, 1)$, let $\mu'$ be such that $\epsilon \cdot \mu' + (1 - \epsilon) \cdot \mu = 0$. The distribution $\pr{isgm}_{H_1}(n, k, d, \mu, \epsilon)$ over $X = (X_1, X_2, \dots, X_n)$ where $X_i \in \mathbb{R}^d$ is sampled as follows:
\begin{enumerate}
\item choose a $k$-subset $S \subseteq [d]$ uniformly at random;
\item sample $X_1, X_2, \dots, X_n$ i.i.d. from the mixture $\pr{mix}_{\epsilon}\left( \mN(\mu \cdot \mathbf{1}_S, I_d), \mN(\mu' \cdot \mathbf{1}_S, I_d) \right)$.
\end{enumerate}
The imbalanced sparse Gaussian mixture detection problem $\pr{isgm}(n, k, d, \mu, \epsilon)$ has observations $X = (X_1, X_2, \dots, X_n)$ and hypotheses
$$H_0 : X \sim \mN(0, I_d)^{\otimes n} \quad \textnormal{and} \quad H_1 : X \sim \pr{isgm}_{H_1}(n, k, d, \mu, \epsilon)$$
\end{definition}

Figure \ref{fig:isgmreduction} outlines the steps of our reduction from $k$\pr{-pds} to \pr{isgm}, using subroutines that will be introduced in the next two subsections. The reduction makes use of the framework for average-case reductions set forth in the sequence of work \cite{brennan2018reducibility, brennan2019universality, brennan2019optimal} for its initial steps transforming $k$\pr{-pc} into a Gaussianized submatrix problem, while preserving the partition-promise structure of $k$\pr{-pds}. These steps are discussed in Section \ref{subsec:tosubmatrix}.

In Section \ref{subsec:rotations}, we introduce the key insight of the reduction, which is to rotate the resulting Gaussianized submatrix problem by a carefully chosen matrix $H_{r, t}$ constructed using hyperplanes in $\mathbb{F}_r^t$, to arrive at an instance of \pr{isgm}. The matrix $H_{r, t}$ is a Grassmanian construction related to projective constructions of block designs and is close to a Hadamard matrix when $r = 2$. Three properties of $H_{r, t}$ are essential to our reduction: (1) $H_{r, t}$ has orthonormal rows; (2) $H_{r, t}$ contains only two distinct values; and (3) each column of $H_{r, t}$ has approximately an $1/r$ fraction of its entries negative. These properties are established and used in the analysis of our reduction in Section \ref{subsec:rotations}.

The next theorem encapsulates the total variation guarantees of the reduction $k$\textsc{-pds-to-isgm}. A key parameter is the prime number $r$, which is essential to our construction of the matrices $H_{r, t}$. In applications of the theorem to robust sparse mean estimation, $r$ will grow with $n$. To show the tightest possible statistical-computational gaps for robust sparse mean estimation, we ideally would want to take $N$ such that $N = \Theta(kr^t)$. When $r$ is growing with $n$, this induces number theoretic constraints on our choices of parameters that require careful attention. Because of this, we have kept the statement of our next theorem technically precise and in terms of all of the free parameters of the reduction $k$\pr{-pds-to-isgm}. Ignoring these number theoretic constraints, the reduction $k$\pr{-pds-to-isgm} can be interpreted as essentially mapping an instance of $k\pr{-pds}(N, k, p, q)$ with $k = o(\sqrt{N})$ to $\pr{isgm}(n, k, d, \mu, \epsilon)$ where $\epsilon \in (0, 1)$ is arbitrary and can vary with $n$. The target parameters $n, d$ and $\mu$ satisfy that
$$d = \Omega(N), \quad n = o(\epsilon N) \quad \text{and} \quad \mu \asymp \frac{1}{\sqrt{\log N}} \cdot \sqrt{\frac{\epsilon k}{N}}$$
All of our applications will handle the number theoretic constraints to set parameters so that they nearly satisfy these conditions. The slow-growing function $w(N)$ is so that Step 6 subsamples the produced samples by a large enough factor to enable an application of finite de Finetti's theorem. Our lower bounds for robust sparse mean estimation, semirandom community recovery and universality of lower bounds for sparse mixtures will set $r$ to be growing, $r = 3$ and $r = 2$, respectively. We now state our total variation guarantees for $k$\textsc{-pds-to-isgm}.

\begin{theorem}[Reduction from $k$\pr{-pds} to \pr{isgm}] \label{thm:isgmreduction}
Let $N$ be a parameter, $r = r(N) \ge 2$ be a prime number and $w(N) = \omega(1)$ be a slow-growing function. Fix initial and target parameters as follows:
\begin{itemize}
\item \textnormal{Initial} $k\pr{-pds}$ \textnormal{Parameters:} number of vertices $N$, dense subgraph size $k$ that divides $N$, fixed constant edge probabilities $0 < q < p \le 1$ with $q = N^{-O(1)}$ and a partition $E$ of $[N]$.
\item \textnormal{Target} $\pr{isgm}$ \textnormal{Parameters:} $(n, k, d, \mu, \epsilon)$ where $\epsilon = 1/r$ and there is a parameter $t = t(N) \in \mathbb{N}$ such that
$$wn \le \frac{k(r^t - 1)}{r - 1}, \quad m \le d, kr^t \le \textnormal{poly}(N) \quad \textnormal{and}$$
$$0 \le \mu \le \frac{\delta}{2 \sqrt{3\log (kmr^t) + 2\log (p - Q)^{-1}}} \cdot \frac{1}{\sqrt{r^t(r - 1)}}$$
where $m$ is the smallest multiple of $k$ larger than $\left( \frac{p}{Q} + 1 \right) N$, where $Q = 1 - \sqrt{(1 - p)(1 - q)} + \mathbf{1}_{\{ p = 1\}} \left( \sqrt{q} - 1 \right)$ and $\delta = \min \left\{ \log \left( \frac{p}{Q} \right), \log \left( \frac{1 - Q}{1 - p} \right) \right\}$.
\end{itemize}
Let $\mathcal{A}(G)$ denote $k$\textsc{-pds-to-isgm} applied to the graph $G$ with these parameters. Then $\mathcal{A}$ runs in randomized polynomial time and it follows that
$$\TV\left( \mathcal{A}\left( k\pr{-pds}(N, k, p, q) \right), \pr{isgm}(n, k, d, \mu, \epsilon) \right) = O\left( w^{-1} + \frac{k^2}{wN} + \frac{k}{\sqrt{N}} + e^{-\Omega(N^2/kn)} + N^{-1} \right)$$
under both $H_0$ and $H_1$ as $N \to \infty$.
\end{theorem}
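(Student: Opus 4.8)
The plan is to write $\mathcal A = \mathcal A_6\circ\cdots\circ\mathcal A_1$ following the six numbered steps of Figure~\ref{fig:isgmreduction} and apply the chaining inequality of Lemma~\ref{lem:tvacc} twice, once under $H_0$ and once under $H_1$, exhibiting in each case intermediate laws $\mP_0,\dots,\mP_6$ with $\mP_0$ the relevant hypothesis of $k\pr{-pds}(N,k,p,q)$, $\mP_6$ the relevant hypothesis of $\pr{isgm}(n,k,d,\mu,\epsilon)$, and $\TV(\mathcal A_i(\mP_{i-1}),\mP_i)\le\epsilon_i$ at each step; the final bound is then $\sum_i\epsilon_i$. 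The $H_0$ side is comparatively painless: after Gaussianization the instance is a product of i.i.d.\ $\mN(0,1)$'s, rotation by $H_{r,t}$ preserves this exactly because $H_{r,t}$ has orthonormal rows (so a block $v\mapsto vH_{r,t}^\top$ sends $\mN(0,I_{r^t})$ to $\mN(0,I_\ell)$ and keeps rows independent), and sub-sampling columns and appending $d-m$ rows of $\mN(0,I_n)$ again only rearranges i.i.d.\ Gaussians, so the entire $H_0$ error comes from the Gaussianization subroutines.

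For Steps~1--3 I would invoke the total-variation guarantees of \pr{To-}$k$\textsc{-Partite-Submatrix} and \textsc{Gaussianize} established in Sections~\ref{subsec:tosubmatrix}--\ref{subsec:rotations}: these map the two hypotheses of $k\pr{-pds}$ to within $O(k/\sqrt N+k^2/(wN)+N^{-1})$ of the two hypotheses of a Gaussianized $k$-partite planted-submatrix problem on $[m]\times[kr^t]$, where under $H_1$ the mean matrix equals $\sqrt{r^t(r-1)}\,\mu$ on a $k\times k$ block containing exactly one row per row-part and exactly one column per column-part $F_i$ (the $k$-partite promise being what guarantees this last fact, preserved through the padding) and is zero elsewhere. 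Here one checks that the constraints in the theorem — $m$ the smallest multiple of $k$ above $(p/Q+1)N$, the stated $Q$, and $\mu\le\frac{\delta}{2\sqrt{3\log(kmr^t)+2\log(p-Q)^{-1}}}\,(r^t(r-1))^{-1/2}$ — are exactly the hypotheses under which \textsc{Gaussianize} has $o(1)$ error on a matrix with $kmr^t$ relevant entries and planted amplitude $\sqrt{r^t(r-1)}\,\mu$; the $k/\sqrt N$ is the cost of symmetrization and diagonal-planting, and $k^2/(wN),N^{-1}$ come from the submatrix embedding and the Gaussianization.

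Step~4 is deterministic and is where the needed facts about $H_{r,t}$ are proved from point--hyperplane incidences in $\mathbb F_r^t$: $H_{r,t}H_{r,t}^\top=I_\ell$ (diagonal: $r^{t-1}(1-r)^2+(r^t-r^{t-1})=r^t(r-1)$; off-diagonal: using $|V_i\cap V_{i'}|=r^{t-2}$, $r^{t-2}r^2+2(r^{t-1}-r^{t-2})(1-r)+(r^t-2r^{t-1}+r^{t-2})=0$); every entry is $\pm$ a single value; and the column indexed by a point $P\ne0$ has exactly $\tfrac{r^{t-1}-1}{r-1}$ negative entries while the column indexed by $0$ is entirely negative. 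In Step~5, rotating block $F_i$ sends the planted row's mean $\sqrt{r^t(r-1)}\,\mu\,e_{P_i}$ (with $P_i$ the planted point of block $i$) to $\sqrt{r^t(r-1)}\,\mu$ times the $P_i$-column of $H_{r,t}$, i.e.\ to a vector whose entries are $\mu$ on hyperplanes avoiding $P_i$ and $\mu(1-r)=\mu'$ on hyperplanes through $P_i$; crucially this vector is the same for all $k$ planted rows, since only membership in the planted row-set affects the mean. Hence every output column of block $F_i'$ has mean $\mu\mathbf{1}_S$ or $\mu'\mathbf{1}_S$ on the first $m$ coordinates, the ``$\mu'$-label'' of the column indexed by $V_a$ being $\mathbf{1}[P_i\in V_a]$, and the entries of $M_{\mathrm R}$ stay independent $\mN(\cdot,1)$ because the rotation acted on disjoint blocks of independent Gaussian rows.

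The crux is Step~6: comparing the rotated $H_1$ law to the i.i.d.\ mixture $\pr{isgm}_{H_1}$. Marginally a uniformly chosen column is exactly $\pr{mix}_\epsilon(\mN(\mu\mathbf{1}_S,I_d),\mN(\mu'\mathbf{1}_S,I_d))$, since a uniformly random point of $\mathbb F_r^t$ lies in a fixed linear hyperplane with probability exactly $r^{t-1}/r^t=1/r=\epsilon$ and $\mu(1-r)$ is precisely the $\mu'$ solving $\epsilon\mu'+(1-\epsilon)\mu=0$; but the $\ell$ labels within one block are a deterministic function of the single point $P_i$, so the $n$ sub-sampled columns are not jointly i.i.d. The resolution uses $wn\le k\ell$: I would (i) bound the number of chosen columns falling in any one block by a Chernoff/hypergeometric tail — the origin of the $e^{-\Omega(N^2/kn)}$ term; (ii) discard via $\TV$-conditioning the rare event (probability $\le k/r^t\le O(k^2/N)$, absorbed into $O(k/\sqrt N)$) that some block's planted point is $0$; and on the complementary event (iii) apply finite de~Finetti / an exchangeable coupling to the selected columns across the $k$ blocks, so that with heavy sub-sampling the within-block correlations are washed out and the joint law is within $O(w^{-1}+k^2/(wN))$ of i.i.d.\ $\pr{mix}_\epsilon$; finally one checks that appending $d-m$ rows of $\mN(0,I_n)$ and relabelling the planted row-set $R\hookrightarrow S\subseteq[d]$ reproduces the $\pr{isgm}$ support structure verbatim. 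Summing the per-step bounds through Lemma~\ref{lem:tvacc} yields $O(w^{-1}+k^2/(wN)+k/\sqrt N+e^{-\Omega(N^2/kn)}+N^{-1})$ under both hypotheses. I expect step (iii) — driving the de~Finetti/coupling error down to the stated $O(w^{-1})$ while simultaneously matching the exact marginal $\pr{mix}_\epsilon$ — to be the main technical obstacle, together with the bookkeeping needed to keep every parameter constraint consistent through the chain.
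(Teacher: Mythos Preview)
Your chaining structure, the analysis of $H_{r,t}$, and the $H_0$ argument are all correct and match the paper. There are two places where your plan diverges from what actually happens.

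First, the attribution of the exponential term is wrong. The $e^{-\Omega(N^2/km)}$ (stated as $e^{-\Omega(N^2/kn)}$) does \emph{not} come from controlling per-block column counts in Step~6; it comes from Step~1, specifically from the diagonal-planting inside \pr{To-}$k$\textsc{-Partite-Submatrix} (Lemma~\ref{lem:plantingdiagonals} and Lemma~\ref{lem:submatrix}). That step embeds $[N]$ into $[m]$ and has to argue that the artificially created diagonal entries are statistically indistinguishable from the correct Bernoulli pattern; the Chernoff bound there produces the $4k\cdot\exp(-Q^2N^2/48pkm)$ term. Likewise the $k^2/(wN)$ you attribute to ``submatrix embedding'' actually arises in Step~6, from the probability (at most $k/r^t$) that some block's planted column index corresponds to $0\in\mathbb F_r^t$.

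Second, and more substantively, your Step~6 plan is more complicated than needed and your identified ``main technical obstacle'' dissolves once you see the right reduction. After Step~5, the $k\ell$ columns of $M_{\mathrm R}$ are \emph{jointly independent} Gaussians with exactly two possible means, $\mu\mathbf 1_S$ and $\mu'\mathbf 1_S$; the only latent randomness is \emph{how many} of the $k\ell$ columns carry each mean. Conditionally on $T$, that count is a fixed integer $s_{T,F}$, and choosing $n$ columns uniformly at random makes the number of $\mu$-columns selected exactly $\mathrm{Hyp}(k\ell,s_{T,F},n)$. So the whole comparison to $\pr{isgm}_{H_1}$ collapses to a \emph{one-dimensional} total-variation bound between $\mathrm{Hyp}(k\ell,s_{\mU_{kr^t}(F),F},n)$ and $\mathrm{Bin}(n,1-\epsilon)$. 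The paper (Lemma~\ref{lem:samplerotation}) handles this in three short moves: condition on $s_{T,F}=kr^{t-1}$ (cost $\le k/r^t\le k^2/(wn)$, your point (ii)); apply the scalar finite de~Finetti bound $\TV(\mathrm{Hyp}(k\ell,kr^{t-1},n),\mathrm{Bin}(n,r^{t-1}/\ell))\le 4n/(k\ell)\le 4/w$; and adjust the binomial parameter from $r^{t-1}/\ell$ to $1-1/r$ via Lemma~\ref{lem:bintv}. No per-block occupancy control or multi-block exchangeable coupling is required.
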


Before proceeding to prove this theorem, we establish some notation that will be adopted throughout this section. Given a partition $F$ of $[N]$ with $[N] = F_1 \cup F_2 \cup \cdots \cup F_k$, let $\mU_N(F)$ denote the distribution of $k$-subsets of $[N]$ formed by choosing one member element of each of $F_1, F_2, \dots, F_k$ uniformly at random. Let $\mU_{N, k}$ denote the uniform distribution on $k$-subsets of $[N]$. Let $\mG(n, S, p, q)$ denote the distribution of planted dense subgraph instances from $\mG(n, k, p, q)$ conditioned on the subgraph being planted on the vertex set $S$ where $|S| = k$. Given $S \subseteq [m]$, $T \subseteq [n]$ and two distributions $\mP$ and $\mQ$ over $X$, let $\mathcal{M}(m, n, S, T, \mP, \mQ)$ denote the distribution of matrices in $X^{m \times n}$ with independent entries where $M_{ij} \sim \mP$ if $(i, j) \in S \times T$ and $M_{ij} \sim \mQ$ otherwise.

For simplicity of notation, when either $S$ or $T$ is a distribution $\mD$ on subsets of $[m]$ or $[n]$, we let this denote the mixture over $\mathcal{M}(m, n, S, T, \mP, \mQ)$ induced by sampling this set from $\mD$. We will adopt the same convention for $S$ in $\mG(n, S, p, q)$. We also let $\mathcal{M}(n, S, \mP, \mQ)$ be a shorthand for the distribution when $m = n$ and $S = T$. For simplicity, we also replace $\mP$ and $\mQ$ with $p$ and $q$ when $\mP = \text{Bern}(p)$ and $\mQ = \text{Bern}(q)$. Similarly, let $\mathcal{V}(n, S, \mP, \mQ)$ denote the distribution of vectors $v$ in $X^n$ with independent entries such that $v_i \sim \mP$ if $i \in S$ and $v_i \sim \mQ$ otherwise. We adopt the analogous shorthands for $\mathcal{V}(n, S, \mP, \mQ)$.

\subsection{Planting Diagonals, Cloning and Gaussianization}
\label{subsec:tosubmatrix}

In this section, we present several reductions from \cite{brennan2018reducibility, brennan2019universality, brennan2019optimal} that are used as subroutines in $k$\textsc{-pds-to-isgm}. We also introduce $\pr{To-}k\textsc{-Partite-Submatrix}$, which is a modified variant of the reduction $\pr{To-Submatrix}$ from \cite{brennan2019universality} that maps from the $k$-partite variant of planted dense subgraph. We remark that the proof of the total variation guarantees of $\pr{To-}k\textsc{-Partite-Submatrix}$ is technically more involved than that of $\pr{To-Submatrix}$ in \cite{brennan2019universality}.

We begin with the subroutine $\textsc{Graph-Clone}$, shown in Figure \ref{fig:clone}. This subroutine was introduced in \cite{brennan2019universality} and produces several independent samples from a planted subgraph problem given a single sample. Its properties as a Markov kernel are stated in the next lemma, which is proven by showing the two explicit expressions for $\bP[x^{ij} = v]$ in Step 1 define valid probability distributions and then explicitly writing the mass functions of $\mathcal{A}\left( \mG(n, q) \right)$ and $\mathcal{A}\left( \mG(n, S, p, q) \right)$.

\begin{figure}[t!]
\begin{algbox}
\textbf{Algorithm} \textsc{Graph-Clone}

\vspace{1mm}

\textit{Inputs}: Graph $G \in \mG_n$, the number of copies $t$, parameters $0 < q < p \le 1$ and $0 < Q < P \le 1$ satisfying $\frac{1 - p}{1 - q} \le \left( \frac{1 - P}{1 - Q} \right)^t$ and $\left( \frac{P}{Q} \right)^t \le \frac{p}{q}$

\begin{enumerate}
\item Generate $x^{ij} \in \{0, 1\}^t$ for each $1 \le i < j \le n$ such that:
\begin{itemize}
\item If $\{i, j \} \in E(G)$, sample $x^{ij}$ from the distribution on $\{0, 1\}^t$ with
$$\bP[x^{ij} = v] = \frac{1}{p - q} \left[ (1 - q) \cdot P^{|v|_1} (1 - P)^{t - |v|_1} - (1 - p) \cdot Q^{|v|_1} (1 - Q)^{t - |v|_1} \right]$$
\item If $\{i, j \} \not \in E(G)$, sample $x^{ij}$ from the distribution on $\{0, 1\}^t$ with
$$\bP[x^{ij} = v] = \frac{1}{p - q} \left[ p \cdot Q^{|v|_1} (1 - Q)^{t - |v|_1} - q \cdot P^{|v|_1} (1 - P)^{t - |v|_1} \right]$$
\end{itemize}
\item Output the graphs $(G_1, G_2, \dots, G_t)$ where $\{i, j\} \in E(G_k)$ if and only if $x^{ij}_k = 1$.
\end{enumerate}
\vspace{1mm}

\end{algbox}
\caption{Subroutine $\textsc{Graph-Clone}$ for producing independent samples from planted graph problems from \cite{brennan2019universality}.}
\label{fig:clone}
\end{figure}

\begin{lemma}[Graph Cloning -- Lemma 5.2 in \cite{brennan2019universality}] \label{lem:graphcloning}
Let $t \in \mathbb{N}$, $0 < q < p \le 1$ and $0 < Q < P \le 1$ satisfy that
$$\frac{1 - p}{1 - q} \le \left( \frac{1 - P}{1 - Q} \right)^t \quad \text{and} \quad \left( \frac{P}{Q} \right)^t \le \frac{p}{q}$$
Then the algorithm $\mathcal{A} = \textsc{Graph-Clone}$ runs in $\textnormal{poly}(t, n)$ time and satisfies that for each $S \subseteq [n]$,
$$\mathcal{A}\left( \mG(n, q) \right) \sim \mG(n, Q)^{\otimes t} \quad \text{and} \quad \mathcal{A}\left( \mG(n, S, p, q) \right) \sim \mG(n, S, P, Q)^{\otimes t}$$
\end{lemma}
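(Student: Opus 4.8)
The statement to prove is that $\textsc{Graph-Clone}$ maps $\mG(n,q)$ to $\mG(n,Q)^{\otimes t}$ and $\mG(n,S,p,q)$ to $\mG(n,S,P,Q)^{\otimes t}$ exactly, under the stated conditions $\tfrac{1-p}{1-q} \le \left(\tfrac{1-P}{1-Q}\right)^t$ and $\left(\tfrac{P}{Q}\right)^t \le \tfrac{p}{q}$. My plan is to verify this by a direct edge-by-edge computation, exploiting the fact that in all the relevant distributions the pairs $\{i,j\}$ are mutually independent, so it suffices to check the claim one pair at a time.

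First I would verify that the two formulas in Step 1 actually define valid probability distributions on $\{0,1\}^t$. For the formula used when $\{i,j\}\in E(G)$, I sum $\bP[x^{ij}=v]$ over all $v \in \{0,1\}^t$ by grouping by $|v|_1 = s$ and using $\sum_{s=0}^t \binom{t}{s} P^s(1-P)^{t-s} = 1$ and similarly for $Q$; the sum is $\tfrac{1}{p-q}\left[(1-q) - (1-p)\right] = 1$. For the non-edge formula, the sum is $\tfrac{1}{p-q}\left[p - q\right] = 1$. Nonnegativity is where the hypotheses enter: for the edge case I need $(1-q)P^s(1-P)^{t-s} \ge (1-p)Q^s(1-Q)^{t-s}$ for all $s \in \{0,\dots,t\}$, i.e. $\tfrac{1-p}{1-q} \le \left(\tfrac{P}{Q}\right)^s\left(\tfrac{1-P}{1-Q}\right)^{t-s}$; the right-hand side as a function of $s$ is monotone, so it is minimized at one endpoint, and the hypothesis $\tfrac{1-p}{1-q}\le\left(\tfrac{1-P}{1-Q}\right)^t$ (the $s=0$ endpoint) plus $P > Q$ covers it — I'd need to check the geometric-mean / monotonicity argument handles both endpoints, using also that $\left(\tfrac{P}{Q}\right)^t \le \tfrac{p}{q} \le \tfrac{1-q}{1-p}\cdot\tfrac{?}{?}$; in fact the cleanest route is: the minimum over $s$ of $\left(\tfrac{P}{Q}\right)^s\left(\tfrac{1-P}{1-Q}\right)^{t-s}$ is $\min\{1,\ \left(\tfrac{1-P}{1-Q}\right)^t, \left(\tfrac{P}{Q}\right)^t\}$ since the expression is log-linear in $s$; then one checks $\tfrac{1-p}{1-q} \le 1$ and $\tfrac{1-p}{1-q}\le\left(\tfrac{1-P}{1-Q}\right)^t$ hold directly. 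Similarly for the non-edge case, nonnegativity requires $pQ^s(1-Q)^{t-s} \ge qP^s(1-P)^{t-s}$, i.e. $\tfrac{q}{p}\le\left(\tfrac{Q}{P}\right)^s\left(\tfrac{1-Q}{1-P}\right)^{t-s}$, whose right side is minimized at $\min\{\left(\tfrac{Q}{P}\right)^t, 1\} = \left(\tfrac{Q}{P}\right)^t$, and $\tfrac{q}{p}\le\left(\tfrac{Q}{P}\right)^t$ is exactly the second hypothesis rearranged.

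Next I would compute the output distribution of a single pair $\{i,j\}$. Under $H_0$ (input $\mG(n,q)$), the pair is an edge with probability $q$, and then $x^{ij}$ has the edge-law, else the non-edge law; so the output marginal on $\{0,1\}^t$ for that pair is $q\cdot\bP_{\text{edge}}[v] + (1-q)\cdot\bP_{\text{nonedge}}[v]$, which telescopes to $\tfrac{1}{p-q}\left[ q(1-q)P^{|v|}(1-P)^{t-|v|} - q(1-p)Q^{|v|}(1-Q)^{t-|v|} + (1-q)pQ^{|v|}(1-Q)^{t-|v|} - (1-q)qP^{|v|}(1-P)^{t-|v|}\right]$. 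The $P$-terms cancel: coefficient $q(1-q) - (1-q)q = 0$. The $Q$-terms give coefficient $-q(1-p) + (1-q)p = p - q$, so the whole thing is $Q^{|v|}(1-Q)^{t-|v|}$, exactly the law of $t$ independent $\text{Bern}(Q)$ coordinates. Hence $\{i,j\}$ is an edge of $G_k$ independently with probability $Q$ for each $k$ — which is precisely $\mG(n,Q)^{\otimes t}$ after taking the product over pairs (independence across pairs is inherited from the input and the fact that Step 1 samples the $x^{ij}$ independently). For $H_1$ (input $\mG(n,S,p,q)$ or its conditioned-on-$S$ version $\mG(n,S,p,q)$), pairs inside $S$ are edges with probability $p$ instead of $q$; repeating the same cancellation with $p$ in place of $q$ gives $P$-coefficient $p(1-q) - (1-p)q \cdot\tfrac{?}{}$ — more precisely $p\cdot\bP_{\text{edge}}[v] + (1-p)\cdot\bP_{\text{nonedge}}[v]$, where the $Q$-terms get coefficient $-p(1-p) + (1-p)p = 0$ and the $P$-terms get coefficient $p(1-q) - (1-p)q = p-q$, yielding $P^{|v|}(1-P)^{t-|v|}$. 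So pairs inside $S$ become edges with probability $P$ independently across the $t$ copies, and all other pairs with probability $Q$, which is exactly $\mG(n,S,P,Q)^{\otimes t}$; mixing over $S\sim\mU_{n,k}$ (or $\mU_n(F)$) commutes with everything, so the non-conditioned statement follows as well.

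The routine-but-only-mildly-delicate part is the nonnegativity verification via log-linearity / endpoint monotonicity; the probability-mass cancellations are mechanical once set up. The running time is immediate: there are $\binom{n}{2}$ pairs and for each we sample from an explicit distribution on $\{0,1\}^t$ supported on a set we can enumerate (or sample coordinate-structure by $|v|_1$ first, then uniformly among that Hamming shell), all in $\mathrm{poly}(t,n)$ time. I do not expect any genuine obstacle here — the lemma is a verification that the carefully reverse-engineered formulas in Step 1 are exactly the "conditional laws" that make the cloning exact, so the main content is bookkeeping plus the two inequality hypotheses being exactly what nonnegativity demands.
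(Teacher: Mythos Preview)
Your proposal is correct and follows essentially the same approach the paper indicates: the paper states (just before the lemma) that the proof consists of ``showing the two explicit expressions for $\bP[x^{ij} = v]$ in Step 1 define valid probability distributions and then explicitly writing the mass functions of $\mathcal{A}\left( \mG(n, q) \right)$ and $\mathcal{A}\left( \mG(n, S, p, q) \right)$,'' which is exactly the edge-by-edge verification and cancellation computation you carry out. Your monotonicity argument for nonnegativity (the minimum over $s$ is attained at an endpoint, and the two hypotheses are precisely the endpoint conditions) is the right way to see where the hypotheses enter, though you could clean up the hesitation in that paragraph.
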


\begin{figure}[t!]
\begin{algbox}
\textbf{Algorithm} $\textsc{rk}_G(\mu, B)$

\vspace{2mm}

\textit{Parameters}: Input $B \in \{0, 1\}$, Bernoulli probabilities $0 < q < p \le 1$, Gaussian mean $\mu$, number of iterations $N$, let $\varphi_\mu(x) = \frac{1}{\sqrt{2\pi}} \cdot \exp\left(- \frac{1}{2}(x - \mu)^2 \right)$ denote the density of $\mN(\mu, 1)$
\begin{enumerate}
\item Initialize $z \gets 0$.
\item Until $z$ is set or $N$ iterations have elapsed:
\begin{enumerate}
\item[(1)] Sample $z' \sim \mN(0, 1)$ independently.
\item[(2)] If $B = 0$, if the condition
$$p \cdot \varphi_0(z') \ge q \cdot \varphi_{\mu}(z')$$
holds, then set $z \gets z'$ with probability $1 - \frac{q \cdot \varphi_\mu(z')}{p \cdot \varphi_0(z')}$.
\item[(3)] If $B = 1$, if the condition
$$(1 - q) \cdot \varphi_\mu(z' + \mu) \ge (1 - p) \cdot \varphi_0(z' + \mu)$$
holds, then set $z \gets z' + \mu$ with probability $1 - \frac{(1 - p) \cdot \varphi_0(z' + \mu)}{(1 - q) \cdot \varphi_\mu(z' + \mu)}$.
\end{enumerate}
\item Output $z$.
\end{enumerate}
\vspace{1mm}
\textbf{Algorithm} $\textsc{Gaussianize}$

\vspace{2mm}

\textit{Parameters}: Matrix $M \in \{0, 1\}^{m \times n}$, Bernoulli probabilities $0 < Q < P \le 1$ with $Q = (mn)^{-O(1)}$ and a target mean matrix $0 \le \mu_{ij} \le \tau$ where $\tau > 0$ is a parameter
\begin{enumerate}
\item Form the matrix $X \in \mathbb{R}^{m \times n}$ by setting
$$X_{ij} \gets \textsc{rk}_{G}(\mu_{ij}, M_{ij})$$
for each $(i, j) \in [m] \times [n]$ where each $\textsc{rk}_{G}$ is run with $N_{\text{it}} = \lceil 3\delta^{-1} \log (mn) \rceil$ iterations where $\delta = \min \left\{ \log \left( \frac{P}{Q} \right), \log \left( \frac{1 - Q}{1 - P} \right) \right\}$.
\item Output the matrix $X$.
\end{enumerate}
\vspace{1mm}
\end{algbox}
\caption{Gaussian instantiation of the rejection kernel algorithm from \cite{brennan2018reducibility} and the reduction $\textsc{Gaussianize}$ for mapping from Bernoulli to Gaussian planted submatrix problems from \cite{brennan2019optimal}.}
\label{fig:rej-kernel}
\end{figure}

We also will require the subroutine $\textsc{Gaussianize}$ from \cite{brennan2019optimal}, shown in Figure \ref{fig:rej-kernel}, which maps a planted Bernoulli submatrix problem to a corresponding submatrix problems with independent Gaussian entries. To describe this subroutine, we first will need the univariate rejection kernel framework introduced in \cite{brennan2018reducibility}. A multivariate extension of this framework is used in \cite{brennan2019universality}, but we will only require the univariate case in this section. The next lemma states the total variation guarantees of the Gaussian rejection kernels, which are also shown in Figure \ref{fig:rej-kernel}.

The proof of this lemma consists of showing that the distributions of the outputs $\pr{rk}_G(\mu, \text{Bern}(p))$ and $\pr{rk}_G(\mu, \text{Bern}(q))$ are close to $\mN(\mu, 1)$ and $\mN(0, 1)$ conditioned to lie in the set of $x$ with $\frac{1 - p}{1 - q} \le \frac{\varphi_\mu(x)}{\varphi_0(x)} \le \frac{p}{q}$ and then showing that this event occurs with probability close to one. We will use the notation $\textsc{rk}_G(B)$ to denote the random variable output by a run of the procedure $\textsc{rk}_G$ using independently generated randomness.

\begin{lemma}[Gaussian Rejection Kernels -- Lemma 5.4 in \cite{brennan2018reducibility}] \label{lem:5c}
Let $n$ be a parameter and suppose that $p = p(n)$ and $q = q(n)$ satisfy that $0 < q < p \le 1$, $\min(q, 1 - q) = \Omega(1)$ and $p - q \ge n^{-O(1)}$. Let $\delta = \min \left\{ \log \left( \frac{p}{q} \right), \log \left( \frac{1 - q}{1 - p} \right) \right\}$. Suppose that $\mu = \mu(n) \in (0, 1)$ satisfies that
$$\mu \le \frac{\delta}{2 \sqrt{6\log n + 2\log (p-q)^{-1}}}$$
Then the map $\textsc{rk}_{\text{G}}$ with $N = \left\lceil 6\delta^{-1} \log n \right\rceil$ iterations can be computed in $\text{poly}(n)$ time and satisfies
$$\TV\left(\textsc{rk}_{\text{G}}(\mu, \textnormal{Bern}(p)), \mN(\mu, 1) \right) = O(n^{-3}) \quad \text{and} \quad \TV\left(\textsc{rk}_{\text{G}}(\mu, \textnormal{Bern}(q)), \mN(0, 1) \right) = O(n^{-3})$$
\end{lemma}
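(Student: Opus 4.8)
The plan is to analyze each branch of the procedure $\textsc{rk}_G(\mu, B)$ separately and show that, conditioned on the rejection sampler terminating with $z$ set (rather than timing out after $N$ iterations), its output is exactly a conditional Gaussian, and then to show the timeout event has probability $O(n^{-3})$. Define the set $A = \left\{ x : \frac{1-p}{1-q} \le \frac{\varphi_\mu(x)}{\varphi_0(x)} \le \frac{p}{q} \right\}$; equivalently, since $\frac{\varphi_\mu(x)}{\varphi_0(x)} = \exp\!\left(\mu x - \tfrac{\mu^2}{2}\right)$, this is an interval in $x$. First I would observe that in branch $B=0$, the acceptance test ``$p\cdot\varphi_0(z') \ge q\cdot\varphi_\mu(z')$'' is exactly the condition $\frac{\varphi_\mu(z')}{\varphi_0(z')}\le \frac{p}{q}$, i.e. the upper half of $A$, and conditioned on this test the value $z'$ is kept with probability $1-\frac{q\varphi_\mu(z')}{p\varphi_0(z')}$. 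A standard rejection-sampling computation then shows that a single iteration of the loop, conditioned on $z$ being set in that iteration, produces a draw from $\mN(0,1)$ restricted to $A$ (the normalizing constant cancels across the branch). Hence, conditioned on termination within $N$ iterations, the output of branch $B=0$ is exactly $\mN(0,1)\mid A$. Symmetrically, in branch $B=1$, the substitution $z = z' + \mu$ with $z'\sim\mN(0,1)$ means the candidate is $\mN(\mu,1)$; the test ``$(1-q)\varphi_\mu(z'+\mu)\ge(1-p)\varphi_0(z'+\mu)$'' rearranges to $\frac{\varphi_\mu(z)}{\varphi_0(z)}\ge\frac{1-p}{1-q}$, the lower half of $A$, and the same computation gives that the output conditioned on termination is $\mN(\mu,1)\mid A$.

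Next I would bound the per-iteration acceptance probability from below so as to control the timeout event. In branch $B=0$, the probability that a single iteration sets $z$ is $\int_{A} \varphi_0(x)\left(1 - \frac{q\varphi_\mu(x)}{p\varphi_0(x)}\right)dx = \P_{\mN(0,1)}[A] - \frac{q}{p}\P_{\mN(\mu,1)}[A]$. Since $\mu$ is small, both $\P_{\mN(0,1)}[A]$ and $\P_{\mN(\mu,1)}[A]$ are $1 - o(1)$: the complement of $A$ consists of the two tails $x > \frac{\log(p/q)}{\mu} + \frac{\mu}{2}$ and $x < \frac{\log((1-p)/(1-q))}{\mu} + \frac{\mu}{2}$, and the hypothesis $\mu \le \frac{\delta}{2\sqrt{6\log n + 2\log(p-q)^{-1}}}$ is precisely what makes $\frac{\delta}{\mu} \ge 2\sqrt{6\log n + 2\log(p-q)^{-1}}$, so these thresholds are at least (roughly) $\sqrt{6\log n}$ in magnitude, giving Gaussian tail mass $O((p-q) n^{-3})$ or so. Thus a single iteration succeeds with probability at least some constant (using $\min(q,1-q)=\Omega(1)$ and $p-q \ge n^{-O(1)}$ to keep $p/q$ and $\frac{1-q}{1-p}$ bounded away from $1$ by $e^\delta$), so the probability that all $N = \lceil 6\delta^{-1}\log n\rceil$ iterations fail is exponentially small in $N$; one checks this is $O(n^{-3})$. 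The same estimate applies to branch $B=1$ by symmetry.

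Finally I would assemble the pieces: conditioned on success, the output law is exactly $\mN(\mu,1)\mid A$ (for $B=1$) or $\mN(0,1)\mid A$ (for $B=0$); by Fact~\ref{tvfacts}(2), $\TV(\mN(0,1)\mid A, \mN(0,1)) = 1 - \P_{\mN(0,1)}[A] = O(n^{-3})$ and likewise $\TV(\mN(\mu,1)\mid A, \mN(\mu,1)) = O(n^{-3})$; and the timeout event, which is the only way the conditional law fails to be realized, has probability $O(n^{-3})$, contributing an additional $O(n^{-3})$ to the total variation distance between the true output and the conditional law. Combining via the triangle inequality for $\TV$ gives $\TV(\textsc{rk}_G(\mu,\textnormal{Bern}(p)), \mN(\mu,1)) = O(n^{-3})$ and $\TV(\textsc{rk}_G(\mu,\textnormal{Bern}(q)), \mN(0,1)) = O(n^{-3})$, and the polynomial running time is immediate since $N = O(\delta^{-1}\log n) = \text{poly}(n)$ (as $\delta \ge n^{-O(1)}$) and each iteration samples one Gaussian and evaluates explicit densities. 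The main obstacle is the tail-mass estimate: one must carefully track how the hypothesis on $\mu$ translates into both endpoints of $A$ being far out in the tails simultaneously, and verify that the factor $\log(p-q)^{-1}$ in the hypothesis exactly compensates the $(p-q)^{-1}$ prefactors appearing when bounding $\P[A^c]$ and the acceptance probability, so that everything lands at $O(n^{-3})$ rather than merely $o(1)$.
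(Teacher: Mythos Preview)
There is a genuine gap in the per-branch analysis. You claim that branch $B=0$, conditioned on acceptance, outputs exactly $\mN(0,1)\mid A$, but this is not true: the sub-density of an accepted sample in that branch is
\[
\varphi_0(z)\cdot\mathbf{1}_{A_0}(z)\cdot\Bigl(1-\tfrac{q\varphi_\mu(z)}{p\varphi_0(z)}\Bigr)=\tfrac{1}{p}\,\mathbf{1}_{A_0}(z)\bigl(p\varphi_0(z)-q\varphi_\mu(z)\bigr),
\]
which after normalization is proportional to $p\varphi_0-q\varphi_\mu$, not to $\varphi_0$. Symmetrically, branch $B=1$ conditioned on acceptance has density proportional to $(1-q)\varphi_\mu-(1-p)\varphi_0$ on $A_1$, not $\varphi_\mu$. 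If your per-branch claim were correct, then feeding in $B\sim\text{Bern}(p)$ would produce the mixture $p\cdot\mN(\mu,1)\mid A+(1-p)\cdot\mN(0,1)\mid A$, which is \emph{not} $O(n^{-3})$-close to $\mN(\mu,1)$ unless $p\to 1$; so your argument could not close even on its own terms. You appear to have conflated the event $\{B=1\}$ with the input distribution $\text{Bern}(p)$ and $\{B=0\}$ with $\text{Bern}(q)$.

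The correct mechanism, as the paper sketches, operates at the level of the \emph{mixture} over $B$. The per-branch acceptance probabilities are $\int g_0\approx\frac{p-q}{p}$ and $\int g_1\approx\frac{p-q}{1-q}$, and it is precisely these normalizers that make the magic work: when $B\sim\text{Bern}(q)$, the output density (conditioned on termination, on $A$) is approximately
\[
q\cdot\frac{g_1}{\int g_1}+(1-q)\cdot\frac{g_0}{\int g_0}\approx\frac{1}{p-q}\Bigl(q(1-q)g_1+(1-q)p\,g_0\Bigr)=\frac{1}{p-q}\cdot(p-q)\varphi_0=\varphi_0,
\]
and the analogous computation with $B\sim\text{Bern}(p)$ yields $\varphi_\mu$. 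The rest of your outline---bounding the tail mass of $A^c$ via the hypothesis on $\mu$, bounding the timeout probability, and combining with Fact~\ref{tvfacts}(2)---is the right scaffolding, but the heart of the rejection-kernel identity is this mixture cancellation, which your proposal misses.
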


We now state the total variation guarantees of $\textsc{Gaussianize}$. The instantiation of $\textsc{Gaussianize}$ here generalizes that in \cite{brennan2019optimal} to rectangular matrices, but has the same proof. The procedure applies a Gaussian rejection kernel entrywise and its total variation guarantees follow by a simple by applying the tensorization property of $\TV$ from Fact \ref{tvfacts}. 

\begin{lemma}[Gaussianization -- Lemma 4.5 in \cite{brennan2019optimal}] \label{lem:gaussianize}
Given parameters $m$ and $n$, let $0 < Q < P \le 1$ be such that $P - Q = (mn)^{-O(1)}$ and $\min(Q, 1 - Q) = \Omega(1)$, let $\mu_{ij}$ be such that $0 \le \mu_{ij} \le \tau$ for each $i \in [m]$ and $j \in [n]$ where the parameter $\tau > 0$ satisfies that
$$\tau \le \frac{\delta}{2 \sqrt{3\log (mn) + 2\log (P - Q)^{-1}}} \quad \text{where} \quad \delta = \min \left\{ \log \left( \frac{P}{Q} \right), \log \left( \frac{1 - Q}{1 - P} \right) \right\}$$
The algorithm $\mathcal{A} = \textsc{Gaussianize}$ runs in $\textnormal{poly}(mn)$ time and satisfies that
\begin{align*}
\TV\left( \mathcal{A}(\mathcal{M}(m, n, S, T, P, Q)), \, \mu \circ \mathbf{1}_S \mathbf{1}_T^\top + \mN(0, 1)^{\otimes m \times n} \right) &= O\left((mn)^{-1/2}\right) \\
\TV\left( \mathcal{A}\left(\textnormal{Bern}(Q)^{\otimes m \times n}\right), \, \mN(0, 1)^{\otimes m \times n} \right) &= O\left((mn)^{-1/2}\right)
\end{align*}
for all subsets $S \subseteq [m]$ and $T \subseteq [n]$ where $\circ$ denotes the Hadamard product between two matrices.
\end{lemma}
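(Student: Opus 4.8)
The plan is to reduce the statement to the single-entry guarantee of Lemma~\ref{lem:5c} together with tensorization. First I would note that $\textsc{Gaussianize}$ is a purely local map: it produces $X \in \mathbb{R}^{m \times n}$ by setting $X_{ij} \gets \textsc{rk}_G(\mu_{ij}, M_{ij})$ with fresh independent randomness for each $(i,j)$, so it runs in $\textnormal{poly}(mn)$ time (there are $mn$ calls, each running the $\textsc{rk}_G$ inner loop for $N_{\textnormal{it}} = \lceil 3\delta^{-1}\log(mn)\rceil$ iterations). Consequently, whenever the input $M$ has independent entries --- as it does under both $\mathcal{M}(m,n,S,T,P,Q)$ and $\textnormal{Bern}(Q)^{\otimes m \times n}$ --- the output $\mathcal{A}(M)$ again has independent entries, and its $(i,j)$ marginal is $\textsc{rk}_G(\mu_{ij}, \textnormal{Bern}(P))$ when $(i,j) \in S \times T$ and $\textsc{rk}_G(\mu_{ij}, \textnormal{Bern}(Q))$ otherwise.

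Next I would apply Lemma~\ref{lem:5c} with its parameter $n$ replaced by $\sqrt{mn}$ and its pair $(p,q)$ replaced by $(P,Q)$. Under this substitution the number of iterations $\lceil 6\delta^{-1}\log\sqrt{mn}\rceil$ equals the $N_{\textnormal{it}}$ hard-coded in $\textsc{Gaussianize}$, and the admissibility constraint $\mu \le \delta/\bigl(2\sqrt{6\log\sqrt{mn} + 2\log(P-Q)^{-1}}\bigr)$ is exactly the hypothesis $\mu_{ij} \le \tau$ of the present lemma; the remaining hypotheses $0 < Q < P \le 1$, $\min(Q,1-Q) = \Omega(1)$ and $P-Q = (mn)^{-O(1)}$ are assumed. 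Lemma~\ref{lem:5c} then yields, for every $(i,j)$, both $\TV\bigl(\textsc{rk}_G(\mu_{ij}, \textnormal{Bern}(P)), \mathcal{N}(\mu_{ij},1)\bigr) = O\bigl((mn)^{-3/2}\bigr)$ and $\TV\bigl(\textsc{rk}_G(\mu_{ij}, \textnormal{Bern}(Q)), \mathcal{N}(0,1)\bigr) = O\bigl((mn)^{-3/2}\bigr)$.

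Finally I would invoke the tensorization inequality of Fact~\ref{tvfacts}(1). Both $\mathcal{A}(\mathcal{M}(m,n,S,T,P,Q))$ and $\mu \circ \mathbf{1}_S \mathbf{1}_T^\top + \mathcal{N}(0,1)^{\otimes m \times n}$ are product distributions over the $mn$ entries, with the matching marginals identified above; the same holds for $\mathcal{A}(\textnormal{Bern}(Q)^{\otimes m \times n})$ and $\mathcal{N}(0,1)^{\otimes m \times n}$. Summing the $mn$ per-entry bounds of $O((mn)^{-3/2})$ gives the claimed $O((mn)^{-1/2})$ in each case, uniformly in $S$ and $T$ since the per-entry estimates are. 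I do not expect a genuinely hard step here: the only point requiring care is the bookkeeping of the reparameterization $n \mapsto \sqrt{mn}$, so that the iteration count and the ceiling on $\mu_{ij}$ built into $\textsc{Gaussianize}$ line up with the hypotheses of Lemma~\ref{lem:5c}, together with the immediate observation that both the realized output and the target factor as products over entries so that tensorization loses only the union over the $mn$ coordinates.
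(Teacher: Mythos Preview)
Your proposal is correct and follows essentially the same approach as the paper: the paper explicitly says the guarantees ``follow by a simple [argument] applying the tensorization property of $\TV$ from Fact~\ref{tvfacts}'' after applying the Gaussian rejection kernel entrywise. Your reparameterization $n \mapsto \sqrt{mn}$ to line up the iteration count and the bound on $\mu_{ij}$ with Lemma~\ref{lem:5c} is exactly the bookkeeping needed, and the paper leaves that implicit.
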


\begin{figure}[t!]
\begin{algbox}
\textbf{Algorithm} $\pr{To-}k\textsc{-Partite-Submatrix}$

\vspace{1mm}

\textit{Inputs}: $k$\pr{-pds} instance $G \in \mG_N$ with clique size $k$ that divides $N$ and partition $E$ of $[N]$, edge probabilities $0 < q < p \le 1$ with $q = N^{-O(1)}$ and target dimension $n \ge \left(\frac{p}{Q} + 1 \right)N$ where $Q = 1 - \sqrt{(1 - p)(1 - q)} + \mathbf{1}_{\{p = 1\}} \left( \sqrt{q} - 1 \right)$ and $k$ divides $n$
\begin{enumerate}
\item Apply $\textsc{Graph-Clone}$ to $G$ with edge probabilities $P = p$ and $Q = 1 - \sqrt{(1 - p)(1 - q)} + \mathbf{1}_{\{p = 1\}} \left( \sqrt{q} - 1 \right)$ and $t = 2$ clones to obtain $(G_1, G_2)$.
\item Let $F$ be a partition of $[n]$ with $[n] = F_1 \cup F_2 \cup \cdots \cup F_k$ and $|F_i| = n/k$. Form the matrix $M_{\text{PD}} \in \{0, 1\}^{n \times n}$ as follows:
\begin{enumerate}
\item[(1)] For each $t \in [k]$, sample $s_1^t \sim \text{Bin}(N/k, p)$ and $s_2^t \sim \text{Bin}(n/k, Q)$ and let $S_t$ be a subset of $F_t$ with $|S_t| = N/k$ selected uniformly at random. Sample $T_1^t \subseteq S_t$ and $T_2^t \subseteq F_t \backslash S_t$ with $|T_1^t| = s_1^t$ and $|T_2^t| = \max\{s_2^t - s_1^t, 0 \}$ uniformly at random.
\item[(2)] Now form the matrix $M_{\text{PD}}$ such that its $(i, j)$th entry is
$$(M_{\text{PD}})_{ij} = \left\{ \begin{array}{ll} \mathbf{1}_{\{\pi_t(i), \pi_t(j)\} \in E(G_1)} & \text{if } i < j \text{ and } i, j \in S_t \\ \mathbf{1}_{\{\pi_t(i), \pi_t(j)\} \in E(G_2)} & \text{if } i > j \text{ and } i, j \in S_t \\ \mathbf{1}_{\{ i \in T_1^t \}} & \text{if } i = j \text{ and } i, j \in S_t \\ \mathbf{1}_{\{i \in T_2^t\}} & \text{if } i = j \text{ and } i, j \in F_t \backslash S_t \\ \sim_{\text{i.i.d.}} \text{Bern}(Q) & \text{if } i \neq j \text{ and } (i, j) \not \in S_t^2 \text{ for a } t \in [k] \end{array} \right.$$
where $\pi_t : S_t \to E_t$ is a bijection chosen uniformly at random.
\end{enumerate}
\item Output the matrix $M_{\text{PD}}$ and the partition $F$.
\end{enumerate}
\vspace{1mm}

\end{algbox}
\caption{Subroutine $\pr{To-}k\textsc{-Partite-Submatrix}$ for mapping from an instance of $k$-partite planted dense subgraph to a $k$-partite Bernoulli submatrix problem.}
\label{fig:tosubmatrix}
\end{figure}

We now introduce the procedure $\pr{To-}k\textsc{-Partite-Submatrix}$, which is shown in Figure \ref{fig:tosubmatrix}. This reduction clones the upper half of the adjacency matrix of the input graph problem to produce an independent lower half and plants diagonal entries while randomly embedding into a larger matrix to hide the diagonal entries in total variation. $\pr{To-}k\textsc{-Partite-Submatrix}$ is similar to $\textsc{To-Submatrix}$ in \cite{brennan2019universality} and $\textsc{To-Bernoulli-Submatrix}$ in \cite{brennan2019optimal} but ensures that the random embedding step accounts for the $k$-partite promise of the input $k$\pr{-pds} instance.

We begin with the following lemma, which is a key computation in the proof of correctness for $\pr{To-}k\textsc{-Partite-Submatrix}$. We remark that the total variation upper bound in this lemma is tight in the following sense. When all of the $P_i$ are the same, the expected value of the sum of the coordinates of the first distribution is $k(P_i - Q)$ higher than that of the second. The standard deviation of the second sum is $\sqrt{kmQ(1 - Q)}$ and thus when $k(P_i - Q)^2 \gg mQ(1 - Q)$, the total variation below tends to one.

\begin{lemma} \label{lem:bernproduct}
If $k, m \in \mathbb{N}$, $P_1, P_2, \dots, P_k \in [0, 1]$ and $Q \in (0, 1)$, then
$$\TV\left( \otimes_{i = 1}^k \left( \textnormal{Bern}(P_i) + \textnormal{Bin}(m - 1, Q) \right), \textnormal{Bin}(m, Q)^{\otimes k} \right) \le \sqrt{\sum_{i = 1}^k \frac{(P_i - Q)^2}{2mQ(1 - Q)}}$$
\end{lemma}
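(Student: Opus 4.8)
The plan is to pass to KL divergence, use its additivity over product distributions, and then compute the per-coordinate $\chi^2$ divergence exactly, exploiting the fact that the relevant likelihood ratio is \emph{affine} in the binomial count.

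\textbf{Step 1 (reduce to one coordinate).} Write $\mP_i = \textnormal{Bern}(P_i) + \textnormal{Bin}(m-1,Q)$ and $\mQ_m = \textnormal{Bin}(m,Q)$. I would first apply Pinsker's inequality together with the tensorization of KL divergence:
$$\TV\!\left( \otimes_{i=1}^k \mP_i,\ \mQ_m^{\otimes k} \right) \le \sqrt{\tfrac12\, \KL\!\left( \otimes_{i=1}^k \mP_i \,\|\, \mQ_m^{\otimes k} \right)} = \sqrt{\tfrac12 \sum_{i=1}^k \KL(\mP_i \,\|\, \mQ_m)}.$$
So it suffices to show $\KL(\mP_i \,\|\, \mQ_m) \le (P_i - Q)^2/\bigl(m Q(1-Q)\bigr)$ for each $i$.

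\textbf{Step 2 (bound KL by $\chi^2$).} For any $\mP \ll \mQ$, Jensen's inequality applied to $\log$ gives
$$\KL(\mP \,\|\, \mQ) = \E_{\mP}\!\left[ \log \tfrac{d\mP}{d\mQ} \right] \le \log \E_{\mP}\!\left[ \tfrac{d\mP}{d\mQ} \right] = \log\!\left(1 + \chi^2(\mP \,\|\, \mQ)\right) \le \chi^2(\mP \,\|\, \mQ),$$
so it will remain to compute $\chi^2(\mP_i \,\|\, \mQ_m)$ exactly.

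\textbf{Step 3 (compute $\chi^2$).} Fix $P = P_i$ and set $b_\ell(j) = \binom{\ell}{j} Q^j (1-Q)^{\ell-j}$. The pmf of $\mP_i$ at $j \in \{0, \dots, m\}$ is $g(j) = P\, b_{m-1}(j-1) + (1-P)\, b_{m-1}(j)$ (with $b_{m-1}(-1) = b_{m-1}(m) = 0$). Using $\binom{m-1}{j-1}/\binom{m}{j} = j/m$ and $\binom{m-1}{j}/\binom{m}{j} = (m-j)/m$, one gets $b_{m-1}(j-1)/b_m(j) = j/(mQ)$ and $b_{m-1}(j)/b_m(j) = (m-j)/\bigl(m(1-Q)\bigr)$, so the likelihood ratio is
$$L(j) := \frac{g(j)}{b_m(j)} = \frac{Pj}{mQ} + \frac{(1-P)(m-j)}{m(1-Q)} = \frac{P - Q}{m Q(1-Q)}\, j + \frac{1-P}{1-Q},$$
an affine function of $j$. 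One checks $\E_{S \sim \mQ_m}[L(S)] = \tfrac{P-Q}{mQ(1-Q)}\cdot mQ + \tfrac{1-P}{1-Q} = 1$ (a sanity check on the computation), hence $\chi^2(\mP_i \,\|\, \mQ_m) = \E_{S\sim\mQ_m}[L(S)^2] - 1 = \mathrm{Var}_{S \sim \mQ_m}\bigl(L(S)\bigr)$; using $\mathrm{Var}(\textnormal{Bin}(m,Q)) = mQ(1-Q)$,
$$\chi^2(\mP_i \,\|\, \mQ_m) = \left( \frac{P-Q}{m Q(1-Q)} \right)^{\!2} m Q(1-Q) = \frac{(P-Q)^2}{m Q(1-Q)}.$$
Chaining Steps 1--3 gives exactly the claimed bound.

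\textbf{Main obstacle.} There is essentially no technical difficulty, but there is one trap to avoid: one cannot conclude via the data-processing inequality for convolution, since that would only give $\KL(\mP_i\|\mQ_m) \le \KL(\textnormal{Bern}(P_i) \,\|\, \textnormal{Bern}(Q)) \approx (P_i-Q)^2/\bigl(2Q(1-Q)\bigr)$, which \emph{misses the crucial factor of $m$}. The exact $\chi^2$ computation is what recovers it, and it goes through in one line precisely because the likelihood ratio of $g$ against $\textnormal{Bin}(m,Q)$ is affine in the count, reducing $\chi^2$ to a variance of a binomial.
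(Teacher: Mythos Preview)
Your proof is correct and follows essentially the same approach as the paper: both compute $\chi^2\bigl(\text{Bern}(P)+\text{Bin}(m-1,Q)\,\|\,\text{Bin}(m,Q)\bigr) = (P-Q)^2/(mQ(1-Q))$ exactly, bound $\KL \le \log(1+\chi^2) \le \chi^2$, and finish with Pinsker plus tensorization of KL. Your observation that the likelihood ratio is affine in the binomial count is exactly the same simplification the paper carries out (writing the $\chi^2$ as $\E[(1 + \tfrac{X-mQ}{m}\cdot\tfrac{P-Q}{Q(1-Q)})^2] - 1$ and using the binomial variance), just phrased a bit more cleanly.
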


\begin{proof}
Given some $P \in [0, 1]$, we begin by computing $\chi^2\left( \textnormal{Bern}(P) + \textnormal{Bin}(m - 1, Q), \textnormal{Bin}(m, Q) \right)$. For notational convenience, let $\binom{a}{b} = 0$ if $b > a$ or $b < 0$. It follows that
\allowdisplaybreaks
\begin{align*} 
&1 + \chi^2\left( \textnormal{Bern}(P) + \textnormal{Bin}(m - 1, Q), \textnormal{Bin}(m, Q) \right) \\
&\quad \quad = \sum_{t = 0}^{m} \frac{\left((1 - P) \cdot \binom{m - 1}{t} Q^t (1 - Q)^{m - 1 - t} + P \cdot \binom{m - 1}{t - 1} Q^{t - 1} (1 - Q)^{m - t} \right)^2}{\binom{m}{t} Q^t (1 - Q)^{m - t}} \\
&\quad \quad = \sum_{t = 0}^{m} \binom{m}{t} Q^t (1 - Q)^{m - t} \left( \frac{m - t}{m} \cdot \frac{1 - P}{1 - Q} + \frac{t}{m} \cdot \frac{P}{Q} \right)^2 \\
&\quad \quad = \bE\left[ \left( \frac{m - X}{m} \cdot \frac{1 - P}{1 - Q} + \frac{X}{m} \cdot \frac{P}{Q} \right)^2 \right] \\
&\quad \quad = \bE\left[ \left( 1 + \frac{X - mQ}{m} \cdot \frac{P - Q}{Q(1 - Q)} \right)^2 \right] \\
&\quad \quad = 1 + \frac{2(P - Q)}{mQ(1 - Q)} \cdot \bE[X - mQ] + \frac{(P - Q)^2}{m^2Q^2(1 - Q)^2} \cdot \bE\left[(X - Qm)^2\right] \\
&\quad \quad = 1 + \frac{(P - Q)^2}{mQ(1 - Q)}
\end{align*}
where $X \sim \textnormal{Bin}(m, Q)$ and the second last equality follows from $\bE[X] = Qm$ and $\bE[(X - Qm)^2] = \text{Var}[X] = Q(1 - Q)m$. The concavity of $\log$ implies that $\KL(\mP, \mQ) \le \log\left( 1 + \chi^2(\mP, \mQ) \right) \le \chi^2(\mP, \mQ)$ for any two distributions with $\mP$ absolutely continuous with respect to $\mQ$. Pinsker's inequality and tensorization of $\KL$ now imply that
\begin{align*}
&2 \cdot \TV\left( \otimes_{i = 1}^k \left( \textnormal{Bern}(P_i) + \textnormal{Bin}(m - 1, Q) \right), \textnormal{Bin}(m, Q)^{\otimes k} \right)^2 \\
&\quad \quad \le \KL\left( \otimes_{i = 1}^k \left( \textnormal{Bern}(P_i) + \textnormal{Bin}(m - 1, Q) \right), \textnormal{Bin}(m, Q)^{\otimes k} \right) \\
&\quad \quad = \sum_{i = 1}^k \KL\left( \textnormal{Bern}(P_i) + \textnormal{Bin}(m - 1, Q), \textnormal{Bin}(m, Q) \right) \\
&\quad \quad \le \sum_{i = 1}^k \chi^2\left( \textnormal{Bern}(P_i) + \textnormal{Bin}(m - 1, Q), \textnormal{Bin}(m, Q) \right) = \sum_{i = 1}^k \frac{(P_i - Q)^2}{mQ(1 - Q)}
\end{align*}
which completes the proof of the lemma.
\end{proof}

We now use this lemma to establish an analogue of Lemma 6.4 from \cite{brennan2019universality} in the $k$-partite case to analyze the planted diagonal entries in Step 2 of $\pr{To-}k\textsc{-Partite-Submatrix}$.

\begin{lemma}[Planting $k$-Partite Diagonals] \label{lem:plantingdiagonals}
Suppose that $0 < Q < P \le 1$ and $n \ge \left( \frac{P}{Q} + 1 \right) N$ is such that both $N$ and $n$ are divisible by $k$ and $k \le QN/4$. Suppose that for each $t \in [k]$,
$$z_1^t \sim \textnormal{Bern}(P), \quad z_2^t \sim \textnormal{Bin}(N/k - 1, P) \quad \textnormal{and} \quad z_3^t \sim \textnormal{Bin}(n/k, Q)$$
are independent. If $z_4^t = \max \{ z_3^t - z_1^t - z_2^t, 0 \}$, then it follows that
\begin{align*}
\TV\left( \otimes_{t = 1}^k \mL(z_1^t, z_2^t + z_4^t), \left( \textnormal{Bern}(P) \otimes \textnormal{Bin}(n/k - 1, Q) \right)^{\otimes k} \right) &\le 4k \cdot \exp \left( - \frac{Q^2N^2}{48Pkn} \right) + \sqrt{\frac{C_Q k^2}{2n}} \\
\TV\left( \otimes_{t = 1}^k \mL(z_1^t + z_2^t + z_4^t), \textnormal{Bin}(n/k, Q)^{\otimes k} \right) &\le 4k \cdot \exp \left( - \frac{Q^2N^2}{48Pkn} \right)
\end{align*}
where $C_Q = \max \left\{ \frac{Q}{1 - Q}, \frac{1 - Q}{Q} \right\}$.
\end{lemma}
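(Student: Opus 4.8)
The plan is to reduce both total variation bounds to applications of Lemma~\ref{lem:bernproduct}, after first controlling the event that the counts $z_3^t$ are large enough that the truncation in the definition of $z_4^t$ never triggers. Concretely, let $A$ be the event that $z_3^t \le z_1^t + z_2^t$ for every $t \in [k]$; I will argue that $\bP[A^c]$ is at most the $4k \exp(-Q^2 N^2/(48 P k n))$ term appearing in the statement. To see this, note $z_1^t + z_2^t \sim \textnormal{Bin}(N/k, P)$ has mean $(P/k)N$, while $z_3^t \sim \textnormal{Bin}(n/k, Q)$ has mean $(Q/k)n \le (P/k)N - (P/k)N \cdot \tfrac{1}{2}$ roughly, using $n \ge (P/Q + 1)N$ so that $(Q/k)n \ge (P/k)N + (Q/k)N$ — wait, that is the wrong direction, so instead the hypothesis $n \ge (P/Q+1)N$ should be read as giving enough slack in the \emph{other} normalization; the point is that $\textnormal{Bin}(n/k,Q)$ has mean exceeding $\textnormal{Bin}(N/k,P)$, and what we actually need is a lower tail bound on $z_1^t+z_2^t$ together with an upper tail bound on $z_3^t$ around a common intermediate value. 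A clean choice is the threshold $\theta_t = (P/k)N$ (approximately the mean of $z_1^t+z_2^t$ minus a small multiplicative slack): a multiplicative Chernoff bound gives $\bP[z_1^t + z_2^t < \theta_t/?] $ and $\bP[z_3^t > \theta_t]$ each at most $2\exp(-\Omega(Q^2 N^2/(P k n)))$, and a union bound over $t \in [k]$ and the two events yields the stated $4k\exp(-Q^2N^2/(48Pkn))$; the constant $48$ and the condition $k \le QN/4$ are exactly what make this Chernoff computation go through. Conditioned on $A$, we have $z_4^t = z_3^t - z_1^t - z_2^t$ identically, so $z_1^t + z_2^t + z_4^t = z_3^t \sim \textnormal{Bin}(n/k, Q)$ exactly and $z_2^t + z_4^t = z_3^t - z_1^t$.

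For the \emph{second} bound, this is immediate: on $A$, $\otimes_{t=1}^k \mL(z_1^t + z_2^t + z_4^t)$ is exactly $\textnormal{Bin}(n/k, Q)^{\otimes k}$, so by the conditioning-on-an-event property in Fact~\ref{tvfacts} the total variation is at most $\bP[A^c] \le 4k\exp(-Q^2N^2/(48Pkn))$, with no Lemma~\ref{lem:bernproduct} term needed.

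For the \emph{first} bound, on $A$ we have $(z_1^t, z_2^t + z_4^t) = (z_1^t, z_3^t - z_1^t)$, and the joint law of $(z_1^t, z_3^t - z_1^t)$ where $z_1^t \sim \textnormal{Bern}(P)$ and $z_3^t \sim \textnormal{Bin}(n/k, Q)$ are independent is precisely the law of $(\mathbf{1}_{X_1 = 1}, \,\textnormal{(rest of a Bin}(n/k, Q)\textnormal{-like count)})$ — more carefully, writing $W^t = z_3^t - z_1^t$, the pair $(z_1^t, W^t)$ is a reparametrization in which the marginal of the sum $z_1^t + W^t = z_3^t$ is $\textnormal{Bin}(n/k, Q)$ while the ``label'' coordinate $z_1^t$ is an independent $\textnormal{Bern}(P)$. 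So, up to the bijection $(a,b)\mapsto(a, a+b)$ (which preserves total variation), comparing $\otimes_t \mL(z_1^t, z_2^t+z_4^t)$ with $(\textnormal{Bern}(P)\otimes\textnormal{Bin}(n/k-1,Q))^{\otimes k}$ on the event $A$ is the same as comparing $\otimes_t \mL(z_3^t)$ with $\otimes_t(\textnormal{Bern}(P) + \textnormal{Bin}(n/k-1,Q))$ — because under the target law, $\textnormal{Bern}(P)\otimes\textnormal{Bin}(n/k-1,Q)$ maps under $(a,b)\mapsto(a,a+b)$ to $\mL(a, \textnormal{Bern}(P)+\textnormal{Bin}(n/k-1,Q))$ with sum $\textnormal{Bern}(P)+\textnormal{Bin}(n/k-1,Q)$. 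Applying Lemma~\ref{lem:bernproduct} with $m = n/k$ and all $P_i = P$ gives $\TV \le \sqrt{kP(P-Q)^2/(2(n/k)Q(1-Q))}$ wait — Lemma~\ref{lem:bernproduct} as stated bounds $\TV(\otimes_i(\textnormal{Bern}(P_i)+\textnormal{Bin}(m-1,Q)),\textnormal{Bin}(m,Q)^{\otimes k})$ by $\sqrt{\sum_i (P_i-Q)^2/(2mQ(1-Q))}$, so with $m = n/k$ and $P_i \equiv P$ this is $\sqrt{k(P-Q)^2/(2(n/k)Q(1-Q))} = \sqrt{k^2(P-Q)^2/(2nQ(1-Q))}$. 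Bounding $(P-Q)^2/(Q(1-Q)) \le \max\{Q/(1-Q),(1-Q)/Q\} = C_Q$ (using $0 < Q < P \le 1$, so $(P-Q)^2 \le (1-Q)^2$ if we want the $(1-Q)/Q$ form, or $(P-Q)^2 \le $ something bounded — in any case $C_Q$ absorbs it) yields the $\sqrt{C_Q k^2/(2n)}$ term. Combining with the $\bP[A^c]$ error from the event conditioning (Fact~\ref{tvfacts}, part 2, plus the triangle inequality for $\TV$) gives the claimed bound $4k\exp(-Q^2N^2/(48Pkn)) + \sqrt{C_Q k^2/(2n)}$.

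The main obstacle I expect is the first paragraph: pinning down the right intermediate threshold and the right multiplicative Chernoff bounds so that the union bound over $k$ coordinates and two tail events produces \emph{exactly} the constant $48$ in the exponent, and verifying that the hypotheses $n \ge (P/Q+1)N$ and $k \le QN/4$ are precisely what is needed. The rest — the bijection argument reducing to Lemma~\ref{lem:bernproduct}, and bounding $(P-Q)^2/(Q(1-Q))$ by $C_Q$ — is routine, with the only subtlety being to keep careful track of the reparametrization $(a,b)\mapsto(a,a+b)$ and confirm it matches the target distribution $\textnormal{Bern}(P)\otimes\textnormal{Bin}(n/k-1,Q)$ in both hypotheses so that the total variation is genuinely preserved.
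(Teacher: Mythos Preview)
Your overall plan matches the paper's: define the good event where the truncation does not bind, bound its failure probability by Chernoff, deduce the second inequality immediately, and reduce the first inequality to Lemma~\ref{lem:bernproduct}. The Chernoff paragraph is muddled but salvageable (the paper picks the intermediate threshold by bounding $\bP[z_3^t < Qn/k - QN/(2k)+1]$ and $\bP[z_2^t > P(N/k-1) + QN/(2k)]$ separately; the hypothesis $Qn \ge (P+Q)N$ is exactly what makes these two thresholds compatible).

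There is, however, a genuine gap in your treatment of the first inequality. Your bijection $(a,b)\mapsto(a,a+b)$ correctly sends the conditioned source law to $\mL(z_1^t,z_3^t)$ (with $z_1^t,z_3^t$ independent) and the target law to $\mL(a,a+b)$ with $(a,b)\sim\text{Bern}(P)\otimes\text{Bin}(n/k-1,Q)$. But you then claim this joint comparison ``is the same as'' comparing just the second marginals $\mL(z_3^t)$ against $\text{Bern}(P)+\text{Bin}(n/k-1,Q)$, and apply Lemma~\ref{lem:bernproduct} with all $P_i=P$. This is wrong: in the source $(z_1^t,z_3^t)$ are independent, while in the target $a$ and $a+b$ are not (e.g.\ $\bP[a=1,\,a+b=0]=0$ but $\bP[z_1^t=1,\,z_3^t=0]>0$). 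Data processing gives that the second-marginal TV is a \emph{lower} bound on the joint TV, so your marginal bound does not control what you need. Concretely, if $P$ is very close to $Q$ your marginal bound is $o(1/\sqrt{m})$ but the joint TV is $\Theta(1/\sqrt{m})$.

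The fix is exactly what the paper does: since both joints have the same first-coordinate marginal $\text{Bern}(P)^{\otimes k}$, condition on $z_1=v$ (Fact~\ref{tvfacts}, part 3) to reduce to comparing $\text{Bin}(n/k,Q)^{\otimes k}$ with $\otimes_t(v_t+\text{Bin}(n/k-1,Q))$, and then apply Lemma~\ref{lem:bernproduct} with $P_t=v_t\in\{0,1\}$ (not $P_t=P$). Since $(v_t-Q)^2\le\max\{Q^2,(1-Q)^2\}$, this yields $\sqrt{\sum_t k(v_t-Q)^2/(2nQ(1-Q))}\le\sqrt{C_Q k^2/(2n)}$ uniformly in $v$, and averaging over $v\sim\text{Bern}(P)^{\otimes k}$ finishes the argument.
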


\begin{proof}
Throughout this argument, let $v$ denote a vector in $\{0, 1\}^k$. Now define the event
$$\mathcal{E} = \bigcap_{t = 1}^k \left\{ z_3^t = z_1^t + z_2^t + z_4^t \right\}$$
Now observe that if $z_3^t \ge Qn/k - QN/2k + 1$ and $z_2^t \le P(N/k - 1) + QN/2k$ then it follows that $z_3^t \ge 1 + z_2^t \ge v_t + z_2^t$ for any $v_t \in \{0, 1\}$ since $Qn \ge (P+Q)N$. Now union bounding the probability that $\mathcal{E}$ does not hold conditioned on $z_1$ yields that
\allowdisplaybreaks
\begin{align*}
\bP\left[ \mathcal{E}^C \Big| z_1 = v \right] &\le \sum_{t = 1}^k \bP\left[ z_3^t < v_t + z_2^t \right] \\
&\le \sum_{t = 1}^k \bP\left[ z_3^t < \frac{Qn}{k} - \frac{QN}{2k} + 1 \right] + \sum_{t = 1}^k \bP\left[ z_2^t > P\left(\frac{N}{k} - 1\right) + \frac{QN}{2k} \right] \\
&\le k \cdot \exp\left( - \frac{\left(QN/2k - 1 \right)^2}{3Qn/k} \right) + k \cdot \exp\left( - \frac{\left(QN/2k \right)^2}{2P(N/k - 1)} \right) \\
&\le 2k \cdot \exp \left( - \frac{Q^2N^2}{48Pkn} \right)
\end{align*}
where the third inequality follows from standard Chernoff bounds on the tails of the binomial distribution. Marginalizing this bound over $v \sim \mL(z_1) = \text{Bern}(P)^{\otimes k}$, we have that
$$\bP\left[ \mathcal{E}^C \right] = \bE_{v \sim \mL(z_1)} \bP\left[ \mathcal{E}^C \Big| z_1 = v \right] \le 2k \cdot \exp \left( - \frac{Q^2N^2}{48Pkn} \right)$$
Now consider the total variation error induced by conditioning each of the product measures $\otimes_{t = 1}^k \mL(z_1^t + z_2^t + z_4^t)$ and $\otimes_{t = 1}^k \mL(z_3^t)$ on the event $\mathcal{E}$. Note that under $\mathcal{E}$, by definition, we have that $z_3^t = z_1^t + z_2^t + z_4^t$ for each $t \in [k]$. By the conditioning property of $\TV$ in Fact \ref{tvfacts}, we have
\begin{align*}
\TV\left( \otimes_{t = 1}^k \mL(z_1^t + z_2^t + z_4^t), \mL\left( \left(z_3^t : t \in [k]\right) \Big| \mathcal{E} \right) \right) &\le \bP\left[ \mathcal{E}^C \right] \\
\TV\left( \otimes_{t = 1}^k \mL(z_3^t), \mL\left( \left(z_3^t : t \in [k]\right) \Big| \mathcal{E} \right) \right) &\le \bP\left[ \mathcal{E}^C \right]
\end{align*}
The fact that $\otimes_{t = 1}^k \mL(z_3^t) = \text{Bin}(n/k, Q)^{\otimes k}$ and the triangle inequality now imply that
$$\TV\left( \otimes_{t = 1}^k \mL(z_1^t + z_2^t + z_4^t), \textnormal{Bin}(n/k, Q)^{\otimes k} \right) \le 2 \cdot \bP\left[ \mathcal{E}^C \right] \le 4k \cdot \exp \left( - \frac{Q^2N^2}{48Pkn} \right)$$
which proves the second inequality in the statement of the lemma. It suffices to establish the first inequality. A similar conditioning step as above shows that for all $v \in \{0, 1\}^k$, we have that
\begin{align*}
\TV\left( \otimes_{t = 1}^k \mL\left(v_t + z_2^t + z_4^t \Big| z_1^t = v_t\right), \mL\left( \left(v_t + z_2^t + z_4^t : t \in [k]\right) \Big| z_1 = v \text{ and } \mathcal{E} \right) \right) &\le \bP\left[ \mathcal{E}^C \Big| z_1 = v \right] \\
\TV\left( \otimes_{t = 1}^k \mL\left(z_3^t \Big| z_1^t = v_t \right), \mL\left( \left(z_3^t : t \in [k]\right) \Big| z_1 = v \text{ and } \mathcal{E} \right) \right) &\le \bP\left[ \mathcal{E}^C \Big| z_1 = v \right]
\end{align*}
The triangle inequality and the fact that $z_3 \sim \text{Bin}(n/k, Q)^{\otimes k}$ is independent of $z_1$ implies that
$$\TV\left( \otimes_{t = 1}^k \mL\left(v_t + z_2^t + z_4^t \Big| z_1^t = v_t\right), \text{Bin}(n/k, Q)^{\otimes k} \right) \le 4k \cdot \exp \left( - \frac{Q^2N^2}{48Pkn} \right)$$
By Lemma \ref{lem:bernproduct} applied with $P_t = v_t \in \{0, 1\}$, we also have that
$$\TV\left( \otimes_{t = 1}^k \left( v_t + \text{Bin}(n/k - 1, Q) \right), \text{Bin}(n/k, Q)^{\otimes k} \right) \le \sqrt{\sum_{t = 1}^k \frac{k(v_t - Q)^2}{2nQ(1 - Q)}} \le \sqrt{\frac{C_Q k^2}{2n}}$$
The triangle now implies that for each $v \in \{0, 1\}^k$,
\allowdisplaybreaks
\begin{align*}
&\TV\left( \otimes_{t = 1}^k \mL\left(z_2^t + z_4^t \Big| z_1^t = v_t\right), \text{Bin}(n/k - 1, Q)^{\otimes k} \right) \\
&\quad \quad = \TV\left( \otimes_{t = 1}^k \mL\left(v_t + z_2^t + z_4^t \Big| z_1^t = v_t\right), \otimes_{t = 1}^k \left( v_t + \text{Bin}(n/k - 1, Q) \right) \right) \\
&\quad \quad \le 4k \cdot \exp \left( - \frac{Q^2N^2}{48Pkn} \right) + \sqrt{\frac{C_Q k^2}{2n}}
\end{align*}
We now marginalize over $v \sim \mL(z_1) = \text{Bern}(P)^{\otimes k}$. The conditioning on a random variable property of $\TV$ in Fact \ref{tvfacts} implies that
\begin{align*}
&\TV\left( \otimes_{t = 1}^k \mL(z_1^t, z_2^t + z_4^t), \left( \textnormal{Bern}(P) \otimes \textnormal{Bin}(n/k - 1, Q) \right)^{\otimes k} \right) \\
&\quad \quad \le \bE_{v \sim \text{Bern}(P)^{\otimes k}} \, \TV\left( \otimes_{t = 1}^k \mL\left(z_2^t + z_4^t \Big| z_1^t = v_t\right), \text{Bin}(n/k - 1, Q)^{\otimes k} \right)
\end{align*}
which, when combined with the inequalities above, completes the proof of the lemma.
\end{proof}

We now combine these lemmas to analyze $\pr{To-}k\textsc{-Partite-Submatrix}$. The next lemma is a $k$-partite variant of Theorem 6.1 in \cite{brennan2019universality} and involves several technical subtleties that do not arise in the non $k$-partite case. After applying $\textsc{Graph-Clone}$, the adjacency matrix of the input graph $G$ is still missing its diagonal entries. The main difficulty in producing these diagonal entries is to ensure that entries corresponding to vertices in the planted subgraph are properly sampled from $\text{Bern}(p)$. To do this, we randomly embed the original $N \times N$ adjacency matrix in a larger $n \times n$ matrix with i.i.d. entries from $\text{Bern}(Q)$ and sample all diagonal entries corresponding to entries of the original matrix from $\text{Bern}(p)$. The diagonal entries in the new $n - N$ columns are chosen so that the supports on the diagonals within each $F_t$ each have size $\text{Bin}(n/k, Q)$. Even though this causes the sizes of the supports on the diagonals in each $F_t$ to have the same distribution under both $H_0$ and $H_1$, the randomness of the embedding and the fact that $k = o(\sqrt{n})$ ensures that this is hidden in total variation. Showing this involves some technical subtleties captured in the above two lemmas and the next lemma.

\begin{lemma}[Reduction to $k$-Partite Bernoulli Submatrix Problems] \label{lem:submatrix}
Let $0 < q < p \le 1$ and $Q = 1 - \sqrt{(1 - p)(1 - q)} + \mathbf{1}_{\{p = 1\}} \left( \sqrt{q} - 1 \right)$. Suppose that $n$ and $N$ are such that
$$n \ge \left( \frac{p}{Q} + 1 \right) N \quad \text{and} \quad k \le QN/4$$
Also suppose that $q = N^{-O(1)}$ and both $N$ and $n$ are divisible by $k$. Let $E = (E_1, E_2, \dots, E_k)$ and $F = (F_1, F_2, \dots, F_k)$ be partitions of $[N]$ and $[n]$, respectively. Then it follows that the algorithm $\mathcal{A} = \pr{To-}k\textsc{-Partite-Submatrix}$ runs in $\textnormal{poly}(N)$ time and satisfies
\begin{align*}
\TV\left( \mathcal{A}(\mG(N, \mU_N(E), p, q)), \, \mathcal{M}\left(n, \mU_n(F), p, Q \right) \right) &\le 4k \cdot \exp \left( - \frac{Q^2N^2}{48pkn} \right) + \sqrt{\frac{C_Q k^2}{2n}} \\
\TV\left( \mathcal{A}(\mG(N, q)), \, \textnormal{Bern}\left( Q \right)^{\otimes n \times n} \right) &\le 4k \cdot \exp \left( - \frac{Q^2N^2}{48pkn} \right)
\end{align*}
where $C_Q = \max \left\{ \frac{Q}{1 - Q}, \frac{1 - Q}{Q} \right\}$.
\end{lemma}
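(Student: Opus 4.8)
The plan is to decompose $\mathcal{A} = \pr{To-}k\textsc{-Partite-Submatrix}$ into Step~1 (the call to $\textsc{Graph-Clone}$) and Step~2 (the construction of $M_{\text{PD}}$), bound the total variation contributed by each, and invoke Lemma~\ref{lem:tvacc}. For Step~1 I would first verify the hypotheses of Lemma~\ref{lem:graphcloning} with $P = p$, the stated $Q = 1 - \sqrt{(1-p)(1-q)} + \mathbf{1}_{\{p=1\}}(\sqrt q - 1)$ and $t = 2$: the condition $\tfrac{1-p}{1-q} \le \big(\tfrac{1-P}{1-Q}\big)^2$ holds with equality by the definition of $Q$, while $\big(\tfrac{P}{Q}\big)^2 \le \tfrac pq$ reduces, upon setting $a = 1-p$ and $b = 1-q$, to $2\sqrt{ab} \le a + b$, which is AM--GM (the $\mathbf{1}_{\{p=1\}}$ term handles the boundary case $p=1$ with both inequalities again tight). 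Lemma~\ref{lem:graphcloning} then gives $\textsc{Graph-Clone}(\mG(N, S, p, q)) = \mG(N, S, p, Q)^{\otimes 2}$ for every $S$ and $\textsc{Graph-Clone}(\mG(N, q)) = \mG(N, Q)^{\otimes 2}$, so Step~1 is exact and under $H_1$ the input to Step~2 is $\bE_{S \sim \mU_N(E)}\,\mG(N, S, p, Q)^{\otimes 2}$.

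For Step~2 I would condition on the \emph{output} planted set. Revealing the sets $S_t \subseteq F_t$ and bijections $\pi_t : S_t \to E_t$ first, the planted clique $S_E = \{v_1^*, \dots, v_k^*\}$ (with $v_t^* \in E_t$) maps to $W = \{w_1, \dots, w_k\}$ where $w_t = \pi_t^{-1}(v_t^*)$; since $S_t$ is a uniform $(N/k)$-subset of $F_t$ and $\pi_t$ a uniform bijection, $W \sim \mU_n(F)$, and conditionally on $W$ the remaining randomness no longer depends on $S_E$. I then split $M_{\text{PD}}$ into its off-diagonal and diagonal parts, which are conditionally independent given $(\{S_t\}, \{\pi_t\})$. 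The off-diagonal entries inside the embedded block $S := \bigcup_t S_t$ are the relabelled edges of the independent graphs $G_1$ (above the diagonal) and $G_2$ (below), hence $\textnormal{Bern}(p)$ exactly on pairs lying in $W$ and $\textnormal{Bern}(Q)$ otherwise, while the entries outside $S \times S$ are i.i.d.\ $\textnormal{Bern}(Q)$ fills; so, conditionally on $W$, the off-diagonal of $M_{\text{PD}}$ is \emph{exactly} that of $\mathcal{M}(n, W, p, Q)$ (and under $H_0$, where $G_1, G_2 \sim \mG(N, Q)$, it is exactly i.i.d.\ $\textnormal{Bern}(Q)$, matching $\textnormal{Bern}(Q)^{\otimes n \times n}$), with no total variation loss.

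All the loss is on the diagonal, and here Lemma~\ref{lem:plantingdiagonals} does the work. Using the identity that a uniformly random subset of size $\textnormal{Bin}(\ell, p)$ has indicator vector distributed as $\ell$ i.i.d.\ $\textnormal{Bern}(p)$, together with exchangeability of $(S_t, T_1^t, T_2^t)$ under permutations of $F_t$ fixing $w_t$, the diagonal of $M_{\text{PD}}$ on $F_t$ is, conditionally on $W$, a uniformly random ``on''-pattern determined by the pair $(z_1^t, z_2^t + z_4^t)$, where $z_1^t = \mathbf{1}_{\{w_t \text{ on}\}}$, $z_2^t + z_4^t$ counts the on-entries of $F_t \setminus \{w_t\}$, $z_1^t + z_2^t = s_1^t \sim \textnormal{Bin}(N/k, p)$, $z_3^t = s_2^t \sim \textnormal{Bin}(n/k, Q)$ and $z_4^t = \max\{z_3^t - z_1^t - z_2^t, 0\} = |T_2^t|$ — exactly the setup of Lemma~\ref{lem:plantingdiagonals} with $P = p$, whose hypotheses $n \ge (p/Q + 1)N$ and $k \le QN/4$ are those assumed here. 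The same reduction identifies the target's diagonal on $F_t$ with $\textnormal{Bern}(p) \otimes \textnormal{Bin}(n/k-1, Q)$ under $H_1$ and with total count $\textnormal{Bin}(n/k, Q)$ under $H_0$, so the first and second conclusions of Lemma~\ref{lem:plantingdiagonals} bound the diagonal error by $4k\exp(-Q^2N^2/(48pkn)) + \sqrt{C_Q k^2/(2n)}$ and by $4k\exp(-Q^2N^2/(48pkn))$ respectively. Combining the exact Step~1, the exact off-diagonal part, and the diagonal bound through Lemma~\ref{lem:tvacc}, and averaging over $W \sim \mU_n(F)$ by convexity of $\TV$ (note $\mL(M_{\text{PD}} \mid S_E) = \bE_{W \sim \mU_n(F)}\,\mL(M_{\text{PD}} \mid W)$ does not depend on $S_E$), yields the two stated bounds; the $\textnormal{poly}(N)$ running time follows from that of $\textsc{Graph-Clone}$ and the $\textnormal{poly}(n)$ cost of writing $M_{\text{PD}}$ down. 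The main obstacle is the order-of-revelation bookkeeping in the middle paragraph: decoupling the off-diagonal block from the input clique by conditioning on $W$, and reducing the diagonal to its count statistics $(z_1^t, z_2^t + z_4^t)$ so that Lemma~\ref{lem:plantingdiagonals} applies verbatim; the fact that the $\sqrt{C_Q k^2/(2n)}$ term appears only under $H_1$ — reflecting the need to match a $\textnormal{Bern}(p)$ entry at $w_t$ rather than only a total count — is already isolated inside Lemmas~\ref{lem:bernproduct} and \ref{lem:plantingdiagonals}.
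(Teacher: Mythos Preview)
Your proposal is correct and follows essentially the same route as the paper's proof: verify the $\textsc{Graph-Clone}$ hypotheses via AM--GM (handling the $p=1$ boundary separately), condition on the image $W$ of the planted set under the random embedding, observe that the off-diagonal entries of $M_{\text{PD}}$ then match $\mathcal{M}(n, W, p, Q)$ exactly, reduce the diagonal discrepancy to the count statistics $(z_1^t, z_2^t + z_4^t)$ via exchangeability, and invoke Lemma~\ref{lem:plantingdiagonals} before marginalizing over $W \sim \mU_n(F)$. The only cosmetic differences are that the paper fixes the input set $R$ first and additionally conditions on $(S_1, \dots, S_k)$ before relaxing, whereas you reveal $(\{S_t\}, \{\pi_t\})$ up front and pass to $W$; and the paper appeals to the conditioning property in Fact~\ref{tvfacts} where you cite convexity of $\TV$ and Lemma~\ref{lem:tvacc}.
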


\begin{proof}
Fix some subset $R \subseteq [N]$ such that $|R \cap E_i| = 1$ for each $i \in [k]$. We will first show that $\mathcal{A}$ maps an input $G \sim G(N, R, p, q)$ approximately in total variation to $\mathcal{M}\left(n, \mU_n(F), p, Q \right)$. By AM-GM, we have that
$$\sqrt{pq} \le \frac{p + q}{2} = 1 - \frac{(1 - p) + (1 - q)}{2} \le 1 - \sqrt{(1 - p)(1 - q)}$$
If $p \neq 1$, it follows that $P = p > Q = 1 - \sqrt{(1 - p)(1 - q)}$. This implies that $\frac{1 - p}{1 - q} = \left( \frac{1 - P}{1 - Q} \right)^2$ and the inequality above rearranges to $\left( \frac{P}{Q} \right)^2 \le \frac{p}{q}$. If $p = 1$, then $Q = \sqrt{q}$ and $\left( \frac{P}{Q} \right)^2 = \frac{p}{q}$. Furthermore, the inequality $\frac{1 - p}{1 - q} \le \left( \frac{1 - P}{1 - Q} \right)^2$ holds trivially. Therefore we may apply Lemma \ref{lem:graphcloning}, which implies that $(G_1, G_2) \sim \mG(N, R, p, Q)^{\otimes 2}$.

Let the random set $U = \{ \pi_1^{-1}(R \cap E_1), \pi_2^{-1}(R \cap E_2), \dots, \pi_k^{-1}(R \cap E_k) \}$ denote the support of the $k$-subset of $[n]$ that $R$ is mapped to in the embedding step of $\pr{To-}k\textsc{-Partite-Submatrix}$. Now fix some $k$-subset $R' \subseteq [n]$ with $|R' \cap F_i| = 1$ for each $i \in [k]$ and consider the distribution of $M_{\text{PD}}$ conditioned on the event $U = R'$. Since $(G_1, G_2) \sim \mG(n, R, p, Q)^{\otimes 2}$, Step 2 of $\pr{To-}k\textsc{-Partite-Submatrix}$ ensures that the off-diagonal entries of $M_{\text{PD}}$, given this conditioning, are independent and distributed as follows:
\begin{itemize}
\item $M_{ij} \sim \text{Bern}(p)$ if $i \neq j$ and $i, j \in R'$; and
\item $M_{ij} \sim \text{Bern}(Q)$ if $i \neq j$ and $i \not \in R'$ or $j \not \in R'$.
\end{itemize}
which match the corresponding entries of $\mathcal{M}(n, R', p, Q)$. Furthermore, these entries are independent of the vector $\text{diag}(M_{\text{PD}}) = \left( (M_{\text{PD}})_{ii} : i \in [k] \right)$ of the diagonal entries of $M_{\text{PD}}$. It therefore follows that
$$\TV\left( \mL \left( M_{\text{PD}} \Big| U = R' \right), \mathcal{M}(n, R', p, Q) \right) = \TV\left( \mL \left( \text{diag}(M_{\text{PD}}) \Big| U = R' \right), \mathcal{V}(n, R', p, Q) \right)$$
Let $(S_1', S_2', \dots, S_k')$ be any tuple of fixed subsets such that $|S_t'| = N/k$, $S_i' \subseteq F_t$ and $R' \cap F_t \in S_t'$ for each $t \in [k]$. Now consider the distribution of $\text{diag}(M_{\text{PD}})$ conditioned on both $U = R'$ and $(S_1, S_2, \dots, S_k) = (S_1', S_2', \dots, S_k')$. It holds by construction that the $k$ vectors $\text{diag}(M_{\text{PD}})_{F_t}$ are independent for $t \in [k]$ and each distributed as follows:
\begin{itemize}
\item $\text{diag}(M_{\text{PD}})_{S_t'}$ is an exchangeable distribution on $\{0, 1\}^{N/k}$ with support of size $s_1^t \sim \text{Bin}(N/k, p)$, by construction. This implies that $\text{diag}(M_{\text{PD}})_{S_t'} \sim \text{Bern}(p)^{\otimes N/k}$. This can trivially be restated as $\left(M_{R' \cap F_t, R' \cap F_t}, \text{diag}(M_{\text{PD}})_{S_t' \backslash R'}\right) \sim \text{Bern}(p) \otimes \text{Bern}(p)^{\otimes N/k - 1}$.
\item $\text{diag}(M_{\text{PD}})_{F_t \backslash S_t'}$ is an exchangeable distribution on $\{0, 1\}^{N/k}$ with support of size $z_4^t = \max\{s_2^t - s_1^t, 0\}$. Furthermore, $\text{diag}(M_{\text{PD}})_{F_t \backslash S_t'}$ is independent of $\text{diag}(M_{\text{PD}})_{S_t'}$.
\end{itemize}
For each $t \in [k]$, let $z_1^t = M_{R' \cap F_t, R' \cap F_t} \sim \text{Bern}(p)$ and $z_2^t \sim \text{Bin}(N/k - 1, p)$ be the size of the support of $\text{diag}(M_{\text{PD}})_{S_t' \backslash R'}$. As shown discussed in the first point above, we have that $z_1^t$ and $z_2^t$ are independent and $z_1^t + z_2^t = s_1^t$.

Now consider the distribution of $\text{diag}(M_{\text{PD}})$ relaxed to only be conditioned on $U = R'$, and no longer on $(S_1, S_2, \dots, S_k) = (S_1', S_2', \dots, S_k')$. Conditioned on $U = R'$, the $S_t$ are independent and each uniformly distributed among all $N/k$ size subsets of $F_t$ that contain the element $R' \cap F_t$. In particular, this implies that the distribution of $\text{diag}(M_{\text{PD}})_{F_t \backslash R'}$ is an exchangeable distribution on $\{0, 1\}^{n/k - 1}$ with support size $z_2^t + z_4^t$ for each $t$. Note that any $v \sim \mathcal{V}(n, R', p, Q)$ also satisfies that $v_{F_t \backslash R'}$ is exchangeable. This implies that $\mathcal{V}(n, R', p, Q)$ and $\text{diag}(M_{\text{PD}})$ are identically distributed when conditioned on their entries with indices in $R'$ and on their support sizes within the $k$ sets of indices $F_t \backslash R'$. The conditioning property of Fact \ref{tvfacts} therefore implies that
\begin{align*}
&\TV\left( \mL \left( \text{diag}(M_{\text{PD}}) \Big| U = R' \right), \mathcal{V}(n, R', p, Q) \right) \\
&\quad \quad \le \TV\left( \otimes_{t = 1}^k \mL(z_1^t, z_2^t + z_4^t), \left( \textnormal{Bern}(p) \otimes \textnormal{Bin}(n/k - 1, Q) \right)^{\otimes k} \right) \\
&\quad \quad \le 4k \cdot \exp \left( - \frac{Q^2N^2}{48Pkn} \right) + \sqrt{\frac{C_Q k^2}{2n}}
\end{align*}
by the first inequality in Lemma \ref{lem:plantingdiagonals}. Now observe that $U \sim \mU_n(F)$ and thus marginalizing over $R' \sim \mL(U) = \mU_n(F)$ and applying the conditioning property of Fact \ref{tvfacts} yields that
$$\TV\left( \mathcal{A}(G(N, R, p, q)), \mathcal{M}(n, \mU_n(F), p, Q) \right) \le \bE_{R' \sim \mU_n(F)} \, \TV\left( \mL \left( M_{\text{PD}} \Big| U = R' \right), \mathcal{M}(n, R', p, Q) \right)$$
since $M_{\text{PD}} \sim \mathcal{A}(\mG(N, R, p, q))$. Applying an identical marginalization over $R \sim \mU_N(E)$ completes the proof of the first inequality in the lemma statement.

It suffices to consider the case where $G \sim \mG(N, q)$, which follows from an analogous but simpler argument. By Lemma \ref{lem:graphcloning}, we have that $(G_1, G_2) \sim \mG(N, Q)^{\otimes 2}$. It follows that the entries of $M_{\text{PD}}$ are distributed as $(M_{\text{PD}})_{ij} \sim_{\text{i.i.d.}} \text{Bern}(Q)$ for all $i \neq j$ independently of $\text{diag}(M_{\text{PD}})$. Now note that the $k$ vectors $\text{diag}(M_{\text{PD}})_{F_t}$ for $t \in [k]$ are each exchangeable and have support size $s_1^t + \max\{ s_2^t - s_1^t, 0 \} = z_1^t + z_2^t + z_4^t$ where $z_1^t \sim \text{Bern}(p)$, $z_2^t \sim \text{Bin}(N/k - 1, p)$ and $s_2^t \sim \text{Bin}(n/k, Q)$ are independent. By the same argument as above, we have that
\begin{align*}
\TV\left( \mL(M_{\text{PD}}), \text{Bern}(Q)^{\otimes n \times n} \right) &= \TV\left( \mL(\text{diag}(M_{\text{PD}})), \text{Bern}(Q)^{\otimes n} \right) \\
&= \TV\left( \otimes_{t = 1}^k \mL\left( z_1^t + z_2^t + z_4^t \right), \text{Bin}(n/k, Q) \right) \\
&\le 4k \cdot \exp \left( - \frac{Q^2N^2}{48Pkn} \right)
\end{align*}
by Lemma \ref{lem:plantingdiagonals}. Since $M_{\text{PD}} \sim \mathcal{A}(\mG(N, q))$, this completes the proof of the lemma.
\end{proof}

\subsection{Imbalanced Binary Orthogonal Matrices and Sample Rotations}
\label{subsec:rotations}

In this section, we analyze the matrices $H_{r, t}$ constructed based on the incidences between points and hyperplanes in $\mathbb{F}_r^t$. The definition of $H_{r, t}$ can be found in Step 4 of $k$\pr{-pds-to-isgm} in Figure \ref{fig:isgmreduction}. We remark that a classic trick counting the number of ordered $d$-tuples of linearly independent vectors in $\mathbb{F}_r^t$ shows that the number of $d$-dimensional subspaces of $\mathbb{F}_r^t$ is
$$|\text{Gr}(d, \mathbb{F}_r^t)| = \frac{(r^t - 1)(r^t - r) \cdots (r^t - r^{d - 1})}{(r^d - 1)(r^d - r) \cdots (r^d - r^{d - 1})}$$
This implies that the number of hyperplanes in $\mathbb{F}_r^t$ is $\ell = \frac{r^t - 1}{r - 1}$, which justifies that the number of rows of $H_{r, t}$ is as described in $k$\pr{-pds-to-isgm}. The matrices $H_{r, t}$ are used to rotate the Gaussianized submatrix produced from $k$\pr{-pds} in Step 5 of $k$\pr{-pds-to-isgm} to produce the exactly imbalanced mixture structure. The crucial properties of $H_{r, t}$ are that they have orthogonal rows, are binary in the sense that they contain two distinct real values and contain a fraction of approximately $1/r$ negative values per column. All three properties are essential to the correctness of the reduction $k$\pr{-pds-to-isgm}. We establish these properties in the simple lemma below.

\begin{lemma}[Imbalanced Binary Orthogonal Matrices] \label{lem:orthogonalmatrices}
If $t \ge 2$ and $r \ge 2$ is prime, then $\left( \frac{r^t - 1}{r - 1} \right) \times r^t$ real matrix $H_{r, t}$ has orthonormal rows and each column of $H_{r, t}$ other than the column corresponding to $P_i = 0$ contains exactly $\frac{r^{t - 1} - 1}{r - 1}$ entries equal to $\frac{1 - r}{\sqrt{r^t(r - 1)}}$.
\end{lemma}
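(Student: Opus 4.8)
The statement is a pure finite-field incidence computation, and the plan is to reduce everything to three elementary counts about points, hyperplanes, and their intersections in $\mathbb{F}_r^t$. Write $a_i \in \{0,1\}^{r^t}$ for the indicator vector of the hyperplane $V_i$, i.e. $(a_i)_j = \mathbf{1}\{P_j \in V_i\}$, and let $h_i$ be the $i$-th row of $H_{r,t}$. Reading off the definition in Step~4 of the reduction in Figure~\ref{fig:isgmreduction}, we have $\sqrt{r^t(r-1)}\,h_i = \mathbf{1}_{[r^t]} - r\,a_i$, where $\mathbf{1}_{[r^t]}$ denotes the all-ones vector of length $r^t$. Expanding the inner product of two rows gives
\[
r^t(r-1)\,\langle h_i, h_{i'}\rangle \;=\; \langle \mathbf{1}_{[r^t]} - r a_i,\ \mathbf{1}_{[r^t]} - r a_{i'}\rangle \;=\; r^t - r\,|V_i| - r\,|V_{i'}| + r^2\,|V_i \cap V_{i'}|,
\]
so proving orthonormality amounts to evaluating $|V_i|$ and $|V_i \cap V_{i'}|$ and checking that the right-hand side equals $r^t(r-1)$ when $i = i'$ and $0$ when $i \ne i'$.

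\textbf{The incidence counts.} First, since $r$ is prime, $\mathbb{F}_r$ is a field and each hyperplane $V_i$ is a genuine $(t-1)$-dimensional subspace, so $|V_i| = r^{t-1}$. Second, for $i \ne i'$ the sum $V_i + V_{i'}$ strictly contains $V_i$, which forces $\dim(V_i + V_{i'}) = t$ and hence $\dim(V_i \cap V_{i'}) = (t-1) + (t-1) - t = t-2$, i.e. $|V_i \cap V_{i'}| = r^{t-2}$; this is the only place the hypothesis $t \ge 2$ is used. Substituting into the display: the diagonal entries become $r^t - 2r\cdot r^{t-1} + r^2\cdot r^{t-1} = r^{t+1} - r^t = r^t(r-1)$, and the off-diagonal entries become $r^t - 2r\cdot r^{t-1} + r^2\cdot r^{t-2} = r^t - 2r^t + r^t = 0$. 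Hence $\langle h_i, h_{i'}\rangle = \delta_{ii'}$, which is exactly orthonormality of the rows. (Equivalently, one can phrase this as $H_{r,t}H_{r,t}^\top = \tfrac{1}{r^t(r-1)}(\mathbf{1}\mathbf{1}^\top - rA)(\mathbf{1}\mathbf{1}^\top - rA)^\top$ in terms of the point-hyperplane incidence matrix $A$, where the all-ones blocks cancel.)

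\textbf{The column count.} Fix a nonzero point $P_j$. A hyperplane is the kernel of a nonzero linear functional, unique up to a nonzero scalar, and $P_j \in V_i$ precisely when the functional cutting out $V_i$ vanishes at $P_j$. Since $P_j \ne 0$, evaluation at $P_j$ is a surjection $(\mathbb{F}_r^t)^\ast \to \mathbb{F}_r$, so its kernel is a $(t-1)$-dimensional subspace of $(\mathbb{F}_r^t)^\ast$, containing $r^{t-1}-1$ nonzero functionals; after identifying scalar multiples this yields exactly $\tfrac{r^{t-1}-1}{r-1}$ hyperplanes through $P_j$, and for each such $V_i$ the definition gives $(H_{r,t})_{ij} = \tfrac{1-r}{\sqrt{r^t(r-1)}}$, accounting for all entries of that value in column $j$. (The excluded column, $P_i = 0$, is the origin, which lies in every linear hyperplane, so all $\ell$ of its entries equal $\tfrac{1-r}{\sqrt{r^t(r-1)}}$ — consistent with, though not claimed by, the lemma.) There is no genuine obstacle here: the only points demanding care are the degenerate case $t=1$ (a single hyperplane $\{0\}$, no pair $i \ne i'$, and $|V_i \cap V_{i'}| = r^{t-2}$ meaningless), which $t \ge 2$ rules out, and the implicit reliance on the field structure of $\mathbb{F}_r$; the one step with any content is the clean cancellation producing the zero off-diagonal Gram entries.
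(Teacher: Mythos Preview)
Your proof is correct and, for the orthonormality claim, follows essentially the same route as the paper: both reduce the Gram-matrix computation to the counts $|V_i| = r^{t-1}$ and $|V_i \cap V_{i'}| = r^{t-2}$ (for $i \ne i'$), yours via the compact rewriting $\sqrt{r^t(r-1)}\,h_i = \mathbf{1} - r\,a_i$ and the paper via a direct three-term split of the inner product over $V_i \setminus V_{i'}$, $V_{i'} \setminus V_i$, $V_i \cap V_{i'}$ and their complement.

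The one place the arguments differ is the column count. The paper proceeds by symmetry plus double counting: $\mathrm{GL}_t(\mathbb{F}_r)$ acts transitively on nonzero points and permutes hyperplanes, so every nonzero $P_j$ lies in the same number $c$ of hyperplanes; summing incidences two ways gives $c(r^t-1) = \ell\,(r^{t-1}-1)$ and hence $c = \tfrac{r^{t-1}-1}{r-1}$. Your argument is instead a direct duality computation, identifying hyperplanes through $P_j$ with nonzero functionals in the kernel of the evaluation map $(\mathbb{F}_r^t)^\ast \to \mathbb{F}_r$ modulo scalars. Both are standard one-step arguments; yours is slightly more direct (no averaging step), while the paper's avoids invoking the dual space explicitly.
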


\begin{proof}
Let $r_i$ denote the $i$th row of $H_{r, t}$. First observe that
$$\| r_i \|_2^2 = (r^t - |V_i|) \cdot \frac{1}{r^t(r - 1)} + |V_i| \cdot \frac{(1 - r)^2}{r^t(r - 1)} = 1$$
since $|V_i| = r^{t - 1}$. Furthermore $V_i \cap V_j$ is a $(t - 2)$-dimensional subspace of $\mathbb{F}_r^t$ if $i \neq j$, which implies that $|V_i \cap V_j| = r^{t - 2}$ and $|V_i \cup V_j| = |V_i| + |V_j| - |V_i \cap V_j| = 2r^{t - 1} - r^{t - 2}$. Therefore if $i \neq j$,
\begin{align*}
\langle r_i, r_j \rangle &= (r^t - |V_i \cup V_j|) \cdot \frac{1}{r^t(r - 1)} + (|V_i \cup V_j| - |V_i \cap V_j|) \cdot \frac{1 - r}{r^t(r - 1)} + |V_i \cap V_j| \cdot \frac{(1 - r)^2}{r^t(r - 1)} \\
&= (r - 1)^2 \cdot \frac{1}{r^2(r - 1)} - 2(r - 1) \cdot \frac{1}{r^2} + \frac{r - 1}{r^2} = 0
\end{align*}
which shows that the rows of $H_{r, t}$ are orthonormal. Fix any two nonzero vectors $P_i, P_j \in \mathbb{F}_r^t$ and consider any invertible linear transformation $M : \mathbb{F}_r^t \to \mathbb{F}_r^t$ such that $M(P_i) = P_j$. The map $M$ induces a bijection between the hyperplanes containing $P_i$ and the hyperplanes containing $P_j$. In particular, it follows that each nonzero point in $\mathbb{F}_r^t$ is contained in the same number of hyperplanes. Each hyperplane contains $r^{t - 1} - 1$ nonzero points and thus the total number of incidences is $(r^{t - 1} - 1) \cdot \frac{r^t - 1}{r - 1}$, which implies that the number per nonzero point is $\frac{r^{t - 1} - 1}{r - 1}$. Each such incidence corresponds to a negative entry in the column for that point in $H_{r, t}$, implying that the column for each nonzero point $P_i$ contains exactly $\frac{r^{t - 1} - 1}{r - 1}$ negative entries.
\end{proof}

We now proceed to establish the total variation guarantees for sample rotation and subsampling as in Steps 5 and 6 in $k\pr{-pds-to-isgm}$, using these properties of $H_{r, t}$. In the rest of this section, let $\mathcal{A}$ denote the reduction $k\pr{-pds-to-isgm}$ with input $(G, E)$ where $E$ is a partition of $[N]$ and output $(X_1, X_2, \dots, X_n)$. We will need the following convenient upper bound on the total variation between two binomial distributions.

\begin{lemma} \label{lem:bintv}
Given $P \in [0, 1]$, $Q \in (0, 1)$ and $n \in \mathbb{N}$, it follows that
$$\TV\left( \textnormal{Bin}(n, P), \textnormal{Bin}(n, Q) \right) \le |P - Q| \cdot \sqrt{\frac{n}{2Q(1 - Q)}}$$
\end{lemma}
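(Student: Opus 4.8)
The plan is to mirror the proof of Lemma~\ref{lem:bernproduct}: bound a $\chi^2$ divergence in closed form, pass to $\KL$ via $\log(1+x)\le x$, and finish with Pinsker's inequality. Concretely, I take $\textnormal{Bin}(n,Q)$ as the base measure, which is legitimate since $Q\in(0,1)$ makes it fully supported on $\{0,1,\dots,n\}$ and hence dominating for $\textnormal{Bin}(n,P)$ for every $P\in[0,1]$. Then
$$1 + \chi^2\big(\textnormal{Bin}(n,P),\textnormal{Bin}(n,Q)\big) = \sum_{t=0}^n \binom{n}{t}\,Q^t(1-Q)^{n-t}\left(\frac{P}{Q}\right)^{\!2t}\!\left(\frac{1-P}{1-Q}\right)^{\!2(n-t)},$$
and the sum evaluates by the binomial theorem to $\big(P^2/Q + (1-P)^2/(1-Q)\big)^n$. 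A one-line algebraic identity gives $P^2/Q + (1-P)^2/(1-Q) = 1 + (P-Q)^2/(Q(1-Q))$, which is exactly the Bernoulli $\chi^2$ quantity, so $1+\chi^2\big(\textnormal{Bin}(n,P),\textnormal{Bin}(n,Q)\big) = \big(1 + (P-Q)^2/(Q(1-Q))\big)^n$.

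Next, using concavity of $\log$ exactly as in Lemma~\ref{lem:bernproduct} to get $\KL(\mP,\mQ)\le \log(1+\chi^2(\mP,\mQ))$ for $\mP\ll\mQ$, followed by $\log(1+x)\le x$, yields
$$\KL\big(\textnormal{Bin}(n,P)\,\big\|\,\textnormal{Bin}(n,Q)\big) \le n\log\!\left(1 + \frac{(P-Q)^2}{Q(1-Q)}\right) \le \frac{n(P-Q)^2}{Q(1-Q)}.$$
Pinsker's inequality $2\,\TV^2\le \KL$ then gives $\TV\big(\textnormal{Bin}(n,P),\textnormal{Bin}(n,Q)\big)\le |P-Q|\sqrt{n/(2Q(1-Q))}$, as claimed. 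Equivalently, one can bypass the binomial-theorem computation by noting that $\textnormal{Bin}(n,\cdot)$ is the law of the sum of $n$ i.i.d.\ Bernoullis and invoking the data-processing inequality for $\KL$ together with tensorization and $\KL(\textnormal{Bern}(P)\|\textnormal{Bern}(Q))\le \chi^2(\textnormal{Bern}(P),\textnormal{Bern}(Q)) = (P-Q)^2/(Q(1-Q))$; this reaches the identical bound.

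There is essentially no hard step here — the lemma is a routine second-moment estimate, and the only genuine subtleties are bookkeeping. One should check that $\textnormal{Bin}(n,P)\ll\textnormal{Bin}(n,Q)$ so that the divergences are finite (guaranteed by $Q\in(0,1)$, with the cases $P\in\{0,1\}$ and $P=Q$ trivial), and one must keep the exponents on $P/Q$ and $(1-P)/(1-Q)$ squared when evaluating the $\chi^2$ sum, so that the weights $\binom{n}{t}Q^t(1-Q)^{n-t}$ are genuinely recovered — that is, so one is computing $\sum_t p_t^2/q_t$ rather than $\sum_t p_t$. Past these points, the argument is the same chain of standard inequalities already used for Lemma~\ref{lem:bernproduct}.
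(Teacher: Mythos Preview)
Your proof is correct and essentially matches the paper's argument. The paper's own proof is exactly your ``equivalently'' route: it first applies the data-processing inequality (for $\TV$) to pass from $\textnormal{Bin}(n,\cdot)$ to $\textnormal{Bern}(\cdot)^{\otimes n}$, then Pinsker, then tensorization of $\KL$, and finally $\KL(\textnormal{Bern}(P)\|\textnormal{Bern}(Q))\le\chi^2(\textnormal{Bern}(P),\textnormal{Bern}(Q))=(P-Q)^2/(Q(1-Q))$. Your primary route, computing $\chi^2(\textnormal{Bin}(n,P),\textnormal{Bin}(n,Q))$ in closed form via the binomial theorem, is a harmless variant that lands on the identical bound.
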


\begin{proof}
By applying the data processing inequality for $\TV$ to the function taking the sum of the coordinates of a vector, we have that
\begin{align*}
2 \cdot \TV\left( \textnormal{Bin}(n, P), \textnormal{Bin}(n, Q) \right)^2 &\le 2 \cdot \TV\left( \textnormal{Bern}(P)^{\otimes n}, \textnormal{Bern}(Q)^{\otimes n} \right)^2 \\
&\le \KL\left( \textnormal{Bern}(P)^{\otimes n}, \textnormal{Bern}(Q)^{\otimes n} \right) \\
&= n \cdot \KL\left( \textnormal{Bern}(P), \textnormal{Bern}(Q) \right) \\
&\le n \cdot \chi^2\left( \textnormal{Bern}(P), \textnormal{Bern}(Q) \right) \\
&= n \cdot \frac{(P - Q)^2}{Q(1 - Q)}
\end{align*}
The second inequality is an application of Pinsker's, the first equality is tensorization of $\KL$ and the third inequality is the fact that $\chi^2$ upper bounds $\KL$ by the concavity of $\log$. This completes the proof of the lemma.
\end{proof}

Let $\text{Hyp}(N, K, n)$ denote a hypergeometric distribution with $n$ draws from a population of size $N$ with $K$ success states. We will also need the upper bound on the total variation between hypergeometric and binomial distributions given by
$$\TV\left( \text{Hyp}(N, K, n), \text{Bin}(n, K/N) \right) \le \frac{4n}{N}$$
This bound is a simple case of finite de Finetti's theorem and is proven in Theorem (4) in \cite{diaconis1980finite}. The following lemma analyzes Steps 5 and 6 of $\mathcal{A}$.

\begin{lemma}[Sample Rotation] \label{lem:samplerotation}
Let $F$ be a fixed partition of $[kr^t]$ into $k$ parts of size $r^t$ and let $S \subseteq [m]$ be a fixed $k$-subset. Let $\mathcal{A}_{\textnormal{5-6}}$ denote Steps 5 and 6 of $k\pr{-pds-to-isgm}$ with input $M_{\textnormal{G}}$ and output $(X_1, X_2, \dots, X_n)$. Then for all $\tau \in \mathbb{R}$,
$$\TV\left( \mathcal{A}_{\textnormal{5-6}} \left( \tau \cdot \mathbf{1}_S \mathbf{1}_{\mU_{kr^t}(F)}^\top + \mN(0, 1)^{\otimes m \times kr^t} \right), \pr{isgm}_{H_1}(n, k, d, \mu, \epsilon) \right) \le \frac{4}{w} + \frac{k^2}{wn} + \frac{k}{w\sqrt{2n(r - 1)}}$$
where $\epsilon = 1/r$ and $\mu = \frac{\tau}{\sqrt{r^t(r - 1)}}$. Furthermore, it holds that $\mathcal{A}_{\textnormal{5-6}} \left( \mN(0, 1)^{\otimes m \times kr^t} \right) \sim \mN(0, I_d)^{\otimes n}$.
\end{lemma}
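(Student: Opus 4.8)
The plan is to analyze Steps~5 and~6 of $k$\textsc{-pds-to-isgm} in turn: Step~5 (the rotation) via the structural properties of $H_{r,t}$ established in Lemma~\ref{lem:orthogonalmatrices}, and Step~6 (subsampling, embedding, padding) via the hypergeometric-to-binomial bound $\TV(\mathrm{Hyp}(N,K,n),\mathrm{Bin}(n,K/N))\le 4n/N$ quoted above (finite de Finetti) together with Lemma~\ref{lem:bintv}. Throughout, I would fix a member $T\in\mU_{kr^t}(F)$ and write $T\cap F_i=\{j_i\}$ under the identification of $F_i$ with the points $P_1,\dots,P_{r^t}$ of $\mathbb{F}_r^t$ (note $t\ge 2$ in the relevant regime), so that the input matrix restricted to block $F_i$ is $\tau\,\mathbf{1}_S e_{j_i}^\top+\mathcal{N}(0,1)^{\otimes m\times r^t}$. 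Since $(M_{\mathrm R})_{F_i'}=(M_{\mathrm G})_{F_i}H_{r,t}^\top$ and $H_{r,t}$ has orthonormal rows, each row of $(M_{\mathrm G})_{F_i}H_{r,t}^\top$ is an independent Gaussian vector with covariance $H_{r,t}H_{r,t}^\top=I_\ell$ and mean $\tau\,\mathbf{1}_{\{a\in S\}}\,h^{(j_i)}$, where $h^{(j_i)}$ is the $j_i$-th column of $H_{r,t}$. Hence all entries of $M_{\mathrm R}$ are independent, rows indexed outside $S$ are clean $\mathcal{N}(0,I_{k\ell})$ vectors, and for a row in $S$ the entry in column $c$ of block $F_i'$ has mean $\tau/\sqrt{r^t(r-1)}=\mu$ when $P_{j_i}\notin V_c$ and mean $(1-r)\mu=\mu'$ when $P_{j_i}\in V_c$ (using $\epsilon=1/r$ and $\epsilon\mu'+(1-\epsilon)\mu=0$). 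By the incidence count in the proof of Lemma~\ref{lem:orthogonalmatrices}, if $P_{j_i}\neq 0$ then exactly $\tfrac{r^{t-1}-1}{r-1}$ of the $\ell$ columns in block $F_i'$ are ``$\mu'$-columns'' and the other $r^{t-1}$ are ``$\mu$-columns''; if $P_{j_i}=0$ all $\ell$ are $\mu'$-columns. I would therefore first pass to $T$ conditioned on avoiding the origin in every block, which by the conditioning property of Fact~\ref{tvfacts} and a union bound costs at most $k/r^t\le k^2/(wn)$ in total variation (using $r^t-1\ge wn(r-1)/k$). Conditioned on this event $M_{\mathrm R}$ has the same label composition for every such $T$ — $kr^{t-1}$ columns of mean $\mu\mathbf{1}_S$ and $k\tfrac{r^{t-1}-1}{r-1}$ columns of mean $\mu'\mathbf{1}_S$ on the rows in $S$ — and since columns sharing a label are i.i.d., the law of the output of $\mathcal{A}_{\textnormal{5-6}}$ does not depend on which $T$ was taken.

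For Step~6, selecting $n$ of the $k\ell$ columns of $M_{\mathrm R}$ uniformly at random (in uniformly random order) and letting $L'$ count the selected $\mu'$-columns, $L'\sim\mathrm{Hyp}(k\ell,\,k\tfrac{r^{t-1}-1}{r-1},\,n)$, and conditioned on $L'=\ell'$ the selected columns are a uniformly random arrangement of $\ell'$ i.i.d.\ draws from $\mathcal{N}(\mu'\mathbf{1}_S,I_m)$ and $n-\ell'$ i.i.d.\ draws from $\mathcal{N}(\mu\mathbf{1}_S,I_m)$. After the random row-embedding of these $m$ coordinates into $[d]$ and padding the remaining $d-m$ rows with i.i.d.\ $\mathcal{N}(0,I_n)$ rows, the support $S$ becomes a uniformly random $k$-subset $S'\subseteq[d]$, so conditioned on $L'=\ell'$ the output is exactly $\pr{isgm}_{H_1}(n,k,d,\mu,\epsilon)$ conditioned on its mixture having $\ell'$ samples from the $\mu'$-component. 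Applying the ``conditioning on a random variable'' property of Fact~\ref{tvfacts} with $L'$ against the $\mathrm{Bin}(n,1/r)$ component-count of the target, the conditional term vanishes and the error reduces to $\TV(L',\mathrm{Bin}(n,1/r))$, which by the triangle inequality through $\mathrm{Bin}(n,\tfrac{r^{t-1}-1}{r^t-1})$ is at most $\tfrac{4n}{k\ell}+\tfrac{r-1}{r(r^t-1)}\sqrt{\tfrac{n}{2\cdot\frac1r(1-\frac1r)}}$, using the de Finetti bound and Lemma~\ref{lem:bintv} with the identity $\tfrac{r^{t-1}-1}{r^t-1}-\tfrac1r=-\tfrac{r-1}{r(r^t-1)}$. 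Since $\tfrac{4n}{k\ell}=\tfrac{4n(r-1)}{k(r^t-1)}\le 4/w$ and the second summand simplifies to $\tfrac{1}{r^t-1}\sqrt{\tfrac{n(r-1)}{2}}\le\tfrac{k}{w\sqrt{2n(r-1)}}$, adding the $k^2/(wn)$ error from the rotation step yields the claimed bound. For the ``furthermore'', when the input is $\mathcal{N}(0,1)^{\otimes m\times kr^t}$ the rotation by $H_{r,t}$ keeps the entries i.i.d.\ $\mathcal{N}(0,1)$ by orthonormality of its rows, and subsampling $n$ columns then embedding into $d$ rows with i.i.d.\ Gaussian padding keeps all $dn$ entries i.i.d.\ $\mathcal{N}(0,1)$, so the output is exactly $\mathcal{N}(0,I_d)^{\otimes n}$.

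The two total-variation estimates and the bookkeeping of the negative entries of $H_{r,t}$ are routine. The step that needs genuine care is the claim that, conditioned on the $\mu'$-count $L'$, the output matches the target's conditional law: within a block the rotated columns carry a \emph{deterministic} pattern of $\mu$- and $\mu'$-columns rather than i.i.d.\ labels, so it is only after subsampling $n\ll k\ell$ of them — the regime forced by the slow-growing function $w$ — that the collection becomes indistinguishable from i.i.d.\ mixture samples, and one must also verify the uniform-arrangement claim given $L'$ and that the random row-embedding genuinely uniformizes $S$. The degenerate column of $H_{r,t}$ indexed by the origin, which has no analogue once things are clean, is exactly what the $k^2/(wn)$ term pays for.
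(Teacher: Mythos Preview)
Your proof is correct and follows essentially the same route as the paper's. The only cosmetic differences are that you track the $\mu'$-column count $L'$ and compare it to $\mathrm{Bin}(n,1/r)$, whereas the paper tracks the $\mu$-column count and compares it to $\mathrm{Bin}(n,1-\epsilon)$ (a symmetric choice yielding the identical arithmetic), and you apply the origin-avoiding conditioning before passing to the hypergeometric count while the paper does it afterwards as a total-variation step between $\mathrm{Hyp}(k\ell,s_{\mathcal U_{kr^t}(F),F},n)$ and $\mathrm{Hyp}(k\ell,kr^{t-1},n)$; both orderings give the same $k^2/(wn)$ contribution.
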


\begin{proof}
Let $T$ be a fixed $k$-subset of $[kr^t]$ such that $|T \cap F_i| = 1$ for each $i \in [k]$. Let $\ell = \frac{r^t - 1}{r - 1}$ and $F'$ be a fixed partition of $[k\ell]$ into $k$ parts of size $\ell$. We first consider the case where the input $M_{\text{G}}$ to $\mathcal{A}_{\textnormal{5-6}}$ is of the form
$$M_{\text{G}} = \tau \cdot \mathbf{1}_S \mathbf{1}_{T}^\top + G \quad \text{where } G \sim \mN(0, 1)^{\otimes m \times kr^t}$$
Since $M_{\text{G}}$ has independent entries, the submatrices $(M_{\text{G}})_{F_i}$ for each $i \in [k]$ are independent. Now observe that if $(H_{r, t})_{j}$ denotes the $j$th column of $H_{r, t}$, then we have that
$$(M_{\text{R}})_{F'_i} = (M_{\text{G}})_{F_i} H_{r, t}^\top = \tau \cdot \mathbf{1}_{S} \mathbf{1}_{T \cap F_i}^\top H_{r, t}^\top + G_{F_i} H_{r, t}^\top \sim \mL\left( \tau \cdot \mathbf{1}_{S} (H_{r, t})_{T \cap F_i}^\top + \mN(0, 1)^{\otimes m \times \ell} \right)$$
The distribution statement above follows from the joint Gaussianity and isotropy of the rows of $G_{F_i}$. More precisely, the entries of $G_{F_i} H_{r, t}^\top$ are linear combinations of the entries of $G_{F_i}$, which implies that they are jointly Gaussian. The fact that the rows of $H_{r, t}$ are orthonormal implies that the entries of $G_{F_i} H_{r, t}^\top$ are uncorrelated and each have unit variance. Therefore it follows that $G_{F_i} H_{r, t}^\top \sim \mN(0, 1)^{\otimes m \times \ell}$. If $h_{T, F, F'} \in \mathbb{R}^{k\ell}$ denotes the vector with $(h_{T, F, F'})_{F'_i} = (H_{r, t})_{T \cap F_i}$ for each $i \in [k]$, then it follows that
$$M_{\text{R}} \sim \mL\left( \tau \cdot \mathbf{1}_S h_{T, F, F'}^\top + \mN(0, 1)^{\otimes m \times k\ell} \right)$$
Observe that the columns of $M_{\text{R}}$ are independent and either distributed according $\mN(\mu \cdot \mathbf{1}_S, I_m)$ or $\mN(\mu' \cdot \mathbf{1}_S, I_m)$ where $\mu' =  \tau(1 - r)/\sqrt{r^t(r - 1)}$ depending on whether the entry of $h_{T, F, F'}$ at the index corresponding to the column is $1/\sqrt{r^t(r - 1)}$ or $(1 - r)/\sqrt{r^t(r - 1)}$.

Now let $s_{T, F}$ denote the number of entries of $h_{T, F, F'}$ that are equal to $1/\sqrt{r^t(r - 1)}$. Define $\mR_{n}(s)$ to be the distribution on $\mathbb{R}^n$ with a sample $v \sim \mR_{n}(s)$ generated by first choosing an $s$-subset $U$ of $[n]$ uniformly at random and then setting $v_i = 1/\sqrt{r^t(r - 1)}$ if $i \in U$ and $v_i = (1 - r)/\sqrt{r^t(r - 1)}$ if $i \not \in U$. Note that the number of columns distributed as $\mN(\mu \cdot \mathbf{1}_S, I_m)$ in $M_{\text{R}}$ chosen to be in $X$ is distributed according to $\text{Hyp}(k\ell, s_{T, F}, n)$. Step 6 of $\mathcal{A}$ therefore ensures that
$$M_{\text{R}} \sim \mL\left( \tau \cdot \mathbf{1}_{\mU_{d, k}} \mR_n(\text{Hyp}(k\ell, s_{T, F}, n))^\top + \mN(0, 1)^{\otimes d \times n} \right)$$
Observe that the data matrix for a sample from $\pr{isgm}_{H_1}(n, k, d, \mu, \epsilon)$ can be expressed similarly as
$$\pr{isgm}_{H_1}(n, k, d, \mu, \epsilon) = \mL\left( \tau \cdot \mathbf{1}_{\mU_{d, k}} \mR_n(\text{Bin}(n, 1 - \epsilon))^\top + \mN(0, 1)^{\otimes d \times n} \right)$$
where again we set $\mu = \tau/\sqrt{r^t(r - 1)}$. Now consider mixing over inputs $M_{\text{G}}$ where $T \sim \mU_{kr^t}(F)$. It follows that the output $M_{\text{R}}$ under this input is distributed as
$$\mathcal{A}_{\textnormal{5-6}} \left( \tau \cdot \mathbf{1}_S \mathbf{1}_{\mU_{kr^t}(F)}^\top + \mN(0, 1)^{\otimes m \times kr^t} \right) \sim \mL\left( \tau \cdot \mathbf{1}_{\mU_{d, k}} \mR_n\left(\text{Hyp}\left(k\ell, s_{\mU_{kr^t}(F), F}, n\right)\right)^\top + \mN(0, 1)^{\otimes d \times n} \right)$$
The conditioning property of $\TV$ in Fact \ref{tvfacts} now implies that
\begin{align*}
&\TV\left( \mathcal{A}_{\textnormal{5-6}} \left( \tau \cdot \mathbf{1}_S \mathbf{1}_{\mU_{kr^t}(F)}^\top + \mN(0, 1)^{\otimes m \times kr^t} \right), \pr{isgm}_{H_1}(n, k, d, \mu, \epsilon) \right) \\
&\quad \quad \le \TV\left(\text{Bin}(n, 1 - \epsilon), \text{Hyp}\left(k\ell, s_{\mU_{kr^t}(F), F}, n\right) \right)
\end{align*}
By Lemma \ref{lem:orthogonalmatrices}, $(H_{r, t})_{T \cap F_i}$ contains $\frac{r^t - 1}{r - 1} - \frac{r^{t - 1} - 1}{r - 1} = r^{t - 1}$ entries equal to $1/\sqrt{r^t(r - 1)}$ as long $T \cap F_i$ is not the column index corresponding to the zero point in $\mathbb{F}_r^t$. If it does correspond to zero, then $(H_{r, t})_{T \cap F_i}$ contains no entries equal to $1/\sqrt{r^t(r - 1)}$. Now note that if $T \sim \mU_{kr^t}(F)$, then $T \cap F_i$ corresponds to zero with probability $1/r^t$ for each $i \in [k]$. Furthermore, these events are independent. This implies that $r^{1 - t} \cdot s_{\mU_{kr^t}(F), F}$ is the number of $F_i$ such that $T \cap F_i$ does not correspond to zero, and therefore distributed as $r^{1 - t} \cdot s_{\mU_{kr^t}(F), F} \sim \text{Bin}(k, 1 - 1/r^t)$. Thus
$$\bP\left[ s_{\mU_{kr^t}(F), F} = kr^{t - 1} \right] = \left( 1 - \frac{1}{r^t} \right)^k \ge 1 - \frac{k}{r^t}$$
Applying the conditioning on an event property of $\TV$ from Fact \ref{tvfacts} now yields that
$$\TV\left( \text{Hyp}\left(k\ell, s_{\mU_{kr^t}(F), F}, n\right), \text{Hyp}\left(k\ell, kr^{t - 1}, n\right) \right) \le \bP\left[ s_{\mU_{kr^t}(F), F} \neq kr^{t - 1} \right] \le \frac{k}{r^t} \le \frac{k^2}{wn}$$
since $wn \le k \ell$ by definition and $\ell \le r^t$. By the application of Theorem (4) in \cite{diaconis1980finite} to hypergeometric distributions above, we also have
$$\TV\left( \text{Hyp}(k\ell, kr^{t - 1}, n), \text{Bin}(n, r^{t - 1}/\ell) \right) \le \frac{4n}{k\ell} \le \frac{4}{w}$$
Recall that $\epsilon = 1/r$ and note that Lemma \ref{lem:bintv} implies that
\begin{align*}
\TV\left( \text{Bin}(n, r^{t - 1}/\ell), \text{Bin}(n, 1 - \epsilon) \right) &\le \left| \frac{r^{t - 1}}{\frac{r^t - 1}{r - 1}} - \left( 1 - \frac{1}{r} \right) \right| \cdot \sqrt{\frac{nr^2}{2(r - 1)}} \\
&= \frac{1}{r^t - 1} \cdot \sqrt{\frac{n(r - 1)}{2}} \\
&= \frac{1}{\ell} \cdot \sqrt{\frac{n}{2(r - 1)}} \le \frac{k}{w\sqrt{2n(r - 1)}}
\end{align*}
where the last inequality is again since $wn \le k\ell$. Applying the triangle inequality now yields that
$$\TV\left(\text{Bin}(n, 1 - \epsilon), \text{Hyp}\left(k\ell, s_{\mU_{kr^t}(F), F}, n\right) \right) \le \frac{4}{w} + \frac{k^2}{wn} + \frac{k}{w\sqrt{2n(r - 1)}}$$
Now consider applying the above argument with $\tau = 0$. It follows that
$$\mathcal{A}_{\textnormal{5-6}} \left( \mN(0, 1)^{\otimes m \times kr^t} \right) \sim \mN(0, 1)^{\otimes d \times n} = \mN(0, I_d)^{\otimes n}$$
which completes the proof of the lemma.
\end{proof}

We now combine these lemmas to complete the proof of Theorem \ref{thm:isgmreduction}.

\begin{proof}[Proof of Theorem \ref{thm:isgmreduction}]
We apply Lemma \ref{lem:tvacc} to the steps $\mathcal{A}_i$ of $\mathcal{A}$ under each of $H_0$ and $H_1$ to prove Theorem \ref{thm:isgmreduction}. Define the steps of $\mathcal{A}$ to map inputs to outputs as follows
$$(G, E) \xrightarrow{\mathcal{A}_1} (M_{\text{PD1}}, F') \xrightarrow{\mathcal{A}_2} (M_{\text{PD2}}, F) \xrightarrow{\mathcal{A}_3} (M_{\text{G}}, F) \xrightarrow{\mathcal{A}_{\text{5-6}}} (X_1, X_2, \dots, X_n)$$
We first prove the desired result in the case that $H_1$ holds. Consider Lemma \ref{lem:tvacc} applied to the steps $\mathcal{A}_i$ above and the following sequence of distributions
\allowdisplaybreaks
\begin{align*}
\mathcal{P}_0 &= \mG_E(N, k, p, q) \\
\mathcal{P}_1 &= \mathcal{M}(m, \mU_m(F'), p, Q) \\
\mathcal{P}_2 &=\mathcal{M}(m, kr^t, \mU_{m, k}, \mU_{kr^t}(F), p, Q) \\
\mathcal{P}_3 &=\mathcal{M}\left(m, kr^t, \mU_{m, k}, \mU_{kr^t}(F), \mN\left(\sqrt{r^t(r - 1)} \cdot \mu, 1\right), \mN(0, 1) \right) \\
&= \sqrt{r^t(r - 1)} \cdot \mu \cdot \mathbf{1}_{\mU_{m, k}} \mathbf{1}_{\mU_{kr^t}(F)}^\top + \mN(0, 1)^{\otimes m \times kr^t} \\
\mathcal{P}_{\text{5-6}} &= \pr{isgm}_{H_1}(n, k, d, \mu, \epsilon)
\end{align*}
As in the statement of Lemma \ref{lem:tvacc}, let $\epsilon_i$ be any real numbers satisfying $\TV\left( \mathcal{A}_i(\mP_{i-1}), \mP_i \right) \le \epsilon_i$ for each step $i$. A direct application of Lemma \ref{lem:submatrix} implies that we can take
$$\epsilon_1 = 4k \cdot \exp\left( - \frac{Q^2N^2}{48pkm} \right) + \sqrt{\frac{C_Q k^2}{2m}}$$
where $C_Q = \max\left\{ \frac{Q}{1 - Q}, \frac{1 - Q}{Q} \right\}$. Note that, by construction, the step $\mathcal{A}_2$ is exact and we can take $\epsilon_2 = 0$. Consider applying Lemma \ref{lem:gaussianize} and averaging over $S \sim \mU_{m, k}$ and $T \sim \mU_{kr^t}(F)$ using the conditioning property from Fact \ref{tvfacts}. This yields that we can take $\epsilon_3 = O((mkr^t)^{-1/2}) = O(N^{-1})$. Applying Lemma \ref{lem:samplerotation} while similarly averaging over $S \sim \mU_{m, k}$ yields that we can take
$$\epsilon_{\text{5-6}} = \frac{4}{w} + \frac{k^2}{wn} + \frac{k}{w\sqrt{2n(r - 1)}}$$
By Lemma \ref{lem:tvacc}, we therefore have that
$$\TV\left( \mathcal{A}\left( \mG_E(N, k, p, q) \right), \pr{isgm}_{H_1}(n, k, d, \mu, \epsilon) \right) = O\left( w^{-1} + \frac{k^2}{wN} + \frac{k}{\sqrt{N}} + e^{-\Omega(N^2/km)} + N^{-1} \right)$$
which proves the desired result in the case of $H_1$. Now consider the case that $H_0$ holds and Lemma \ref{lem:tvacc} applied to the steps $\mathcal{A}_i$ and the following sequence of distributions
\allowdisplaybreaks
\begin{align*}
\mathcal{P}_0 &= \mG(N, q) \\
\mathcal{P}_1 &= \text{Bern}(Q)^{\otimes m \times m} \\
\mathcal{P}_2 &= \text{Bern}(Q)^{\otimes m \times kr^t} \\
\mathcal{P}_3 &= \mN(0, 1)^{\otimes m \times kr^t} \\
\mathcal{P}_{\text{5-6}} &= \mN(0, I_d)^{\otimes n}
\end{align*}
As above, Lemmas \ref{lem:submatrix}, \ref{lem:gaussianize} and \ref{lem:samplerotation} imply that we can take
$$\epsilon_1 = 4k \cdot \exp\left( - \frac{Q^2N^2}{48pkm} \right), \quad \epsilon_2 = 0, \quad \epsilon_3 = O(N^{-1}) \quad \text{and} \quad \epsilon_{\text{5-6}} = 0$$
By Lemma \ref{lem:tvacc}, we therefore have that
$$\TV\left( \mathcal{A}\left( \mG(N, q) \right), \mN(0, I_d)^{\otimes n} \right) = O\left( e^{-\Omega(N^2/kn)} + N^{-1} \right)$$
which completes the proof of the theorem.
\end{proof}

\section{Lower Bounds for Robust Sparse Mean Estimation}
\label{sec:robust}

In this section, we apply the reduction $k$-\pr{pds-to-isgm} (Fig.~\ref{fig:isgmreduction}) to deduce our main statistical-computational gaps for robust sparse mean estimation. We begin by showing that a direct application of $k$-\pr{pds-to-isgm} yields a lower bound of $n = \Omega(\epsilon^3 k^2)$ for polynomial-time robust sparse estimation within $\ell_2$ distance $\tau = o(\sqrt{\epsilon/(\log n)^{1 + c}})$ for any fixed $c > 0$. When $\epsilon = 1/\text{polylog}(n)$, this yields the optimal sample lower bound of $n = \tilde{\Omega}(k^2)$ for estimation within the $\ell_2$ minimax rate of $\tau = \Tilde{\Theta}(\epsilon)$. When $\epsilon = (\log n)^{-\omega(1)}$, our reduction shows that robust sparse mean estimation continues to have a large statistical-computational gap even when the task only requires estimation within $\ell_2$ distance $\tau \approx \sqrt{\epsilon}$, which is far above the minimax rate.

We remark that $(n, k, d, \epsilon)$ in the next theorem is an arbitrary sequence of parameters satisfying the given conditions. Given these parameters, we construct $(N, k')$ such that $k' = o(\sqrt{N})$ and $k$-\pr{pds-to-isgm} reduces from $k\pr{-pc}$  with $N$ vertices and subgraph size $k'$ to $\pr{rsme}(n, k, d, \tau, \epsilon)$. An algorithm solving $\pr{rsme}(n, k, d, \tau, \epsilon)$ would then that yield that $k\pr{-pc}$ can be solved on this constructed sequence of parameters $(N, k')$, contradicting our form of the $k\pr{-pc}$ conjecture. When $N$ is appropriately close to a number of the form $k' r^t$, then the resulting $\tau$ satisfies
$$\tau \asymp \sqrt{\frac{\epsilon}{w \log n}}$$
where $w$ denotes an arbitrarily slow-growing function of $n$ tending to infinity. Note that this $\tau$ is much larger than $\epsilon$ for $\epsilon = o(1/\log n)$. When $N$ is far from any $k' r^t$, then $\tau$ can degrade to $\tau \asymp \epsilon/\sqrt{w \log n}$, which is never larger than $\epsilon$ and thus yields vacuous lower bounds for $\pr{rsme}$.\footnote{This is because our reduction sets $r = \Theta(\epsilon^{-1})$ and maps to the signal level $\tau \asymp \sqrt{k'/(r^{t+1} \log n)}$ where $r^t$ is the smallest power of $r$ greater than $(p/Q + 1)N/k'$ and $k' = o(\sqrt{N})$ for the starting \pr{pc} instance to be hard. If $(p/Q + 1)N/k'$ is far from the next smallest power of $r$, it is possible that $r^t \asymp Nr/k'$ which implies that $\tau = o(1/(r\sqrt{\log n})) = O(\epsilon/\sqrt{\log n})$. However, for our choice of parameters, it will hold that $r^t \asymp N/k'$ and $\tau$ will instead be close to $1/\sqrt{r\log n} \asymp \sqrt{\epsilon/\log n}$.} This is the number theoretic subtlety alluded to in Section \ref{sec:redisgm}. This is a nonissue for the right choices of $(N, k')$, as shown in the derivation below. For clarity, we go through this first lower bound in detail and include fewer details in subsequent similar theorems.

\begin{theorem}[General Lower Bound for $\pr{rsme}$] \label{thm:rsmefull}
Let $(n, k, d, \epsilon)$ be any parameters satisfying that $k^2 = o(d)$, $\epsilon \in (0, 1)$ and $n$ satisfies that $n = o(\epsilon^3 k^2)$ and $n = \Omega(k)$. If $c > 0$ is some fixed constant, then there is a parameter $\tau = \Omega(\sqrt{\epsilon/(\log n)^{1 + c}})$ such that any randomized polynomial time test for $\pr{rsme}(n, k, d, \tau, \epsilon)$ has asymptotic Type I$+$II error at least 1 assuming either the $k$\pr{-pc} conjecture or the $k$\pr{-pds} conjecture for some fixed edge densities $0 < q < p \le 1$.
\end{theorem}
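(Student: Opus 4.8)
The key observation is that \pr{isgm} is, up to a cosmetic reparametrization, already \pr{rsme} in Huber's contamination model, so the entire content of the theorem is in invoking Theorem~\ref{thm:isgmreduction} with a suitable choice of its free parameters.

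\textbf{Step 1 (\pr{isgm} is \pr{rsme}).} For $\epsilon'=1/r\le\epsilon$ and $\mu'=-\tfrac{1-\epsilon'}{\epsilon'}\mu$ as in the definition of \pr{isgm}, for every $k$-subset $S$ one has the identity $\pr{mix}_{\epsilon'}(\mathcal N(\mu\mathbf 1_S,I_d),\mathcal N(\mu'\mathbf 1_S,I_d))=\pr{mix}_{\epsilon}(\mathcal N(\mu\mathbf 1_S,I_d),\mathcal D_O)$ with outlier law $\mathcal D_O=\tfrac{\epsilon-\epsilon'}{\epsilon}\mathcal N(\mu\mathbf 1_S,I_d)+\tfrac{\epsilon'}{\epsilon}\mathcal N(\mu'\mathbf 1_S,I_d)$. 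Hence, with planted mean $\mu_R=\mu\mathbf 1_S$ (so $\|\mu_R\|_2=\mu\sqrt k$), the law $\pr{isgm}_{H_1}(n,k,d,\mu,\epsilon')$ is a legitimate $H_1$-law of $\pr{rsme}(n,k,d,\tau,\epsilon)$ as soon as $\tau\le\mu\sqrt k$, and the two problems share the $H_0$-law $\mathcal N(0,I_d)^{\otimes n}$. Therefore any randomized polynomial-time test solving $\pr{rsme}(n,k,d,\tau,\epsilon)$ with asymptotic Type I$+$II error below $1$ also distinguishes the two hypotheses of $\pr{isgm}(n,k,d,\mu,\epsilon')$, and it suffices to establish $k\pr{-pc}$-hardness of $\pr{isgm}(n,k,d,\mu,\epsilon')$ for a choice of $(\mu,\epsilon',r,t,N)$ with $\epsilon'\le\epsilon$ and $\mu\sqrt k\ge\tau=\Omega(\sqrt{\epsilon/(\log n)^{1+c}})$.

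\textbf{Step 2 (instantiating Theorem~\ref{thm:isgmreduction}).} Given $(n,k,d,\epsilon)$, pick a prime $r\ge1/\epsilon$ and an integer $t\ge2$, then set $N$ to be the largest multiple of $k$ for which the reduction's parameter $m$ (the least multiple of $k$ exceeding a fixed constant times $N$) satisfies $m/k\le r^t$; this makes $N=\Theta(kr^t)$ and ensures $r^t$ is essentially the \emph{smallest} power of $r$ above $m/k$, i.e.\ there is no ``overshoot'' of $r^t$ relative to $N$. The requirements are then: $N=\omega(k^2)$, so that $k=o(\sqrt N)$ (hardness of the starting $k\pr{-pc}(N,k,p_0)$ for a fixed constant $p_0$, with Theorem~\ref{thm:isgmreduction} run at initial densities $1$ and $p_0$) and all error terms of Theorem~\ref{thm:isgmreduction} are $o(1)$; $m\le d$, i.e.\ $N=O(d)$; $r^t=O(k(\log n)^c)$, so that $\tau$ below clears the target; and $wn\le k(r^t-1)/(r-1)$ for a slow enough $w=w(n)\to\infty$, which holds since $n=o(\epsilon^3k^2)$. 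Since $d=\omega(k^2)$, the target interval $\Theta(kr^t)=N\in[\omega(k^2),\,O(\min\{d,\,k^2(\log n)^c\})]$ is nonempty; choosing the exponent $t$ appropriately and then a prime $r$ near $(\text{target})^{1/t}$ (here one uses that primes are dense enough on a logarithmic scale that $r^t$ can be placed inside this interval) realizes such an $N$. With $\mu$ set to a constant times the largest value Theorem~\ref{thm:isgmreduction} allows, and using $\log(kmr^t)=\Theta(\log n)$ together with $\delta,(p_0-Q)^{-1}=\Theta(1)$, one gets $\mu\asymp(r^t(r-1)\log n)^{-1/2}$ and
\[
\tau\;:=\;\mu\sqrt k\;\asymp\;\sqrt{\frac{k}{r^t(r-1)\log n}}\;\asymp\;\sqrt{\frac{k^2\epsilon}{N\log n}}\;=\;\Omega\!\left(\sqrt{\frac{\epsilon}{(\log n)^{1+c}}}\right),
\]
where the last step uses $N\lesssim k^2(\log n)^c$ and $r-1\asymp r\asymp1/\epsilon$. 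Theorem~\ref{thm:isgmreduction} now produces a polynomial-time map $\mathcal A$ with $\TV(\mathcal A(\cdot),\pr{isgm}(n,k,d,\mu,\epsilon'))=o(1)$ under both $H_0$ and $H_1$, using $N=\mathrm{poly}(n)$, $N=\omega(k^2)$ and $N^2/(kn)\to\infty$ (which follows from $n=\Omega(k)$ and $n=o(k^2)$, both implied by the hypotheses).

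\textbf{Step 3 (conclusion).} Composing $\mathcal A$ with a hypothetical polynomial-time test for $\pr{rsme}(n,k,d,\tau,\epsilon)$ and applying Lemma~\ref{lem:3a} (equivalently Lemma~\ref{lem:tvacc}) yields a polynomial-time algorithm solving $k\pr{-pc}(N,k,p_0)$ with nontrivial advantage along the constructed increasing sequence $(N,k)$, contradicting the $k\pr{-pc}$ conjecture. Running Theorem~\ref{thm:isgmreduction} from $k\pr{-pds}$ at a fixed pair of constant densities instead of $k\pr{-pc}$ gives the second alternative.

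\textbf{Main obstacle.} Steps 1 and 3 are routine; the work is entirely in Step 2. One must simultaneously keep $N=\omega(k^2)$ (for hardness and for the vanishing error), $N=O(d)$ (so the target dimension suffices), and $N=O(k^2(\log n)^c)$ (so $\tau$ exceeds $\sqrt{\epsilon/(\log n)^{1+c}}$ rather than degrading by a factor of $\epsilon$, which would make the bound vacuous), all while $N\asymp kr^t$ with $r$ a prime $\ge1/\epsilon$ and $t$ an integer, and must do so with enough care that the induced constraints on the starting \pr{pc} instance can be met. This number-theoretic bookkeeping --- choosing $r$, $t$, the slow function $w$, and the lower-order slack hidden in the various $\omega(1)$'s so that all of Theorem~\ref{thm:isgmreduction}'s hypotheses hold --- is the delicate point; once a valid tuple is in hand, the total-variation accounting is immediate from Theorem~\ref{thm:isgmreduction}.
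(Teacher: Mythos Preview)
Your overall strategy matches the paper's: recognize that $\pr{isgm}$ under $H_1$ is already an $H_1$-instance of $\pr{rsme}$ in Huber's model, invoke Theorem~\ref{thm:isgmreduction}, and conclude via Lemma~\ref{lem:3a}. Steps~1 and~3 are fine.

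There is, however, a genuine gap in your Step~2 parameter selection. You keep the sparsity fixed at $k$ throughout and assert that one can choose a prime $r\ge 1/\epsilon$ and an integer $t$ with $r^t$ landing in the window $[\omega(k),\,O(k(\log n)^c)]$, while \emph{also} using $r\asymp 1/\epsilon$ in the final $\tau$ computation. These two requirements can be incompatible when $\epsilon$ is polynomially small in $n$: consecutive powers of any $r\asymp 1/\epsilon$ are separated by a factor of $r$, which is polynomial in $n$, so $k$ may fall in a gap between $r^{t}$ and $r^{t+1}$ that is far wider than any $(\log n)^c$. For a concrete failure, take $k=n^{3/4}$, $\epsilon=n^{-1/10}$ (so $n=o(\epsilon^3 k^2)$ holds). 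With $r\asymp n^{1/10}$ one has $r^7\asymp n^{7/10}$ and $r^8\asymp n^{8/10}$, and $n^{3/4}$ sits strictly between them with polynomial slack on both sides; allowing $r$ to drift up to a polylog factor does not close this gap, and taking $r$ large enough that $r^7\asymp n^{3/4}$ forces $r\asymp n^{3/28}$, which is polynomially larger than $1/\epsilon$ and ruins the bound $\tau\asymp\sqrt{\epsilon/(\log n)^{1+c}}$.

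The paper's fix is to introduce a \emph{free} sparsity parameter $k'\le k$ rather than insisting on $k$. Concretely: fix a prime $r$ with $\epsilon^{-1}<r=O(\epsilon^{-1})$, take $t$ to be the largest integer with $r^t<wk(1+p/Q)$, and then \emph{define} $k'=\lfloor r^t\,w^{-1}(1+p/Q)^{-1}\rfloor$ and $N=wk'^2$. One now reduces from $k\pr{-pds}(N,k',p,q)$ to $\pr{isgm}(n,k',d,\mu,1/r)$. By construction $r^t=\Theta(wk')$, so the overshoot issue disappears and one gets $\tau=\mu\sqrt{k'}\asymp\sqrt{\epsilon/(w\log n)}$ directly. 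Since $k'\le k$, a $k'$-sparse mean is $k$-sparse, so $\pr{isgm}(n,k',d,\mu,1/r)$ is still an instance of $\pr{rsme}(n,k,d,\tau,\epsilon)$; and since $N=wk'^2$, the starting instance has $k'=o(\sqrt{N})$ automatically. This extra degree of freedom is exactly what absorbs the number-theoretic mismatch you flagged as the ``main obstacle.''
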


\begin{proof}
This theorem will follow from a careful selection of parameters with which to apply $k$-\pr{pds-to-isgm} from Theorem \ref{thm:isgmreduction}. Assume the $k$\pr{-pds} conjecture for some fixed edge densities $0 < q < p \le 1$. Let $w = w(n)$ be an arbitrarily slow-growing function of $n$ and let $Q = 1 - \sqrt{(1 - p)(1 - q)} + \mathbf{1}_{\{ p = 1\}} \left( \sqrt{q} - 1 \right)$. Note that $Q \in (0, 1)$ and is constant. Now define parameters as follows:
\begin{enumerate}
\item Let $r$ be a prime number with $\epsilon^{-1} < r = O(\epsilon^{-1})$, which can be found in $\text{poly}(\epsilon^{-1})$ time.
\item Let $t$ be such that $r^t$ is the largest power of $r$ less than $wk(1 + \frac{p}{Q})$.
\item Set $k$\pr{-pds} parameters $k' = \lfloor r^t w^{-1} (1 + \frac{p}{Q})^{-1} \rfloor$ and $N = wk'^2$.
\item Set the mean parameter $\mu$ to be
$$\mu = \frac{\delta}{2 \sqrt{6\log (k'r^t) + 2\log (p - Q)^{-1}}} \cdot \frac{1}{\sqrt{r^t(r - 1)}}$$
where $\delta = \min \left\{ \log \left( \frac{p}{Q} \right), \log \left( \frac{1 - Q}{1 - p} \right) \right\}$.
\end{enumerate}
By construction, we have that $\left( \frac{p}{Q} + 1 \right) N \le k'r^t$. For a slow-growing enough choice of $w$ and large enough $n$, we have
\begin{align*}
&\left( \frac{p}{Q} + 1 \right) N + k' \le 2w k^2 \left( 1 + \frac{p}{Q} \right) \le d \qquad \text{and} \\
&wn \le \frac{1}{2} \cdot \epsilon^3 k^2 \le \frac{r^{2(t + 1)} \epsilon^3}{2w^2\left(1 + \frac{p}{Q} \right)^2} \le \frac{r^2(r - 1)\epsilon^3}{w\left(1 + \frac{p}{Q} \right)} \cdot \frac{k'(r^t - 1)}{r - 1} \le \frac{k'(r^t - 1)}{r - 1}
\end{align*}
by the definitions above. Now consider applying $k$-\pr{pds-to-isgm} to map from $k\pr{-pds}(N, k', p, q)$ to $\pr{isgm}(n, k', d, \mu, 1/r)$. The inequalities above guarantee that we have met the conditions needed to apply Theorem \ref{thm:isgmreduction}. Note that the total variation upper bound in Theorem \ref{thm:isgmreduction} tends to zero since $k'^2/N = w^{-1} = o(1)$ and $N^2/k'n = \Omega(k') = \omega(1)$. 

Now observe that $\pr{isgm}(n, k', d, \mu, 1/r)$ is an instance of $\pr{rsme}(n, k, d, \tau, \epsilon)$ since $k' \le k$ and $1/r < \epsilon$. More precisely, it is an instance with mean vector $\mu \cdot \mathbf{1}_S$, where $S$ is a $k'$-subset of $[d]$ chosen uniformly at random, and outlier distribution $\mD_O = \pr{mix}_{\epsilon^{-1} r^{-1}} \left( \mN(\mu \cdot \mathbf{1}_S, I_d), \mN(\mu' \cdot \mathbf{1}_S, I_d) \right)$ where $\mu'$ is such that $(1 - r^{-1}) \mu + r^{-1} \cdot \mu' = 0$. Now note that
$$\tau = \| \mu \cdot \mathbf{1}_S \|_2 = \mu \sqrt{k'} \asymp \frac{1}{\sqrt{\log(n)}} \cdot \frac{1}{\sqrt{r^t(r - 1)}} \cdot \sqrt{k'} \asymp \sqrt{\frac{\epsilon}{w \log n}}$$
since $\log(k'r^t) = \Theta(\log n)$ and $k' = \lceil r^t w^{-1} (1 + \frac{p}{Q})^{-1} \rceil$ implies that $r^t = \Theta(w k')$. This $\tau$ satisfies that $\tau = \Omega(\sqrt{\epsilon/(\log n)^{1 + c}})$ as long as $w = O((\log n)^c)$. Now suppose that some randomized polynomial time test $\mathcal{A}$  for $\pr{rsme}(n, k, d, \tau, \epsilon)$ has asymptotic Type I$+$II error less than 1. By Lemma \ref{lem:3a} and the reduction above, this implies that there is a randomized polynomial time test for $k\pr{-pds}$ on the sequence of inputs $(N, k', p, q)$ with asymptotic Type I$+$II error less than 1. This contradicts the $k$\pr{-pds} conjecture and proves the theorem.
\end{proof}

\begin{figure}[t!]
\begin{algbox}
\textbf{Algorithm} \pr{isgm-Sample-Cloning}

\vspace{1mm}

\textit{Inputs}: $\pr{isgm}$ samples $X_1, X_2, \dots, X_n \in \mathbb{R}^d$, blowup parameter $\ell$
\begin{enumerate}
\item Set $X_i^0 = X_i$ for each $1 \le i \le n$.
\item For $j = 1, 2, \dots, \ell$ do:
\begin{enumerate}
\item[(1)] Sample $G_1, G_2, \dots, G_{2^{j - 1}n} \sim_{\text{i.i.d.}} \mN(0, I_d)$.
\item[(2)] For each $1 \le i \le 2^{j - 1} n$, form $X_i^j$ and $X_{2^{j - 1}n + i}^j$ as
$$X_i^j = \frac{1}{\sqrt{2}} \left( X_i^{j - 1} + G_i \right) \quad \text{and} \quad X_{2^{j - 1}n + i}^j = \frac{1}{\sqrt{2}} \left( X_i^{j - 1} - G_i \right)$$
\end{enumerate}
\item Output a subset of $X_1^{\ell}, X_2^{\ell}, \dots, X_{2^\ell n}^\ell$ of size $n'$ chosen uniformly at random.
\end{enumerate}
\vspace{1mm}

\end{algbox}
\caption{Sample cloning subroutine in the reduction from a planted dense subgraph instance to robust sparse mean estimation.}
\label{fig:isgmcloning}
\end{figure}

For small $\epsilon$ with $\epsilon = (\log n)^{-\omega(1)}$, the mean parameter $\tau$ above is $\sqrt{\epsilon}$ up to subpolynomial factors in $\epsilon$. This value of $\tau$ is much larger than the $\omega(\epsilon)$ it needs to be to show lower bounds for $\pr{rsme}$. Thus the reduction $k$-\pr{pds-to-isgm} actually shows lower bounds at small $\epsilon$ for weak estimators that can only estimate up to $\ell_2$ distance $\tau \approx \sqrt{\epsilon}$. When $\epsilon = (\log n)^{c}$ where $c > 1$ in the theorem above, it yields the optimal $k$-to-$k^2$ gap in $\pr{rsme}$ up to polylogarithmic factors. This is stated in the corollary below.

\begin{corollary}[Optimal Statistical-Computational Gaps in $\pr{rsme}$] \label{thm:rsmeopt}
Let $(n, k, d, \epsilon)$ be any parameters satisfying that $k^2 = o(d)$, $\epsilon = \Theta((\log n)^{-c})$ for some constant $c > 1$ and $n$ satisfies that $n = o(k^2 (\log n)^{-3c})$ and $n = \Omega(k)$. Then there is a parameter $\tau = \omega(\epsilon)$ such that any randomized polynomial time test for $\pr{rsme}(n, k, d, \tau, \epsilon)$ has asymptotic Type I$+$II error at least 1 assuming either the $k$\pr{-pc} conjecture or the $k$\pr{-pds} conjecture for some fixed edge densities $0 < q < p \le 1$.
\end{corollary}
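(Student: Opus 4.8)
The plan is to derive this corollary as a direct specialization of Theorem~\ref{thm:rsmefull}, with the free constant in that theorem chosen as a function of $c$. First I would check that the hypotheses of Theorem~\ref{thm:rsmefull} hold for the parameters $(n,k,d,\epsilon)$ of the corollary. The condition $k^2 = o(d)$ is assumed verbatim, and $\epsilon = \Theta((\log n)^{-c}) \in (0,1)$ for all large $n$ since $c > 1$. Substituting $\epsilon^3 = \Theta((\log n)^{-3c})$, the requirement $n = o(\epsilon^3 k^2)$ of Theorem~\ref{thm:rsmefull} becomes exactly the assumed bound $n = o(k^2 (\log n)^{-3c})$; and $n = \Omega(k)$ is assumed verbatim. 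So Theorem~\ref{thm:rsmefull} applies, for an arbitrary fixed constant $c' > 0$ in its statement, and yields a parameter $\tau = \Omega(\sqrt{\epsilon/(\log n)^{1+c'}})$ for which $\pr{rsme}(n,k,d,\tau,\epsilon)$ has asymptotic Type I$+$II error at least $1$ under the $k\pr{-pc}$ (or $k\pr{-pds}$) conjecture.

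Next I would verify that this $\tau$ is $\omega(\epsilon)$ for a suitable choice of $c'$. Plugging $\epsilon = \Theta((\log n)^{-c})$ into the bound on $\tau$ gives $\tau = \Omega((\log n)^{-(c+1+c')/2})$, hence
$$\frac{\tau}{\epsilon} = \Omega\!\left( (\log n)^{\,c - (c+1+c')/2} \right) = \Omega\!\left( (\log n)^{(c - 1 - c')/2} \right).$$
Since $c > 1$ is a fixed constant, choosing the free constant of Theorem~\ref{thm:rsmefull} to be any $c'$ with $0 < c' < c - 1$ makes the exponent $(c - 1 - c')/2$ strictly positive, so $\tau/\epsilon \to \infty$, i.e.\ $\tau = \omega(\epsilon)$. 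Since an estimator attaining the $\ell_2$ minimax rate $O(\epsilon)$ for $\pr{rsme}$ solves this detection problem when $\tau = \omega(\epsilon)$, this establishes the claimed $k$-to-$k^2$ computational gap up to the $(\log n)^{3c}$ factor.

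There is no genuine obstacle beyond this bookkeeping; the only point requiring attention is that the gap between the signal level $\tau$ and the minimax scale $\epsilon$ afforded by Theorem~\ref{thm:rsmefull} is merely polylogarithmic, so the reduction truly needs a constant $c > 1$ rather than just $\epsilon = (\log n)^{-\omega(1)}$. This is precisely why the sample bound $n = o(k^2(\log n)^{-3c})$ in the corollary tightens to $n = \tilde{o}(k^2)$ only up to $\mathrm{polylog}$ factors, and once $c'$ is fixed as above the corollary follows immediately.
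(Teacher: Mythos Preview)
Your proposal is correct and is exactly the argument the paper intends: the corollary is stated immediately after Theorem~\ref{thm:rsmefull} with the remark that setting $\epsilon = \Theta((\log n)^{-c})$ for $c>1$ in that theorem yields the optimal gap, and no separate proof is given. Your verification that the hypotheses of Theorem~\ref{thm:rsmefull} are met and that choosing the free constant $c' \in (0,c-1)$ forces $\tau = \omega(\epsilon)$ is precisely the bookkeeping the paper leaves implicit.
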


We remark that in intermediate parameter regimes where $\epsilon = (\log n)^{-\omega(1)}$ is not yet polynomially small in $n$, such as $\epsilon = e^{-\Theta(\sqrt{\log n})}$, our result essentially shows a $k$-to-$k^2$ statistical-computational gap for $\pr{rsme}$ at the weak $\ell_2$ estimation rate of $\tau = \tilde{\Theta}(\sqrt{\epsilon})$. It is in these parameter regimes where our lower bounds for $\pr{rsme}$ are strongest.

In the case where $\epsilon$ is polynomially small in $n$, the sample lower bound of $n = \Omega(\epsilon^3 k^2)$ in Theorem \ref{thm:rsmefull} degrades with $\epsilon$. We now show that the high $\tau \approx \sqrt{\epsilon}$ produced by our reduction can be traded off for sharper bounds in $n$ using the simple post-processing subroutine \pr{isgm-Sample-Cloning} in Figure \ref{fig:isgmcloning}. Its important properties are captured in the following lemma.

\begin{lemma}[Sample Cloning] \label{lem:samplecloning}
Let $\mathcal{A}$ denote $\pr{isgm-Sample-Cloning}$ applied with blowup parameter $\ell$ and let $(Y_1, Y_2, \dots, Y_{2^\ell n})$ be the output of $\mathcal{A}(X_1, X_2, \dots, X_n)$. Then we have that
\begin{itemize}
\item If $X_1, X_2, \dots, X_n$ are independent and exactly $m$ of the $n$ samples $X_1, X_2, \dots, X_n$ are distributed according to $\mN(\mu \cdot \mathbf{1}_S, I_d)$ and the rest from $\mN(\mu' \cdot \mathbf{1}_S, I_d)$, then the $Y_1, Y_2, \dots, Y_{2^\ell n}$ are independent and exactly $2^\ell m$ of $Y_1, Y_2, \dots, Y_{2^\ell n}$ are distributed according to $\mN(2^{-\ell/2}\mu \cdot \mathbf{1}_S, I_d)$ and the rest from $\mN(2^{-\ell/2}\mu' \cdot \mathbf{1}_S, I_d)$.
\item If $X_1, X_2, \dots, X_n \sim_{\text{i.i.d.}} \mN(0, I_d)$, then the $Y_1, Y_2, \dots, Y_{2^\ell n} \sim_{\text{i.i.d.}} \mN(0, I_d)$.
\end{itemize}
\end{lemma}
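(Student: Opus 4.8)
The plan is to induct on the number of rounds $j \in \{0, 1, \dots, \ell\}$, tracking, for each original index $i \in [n]$, the collection of its $2^j$ ``descendant'' samples present after round $j$. The one computation that drives everything is an elementary Gaussian identity: if $Z \sim \mN(a, I_d)$ and $G \sim \mN(0, I_d)$ are independent, then $\tfrac{1}{\sqrt 2}(Z + G)$ and $\tfrac{1}{\sqrt 2}(Z - G)$ are \emph{independent}, each distributed as $\mN\!\bigl(\tfrac{a}{\sqrt 2}, I_d\bigr)$. This is immediate: the pair is jointly Gaussian as a linear image of $(Z, G)$; each block has mean $a/\sqrt 2$ and covariance $\tfrac{1}{2}(I_d + I_d) = I_d$, while the cross-covariance is $\tfrac{1}{2}(\mathrm{Cov}(Z) - \mathrm{Cov}(G)) = 0$, and uncorrelated jointly Gaussian vectors are independent.

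First I would set up the induction hypothesis at level $j$: the $2^j n$ samples $\{X^j_i\}$ are mutually independent; for each original index $i$, its $2^j$ descendants are i.i.d.\ $\mN\!\bigl(2^{-j/2} a_i, I_d\bigr)$, where $a_i$ denotes the mean of $X_i$ (so $a_i \in \{\mu\cdot\mathbf{1}_S, \mu'\cdot\mathbf{1}_S\}$ under the first hypothesis and $a_i = 0$ under the second); and the families of descendants belonging to distinct indices $i$ are mutually independent. The base case $j = 0$ is exactly the hypothesis on $X_1, \dots, X_n$. For the step from $j$ to $j+1$, round $j+1$ of \pr{isgm-Sample-Cloning} replaces each descendant $Z$ (which is $\mN(2^{-j/2} a_i, I_d)$ by the inductive hypothesis) with the pair $\tfrac{1}{\sqrt 2}(Z \pm G)$ for a fresh $G \sim \mN(0, I_d)$; the Gaussian identity makes the two new samples independent and each $\mN(2^{-(j+1)/2} a_i, I_d)$. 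Mutual independence of all $2^{j+1}$ descendants of index $i$ follows because they come in $2^j$ pairs, the $s$-th pair being a deterministic function of (the $s$-th old descendant of $i$, the $s$-th fresh Gaussian assigned to $i$): the old descendants are independent by induction, the fresh Gaussians are i.i.d.\ and independent of everything, so the pairs are independent across $s$, and within a pair independence is the identity. Families attached to distinct indices use disjoint blocks of fresh Gaussians applied to the (inductively independent) old families, so cross-family independence is preserved.

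Instantiating the claim at $j = \ell$ gives both bullets directly. Under the first hypothesis, exactly $m$ of the $a_i$ equal $\mu\cdot\mathbf{1}_S$, each contributing $2^\ell$ i.i.d.\ copies of $\mN(2^{-\ell/2}\mu\cdot\mathbf{1}_S, I_d)$, and the other $n - m$ indices contribute the samples distributed as $\mN(2^{-\ell/2}\mu'\cdot\mathbf{1}_S, I_d)$, with all $2^\ell n$ samples mutually independent and the stated counts $2^\ell m$ and $2^\ell(n - m)$; under the second hypothesis all $a_i = 0$, so the output samples are i.i.d.\ $\mN(2^{-\ell/2}\cdot 0, I_d) = \mN(0, I_d)$. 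I do not anticipate a genuine obstacle here; the only point requiring care is the independence bookkeeping in the inductive step --- in particular checking that the two outputs of a single $\pm G$ split are independent (this is where the Gaussian identity is used, and it crucially fails without joint Gaussianity and identity covariance) and that splits applied to different descendants or different index-families never create dependence because they consume disjoint fresh randomness.
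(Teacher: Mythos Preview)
Your proof is correct and takes essentially the same approach as the paper: both rest on the Gaussian splitting identity that $\tfrac{1}{\sqrt 2}(Z \pm G)$ are independent $\mN(a/\sqrt 2, I_d)$ when $Z \sim \mN(a, I_d)$ and $G \sim \mN(0, I_d)$ are independent, and then iterate this over the $\ell$ rounds. The paper's proof is a two-sentence sketch (``iteratively applying this fact proves the lemma''); your version spells out the independence bookkeeping more carefully, but there is no substantive difference.
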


\begin{proof}
These are both simple consequences of the fact that if $X \sim \mN(\mu \cdot \mathbf{1}_S, I_d)$ and $G \sim \mN(0, I_d)$ are independent then
$$X_1 = \frac{1}{\sqrt{2}} \left( X + G_i \right) \quad \text{and} \quad X_2 = \frac{1}{\sqrt{2}} \left( X - G_i \right)$$
are independent and satisfy $(X_1, X_2) \sim \mN(\mu \cdot \mathbf{1}_S/\sqrt{2}, I_d)^{\otimes 2}$. Iteratively applying this fact proves the lemma.
\end{proof}

We now use $\pr{isgm-Sample-Cloning}$ to strengthen Theorem \ref{thm:rsmeopt} as follows. This requires a slightly more stringent choice of the parameter $k$ than in Theorem \ref{thm:rsmeopt} to initially improve the lower bound to $n = \Omega(\epsilon k^2)$ before applying $\pr{isgm-Sample-Cloning}$. This choice of $k$ renders the number-theoretic issue alluded to above trivial. We omit details that are the same as in Theorem \ref{thm:rsmeopt}. Note that $\pr{rsme}$ is formulated in Section \ref{sec:summary} in Huber's $\epsilon$-contamination model. Let $\pr{rsme-c}$ be the variant of $\pr{rsme}$ instead defined in the $\epsilon$-corruption model. Then we have the following theorem.

\begin{theorem}[Lower Bound Tradeoff with Estimation Accuracy for $\pr{rsme}$]
Fix some $\alpha \in (0, 1)$ and suppose that $\epsilon = O(n^{-c})$ for some constant $c > 0$. Assume either the $k$\pr{-pc} conjecture or the $k$\pr{-pds} conjecture for some fixed edge densities $0 < q < p \le 1$. Then any test solving $\pr{rsme-c}(n, k, d, \tau, \epsilon)$ with $\tau = \tilde{\Omega}(\epsilon^{1 - \alpha/2})$ has asymptotic Type I$+$II error at least 1 if $n = o(\epsilon^\alpha k^2)$, $k^2 = o(d)$ and $n = \Omega(k)$.
\end{theorem}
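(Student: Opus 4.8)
The plan is to apply the reduction $k$-\pr{pds-to-isgm} of Theorem~\ref{thm:isgmreduction} to produce an instance of $\pr{isgm}$ with a large signal $\tau \asymp \sqrt{\epsilon/(w\log n)}$, and then post-process with $\pr{isgm-Sample-Cloning}$ (Figure~\ref{fig:isgmcloning}, Lemma~\ref{lem:samplecloning}) to trade the excess signal down to $\tilde\Omega(\epsilon^{1-\alpha/2})$ while multiplying the number of samples by an appropriate power of $2$. First I would set up parameters exactly as in the proof of Theorem~\ref{thm:rsmefull}: choose a prime $r$ with $\epsilon^{-1} < r = O(\epsilon^{-1})$, let $t$ be such that $r^t$ is the largest power of $r$ below $wk(1+p/Q)$, and set $k' = \lfloor r^t w^{-1}(1+p/Q)^{-1}\rfloor$. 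The key difference from Theorem~\ref{thm:rsmefull} is that I now take $N = wk'^2$ but will choose the \emph{intermediate} number of $\pr{isgm}$ samples $n_0$ to be as large as the reduction allows, i.e.\ $n_0 \asymp \epsilon k'^2$ up to slow-growing factors (the constraint $wn_0 \le k'(r^t-1)/(r-1)$ permits $n_0 = \tilde\Theta(\epsilon k^2)$ since $r^t \asymp wk'$ and $r \asymp \epsilon^{-1}$). Theorem~\ref{thm:isgmreduction} then gives an instance of $\pr{isgm}(n_0, k', d, \mu, 1/r)$ within $o(1)$ total variation of the image of $k\pr{-pds}(N,k',p,q)$, with $\mu\sqrt{k'} = \tau_0 \asymp \sqrt{\epsilon/(w\log n)}$.

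Next I would apply $\pr{isgm-Sample-Cloning}$ with blowup parameter $\ell$ chosen so that $2^{-\ell/2}\tau_0 = \tilde\Theta(\epsilon^{1-\alpha/2})$, i.e.\ $2^\ell \asymp \tau_0^2/\epsilon^{2-\alpha} = \tilde\Theta(\epsilon^{-(1-\alpha)})$ (using $\tau_0^2\asymp\epsilon$ up to polylog). By Lemma~\ref{lem:samplecloning}, under $H_1$ this produces $n = 2^\ell n_0$ independent samples, exactly a $(1-1/r)$-fraction of which are drawn from $\mN(2^{-\ell/2}\mu\cdot\mathbf1_S, I_d)$ and the rest from $\mN(2^{-\ell/2}\mu'\cdot\mathbf1_S, I_d)$ with $(1-1/r)\cdot 2^{-\ell/2}\mu + (1/r)\cdot 2^{-\ell/2}\mu' = 0$; under $H_0$ it produces i.i.d.\ $\mN(0,I_d)$. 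This is exactly an instance of $\pr{isgm}(n, k', d, 2^{-\ell/2}\mu, 1/r)$, hence an instance of $\pr{rsme}(n, k, d, \tau, \epsilon)$ with $\tau = 2^{-\ell/2}\tau_0 = \tilde\Omega(\epsilon^{1-\alpha/2})$ since $k'\le k$ and $1/r < \epsilon$; and since the cloned outliers form an i.i.d.\ mixture this is a Huber's-model instance, which an $\epsilon$-corruption adversary can simulate within $o(1)$ total variation (as noted in Section~\ref{sec:summary}), so it yields an instance of $\pr{rsme-c}$ as well. Composing maps and total-variation bounds via Lemma~\ref{lem:3a} and Lemma~\ref{lem:tvacc}, a polynomial-time test for $\pr{rsme-c}(n,k,d,\tau,\epsilon)$ with asymptotic Type~I$+$II error below $1$ yields such a test for $k\pr{-pds}(N,k',p,q)$, contradicting the conjecture.

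Finally I need to verify the arithmetic of the sample count: $n = 2^\ell n_0 \asymp \epsilon^{-(1-\alpha)}\cdot \epsilon k^2 \cdot \mathrm{polylog} = \tilde\Theta(\epsilon^\alpha k^2)$, so the hypothesis $n = o(\epsilon^\alpha k^2)$ is precisely what is needed to route through (after absorbing the slow-growing $w$ and the $\mathrm{polylog}$ factors into $\tilde\Omega$); the conditions $k^2 = o(d)$ and $n = \Omega(k)$ are inherited directly, and I should also check that $n, kr^t \le \mathrm{poly}(N)$ remains satisfied, which holds since $2^\ell = \mathrm{poly}(\epsilon^{-1}) = \mathrm{poly}(n)$ when $\epsilon = O(n^{-c})$. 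The main obstacle I anticipate is not conceptual but bookkeeping: one must choose $n_0$, $\ell$, and $w$ consistently so that the various constraints of Theorem~\ref{thm:isgmreduction} ($wn_0 \le k'(r^t-1)/(r-1)$, the bound on $\mu$, divisibility of $N$ and $n_0$ by $k'$, $m \le d$) all hold simultaneously, while the final $\tau$ lands at the claimed $\tilde\Omega(\epsilon^{1-\alpha/2})$ and the final sample count matches $o(\epsilon^\alpha k^2)$ with the $\tilde\Omega$ absorbing exactly the polylog slack — the number-theoretic subtlety about $r^t$ versus $(p/Q+1)N/k'$ flagged in Section~\ref{sec:redisgm} becomes trivial here because $N$ is chosen after $r^t$, as in Theorem~\ref{thm:rsmefull}.
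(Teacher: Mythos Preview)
Your overall architecture---reduce to $\pr{isgm}$ via Theorem~\ref{thm:isgmreduction}, then post-process with $\pr{isgm-Sample-Cloning}$ to trade signal for samples---is exactly the paper's. But there is a real gap in your parameter arithmetic that costs you two powers of $\epsilon$.

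You set parameters ``exactly as in the proof of Theorem~\ref{thm:rsmefull}'' and then assert $n_0 = \tilde\Theta(\epsilon k^2)$. That conflation of $k'$ with $k$ is not justified. With those parameters, $t$ is chosen so that $r^t$ is the largest power of $r$ below $wk(1+p/Q)$, hence $r^t \in (wk(1+p/Q)/r,\, wk(1+p/Q)]$ and consequently $k' = \lfloor r^t w^{-1}(1+p/Q)^{-1}\rfloor$ can be as small as $\lfloor k/r\rfloor \asymp k\epsilon$. In that worst case $n_0 \asymp \epsilon k'^2 \asymp \epsilon^3 k^2$, and after cloning you only reach $n \asymp \epsilon^{\alpha+2} k^2$, not $\epsilon^\alpha k^2$. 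Since the theorem assumes $\epsilon = O(n^{-c})$, this is a genuine polynomial loss, not something $\tilde\Theta$ can absorb. This is precisely the number-theoretic subtlety you dismiss; choosing $N$ after $r^t$ handles the relationship between $N$ and $k'r^t$, but not the relationship between $r^t$ and the \emph{given} target $k$. The paper's fix is to impose $k = r^t$ (the ``slightly more stringent choice of the parameter $k$'' flagged just before the theorem), which forces $k' \asymp k/w = \tilde\Theta(k)$ and recovers $n_0 = \tilde\Theta(\epsilon k^2)$.

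A second, smaller issue: after cloning, the output is \emph{not} an i.i.d.\ mixture and hence not ``exactly an instance of $\pr{isgm}(n, k', d, 2^{-\ell/2}\mu, 1/r)$.'' Lemma~\ref{lem:samplecloning} preserves the \emph{exact} count of outliers, so the number of corrupted samples is $2^\ell \cdot m$ with $m \sim \mathrm{Bin}(n_0, 1/r)$, which is not $\mathrm{Bin}(2^\ell n_0, 1/r)$. The correct argument (as in the paper) is to take $1/r < \epsilon/2$, use concentration of $\mathrm{Bin}(n_0,1/r)$ to get $m \le \epsilon n_0$ with high probability, and observe that cloning preserves the inequality $2^\ell m \le \epsilon \cdot 2^\ell n_0$; this is what makes the output a valid $\pr{rsme-c}$ instance in the $\epsilon$-corruption model directly, without routing through Huber's model.
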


\begin{proof}
Set parameters identically as in Theorem \ref{thm:rsmeopt}, except let $k = r^t$ for the choice of $r$ and $t$ and let $1/r < \epsilon/2$ but $1/r = \Omega(\epsilon)$. All parameter calculations remain identical to Theorem \ref{thm:rsmeopt}, except it only needs to hold that $n = o(\epsilon k^2)$ instead of $n = o(\epsilon^3 k^2)$ to satisfy the conditions to apply Theorem \ref{thm:isgmreduction}. This is because if $n = o(\epsilon k^2)$ and $k = r^t$, then we have that
$$wn \le \frac{1}{4} \cdot \epsilon k^2 \le \frac{r^{2t- 1}}{2w^2\left(1 + \frac{p}{Q} \right)^2} \le \frac{k'(r^t - 1)}{r - 1}$$
for large enough $n$ since $w$ tends to infinity. Therefore the same application of Theorem \ref{thm:isgmreduction} yields that $k$-\pr{pds-to-isgm} produces an instance of $\pr{isgm}(n, k', d, \mu, 1/r)$ with
$\tau \asymp \sqrt{\epsilon/w\log n}$. Applying Lemma \ref{lem:samplecloning} yields that if $\pr{isgm-Sample-Cloning}$ is then applied with blowup factor $\ell$, we have that we arrive at an instance of $\pr{rsme-c}(2^\ell n, k, d, 2^{-\ell/2} \cdot \tau, \epsilon)$ within total variation $o(1)$. Note that here, we have used the fact that since $1/r < \epsilon/2$, then the concentration of $\text{Bin}(n, 1/r)$ implies with high probability that the number of corrupted sampled is at most $\epsilon n$ before applying $\pr{isgm-Sample-Cloning}$. Note that $\pr{isgm-Sample-Cloning}$ will preserve this fact. Suppose that $\ell$ is chosen such that $2^\ell = \Theta(\epsilon^{\alpha - 1})$, then it follows that $2^\ell n = o(\epsilon^\alpha k^2)$ and $2^{-\ell/2} \cdot \tau = \tilde{\Omega}(\epsilon^{1 - \alpha/2})$. Applying Lemma \ref{lem:3a} completes the proof of this theorem.
\end{proof}

\section{Lower Bounds for Semirandom Single Community Recovery}
\label{sec:semirandom}


\begin{figure}[t!]
\begin{algbox}
\textbf{Algorithm} $k$\textsc{-pds-to-semi-cr}

\vspace{1mm}

\textit{Inputs}: $k$\pr{-pds} instance $G \in \mG_N$ with dense subgraph size $k$ that divides $N$, partition $E$ of $[N]$ and edge probabilities $0 < q < p \le 1$, blowup factor $\ell$, target number of vertices $n \ge m$ where $m$ is the smallest multiple of $(3^\ell - 1)k$ larger than $\left( \frac{p}{Q} + 1 \right) N$ where $Q = 1 - \sqrt{(1 - p)(1 - q)} + \mathbf{1}_{\{ p = 1\}} \left( \sqrt{q} - 1 \right)$ and $n \le \text{poly}(N)$

\begin{enumerate}
\item \textit{Symmetrize and Plant Diagonals}: Compute $M_{\text{PD}} \in \{0, 1\}^{m \times m}$ with partition $F$ of $[m]$ as
$$M_{\text{PD}} \gets \pr{To-}k\textsc{-Partite-Submatrix}(G)$$
applied with initial dimension $N$, edge probabilities $p$ and $q$ and target dimension $m$.
\item \textit{Gaussianize}: Compute $M_{\text{G}} \in \mathbb{R}^{m \times m}$ as $M_{\text{G}} \gets \textsc{Gaussianize}(M_{\text{PD}})$ applied with probabilities $p$ and $Q$ and mean parameters
$$\mu_{ij} = \mu = \frac{1}{2 \sqrt{6\log m + 2\log (p - Q)^{-1}}} \cdot \min \left\{ \log \left( \frac{p}{Q} \right), \log \left( \frac{1 - Q}{1 - p} \right) \right\}$$
for each $i, j \in [m]$.
\item \textit{Partition into Blocks and Pad}: Form the rotation matrix $H_{3, \ell}$ as in Figure \ref{fig:isgmreduction}. Form the matrix $M_{\text{P}} \in \mathbb{R}^{m' \times m'}$ where $m' = 3^\ell ks$ where $s = m/(3^\ell - 1)k$ by embedding $M_{\text{G}}$ as its upper left submatrix and sampling all other entries i.i.d. from $\mN(0, 1)$. Further partition each $F_i$ into $s$ blocks of size $3^{\ell} - 1$ and add one of the new $ks$ indices to each $F_i$. Now reorder the indices in each block so that the new index corresponds to the index of the column for the point $P_i = 0$ in $H_{3, \ell}$. Let $[m'] = F_1' \cup F_2' \cup \cdots \cup F'_{ks}$ be the partition of the row and column indices of $M_{\text{P}}$ induced by the blocks.
\item \textit{Rotate in Blocks}: Let $F''$ be a partition of $[m'']$ into $ks$ equally sized parts where $m'' = \frac{1}{2} (3^{\ell} - 1) ks$. Now compute the matrix $M_{\text{R}} \in \mathbb{R}^{m'' \times m''}$ as
$$(M_{\text{R}})_{F''_i, F''_j} = H_{3, \ell} (M_{\text{P}})_{F_i', F_j'} H_{3, \ell}^\top \quad \text{for each } i, j \in [ks]$$
where $Y_{A, B}$ denotes the submatrix of $Y$ restricted to the entries with indices in $A \times B$.
\item \textit{Threshold and Output}: Now construct the graph $G'$ with vertex set $[m'']$ such that for each $i > j$ with $i, j \in [m'']$, we have
$$\{i, j \} \in E(G') \textnormal{ if and only if } (M_{\text{R}})_{ij} \ge \frac{\mu}{2 \cdot 3^\ell}$$
Add $n - m''$ new vertices to $G'$ such that each edge incident to a new vertex is included in $E(G')$ independently with probability $1/2$. Randomly permute the vertex labels of $G'$ and output the resulting graph.
\end{enumerate}
\vspace{0.5mm}

\end{algbox}
\caption{Reduction from $k$-partite planted dense subgraph to semirandom community recovery.}
\label{fig:semirandreduction}
\end{figure}

In this section, we prove our second main result showing the $k$\pr{-pc} and $k$\pr{-pds} conjectures imply the $\pr{pds}$ Recovery Conjecture under a semirandom adversary in the regime of constant ambient edge density. Our reduction from $k$\pr{-pds} to \pr{semi-cr} is shown in Figure \ref{fig:semirandreduction}. On a high level, $k$\pr{-pds-to-semi-cr} can be interpreted as rotating by $H_{3, \ell}$ to effectively spread the signal in the planted dense subgraph out by simultaneously expanding its size from $k$ to $\Theta(3^\ell k)$ while decreasing its planted edge density. Furthermore, this rotation spreads the signal at the sharper rate from the $\pr{pds}$ Recovery Conjecture as opposed to the slower detection rate. In doing so, the reduction also produces monotone noise in the rest of the graph that can be simulated by a semirandom adversary.

Our reduction begins with the same first two steps as in the reduction to \pr{isgm}, by symmetrizing, planting diagonals and Gaussianizing. The total variation guarantees of these steps were already established in Section \ref{subsec:tosubmatrix}. The third step breaks the resulting matrix into blocks within each part $F_i$ and adds one row and one column to each block such that: (1) the added row and column have index in the block corresponding to the column index of $P_i = 0$ in $H_{3, \ell}$; and (2) the entries of both are independently sampled from $\mN(0, 1)$. The fourth step rotates according to imbalanced binary orthogonal matrices $H_{3, \ell}$ along both rows and columns. The last step produces a graph by appropriately thresholding the above-diagonal entries of the resulting matrix.

Throughout the remainder of this section, let $\mathcal{A}$ denote the reduction $k\pr{-pds-to-semi-cr}$. We first formally introduce the key intermediate distributions on graphs that our reduction produces and which we will show can be simulated by a semirandom adversary.

\begin{definition}[Target Graph Distributions]
Given positive integers $k \le m \le n$ and $\mu_1, \mu_2, \mu_3 \in (0, 1)$, let $\pr{tg}_{H_1}(n, k, k', m, \mu_1, \mu_2, \mu_3)$ be the distribution over $G \in \mG_n$ sampled as follows:
\begin{enumerate}
\item Choose a subset $V \subseteq [n]$ of size $|V| = m$ uniformly at random and then choose two disjoint subsets $S \subseteq V$ and $S' \subseteq V$ of sizes $|S| = k$ and $|S'| = k'$, respectively, uniformly at random.
\item Include the edge $\{i, j\}$ in $E(G)$ independently with probability $p_{ij}$ where
$$p_{ij} = \left\{ \begin{array}{ll} 1/2 &\text{if } (i, j) \in S'^2 \text{ or } (i, j) \not \in V^2 \\ 1/2 - \mu_1 &\text{if } (i, j) \in V^2 \backslash (S \cup S')^2 \\ 1/2 - \mu_2 &\text{if } (i, j) \in S \times S' \text{ or } (i, j) \in S' \times S \\ 1/2 + \mu_3 &\text{if } (i, j) \in S^2 \end{array} \right.$$
\end{enumerate}
Furthermore, let $\pr{tg}_{H_0}(n, m, \mu_1)$ be the distribution over $G \in \mG_n$ sampled by choosing $V \subseteq [n]$ with $|V| = m$ uniformly at random and including $\{i, j\}$ in $E(G)$ with probability $1/2 - \mu_1$ if $(i, j) \in V^2$ and probability $1/2$ otherwise.
\end{definition}

We now establish the desired Markov transition properties for the block rotations and thresholding procedures in Steps 3, 4 and 5. We then will combine this with lemmas in Section \ref{subsec:tosubmatrix} to provet our lower bound for $\pr{semi-cr}$. We remark that the block-wise padding in Step 3 is needed in the next lemma. In the proof of Lemma \ref{lem:samplerotation}, we were able to condition on no planted indices corresponding to columns with $P_i = 0$ upon rotating without a loss in total variation since these correspondences occurred with low probability. Here, this is no longer possible because rotations are carried out block-wise and the number of blocks is much larger than the number of blocks in the partition. This issue is resolved by adding in a new index corresponding to $P_i = 0$ in each block that is guaranteed not to be planted. The fact that no planted index corresponds to $P_i = 0$ is zero ensures that the number of vertices in the planted subgraph of the resulting semirandom instance is $\frac{1}{2}(3^{\ell - 1} - 1)k$. This fact is used in the proof of the following lemma. Recall that $\Phi(x) = \int_{-\infty}^x e^{-t^2/2} dt$ is the standard normal cumulative distribution function.

\begin{lemma}[Block Rotations and Thresholding] \label{lem:rotthres}
Let $F$ be a fixed partition of $[m]$ where $m$ is divisible by $(3^\ell - 1)k$. Let $U \subseteq [m]$ be a fixed $k$-subset such that $|U \cap F_i| = 1$ for each $i \in [k]$. Let $\mathcal{A}_{\textnormal{3-5}}$ denote Steps 5 and 6 of $k\textsc{-pds-to-semi-cr}$ with input $M_{\textnormal{G}}$ and output $G'$. Then for all $\tau \in \mathbb{R}$,
\begin{align*}
&\mathcal{A}_{\textnormal{3-5}} \left( \mu \cdot \mathbf{1}_U \mathbf{1}_{U}^\top + \mN(0, 1)^{\otimes m \times m} \right) \sim \pr{tg}_{H_1}\left(n, \frac{1}{2}(3^{\ell - 1} - 1)k, 3^{\ell - 1}k, m, \mu_1, \mu_2, \mu_3\right) \\
&\mathcal{A}_{\textnormal{3-5}} \left( \mN(0, 1)^{\otimes m \times m} \right) \sim \pr{tg}_{H_0}(n, m, \mu_1)
\end{align*}
where $\mu_1, \mu_2, \mu_3 \in (0, 1)$ are equal to
$$\mu_1 = \Phi\left( \frac{1}{2} \mu \cdot 3^{-\ell} \right) - \frac{1}{2}, \quad \text{and} \quad \mu_2 = \mu_3 = \Phi\left( \frac{1}{2} \mu \cdot 3^{-\ell + 1} \right) - \frac{1}{2}$$
\end{lemma}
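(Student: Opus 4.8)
Since the lemma asserts an \emph{exact} equality in distribution, the plan is a direct law-tracking computation through Steps 3--5 rather than a total-variation argument. I would write the input as $M_{\mathrm G}=\mu\cdot\mathbf 1_U\mathbf 1_U^\top+G$ with $G\sim\mN(0,1)^{\otimes m\times m}$ in the $H_1$ case (the $\mu$-term absent under $H_0$), and observe that after the embedding and block-wise padding of Step~3 the matrix $M_{\mathrm P}$ has independent entries, distributed $\mN(\mu,1)$ on $U\times U$ and $\mN(0,1)$ everywhere else. Among the $ks$ blocks $F_1',\dots,F_{ks}'$, exactly $k$ of them --- one per original part $F_i$ --- contain a coordinate of $U$; call these the \emph{special} blocks $B_1,\dots,B_k$ and let $a_p\in\mathbb F_3^\ell$ be the local coordinate of the $U$-coordinate in $B_p$. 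The crucial structural consequence of the reordering in Step~3, which pins the freshly sampled index of each block to the $P=0$ coordinate, is that every $a_p$ is a \emph{nonzero} point.

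First I would compute the law of $M_{\mathrm R}$ block by block using the two properties of $H_{3,\ell}$ recorded in Lemma~\ref{lem:orthogonalmatrices}. Orthonormality of the rows gives, for an i.i.d.\ Gaussian block $G'$, that $H_{3,\ell}\,G'\,H_{3,\ell}^\top$ again has i.i.d.\ $\mN(0,1)$ entries --- its entries are jointly Gaussian and $\mathrm{Cov}\big((H_{3,\ell}G'H_{3,\ell}^\top)_{ij},(H_{3,\ell}G'H_{3,\ell}^\top)_{kl}\big)=(H_{3,\ell}H_{3,\ell}^\top)_{ik}(H_{3,\ell}H_{3,\ell}^\top)_{jl}=\delta_{ik}\delta_{jl}$ --- and since distinct blocks of $M_{\mathrm P}$ are disjoint sets of independent entries, distinct blocks of $M_{\mathrm R}$ are independent. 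For a pair of special blocks $(B_p,B_q)$ the rotated mean is $\mu\,(H_{3,\ell})_{a_p}(H_{3,\ell})_{a_q}^\top$, and since $a_p,a_q\ne 0$, Lemma~\ref{lem:orthogonalmatrices} says each of these columns equals $\alpha:=(2\cdot 3^\ell)^{-1/2}$ on exactly $3^{\ell-1}$ rows and $\beta:=-2(2\cdot 3^\ell)^{-1/2}$ on the remaining $\tfrac{3^{\ell-1}-1}{2}$ rows. Hence every entry of $M_{\mathrm R}$ is an independent $\mN(\sigma,1)$ variable with $\sigma\in\{0,\ \mu\alpha^2,\ \mu\alpha\beta,\ \mu\beta^2\}=\big\{0,\ \tfrac{\mu}{2\cdot 3^\ell},\ -\tfrac{\mu}{3^\ell},\ \tfrac{2\mu}{3^\ell}\big\}$, the nonzero values occurring only between special blocks, according to the positive/negative membership of the two coordinates.

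Next I would set $S$ to be the union over $p$ of the ``negative'' rows of $B_p$'s image (size $\tfrac12(3^{\ell-1}-1)k$) and $S'$ the union of the ``positive'' rows (size $3^{\ell-1}k$); these are disjoint and $S\cup S'$ is exactly the image of the special blocks, while $V$ is the whole image $[m'']$ of $M_{\mathrm R}$. Thresholding entry $(i,j)$ at $\theta=\tfrac{\mu}{2\cdot 3^\ell}$ creates an edge with probability $\Pr[Z+\sigma_{ij}\ge\theta]=1-\Phi(\theta-\sigma_{ij})$, and substituting the four values of $\sigma$ yields $1/2-\mu_1$ for $\sigma=0$ (on $V^2\setminus(S\cup S')^2$), $1/2$ for $\sigma=\mu\alpha^2=\theta$ (on $S'^2$), $1/2-\mu_2$ for $\sigma=\mu\alpha\beta=-2\theta$ (on $S\times S'$ and $S'\times S$), and $1/2+\mu_3$ for $\sigma=\mu\beta^2=4\theta$ (on $S^2$), with exactly $\mu_1=\Phi(\theta)-\tfrac12$ and $\mu_2=\mu_3=\Phi(3\theta)-\tfrac12$. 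The $n-m''$ fresh vertices added in Step~5 supply the ``$(i,j)\notin V^2$'' region at probability $1/2$, independence of all edges follows from independence of the strictly lower-triangular entries of $M_{\mathrm R}$ actually used, and the final random relabelling makes $(V,S,S')$ a uniformly random triple of the prescribed sizes, which is precisely $\pr{tg}_{H_1}$. The $H_0$ case is identical with every $\sigma_{ij}=0$, giving $\pr{tg}_{H_0}$ with the same dense region.

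The Gaussian covariance bookkeeping and the threshold arithmetic are routine. The delicate step --- flagged in the remark preceding the lemma --- is the combinatorics behind Lemma~\ref{lem:orthogonalmatrices} together with the need to ensure that no planted coordinate lands at $P=0$: the $P=0$ column of $H_{3,\ell}$ is constant (all entries $\alpha$), so a planted index there would collapse the positive/negative dichotomy and destroy the clean three-value signal structure that the argument of Lemma~\ref{lem:samplerotation} could afford to ignore probabilistically but which must here be exact. This is exactly why Step~3 inflates each of the $ks$ blocks with a fresh Gaussian index pinned to the $P=0$ coordinate, forcing each $a_p\ne 0$; verifying that this is consistent with the part/block sizes --- in particular that the ``negative'' and ``positive'' unions come out with sizes $\tfrac12(3^{\ell-1}-1)k$ and $3^{\ell-1}k$, and that the planted community $S$ is the elevated-density set --- is where the care concentrates.
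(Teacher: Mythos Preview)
Your proposal is correct and follows essentially the same approach as the paper's proof. The paper packages the block-wise rotations into a single $m''\times m'$ block-diagonal matrix $\mathcal H$ (with copies of $H_{3,\ell}$ on the diagonal blocks) and writes $M_{\mathrm R}=\mathcal H M_{\mathrm P}\mathcal H^\top=\mu\,vv^\top+\mathcal H G\mathcal H^\top$ with $v=\mathcal H\mathbf 1_{U'}$, then reads off the four mean values from the entries of $v$ and defines $S,S'$ accordingly; you do the identical computation block-by-block via the local coordinates $a_p$, which is the same content in slightly different notation.

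One small slip: the $P=0$ column of $H_{3,\ell}$ is actually constant equal to $\beta=(1-r)/\sqrt{r^t(r-1)}$ (since $0$ lies in every hyperplane), not $\alpha$ as you wrote. This does not affect your argument---the point you correctly emphasize is that Step~3 pins the fresh index to $P=0$ so that every planted coordinate $a_p$ is nonzero, guaranteeing the exact $3^{\ell-1}$-vs-$\tfrac12(3^{\ell-1}-1)$ split in each special block and hence the claimed sizes $|S|=\tfrac{k}{2}(3^{\ell-1}-1)$ and $|S'|=3^{\ell-1}k$.
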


\begin{proof}
First consider the case in which $M_{\textnormal{G}} \sim \mL\left( \mu \cdot \mathbf{1}_U \mathbf{1}_{U}^\top + \mN(0, 1)^{\otimes m \times m} \right)$. It follows that $M_{\text{P}} = \mu \cdot \mathbf{1}_{U'} \mathbf{1}_{U'}^\top + G$ where $G \sim \mN(0, 1)^{\otimes m' \times m'}$ and $U'$ is the image of $U$ under the embedding and index reordering in Step 3. Let $[m'] = F_1' \cup F_2' \cup \cdots \cup F_{ks}'$ be the partition of the row and column indices of $M_{\text{P}}$ induced by the blocks. Note that $|F_i'| = 3^\ell$ for each $i \in [ks]$. Since $F'$ is a refinement of $F$, it holds that $|U' \cap F_i'| \le 1$ for each $i \in [ks]$. Furthermore, if $W$ is the set of $m' - m$ indices added in Step 3, then it holds that $W$ and $U'$ are disjoint.

Let $[m''] = F''_1 \cup F''_2 \cup \cdots \cup F''_{ks}$ be the partition in Step 4 of $[m'']$ into blocks of size $\frac{1}{2}(3^\ell - 1)$. Now define the matrix $\mathcal{H} \in \mathbb{R}^{m'' \times m'}$ to be such that:
\begin{itemize}
\item $\mathcal{H}$ restricted to the indices $F_i'' \times F_i'$ is a copy of $H_{3, \ell}$ with $\mathcal{H}_{F_i'', F_i'} = H_{3, \ell}$ for each $i \in [ks]$
\item $\mathcal{H}_{ij} = 0$ if $(i, j)$ is not in $F_a'' \times F_a'$ for some $a \in [ks]$
\end{itemize}
The rotation step setting $(M_{\text{R}})_{F''_i, F''_j} = H_{3, \ell} (M_{\text{P}})_{F_i', F_j'} H_{3, \ell}^\top$ for each $i, j \in [ks]$ can equivalently be expressed as $M_{\text{R}} = \mathcal{H} M_{\text{P}} \mathcal{H}^\top$. Therefore we have that
$$M_{\text{R}} = \mathcal{H} M_{\text{P}} \mathcal{H}^\top = \mu \cdot \mathcal{H} \mathbf{1}_{U'} \mathbf{1}_{U'}^\top \mathcal{H}^\top + \mathcal{H}G\mathcal{H}^\top \sim \mL\left( \mu \cdot vv^\top + \mN(0, 1)^{\otimes m'' \times m''} \right)$$
where $v = \mathcal{H} \mathbf{1}_{U'} \in \mathbb{R}^{m''}$. The last statement holds since the rows of $\mathcal{H}$ are orthogonal by an application of the isotropic property of independent Gaussians similar to Lemma \ref{lem:samplerotation}.

Now consider the vector $v$, which is the sum of the $k$ columns of $\mathcal{H}$ with indices in $U'$. Since $|U' \cap F_i'| \le 1$, the construction of $\mathcal{H}$ implies that these $k$ columns have disjoint support. By Lemma \ref{lem:orthogonalmatrices}, each column of $\mathcal{H}$ not corresponding to the point $P_i = 0$ in some block contains exactly $\frac{1}{2}( 3^{\ell - 1} - 1)$ entries equal to $1/\sqrt{2 \cdot 3^\ell}$, exactly $3^{\ell - 1}$ equal to $-2/\sqrt{2 \cdot 3^\ell}$ and the rest of its entries are zero. Step 3 ensures that all columns of $\mathcal{H}$ corresponding to $P_i = 0$ are in $W$ and thus not in $U'$. Thus it follows that $v$ contains exactly $\frac{k}{2}( 3^{\ell - 1} - 1)$ entries equal to $1/\sqrt{2 \cdot 3^\ell}$, exactly $3^{\ell - 1}k$ equal to $-2/\sqrt{2 \cdot 3^\ell}$ and the rest of its entries are zero. Define $S$ and $S'$ to be
$$S = \left\{ i \in [m''] : v_i = -2/\sqrt{2 \cdot 3^\ell} \right\} \quad \text{and} \quad S' = \left\{ i \in [m''] : v_i = 1/\sqrt{2 \cdot 3^\ell} \right\}$$
where $|S| = \frac{k}{2}( 3^{\ell - 1} - 1)$ and $|S'| = 3^{\ell - 1}k$. Therefore it follows that the entries of $M_{\text{R}}$ are independent and distributed as follows:
$$(M_{\text{R}})_{ij} \sim \left\{ \begin{array}{ll} \mN(2\mu \cdot 3^{-\ell}, 1) &\text{if } (i, j) \in S \times S \\ \mN(-\mu \cdot 3^{-\ell}, 1) &\text{if } (i, j) \in S \times S' \text{ or } (i, j) \in S' \times S \\ \mN\left(\frac{1}{2} \mu \cdot 3^{-\ell}, 1\right) &\text{if } (i, j) \in S' \times S' \\ \mN(0, 1) &\text{otherwise} \end{array} \right.$$
After thresholding, adding $n - m$ new vertices and permuting as in Step 5 therefore yields that $G' \sim \pr{tg}_{H_1}\left(n, \frac{1}{2}(3^{\ell - 1} - 1)k, 3^{\ell - 1}k, m, \mu_1, \mu_2, \mu_3\right)$ for the values of $\mu_1, \mu_2, \mu_3 \in (0, 1)$ defined in the lemma statement. This completes the proof of the first part of the lemma. Now consider the case where $M_{\textnormal{G}} \sim \mN(0, 1)^{\otimes m \times m}$. By an identical argument, we have that $M_{\text{R}} \sim \mN(0, 1)^{\otimes m'' \times m''}$. Then thresholding, adding vertices and permuting as in Step 5 yields $G' \sim \pr{tg}_{H_0}(n, m, \mu_1)$, completing the proof of the lemma.
\end{proof}

Using this lemma, we now prove our second main result showing the $\pr{pds}$ Recovery Conjecture under a semirandom adversary for constant ambient edge density. We begin with the case of $q = 1/2$ and will deduce the general $q = \Theta(1)$ case subsequently in a corollary.

\begin{theorem}[$k$\pr{-pc} Lower Bounds for Semirandom Community Recovery] \label{thm:semicrhardness}
Fix any constant $\beta \in [1/2, 1)$. Suppose that $\mathcal{B}$ is a randomized polynomial time test for $\pr{semi-cr}(n, k, 1/2 + \nu, 1/2)$ for all $(n, k, \nu)$ with $k = \Theta(n^\beta)$ and $\nu \ge \overline{\nu}$ where $\overline{\nu}^2 = o(n/k^2 \log n)$. Then $\mathcal{B}$ has asymptotic Type I$+$II error at least 1 assuming either the $k$\pr{-pc} conjecture or the $k$\pr{-pds} conjecture for some fixed edge densities $0 < q < p \le 1$.
\end{theorem}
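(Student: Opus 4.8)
The plan is to instantiate the reduction $k\textsc{-pds-to-semi-cr}$ of Figure~\ref{fig:semirandreduction} and chain the total variation estimates that are already in hand. Assume the $k\pr{-pds}$ conjecture at some fixed constant edge densities $0<q<p\le 1$ (the $k\pr{-pc}$ case being $p=1$), and let $\mathcal{A}=k\textsc{-pds-to-semi-cr}$, applied to a $k\pr{-pds}$ instance on $N$ vertices with subgraph size $k'$ and blowup $\ell$. Writing $\mathcal{A}$ as $(G,E)\xrightarrow{\mathcal{A}_1}(M_{\textnormal{PD}},F)\xrightarrow{\mathcal{A}_2}(M_{\textnormal{G}},F)\xrightarrow{\mathcal{A}_{\textnormal{3-5}}}G'$, Lemma~\ref{lem:submatrix} controls $\mathcal{A}_1$ (Step~1), Lemma~\ref{lem:gaussianize} controls $\mathcal{A}_2$ (Step~2) — here one checks that the mean $\mu$ selected in Step~2, which is $\asymp(\log N)^{-1/2}$ since $p,Q$ are constants, is exactly the threshold permitted by Lemma~\ref{lem:gaussianize} on an $m\times m$ matrix — and Lemma~\ref{lem:rotthres} shows that $\mathcal{A}_{\textnormal{3-5}}$ (Steps~3--5) maps $\mu\cdot\mathbf{1}_U\mathbf{1}_U^\top+\mN(0,1)^{\otimes m\times m}$ to $\pr{tg}_{H_1}$ and $\mN(0,1)^{\otimes m\times m}$ to $\pr{tg}_{H_0}$ \emph{exactly}, for every fixed $k'$-subset $U$ meeting each block once. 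Averaging Lemmas~\ref{lem:gaussianize}~and~\ref{lem:rotthres} over $U\sim\mU_m(F)$ via the conditioning property of $\TV$ and composing through Lemma~\ref{lem:tvacc} gives, under both hypotheses,
\begin{align*}
\TV\left(\mathcal{A}(k\pr{-pds}(N,k',p,q)),\,\pr{tg}\right)=O\left(k'\,e^{-\Omega(N/k')}+k'/\sqrt{N}+N^{-1}\right)=o(1)
\end{align*}
whenever $k'^2=o(N)$, where $\pr{tg}$ is $\pr{tg}_{H_0}$ or $\pr{tg}_{H_1}$ respectively, the planted community has size $k:=\tfrac12(3^{\ell-1}-1)k'$, and the signal parameters are $\mu_1=\Phi(\tfrac12\mu\,3^{-\ell})-\tfrac12$ and $\mu_2=\mu_3=\Phi(\tfrac32\mu\,3^{-\ell})-\tfrac12$.

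The second step is to recognize $\pr{tg}_{H_0}$ and $\pr{tg}_{H_1}$ as legitimate members of the composite hypotheses of $\pr{semi-cr}(n,k,1/2+\mu_3,1/2)$. In $\pr{tg}_{H_1}$ the only region with edge density exceeding $q=1/2$ is the planted community $S$ (density $1/2+\mu_3$), while every other region has density $p_{ij}\le 1/2$. Hence a semirandom adversary acting on a draw from $\mG(n,k,1/2+\mu_3,1/2)$ can use its private randomness to sample $V\supseteq S$ and $S'\subseteq V\setminus S$ of the required sizes and then independently delete each present edge not contained in $S^2$ with probability $1-2p_{ij}$; since $S$ has the same uniform marginal under both laws and $S^2$ is never touched, the output law is precisely $\pr{tg}_{H_1}$, and only edges outside $S$ are removed, exactly as the definition of $\pr{semi-cr}$ permits. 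Likewise $\pr{tg}_{H_0}$ is obtained from $\mG(n,1/2)$ by thinning the edges inside a uniformly random subset, an operation allowed to the (unrestricted-removal) $H_0$ adversary. Therefore $\pr{tg}_{H_1}\in\pr{adv}(\mG(n,k,1/2+\mu_3,1/2))$ and $\pr{tg}_{H_0}\in\pr{adv}(\mG(n,1/2))$, so Lemma~\ref{lem:3a} converts any test $\mathcal{B}$ for $\pr{semi-cr}(n,k,1/2+\mu_3,1/2)$ with asymptotic Type I$+$II error below $1$ into one for $k\pr{-pds}(N,k',p,q)$ with asymptotic error below $1$.

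It remains to pick parameters so that $k=\Theta(n^\beta)$, $\mu_3\ge\overline{\nu}$, and $k'=o(\sqrt{N})$ with $k'\to\infty$; this is where $\overline{\nu}^2=o(n/k^2\log n)$ enters. Taking the target number of vertices $n$ equal to the forced value $m=\Theta(N)$, the reduction outputs a community of size $k\asymp 3^\ell k'$ at signal $\mu_3\asymp\mu\,3^{-\ell}\asymp 3^{-\ell}/\sqrt{\log N}$, so that $\pr{snr}=4\mu_3^2\asymp\tfrac{n}{k^2}\cdot\tfrac{k'^2}{N\log N}$; thus the $k\pr{-pds}$-hard regime $k'=o(\sqrt{N})$ corresponds exactly to signal a $\log N$ factor below the recovery threshold $\pr{snr}\asymp n/k^2$. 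Concretely, for a sufficiently slowly growing $w=w(N)\to\infty$ I would take $\ell$ maximal with $3^\ell\le\min\{c_0\mu/\overline{\nu},\,N^\beta/w\}$ (where $c_0$ is the implied constant in $\Phi(\tfrac32\mu x)-\tfrac12\ge c_0\mu x$ for small $x$) and $k'\asymp N^\beta 3^{-\ell}$, with $N$ a multiple of $k'$ of the matching order. Then $k=\tfrac12(3^{\ell-1}-1)k'=\Theta(N^\beta)=\Theta(n^\beta)$; the signal is $\Theta(\max\{\overline{\nu},\,w/(N^\beta\sqrt{\log N})\})\ge\overline{\nu}$; and $k'\asymp\max\{N^\beta\overline{\nu}\sqrt{\log N},\,w\}$, which tends to infinity and is $o(\sqrt{N})$ precisely because $\overline{\nu}=o(N^{1/2-\beta}/\sqrt{\log N})$, equivalently $\overline{\nu}^2=o(N^{1-2\beta}/\log N)$. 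In particular $k'^2=o(N)$, validating the estimate above; applying the reduction along a sequence $N\to\infty$ and invoking the $k\pr{-pds}$ conjecture forces $\mathcal{B}$ to have asymptotic Type I$+$II error at least $1$.

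I expect the main difficulty to lie not in any single new computation — the delicate analytic facts, namely the orthonormality and $1/r$-imbalance of the matrices $H_{3,\ell}$ (Lemma~\ref{lem:orthogonalmatrices}) and the block-rotation/thresholding identity with its \emph{recovery}-rate scaling $\pr{snr}\asymp 3^{-2\ell}$ against a $3^\ell$-fold size blowup (Lemma~\ref{lem:rotthres}), are already established — but in assembling the pieces correctly: tracking the block and divisibility bookkeeping of Step~3 (blocks of size $3^\ell-1$, one extra $P_i=0$ index per block so the community never meets a zero column), confirming that the monotone perturbation realizing $\pr{tg}_{H_1}$ is compatible with the exact direction of allowed modifications in $\pr{semi-cr}$, and managing the mild self-reference that $\overline{\nu}$ depends on $(n,k)$ while $k$ is itself produced by the parameter choices — benign here since $n=\Theta(N)$ and $k=\Theta(N^\beta)$ pin $\overline{\nu}$ down up to constants.
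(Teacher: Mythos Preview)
Your proposal is correct and follows essentially the same approach as the paper: chain Lemmas~\ref{lem:submatrix}, \ref{lem:gaussianize}, \ref{lem:rotthres} via Lemma~\ref{lem:tvacc}, verify that $\pr{tg}_{H_0}$ and $\pr{tg}_{H_1}$ lie in the respective semirandom classes, and conclude via Lemma~\ref{lem:3a}. The only real difference is the parameter-selection order: the paper fixes the source parameters first by taking $N=wk'^2$ and $\ell=\lceil\log_3(N^\beta/k')\rceil$, so that $k'=o(\sqrt{N})$ is automatic and the signal comes out to $\mu_3\asymp\sqrt{n/(wk^2\log n)}\ge\overline{\nu}$ directly; you instead choose $\ell$ to target $\overline{\nu}$ and back out $k'$, which introduces the $\min$ and the mild self-reference you flag, but lands at the same conclusion.
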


\begin{proof}
Assume the $k$\pr{-pds} conjecture for some fixed edge densities $0 < q < p \le 1$ and let $Q = 1 - \sqrt{(1 - p)(1 - q)} + \mathbf{1}_{\{ p = 1\}} \left( \sqrt{q} - 1 \right) \in (0, 1)$. Let $w = w(k') = \omega(1)$ be a sufficiently slow-growing function and define the parameters $(k', N)$ to be such that $N = wk'^2$. Now define the following parameters:
\begin{itemize}
\item Blow up factor $\ell = \lceil \log_3 (N^\beta/k') \rceil$ and target subgraph size $k = \frac{1}{2} \left( 3^{\ell - 1} - 1\right) k'$
\item Target number of vertices $n = m$, the smallest multiple of $(3^\ell - 1)k$ larger than $\left( \frac{p}{Q} + 1 \right) N$
\item Target graph distribution parameters $\mu_1 = \Phi\left( \frac{1}{2} \mu \cdot 3^{-\ell} \right) - \frac{1}{2}$ and $\mu_2 = \mu_3 = \Phi\left( \frac{1}{2} \mu \cdot 3^{-\ell + 1} \right) - \frac{1}{2}$
\end{itemize}
Note that these parameters satisfy the given conditions since $k = \Theta(N^\beta)$ and $N = \Theta(n)$. As defined in Step 2 of $\mathcal{A}$, it holds that $\mu = \Theta(1/\sqrt{\log n})$. Let $\nu = \mu_3$ and observe that
$$\nu = \Phi\left( \frac{1}{2} \mu \cdot 3^{-\ell + 1} \right) - \frac{1}{2} \asymp \mu \cdot 3^{-\ell} \asymp \frac{1}{\sqrt{\log n}} \cdot \frac{1}{k} \cdot \sqrt{\frac{N}{w}} \asymp \sqrt{\frac{n}{wk^2 \log n}}$$
since $\Phi(x) - 1/2 = \Theta(x)$ for $x \in (0, 1)$. Therefore it follows that $\nu \ge \overline{\nu}$ for a sufficiently slow-growing choice of $w$.

We now will show that $\mathcal{A}$ maps $\mG(N, q)$ approximately to $\pr{tg}_{H_0}(n, m, \mu_1)$ and maps $\mG_E(N, k', p, q)$ approximately to $\pr{tg}_{H_1}\left(n, \frac{1}{2}(3^{\ell - 1} - 1)k, 3^{\ell - 1}k, m, \mu_1, \mu_2, \mu_3\right)$ in total variation. To prove this, we apply Lemma \ref{lem:tvacc} to the steps $\mathcal{A}_i$ of $\mathcal{A}$ in each of these two cases. Let $E$ be a partition of $[N]$ into $k'$ parts of size $N/k'$. Define the steps of $\mathcal{A}$ to map inputs to outputs as follows
$$(G, E) \xrightarrow{\mathcal{A}_1} (M_{\text{PD}}, F) \xrightarrow{\mathcal{A}_2} (M_{\text{G}}, F) \xrightarrow{\mathcal{A}_{\text{3-5}}} G'$$
In the first case, consider Lemma \ref{lem:tvacc} applied to the steps $\mathcal{A}_i$ above and the following sequence of distributions
\allowdisplaybreaks
\begin{align*}
\mathcal{P}_0 &= \mG_E(N, k', p, q) \\
\mathcal{P}_1 &= \mathcal{M}(m, \mU_m(F'), p, Q) \\
\mathcal{P}_2 &=\mathcal{M}\left(m, \mU_{m}(F), \mN\left(\mu, 1\right), \mN(0, 1) \right) \\
\mathcal{P}_{\text{3-5}} &= \pr{tg}_{H_1}\left(n, \frac{1}{2}(3^{\ell - 1} - 1)k', 3^{\ell - 1}k', m, \mu_1, \mu_2, \mu_3\right)
\end{align*}
As in the statement of Lemma \ref{lem:tvacc}, let $\epsilon_i$ be any real numbers satisfying $\TV\left( \mathcal{A}_i(\mP_{i-1}), \mP_i \right) \le \epsilon_i$ for each step $i$. Lemma \ref{lem:submatrix} implies that we can take $\epsilon_1 = 4k \cdot \exp\left( - Q^2N^2/48pk'm \right) + \sqrt{C_Q k'^2/2m}$ where $C_Q = \max\left\{ Q/(1 - Q), (1 - Q)/Q \right\}$. Applying Lemma \ref{lem:gaussianize} and averaging over $S = T \sim \mU_{m}(F)$ yields that we can take $\epsilon_2 = O(N^{-1})$. Lemma \ref{lem:rotthres} implies that Steps 3, 4 and 5 are exact and we can take $\epsilon_{\text{3-5}} = 0$. Since $\epsilon_1, \epsilon_2 = o(1)$, Lemma \ref{lem:tvacc}, implies that $\mathcal{A}$ takes $\mathcal{P}_0$ to $\mathcal{P}_{\text{3-5}}$ with $o(1)$ total variation, which proves the first part of the claim. Now consider the second case Lemma \ref{lem:tvacc} applied to the following sequence of distributions
$$\mathcal{P}_0 = \mG(N, q), \quad \mathcal{P}_1 = \text{Bern}(Q)^{\otimes m \times m}, \quad \mathcal{P}_2 = \mN(0, 1)^{\otimes m \times m}, \quad \mathcal{P}_{\text{3-5}} = \pr{tg}_{H_0}(n, m, \mu_1)$$
As above, Lemmas \ref{lem:submatrix}, \ref{lem:gaussianize} and \ref{lem:samplerotation} imply that we can take $\epsilon_1 = 4k \cdot \exp\left( - Q^2N^2/48pk'm \right)$, $\epsilon_2 = O(N^{-1})$ and $\epsilon_{\text{3-5}} = 0$. Applying Lemma \ref{lem:tvacc} again implies that $\mathcal{A}$ takes $\mathcal{P}_0$ to $\mathcal{P}_{\text{3-5}}$ with $o(1)$ total variation, which proves the second part of the claim.

We now will show that these two target distributions can be simulated by the $H_0$ and $H_1$ semirandom adversaries in $\pr{semi-cr}(n, k, 1/2 + \nu, 1/2)$. Consider the adversary that observes $G \sim \mG(n, k, 1/2 + \nu, 1/2)$ and modifies $G$ as follows:
\begin{enumerate}
\item samples $S'$ of size $3^{\ell - 1} k'$ uniformly at random from all $3^{\ell - 1} k'$-subsets of $[n] \backslash S$ where $S$ is the vertex set of the planted dense subgraph; and
\item if the edge $\{i, j \}$ is in $E(G)$, remove it from $G$ independently with probability $P_{ij}$ where
$$P_{ij} = \left\{ \begin{array}{ll} 0 &\text{if } (i, j) \in S^2 \\ 2\mu_1 &\text{if } (i, j) \not \in (S \cup S')^2 \\ 2\mu_2 &\text{if } (i, j) \in S \times S' \text{ or } (i, j) \in S' \times S \end{array} \right.$$
\end{enumerate}
This exactly simulates $\pr{tg}_{H_1}\left(n, \frac{1}{2}(3^{\ell - 1} - 1)k, 3^{\ell - 1}k, m, \mu_1, \mu_2, \mu_3\right)$ and shows that it is in the set of distributions $\pr{adv}(\mG(n, k, 1/2 + \nu, 1/2))$. Now consider the adversary that observes a graph $G \sim \mG(n, 1/2)$ and removes every present edge independently with probability $2\mu_1$. This similarly shows that $\pr{tg}_{H_0}(n, m, \mu_1) \in \pr{adv}(\mG(n, 1/2))$. By Lemma \ref{lem:3a} applied to the reduction $\mathcal{A}$, if $\mathcal{B}$ has asymptotic Type I$+$II error less than 1, t follows that there is a randomized polynomial time test for $k\pr{-pds}$ on the sequence of inputs $(N, k', p, q)$ with asymptotic Type I$+$II error less than 1. This contradicts the $k$\pr{-pds} conjecture and proves the theorem.
\end{proof}

An identical analysis and reduction modified to replace the threshold $\frac{1}{2} \mu \cdot 3^{-\ell}$ with $\frac{1}{2} \mu \cdot 3^{-\ell} + \Phi^{-1}(q)$ shows the same statistical-computational gap holds at ambient edge density $q$ instead of $1/2$, as long as $\min\{q, 1 - q \} = \Omega(1)$. The resulting generalization is formally stated below.

\begin{corollary}[Arbitrary Bounded $q$]
Fix any constant $\beta \in [1/2, 1)$. Suppose that $\mathcal{B}$ is a randomized polynomial time test for $\pr{semi-cr}(n, k, p, q)$ for all $(n, k, p, q)$ with $k = \Theta(n^\beta)$ and
$$\frac{(p - q)^2}{q(1 - q)} \ge \overline{\nu} \quad \textnormal{and} \quad \min\{q, 1 - q \} = \Omega(1) \quad \textnormal{where} \quad \overline{\nu} = o\left(\frac{n}{k^2 \log n}\right)$$
Then $\mathcal{B}$ has asymptotic Type I$+$II error at least 1 assuming either the $k$\pr{-pc} conjecture or the $k$\pr{-pds} conjecture for some fixed edge densities $0 < q' < p' \le 1$.
\end{corollary}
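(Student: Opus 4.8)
The plan is to follow the remark immediately preceding the corollary: keep the reduction $k\pr{-pds-to-semi-cr}$ of Figure~\ref{fig:semirandreduction} unchanged through Step~4 and modify only the thresholding in Step~5. Steps~1--4 do not depend on the target ambient density $q$ at all; in particular the Gaussianization mean $\mu = \Theta(1/\sqrt{\log n})$ chosen in Step~2 is a function only of the fixed starting edge densities $0 < q' < p' \le 1$, so the rotated matrix $M_{\textnormal{R}}$ produced after Step~4 is distributed exactly as in the $q = 1/2$ analysis: its entries are independent unit-variance Gaussians with mean $2\mu\cdot 3^{-\ell}$ on $S\times S$, $-\mu\cdot 3^{-\ell}$ on $S\times S'\cup S'\times S$, $\tfrac12\mu\cdot 3^{-\ell}$ on $S'\times S'$, and $0$ elsewhere. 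The only change needed is to pick the threshold so that the ``reference'' block $S'\times S'$ --- the one the semirandom adversary will leave untouched --- maps to edge probability exactly $q$. Concretely, I would threshold $M_{\textnormal{R}}$ at $\tfrac12\mu\cdot 3^{-\ell} + \Phi^{-1}(q)$ in place of $\tfrac12\mu\cdot 3^{-\ell}$ (orienting the comparison so that a larger Gaussian mean yields a higher edge probability), and in the vertex-padding step include each edge incident to a new vertex independently with probability $q$ rather than $1/2$.

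Next I would prove the $q$-version of Lemma~\ref{lem:rotthres}. Since Steps~3--4 are untouched, this amounts only to recomputing the post-threshold Bernoulli parameters: a location of $M_{\textnormal{R}}$ with Gaussian mean $\theta$ becomes an edge with probability $\bar\Phi\!\big(\tfrac12\mu\cdot 3^{-\ell} + \Phi^{-1}(q) - \theta\big)$ where $\bar\Phi = 1-\Phi$. Plugging in the four values of $\theta$ shows that $\mathcal A$ now outputs $\pr{tg}_{H_1}$ (resp.\ $\pr{tg}_{H_0}$) with every occurrence of the ambient value $1/2$ replaced by $q$, and with new signal parameters $\mu_1', \mu_2', \mu_3'$ in place of $\mu_1, \mu_2, \mu_3$. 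By the mean value theorem each $\mu_i'$ equals $\mu\cdot 3^{-\ell}$ times the standard Gaussian density at some point within $O(\mu\cdot 3^{-\ell})$ of $\Phi^{-1}(q)$; since the hypothesis $\min\{q, 1-q\} = \Omega(1)$ forces that density to be $\Omega(1)$, we get $\mu_i' = \Theta(\mu\cdot 3^{-\ell})$, the same order as in the $q = 1/2$ case. Composing this with Lemma~\ref{lem:submatrix} (for Step~1) and Lemma~\ref{lem:gaussianize} (for Step~2) via Lemma~\ref{lem:tvacc}, exactly as in the proof of Theorem~\ref{thm:semicrhardness}, shows that $\mathcal A$ sends $\mG(N, q')$ and $\mG_E(N, k', p', q')$ within $o(1)$ total variation to these two target graph distributions.

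The last step is to realize the two targets by an edge-removing semirandom adversary on $\pr{semi-cr}(n, k, q + \mu_3', q)$. For $H_1$: starting from $\mG(n, k, q + \mu_3', q)$ with planted community $S$, the adversary picks a random $V \supseteq S$ of the appropriate size and a random $S' \subseteq V\setminus S$ of size $3^{\ell-1}k'$, leaves every edge of $S\times S$ and $S'\times S'$ alone, and independently deletes each present edge with probability $\mu_1'/q$ on $V^2\setminus(S\cup S')^2$ and $\mu_2'/q$ on $S\times S'\cup S'\times S$; this thins those densities from $q$ to $q-\mu_1'$ and $q-\mu_2'$, producing exactly the $q$-version of $\pr{tg}_{H_1}$, and every deletion probability lies in $[0,1]$ because $\mu_i' = o(1)$ while $q = \Omega(1)$. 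For $H_0$ the adversary thins each present edge inside a random set $V$ of $\mG(n,q)$ with probability $\mu_1'/q$. Finally a parameter count identical to that in Theorem~\ref{thm:semicrhardness} --- with $N = w k'^2$, $\ell = \lceil \log_3(N^\beta/k')\rceil$, $k = \tfrac12(3^{\ell-1}-1)k'$, and $w = \omega(1)$ growing slowly enough --- gives $k = \Theta(n^\beta)$, $k' = o(\sqrt N)$, $\min\{q,1-q\} = \Omega(1)$, and $\tfrac{(p-q)^2}{q(1-q)} = \tfrac{(\mu_3')^2}{q(1-q)} = \Theta\!\big(\mu^2\cdot 3^{-2\ell}\big) = \Theta\!\big(\tfrac{n}{w k^2 \log n}\big) \ge \overline\nu$; Lemma~\ref{lem:3a} then turns a good test $\mathcal B$ into one for $k\pr{-pds}$ on the sequence $(N, k', p', q')$, contradicting the conjecture.

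The main obstacle I expect is the verification in the second and third paragraphs that the single shifted threshold simultaneously sends $S'\times S'$ to exactly $q$, every ``off-$S$'' block strictly below $q$, and $S\times S$ strictly above $q$ --- i.e.\ that the $q$-perturbed $\pr{tg}_{H_1}$ remains monotone-realizable with all thinning probabilities at most $1$. This is exactly where both halves of $\min\{q,1-q\} = \Omega(1)$ enter: $q = \Omega(1)$ keeps the thinning probabilities $\mu_i'/q$ below $1$ (and keeps $q + \mu_3' < 1$), while $1-q = \Omega(1)$ keeps the Gaussian density at $\Phi^{-1}(q)$ bounded away from $0$ so that $\mu_i'$ stays of order $\mu\cdot 3^{-\ell}$, which is what makes the resulting signal-to-noise ratio large enough to sit below the conjectured recovery threshold.
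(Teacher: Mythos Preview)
Your proposal is correct and follows precisely the approach the paper indicates in its one-sentence remark before the corollary: leave Steps~1--4 untouched, shift the Step~5 threshold by a quantile of the standard normal, pad with density $q$, and rerun the analysis of Theorem~\ref{thm:semicrhardness} verbatim with the new $\mu_i'$, using $\min\{q,1-q\}=\Omega(1)$ exactly where you indicate. One small computational slip: with the threshold $\tfrac12\mu\cdot 3^{-\ell}+\Phi^{-1}(q)$ and the orientation ``edge iff $\ge$ threshold'' (which is indeed the one making edge probability increasing in the Gaussian mean), the $S'\times S'$ block has probability $\Phi(-\Phi^{-1}(q))=1-q$, not $q$; you want $\tfrac12\mu\cdot 3^{-\ell}-\Phi^{-1}(q)$ (equivalently $+\Phi^{-1}(1-q)$), or else simply observe that the hypothesis $\min\{q,1-q\}=\Omega(1)$ is symmetric so landing at ambient density $1-q$ proves the same corollary.
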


%
%
%
%
%

\section{Universality of Lower Bounds for Learning Sparse Mixtures}
\label{sec:universality}

In this section, we combine our reduction to $\pr{isgm}$ from Section \ref{sec:redisgm} with a new gadget performing an algorithmic change of measure in order to obtain a universality principle for computational lower bounds at the sample complexity of $n = \tilde{\Omega}(k^2)$ for learning sparse mixtures. This gadget, symmetric 3-ary rejection kernels, is introduced and analyzed in Section \ref{subsec:srk}. We remark that the $k$-partite promise in $k\pr{-pc}$ and $k\pr{-pds}$ is crucially used in our reduction to obtain this universality. In particular, the $k$-partite promise ensures that the entries of the intermediate $\pr{isgm}$ instance are from one of three distinct distributions, when conditioned on the part of the mixture the sample is from. This is necessary for our application of symmetric 3-ary rejection kernels.

Our lower bounds hold given tail bounds on the likelihood ratios between the planted and noise distributions. In particular, our universality result shows tight computational lower bounds for sparse PCA in the spiked covariance model and a wide range of natural distributional formulations of learning sparse mixtures. The results in this section can also be interpreted as a universality principle for computational lower bounds in sparse PCA. We prove total variation guarantees for our reduction to $\pr{glsm}$ in Section \ref{subsec:universalitybounds} and discuss the universality conditions needed for our lower bounds in Section \ref{subsec:universalitydiscussion}.

\subsection{Symmetric 3-ary Rejection Kernels and Truncating Gaussians}
\label{subsec:srk}

\begin{figure}[t!]
\begin{algbox}
\textbf{Algorithm} \textsc{3-srk}$(B, \mP_+, \mP_-, \mQ)$

\vspace{2mm}

\textit{Parameters}: Input $B \in \{-1, 0, 1\}$, number of iterations $N$, parameters $a \in (0, 1)$ and sufficiently small nonzero $\mu_1, \mu_2 \in \mathbb{R}$, distributions $\mP_+, \mP_-$ and $\mQ$ over a measurable space $(X, \mathcal{B})$ such that $(\mP_+, \mQ)$ and $(\mP_-, \mQ)$ are computable pairs
\begin{enumerate}
\item Initialize $z$ arbitrarily in the support of $\mQ$.
\item Until $z$ is set or $N$ iterations have elapsed:
\begin{enumerate}
\item[(1)] Sample $z' \sim \mQ$ independently and compute the two quantities
$$\mL_1(z') = \frac{d\mP_+}{d\mQ} (z') - \frac{d\mP_-}{d\mQ} (z') \quad \text{and} \quad \mL_2(z') = \frac{d\mP_+}{d\mQ} (z') + \frac{d\mP_-}{d\mQ} (z') - 2$$
\item[(2)] Proceed to the next iteration if it does not hold that
$$2|\mu_1| \ge \left| \mL_1(z') \right| \quad \text{and} \quad \frac{2|\mu_2|}{\max\{a, 1 - a\}} \ge |\mL_2(z')|$$
\item[(3)] Set $z \gets z'$ with probability $P_A(x, B)$ where
$$P_A(x, B) = \frac{1}{2} \cdot \left\{ \begin{array}{ll} 1+ \frac{a}{4\mu_2} \cdot \mL_2(z') + \frac{1}{4\mu_1} \cdot \mL_1(z') &\text{if } B = 1 \\ 1 - \frac{1 - a}{4\mu_2} \cdot \mL_2(z') &\text{if } B = 0 \\ 1+ \frac{1}{4\mu_2} \cdot \mL_2(z') - \frac{a}{4\mu_1} \cdot \mL_1(z') &\text{if } B = -1 \end{array} \right.$$
\end{enumerate}
\item Output $z$.
\end{enumerate}
\vspace{1mm}
\end{algbox}
\caption{3-ary symmetric rejection kernel algorithm.}
\label{fig:srej-kernel}
\end{figure}

In this section, we introduce symmetric 3-ary rejection kernels, which will be the key gadget in our reduction showing universality of lower bounds for learning sparse mixtures. Rejection kernels are a form of an algorithmic change of measure. Rejection kernels mapping a pair of Bernoulli distributions to a target pair of scalar distributions were introduced in \cite{brennan2018reducibility}. These were extended to arbitrary high-dimensional target distributions and applied to obtain universality results for submatrix detection in \cite{brennan2019universality}. A surprising and key feature of both of these rejection kernels is that they are not lossy in mapping one computational barrier to another. For instance, in \cite{brennan2019universality}, multivariate rejection kernels were applied to increase the relative size $k$ of the planted submatrix, faithfully mapping instances tight to the computational barrier at lower $k$ to tight instances at higher $k$. This feature is also true of the scalar rejection kernels applied in \cite{brennan2018reducibility}.

To faithfully map the $k\pr{-pc}$ computational barrier onto the computational barrier of sparse mixtures, it is important to produce multiple planted distributions. Since previous rejection kernels all begin with binary inputs, they do not have enough degrees of freedom to map to three output distributions. The symmetric 3-ary rejection kernels $3\pr{-srk}$ introduced in this section overcome this issue by mapping from distributions supported on $\{-1, 0, 1\}$ to three output distributions $\mP_+, \mP_-$ and $\mQ$. In order to produce clean total variation guarantees, these rejection kernels also exploit symmetry in their three input distributions on $\{-1, 0, 1\}$.

Let $\text{Tern}(a, \mu_1, \mu_2)$ where $a \in (0, 1)$ and $\mu_1, \mu_2 \in \mathbb{R}$ denote the probability distribution on $\{-1, 0, 1\}$ such that if $B \sim \text{Tern}(a, \mu_1, \mu_2)$ then
$$\bP[X = -1] = \frac{1 - a}{2} - \mu_1 + \mu_2, \quad \bP[X = 0] = a - 2\mu_2, \quad \bP[X = 1] = \frac{1 - a}{2} + \mu_1 + \mu_2$$
if all three of these probabilities are nonnegative. The map $3\pr{-srk}(B)$, shown in Figure \ref{fig:srej-kernel}, sends an input $B \in \{-1, 0, 1\}$ to a set $X$ simultaneously satisfying three Markov transition properties:
\begin{enumerate}
\item if $B \sim \text{Tern}(a, \mu_1, \mu_2)$, then $3\textsc{-srk}(B)$ is close to $\mP_+$ in total variation;
\item if $B \sim \text{Tern}(a, -\mu_1, \mu_2)$, then $3\textsc{-srk}(B)$ is close to $\mQ$ in total variation; and
\item if $B \sim \text{Tern}(a, 0, 0)$, then $3\textsc{-srk}(B)$ is close to $\mP_-$ in total variation.
\end{enumerate}
In order to state our main results for $3\pr{-srk}(B)$, we will need the notion of computable pairs from \cite{brennan2019universality}. The definition below is that given in \cite{brennan2019universality}, without the assumption of finiteness of KL divergences. This assumption was convenient for the Chernoff exponent analysis needed for multivariate rejection kernels in \cite{brennan2019universality}. Since our rejection kernels are univariate, we will be able to state our universality conditions directly in terms of tail bounds rather than Chernoff exponents.

\begin{definition}[Relaxed Computable Pair \cite{brennan2019universality}] \label{def:computable}
Define a pair of sequences of distributions $(\mP, \mQ)$ over a measurable space $(X, \mathcal{B})$ where $\mP = (\mP_n)$ and $\mQ = (\mQ_n)$ to be computable if:
\begin{enumerate}
\item there is an oracle producing a sample from $\mQ_n$ in $\textnormal{poly}(n)$ time;
\item $\mP_n$ and $\mQ_n$ are mutually absolutely continuous and the likelihood ratio satisfies
$$\bE_{x \sim \mQ} \left[\frac{d\mP}{d\mQ}(x) \right] = \bE_{x \sim \mP}\left[\left( \frac{d\mP}{d\mQ}(x) \right)^{-1} \right] = 1$$
where $\frac{d\mP_n}{d\mQ_n}$ is the Radon-Nikodym derivative; and
\item there is an oracle computing $\frac{d\mP_n}{d\mQ_n} (x)$ in $\textnormal{poly}(n)$ time for each $x \in X$.
\end{enumerate}
\end{definition}

We remark that the second condition above always holds for discrete distributions and generally for most well-behaved distributions $\mP$ and $\mQ$. We now prove our main total variation guarantees for $3\pr{-srk}$. The proof of the next lemma follows a similar structure to the analysis of rejection sampling as in Lemma 5.1 of \cite{brennan2018reducibility} and Lemma 5.1 of \cite{brennan2019universality}. However, the bounds that we obtain are different than those in \cite{brennan2018reducibility, brennan2019universality} because of the symmetry of the three input $\text{Tern}$ distributions.

\begin{lemma}[Symmetric 3-ary Rejection Kernels] \label{lem:srk}
Let $a \in (0, 1)$ and $\mu_1, \mu_2 \in \mathbb{R}$ be nonzero and such that $\textnormal{Tern}(a, \mu_1, \mu_2)$ is well-defined. Let $\mP_+, \mP_-$ and $\mQ$ be distributions over a measurable space $(X, \mathcal{B})$ such that $(\mP_+, \mQ)$ and $(\mP_-, \mQ)$ are computable pairs with respect to a parameter $n$. Let $S \subseteq X$ be the set
$$S = \left\{x \in X : 2|\mu_1| \ge \left| \frac{d\mP_+}{d\mQ} (x) - \frac{d\mP_-}{d\mQ} (x) \right| \quad \textnormal{and} \quad \frac{2|\mu_2|}{\max\{a, 1 - a\}} \ge \left|\frac{d\mP_+}{d\mQ} (x) + \frac{d\mP_-}{d\mQ} (x) - 2 \right| \right\}$$
Given a positive integer $N$, then the algorithm $3\textsc{-srk} : \{-1, 0, 1\} \to X$ can be computed in $\textnormal{poly}(n, N)$ time and satisfies that
$$\left. \begin{array}{r} \TV\left( 3\textsc{-srk}(\textnormal{Tern}(a, \mu_1, \mu_2)), \mP_+ \right) \\ \TV\left( 3\textsc{-srk}(\textnormal{Tern}(a, -\mu_1, \mu_2)), \mP_- \right) \\\TV\left( 3\textsc{-srk}(\textnormal{Tern}(a, 0, 0)), \mQ \right) \end{array} \right\} \le 2\delta \left(1 + |\mu_1|^{-1} + |\mu_2|^{-1} \right) + \left( \frac{1}{2} + \delta \left( 1 + |\mu_1|^{-1} + |\mu_2|^{-1} \right) \right)^N$$
where $\delta > 0$ is such that $\bP_{X \sim \mP_+}[X \not \in S]$, $\bP_{X \sim \mP_-}[X \not \in S]$ and $\bP_{X \sim \mQ}[X \not \in S]$ are upper bounded by $\delta$.
\end{lemma}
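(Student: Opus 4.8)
The plan is to mimic the standard rejection-sampling analysis used for the $2$-ary rejection kernels in \cite{brennan2018reducibility} and \cite{brennan2019universality}, carefully tracking how the symmetry of the three $\textnormal{Tern}$ input distributions makes the $\mu$-dependent error terms cancel in exactly the right way. First I would observe that, conditioned on the event that $z$ gets set in a given iteration, the output of a single iteration of $3\textsc{-srk}(B)$ is distributed as $\mathcal{Q}$ restricted to $S$ and reweighted by $P_A(x,B)/\bE_{z'\sim\mathcal{Q}}[P_A(z',B)\mathbf{1}_{z'\in S}]$. So the first step is to compute the acceptance probability $\alpha_B := \bE_{z'\sim\mathcal{Q}}[P_A(z',B)\mathbf{1}_{z'\in S}]$ for each $B\in\{-1,0,1\}$ and weighting $(a,\mu_1,\mu_2)$, $(a,-\mu_1,\mu_2)$, $(a,0,0)$. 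Using the computable-pair identities $\bE_{z'\sim\mathcal{Q}}[\mathcal{L}_1(z')]=0$ and $\bE_{z'\sim\mathcal{Q}}[\mathcal{L}_2(z')]=0$ over all of $X$ (which follow from $\bE_{\mathcal{Q}}[d\mathcal{P}_\pm/d\mathcal{Q}]=1$), together with the bound $\delta$ on the mass outside $S$, these expectations are all $\tfrac12+O(\delta(1+|\mu_1|^{-1}+|\mu_2|^{-1}))$; the key is that the terms $\tfrac{a}{4\mu_2}\mathcal{L}_2$, $\tfrac{1}{4\mu_1}\mathcal{L}_1$, etc.\ integrate to $O(\delta|\mu_i|^{-1})$ rather than to something of order $1$.

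\textbf{Main computation.} The heart of the argument is showing that the \emph{reweighted} conditional law matches the target. For this I would compute, for each target $\mathcal{P}_+,\mathcal{P}_-,\mathcal{Q}$, the corresponding mixture of the per-iteration output laws over $B\sim\textnormal{Tern}$. Writing $\pi_{-1},\pi_0,\pi_1$ for the $\textnormal{Tern}$ probabilities, the density (w.r.t.\ $\mathcal{Q}$) of the conditional output is proportional to $\sum_{B} \pi_B P_A(x,B)$ on $S$. Plugging in $P_A$ and the three $\textnormal{Tern}$ parameterizations, I expect the coefficients to collapse: for $\textnormal{Tern}(a,\mu_1,\mu_2)$ the $\mathcal{L}_1$ and $\mathcal{L}_2$ contributions should combine to reconstruct $d\mathcal{P}_+/d\mathcal{Q}(x)$ up to normalization; for $\textnormal{Tern}(a,0,0)$ (which is $\tfrac{1-a}{2},a,\tfrac{1-a}{2}$) the $\mathcal{L}_1$ terms cancel by the $B=1$/$B=-1$ symmetry and the $\mathcal{L}_2$ terms telescope to give $\tfrac12(d\mathcal{P}_++d\mathcal{P}_-)/d\mathcal{Q}(x) - (\text{const})$, which after subtracting off and renormalizing should be a multiple of $\mathcal{L}_2$ that reproduces $d\mathcal{P}_-/d\mathcal{Q}$ or $d\mathcal{Q}/d\mathcal{Q}$; and the middle case is symmetric. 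The bookkeeping here will be somewhat intricate because there are three linear functionals ($1$, $\mathcal{L}_1$, $\mathcal{L}_2$) and three inputs, so I would set it up as a $3\times 3$ linear system and verify the coefficient matrix inverts to the identity on the nose (this is presumably how $P_A$ and the $\textnormal{Tern}$ parameters were reverse-engineered in the first place). Once the conditional law is shown to equal the target restricted to $S$ and renormalized, Fact~\ref{tvfacts} (conditioning on the event $z'\in S$) gives that this conditional law is within $\delta/\mathbb{P}[z'\in S] = O(\delta(1+|\mu_1|^{-1}+|\mu_2|^{-1}))$ of the full target.

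\textbf{Handling the iteration count and the failure mode.} The remaining ingredient is the geometric tail: with probability at most $(1-\alpha_B)^N$ the loop exhausts all $N$ iterations without setting $z$, in which case the output is the arbitrary initialization and contributes at most $(1-\alpha_B)^N$ to the total variation distance. Since $\alpha_B \ge \tfrac12 - \delta(1+|\mu_1|^{-1}+|\mu_2|^{-1})$ from the first step, $(1-\alpha_B)^N \le (\tfrac12 + \delta(1+|\mu_1|^{-1}+|\mu_2|^{-1}))^N$, which is the second term in the claimed bound. Combining via the triangle inequality — one $O(\delta(1+|\mu_1|^{-1}+|\mu_2|^{-1}))$ term from the restriction-to-$S$ step applied to each target, one from the acceptance-probability estimate feeding into the conditional-law normalization, and the geometric tail term — yields the stated $2\delta(1+|\mu_1|^{-1}+|\mu_2|^{-1}) + (\tfrac12+\delta(1+|\mu_1|^{-1}+|\mu_2|^{-1}))^N$ bound. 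The $\textnormal{poly}(n,N)$ runtime is immediate: each iteration makes one oracle call to sample $\mathcal{Q}$ and two oracle calls for the Radon--Nikodym derivatives, all $\textnormal{poly}(n)$, and there are at most $N$ iterations.

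\textbf{Main obstacle.} The genuinely delicate part is the exact algebraic verification that the three acceptance-probability functions $P_A(x,\pm1),P_A(x,0)$, when averaged against the three $\textnormal{Tern}$ weight vectors, reproduce $d\mathcal{P}_+/d\mathcal{Q}$, $d\mathcal{P}_-/d\mathcal{Q}$, and $1$ respectively (each up to the same multiplicative normalization $\alpha_B$), rather than merely approximately. This requires the $\mathcal{L}_1$-coefficients to cancel in two of the three mixtures and the $\mathcal{L}_2$-coefficients to combine with the constant terms to give the right affine function of $d\mathcal{P}_\pm/d\mathcal{Q}$ in all three — a constraint that pins down both the coefficients $\tfrac{a}{4\mu_2},\tfrac{1}{4\mu_1},\tfrac{1-a}{4\mu_2},\tfrac{a}{4\mu_1}$ in $P_A$ and the asymmetric role of the parameter $a$. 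Everything else (the computable-pair identities, Pinsker/triangle-inequality plumbing, the geometric tail) is routine and parallels \cite{brennan2018reducibility, brennan2019universality}.
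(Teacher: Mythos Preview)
Your proposal is essentially correct and mirrors the paper's proof: compute the per-$B$ acceptance probabilities $\alpha_B \approx \tfrac12$ using $\bE_{\mQ}\mL_1=\bE_{\mQ}\mL_2=0$, verify the exact algebraic identity that the $\textnormal{Tern}$-weighted combination of the unnormalized acceptance densities $2P_A(x,B)$ reproduces $d\mP_+/d\mQ$, $d\mP_-/d\mQ$, and $1$ respectively, bound the geometric tail, and combine by the triangle inequality. One small slip to watch when you write it out: the $B$-mixed conditional output density is $\sum_B \pi_B P_A(x,B)/\alpha_B$ on $S$, not proportional to $\sum_B \pi_B P_A(x,B)$, because the normalizers $\alpha_B$ differ across $B$; the paper handles this by bounding $\bigl|\tfrac{1}{2\alpha_B}\,2P_A(x,B)\mathbf{1}_S - 2P_A(x,B)\bigr|$ in $L^1(\mQ)$ for each $B$ separately (this is where both the normalization error and the integral over $S^C$ pick up the $|\mu_i|^{-1}$ factors), and only then invokes the exact identity on the unnormalized $2P_A$.
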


\begin{proof}
Define $\mL_1, \mL_2 : X \to \mathbb{R}$ to be
$$\mL_1(x) = \frac{d\mP_+}{d\mQ} (x) - \frac{d\mP_-}{d\mQ} (x) \quad \text{and} \quad \mL_2(x) = \frac{d\mP_+}{d\mQ} (x) + \frac{d\mP_-}{d\mQ} (x) - 2$$
Note that if $x \in S$, then the triangle inequality implies that
\begin{align*}
P_A(x, 1) &\le \frac{1}{2} \left( 1 + \frac{a}{4|\mu_2|} \cdot |\mL_2(x)| + \frac{1}{4|\mu_1|} \cdot |\mL_1(x)| \right) \le 1 \\
P_A(x, 1) &\ge \frac{1}{2} \left( 1 - \frac{a}{4|\mu_2|} \cdot |\mL_2(x)| - \frac{1}{4|\mu_1|} \cdot |\mL_1(x)| \right) \ge 0
\end{align*}
Similar computations show that $0 \le P_A(x, 0) \le 1$ and $0 \le P_A(x, -1) \le 1$, implying that each of these probabilities is well-defined. Now let $R_1 = \bP_{X \sim \mP_+}[X \in S]$, $R_0 = \bP_{X \sim \mQ}[X \in S]$ and $R_{-1} = \bP_{X \sim \mP_-}[X \in S]$ where $R_1, R_0, R_{-1} \ge 1 - \delta$ by assumption.

We now define several useful events. For the sake of analysis, consider continuing to iterate Step 2 even after $z$ is set for the first time for a total of $N$ iterations. Let $A_i^1$, $A_i^0$ and $A_i^{-1}$ be the events that $z$ is set in the $i$th iteration of Step 2 when $B = 1$, $B = 0$ and $B = -1$, respectively. Let $B_i^1 = (A_1^1)^C \cap (A_2^1)^C \cap \cdots \cap (A^1_{i - 1})^C \cap A_i^1$ be the event that $z$ is set for the first time in the $i$th iteration of Step 2. Let $C^1 = A_1^1 \cup A_2^1 \cup \cdots \cup A_N^1$ be the event that $z$ is set in some iteration of Step 2. Define $B_i^0$, $C^0$, $B_i^{-1}$ and $C^{-1}$ analogously. Let $z_0$ be the initialization of $z$ in Step 1.

Now let $Z_1 \sim \mD_1 = \mL(3\textsc{-srk}(1))$, $Z_0 \sim \mD_0 = \mL(3\textsc{-srk}(0))$ and $Z_{-1} \sim \mD_{-1} = \mL(3\textsc{-srk}(-1))$. Note that $\mL(Z_t|B_i^t) = \mL(Z_t|A_i^t)$ for each $t \in \{-1, 0, 1\}$ since $A_i^t$ is independent of $A_1^t, A_2^t, \dots, A_{i-1}^t$ and the sample $z'$ chosen in the $i$th iteration of Step 2. The independence between Steps 2.1 and 2.3 implies that
\allowdisplaybreaks
\begin{align*}
\bP\left[A_i^1\right] &= \bE_{x \sim \mQ}\left[ \frac{1}{2} \left( 1+ \frac{a}{4\mu_2} \cdot \mL_2(x) + \frac{1}{4\mu_1} \cdot \mL_1(x) \right) \cdot \mathbf{1}_{S}(x) \right] \\
&= \frac{1}{2} R_0 + \frac{a}{8\mu_2} \left( R_1 + R_{-1} - 2R_0 \right) + \frac{1}{8\mu_1} \left( R_1 - R_{-1} \right) \ge \frac{1}{2} - \frac{\delta}{2} \left( 1 + \frac{a}{2}|\mu_2|^{-1} + \frac{1}{4}|\mu_1|^{-1} \right) \\
\bP\left[A_i^0 \right] &= \bE_{x \sim \mQ}\left[ \frac{1}{2} \left( 1 - \frac{1 - a}{4\mu_2} \cdot \mL_2(x) \right) \cdot \mathbf{1}_{S}(x) \right] \\
&= \frac{1}{2} R_0 - \frac{1 - a}{8\mu_2} \left( R_1 + R_{-1} - 2R_0 \right) \ge \frac{1}{2} - \frac{\delta}{2} \left( 1 + \frac{1 - a}{4} \cdot |\mu_2|^{-1} \right) \\
\bP\left[A_i^{-1}\right] &= \bE_{x \sim \mQ}\left[ \frac{1}{2} \left( 1+ \frac{a}{4\mu_2} \cdot \mL_2(x) - \frac{1}{4\mu_1} \cdot \mL_1(x) \right) \cdot \mathbf{1}_{S}(x) \right] \\
&= \frac{1}{2} R_0 + \frac{a}{8\mu_2} \left( R_1 + R_{-1} - 2R_0 \right) - \frac{1}{4\mu_1} \left( R_1 - R_{-1} \right) \ge \frac{1}{2} - \frac{\delta}{2} \left( 1 + \frac{a}{2}|\mu_2|^{-1} + \frac{1}{4}|\mu_1|^{-1} \right)
\end{align*}
The independence of the $A_i^t$ for each $t \in \{-1, 0, 1\}$ implies that
$$1 - \bP\left[ C^t \right] = \prod_{i = 1}^N \left( 1 - \bP\left[A_i^t\right] \right) \le \left( \frac{1}{2} + \frac{\delta}{2} \left( 1 + \frac{1}{2}|\mu_2|^{-1} + |\mu_1|^{-1} \right) \right)^N$$
Note that $\mL(Z_t|A_i^t)$ are each absolutely continuous with respect to $\mQ$ or each $t \in \{-1, 0, 1\}$, with Radon-Nikodym derivatives given by
\allowdisplaybreaks
\begin{align*}
\frac{d\mL(Z_1|B_i^1)}{d\mQ} (x) = \frac{d\mL(Z_1|A_i^1)}{d\mQ} (x) &= \frac{1}{2\cdot \bP\left[A_i^1\right]} \left( 1+ \frac{a}{4\mu_2} \cdot \mL_2(x) + \frac{1}{4\mu_1} \cdot \mL_1(x) \right) \cdot \mathbf{1}_S(x) \\
\frac{d\mL(Z_0|B_i^0)}{d\mQ} (x) = \frac{d\mL(Z_0|A_i^0)}{d\mQ} (x) &= \frac{1}{2\cdot \bP\left[A_i^1\right]} \left( 1 - \frac{1 - a}{4\mu_2} \cdot \mL_2(x) \right) \cdot \mathbf{1}_S(x) \\
\frac{d\mL(Z_{-1}|B_i^{-1})}{d\mQ} (x) = \frac{d\mL(Z_{-1}|A_i^{-1})}{d\mQ} (x) &= \frac{1}{2\cdot \bP\left[A_i^1\right]} \left( 1+ \frac{a}{4\mu_2} \cdot \mL_2(x) - \frac{1}{4\mu_1} \cdot \mL_1(x) \right) \cdot \mathbf{1}_S(x)
\end{align*}
Fix one of $t \in \{-1, 0, 1\}$ and note that since the conditional laws $\mL(Z_t|B_i^t)$ are all identical, we have that
$$\frac{d\mD_t}{d\mQ} (x) = \bP\left[C^t \right] \cdot \frac{d\mL(Z_t|B_1^t)}{d\mQ} (x) + \left( 1 - \bP\left[C^t \right] \right) \cdot \mathbf{1}_{z_0}(x)$$
Therefore it follows that
\begin{align*}
\TV\left( \mD_t, \mL(Z_t|B_1^t) \right) &= \frac{1}{2} \cdot \bE_{x \sim \mQ} \left[\left| \frac{d\mD_t}{d\mQ} (x) - \frac{d\mL(Z_t|B_1^t)}{d\mQ} (x) \right| \right] \\
&\le \frac{1}{2} \left( 1 - \bP\left[ C^t \right] \right) \cdot \bE_{x \sim \mQ} \left[ \mathbf{1}_{z_0}(x) + \frac{d\mL(Z_t|B_1^t)}{d\mQ} (x) \right] = 1 - \bP\left[ C^t \right]
\end{align*}
by the triangle inequality. Since $1+ \frac{a}{4\mu_2} \cdot \mL_2(x) + \frac{1}{4\mu_1} \cdot \mL_1(x) \ge 0$ for $x \in S$, we have that
\allowdisplaybreaks
\begin{align*}
&\bE_{x \sim \mQ} \left[\left| \frac{d\mL(Z_1|B_1^1)}{d\mQ} (x) - \left( 1+ \frac{a}{4\mu_2} \cdot \mL_2(x) + \frac{1}{4\mu_1} \cdot \mL_1(x) \right) \right| \right] \\ 
&\quad \quad = \left|\frac{1}{2\cdot \bP\left[A_i^1\right]} - 1 \right| \cdot \bE_{x \sim \mQ^*_n} \left[\left( 1+ \frac{a}{4\mu_2} \cdot \mL_2(x) + \frac{1}{4\mu_1} \cdot \mL_1(x) \right) \cdot \mathbf{1}_S(x) \right] \\
&\quad \quad \quad \quad + \bE_{x \sim \mQ} \left[ \left| 1+ \frac{a}{4\mu_2} \cdot \mL_2(x) + \frac{1}{4|\mu_1|} \cdot \mL_1(x) \right| \cdot \mathbf{1}_{S^C}(x) \right] \\
&\quad \quad \le \left| \frac{1}{2} - \bP[A_i^1] \right| + \bE_{x \sim \mQ} \left[ \left( 1+ \frac{a}{4|\mu_2|} \cdot \left( \frac{d\mP_+}{d\mQ} (x) + \frac{d\mP_-}{d\mQ} (x) +2 \right) \right) \cdot \mathbf{1}_{S^C}(x) \right] \\
&\quad \quad \quad \quad + \bE_{x \sim \mQ} \left[ \frac{1}{4|\mu_1|} \cdot \left( \frac{d\mP_+}{d\mQ} (x) + \frac{d\mP_-}{d\mQ} (x) \right) \cdot \mathbf{1}_{S^C}(x) \right] \\
&\quad \quad \le \frac{\delta}{2} \left( 1 + \frac{a}{2}|\mu_2|^{-1} + \frac{1}{4}|\mu_1|^{-1} \right) + \delta \left( 1 + a|\mu_2|^{-1} + \frac{1}{2}|\mu_1|^{-1} \right) = \delta \left( \frac{3}{2} + \frac{5}{4} |\mu_2|^{-1} + \frac{5}{8} |\mu_1|^{-1} \right)
\end{align*}
By analogous computations, we have that
\allowdisplaybreaks
\begin{align*}
\bE_{x \sim \mQ} \left[\left| \frac{d\mL(Z_0|B_1^0)}{d\mQ} (x) - \left( 1 - \frac{1 - a}{4\mu_2} \cdot \mL_2(x) \right) \right| \right] &\le 2\delta \left(1 + |\mu_1|^{-1} + |\mu_2|^{-1} \right) \\ 
\bE_{x \sim \mQ} \left[\left| \frac{d\mL(Z_{-1}|B_1^{-1})}{d\mQ} (x) - \left( 1+ \frac{a}{4\mu_2} \cdot \mL_2(x) - \frac{1}{4\mu_1} \cdot \mL_1(x) \right) \right| \right] &\le 2\delta \left(1 + |\mu_1|^{-1} + |\mu_2|^{-1} \right) 
\end{align*}
Now observe that
\allowdisplaybreaks
\begin{align*}
\frac{d\mP_+}{d\mQ}(x) &= \left( \frac{1 - a}{2} + \mu_1 + \mu_2 \right) \cdot \left( 1+ \frac{a}{4\mu_2} \cdot \mL_2(x) + \frac{1}{4\mu_1} \cdot \mL_1(x) \right) + (a - 2\mu_2) \cdot \left( 1 - \frac{1 - a}{4\mu_2} \cdot \mL_2(x) \right) \\
&\quad \quad + \left( \frac{1 - a}{2} - \mu_1 + \mu_2 \right) \cdot \left( 1+ \frac{a}{4\mu_2} \cdot \mL_2(x) - \frac{1}{4\mu_1} \cdot \mL_1(x) \right) \\
1 &= \frac{1 - a}{2} \cdot \left( 1+ \frac{a}{4\mu_2} \cdot \mL_2(x) + \frac{1}{4\mu_1} \cdot \mL_1(x) \right) + a \cdot \left( 1 - \frac{1 - a}{4\mu_2} \cdot \mL_2(x) \right) \\
&\quad \quad +\frac{1 - a}{2} \cdot \left( 1+ \frac{a}{4\mu_2} \cdot \mL_2(x) - \frac{1}{4\mu_1} \cdot \mL_1(x) \right) \\
\frac{d\mP_-}{d\mQ}(x) &= \left( \frac{1 - a}{2} - \mu_1 + \mu_2 \right) \cdot \left( 1+ \frac{a}{4\mu_2} \cdot \mL_2(x) + \frac{1}{4\mu_1} \cdot \mL_1(x) \right) + (a - 2\mu_2) \cdot \left( 1 - \frac{1 - a}{4\mu_2} \cdot \mL_2(x) \right) \\
&\quad \quad + \left( \frac{1 - a}{2} + \mu_1 + \mu_2 \right) \cdot \left( 1+ \frac{a}{4\mu_2} \cdot \mL_2(x) - \frac{1}{4\mu_1} \cdot \mL_1(x) \right)
\end{align*}
Let $\mD^*$ be the mixture of $\mL(Z_1 | B_1^1), \mL(Z_0 | B_1^0)$ and $\mL(Z_{-1} | B_1^{-1})$ with weights $\frac{1 - a}{2} + \mu_1 + \mu_2, a - 2\mu_2$ and $\frac{1 - a}{2} - \mu_1 + \mu_2$, respectively. It then follows by the triangle inequality that
\allowdisplaybreaks
\begin{align*}
&\TV\left( 3\textsc{-srk}(\textnormal{Tern}(a, \mu_1, \mu_2)), \mP_+ \right) \\
&\quad \quad \le \TV\left( \mD^*, \mP_+ \right) + \TV\left( \mD^*, 3\textsc{-srk}(\textnormal{Tern}(a, \mu_1, \mu_2)) \right) \\
&\quad \quad \le \left( \frac{1 - a}{2} + \mu_1 + \mu_2 \right) \cdot \bE_{x \sim \mQ} \left[\left| \frac{d\mL(Z_1|B_1^1)}{d\mQ} (x) - \left( 1+ \frac{a}{4\mu_2} \cdot \mL_2(x) + \frac{1}{4\mu_1} \cdot \mL_1(x) \right) \right| \right] \\
&\quad \quad \quad \quad + \left( a - 2\mu_2 \right) \cdot \bE_{x \sim \mQ} \left[\left| \frac{d\mL(Z_0|B_1^0)}{d\mQ} (x) - \left( 1 - \frac{1 - a}{4\mu_2} \cdot \mL_2(x) \right) \right| \right] \\
&\quad \quad \quad \quad + \left( \frac{1 - a}{2} - \mu_1 + \mu_2 \right) \cdot \bE_{x \sim \mQ} \left[\left| \frac{d\mL(Z_{-1}|B_1^{-1})}{d\mQ} (x) - \left( 1+ \frac{a}{4\mu_2} \cdot \mL_2(x) - \frac{1}{4\mu_1} \cdot \mL_1(x) \right) \right| \right] \\
&\quad \quad \quad \quad + \left( \frac{1 - a}{2} + \mu_1 + \mu_2 \right) \cdot \TV\left( \mD_1, \mL(Z_1|B_1^1) \right)  + \left( a - 2\mu_2 \right) \cdot \TV\left( \mD_1, \mL(Z_0|B_1^0) \right) \\
&\quad \quad \quad \quad + \left( \frac{1 - a}{2} - \mu_1 + \mu_2 \right) \cdot \TV\left( \mD_{-1}, \mL(Z_{-1}|B_1^{-1}) \right) \\
&\quad \quad \le 2\delta \left(1 + |\mu_1|^{-1} + |\mu_2|^{-1} \right) + \left( \frac{1}{2} + \delta \left( 1 + |\mu_1|^{-1} + |\mu_2|^{-1} \right) \right)^N
\end{align*}
A symmetric argument shows analogous upper bounds on $\TV\left( 3\textsc{-srk}(\textnormal{Tern}(a, -\mu_1, \mu_2)), \mP_- \right)$ and $\TV\left( 3\textsc{-srk}(\textnormal{Tern}(a, 0, 0)), \mQ \right)$. This completes the proof of the lemma.
\end{proof}

In our reduction showing universality, we will truncate Gaussians to generate the input distributions $\text{Tern}$. Let $\pr{tr}_{\tau} : \mathbb{R} \to \{-1, 0, 1\}$ be the following truncation map
$$\pr{tr}_{\tau}(x) = \left\{ \begin{array}{ll} 1 &\text{if } x > |\tau| \\ 0 &\text{if } -|\tau| \le x \le |\tau| \\ -1 &\text{if } x < -|\tau| \end{array} \right.$$
We conclude this section with the following simple lemma on truncating symmetric triples of Gaussian distributions.

\begin{lemma}[Truncating Gaussians] \label{lem:truncgauss}
Let $\tau > 0$ be constant, $\mu > 0$ be tending to zero and let $a, \mu_1, \mu_2$ be such that
\begin{align*}
&\pr{tr}_\tau(\mN(\mu, 1)) \sim \textnormal{Tern}(a, \mu_1, \mu_2) \\
&\pr{tr}_\tau(\mN(-\mu, 1)) \sim \textnormal{Tern}(a, -\mu_1, \mu_2) \\
&\pr{tr}_\tau(\mN(0, 1)) \sim \textnormal{Tern}(a, 0, 0)
\end{align*}
Then it follows that $a > 0$ is constant, $0 < \mu_1 = \Theta(\mu)$ and $0 < \mu_2 = \Theta(\mu^2)$.
\end{lemma}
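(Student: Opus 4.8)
The plan is to compute every probability in closed form in terms of the standard normal CDF $\Phi$ and then Taylor-expand around the fixed point $\tau$. For $m\in\{-\mu,0,\mu\}$ write out the law of $\pr{tr}_\tau(\mN(m,1))$:
$$\bP[\pr{tr}_\tau(\mN(m,1))=1]=1-\Phi(\tau-m),\qquad \bP[\pr{tr}_\tau(\mN(m,1))=-1]=1-\Phi(\tau+m),$$
and $\bP[\pr{tr}_\tau(\mN(m,1))=0]=\Phi(\tau-m)+\Phi(\tau+m)-1$, using $\Phi(-x)=1-\Phi(x)$. Taking $m=0$ and matching against $\textnormal{Tern}(a,0,0)$ forces $\tfrac{1-a}{2}=1-\Phi(\tau)$, i.e. $a=2\Phi(\tau)-1$; since $\tau>0$ is a fixed constant this is a constant in $(0,1)$, which simultaneously pins down $a$ and shows the three matching conditions use a consistent value of $a$.

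Next I would solve for $\mu_1,\mu_2$. Matching the masses at $1$ and $-1$ of $\pr{tr}_\tau(\mN(\mu,1))$ with those of $\textnormal{Tern}(a,\mu_1,\mu_2)$ and substituting $\tfrac{1-a}{2}=1-\Phi(\tau)$ gives $\mu_1+\mu_2=\Phi(\tau)-\Phi(\tau-\mu)$ and $-\mu_1+\mu_2=\Phi(\tau)-\Phi(\tau+\mu)$, hence
$$\mu_1=\frac{\Phi(\tau+\mu)-\Phi(\tau-\mu)}{2},\qquad \mu_2=\Phi(\tau)-\frac{\Phi(\tau-\mu)+\Phi(\tau+\mu)}{2}.$$
Because the masses of $\textnormal{Tern}$ and of $\pr{tr}_\tau(\mN(\cdot,1))$ each sum to $1$, matching two of the three masses forces the third, so the mass-at-$0$ equation $a-2\mu_2=\Phi(\tau-\mu)+\Phi(\tau+\mu)-1$ holds automatically, and the identity for $\pr{tr}_\tau(\mN(-\mu,1))\sim\textnormal{Tern}(a,-\mu_1,\mu_2)$ follows by the same computation with $\mu\mapsto-\mu$. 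Thus the three defining identities of the lemma are satisfied by $(a,\mu_1,\mu_2)$ as above, and this triple is unique.

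Finally I would Taylor-expand: with $\varphi$ the standard normal density, $\Phi'(\tau)=\varphi(\tau)$ and $\Phi''(\tau)=-\tau\varphi(\tau)$, so $\mu_1=\mu\,\varphi(\tau)+O(\mu^3)$ and $\mu_2=\tfrac12\tau\varphi(\tau)\,\mu^2+O(\mu^4)$. Since $\tau>0$ is constant, $\varphi(\tau)>0$ and $\tau\varphi(\tau)>0$ are positive constants, which yields $0<\mu_1=\Theta(\mu)$ and $0<\mu_2=\Theta(\mu^2)$ for all sufficiently small $\mu$; positivity of $\mu_2$ can also be read off directly from the strict concavity of $\Phi$ on $(0,\infty)$. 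One last remark is that $\textnormal{Tern}(a,\mu_1,\mu_2)$ is genuinely well-defined for small $\mu$: the three masses $a-2\mu_2$ and $\tfrac{1-a}{2}\pm\mu_1+\mu_2$ are all positive because $a\in(0,1)$ is bounded away from $0$ and $1$ while $\mu_1,\mu_2\to 0$. There is no genuine obstacle in this lemma; the only point requiring care is bookkeeping of signs and confirming that the single scalar $a$ and the signed perturbation $\pm\mu_1$ are consistently determined by the three matching conditions.
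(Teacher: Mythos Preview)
Your proposal is correct and takes essentially the same approach as the paper: both derive the identical closed-form expressions $a=2\Phi(\tau)-1$, $\mu_1=\tfrac12(\Phi(\tau+\mu)-\Phi(\tau-\mu))$, $\mu_2=\Phi(\tau)-\tfrac12(\Phi(\tau+\mu)+\Phi(\tau-\mu))$ and then extract the asymptotics. The only cosmetic difference is that the paper reads off the orders via short integral estimates of $\varphi$ over $[\tau-\mu,\tau+\mu]$, whereas you use an explicit Taylor expansion of $\Phi$ at $\tau$; both arguments are equivalent and equally valid.
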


\begin{proof}
The parameters $a, \mu_1, \mu_2$ for which these distributional statements are true are given by
\allowdisplaybreaks
\begin{align*}
a &= \Phi(\tau) - \Phi(-\tau) \\
\mu_1 &= \frac{1}{2} \left( (1 - \Phi(\tau - \mu)) - \Phi(-\tau - \mu) \right) = \frac{1}{2} \left( \Phi(\tau + \mu) - \Phi(\tau - \mu) \right) \\
\mu_2 &= \frac{1}{2} \left( \Phi(\tau) - \Phi(-\tau) \right) - \frac{1}{2} \left( \Phi(\tau + \mu) - \Phi(-\tau + \mu) \right) = \frac{1}{2} \left( 2 \cdot \Phi(\tau) - \Phi(\tau + \mu) - \Phi(\tau - \mu) \right)
\end{align*}
Now note that
$$\mu_1 = \frac{1}{2} \left( \Phi(\tau + \mu) - \Phi(\tau - \mu) \right) = \frac{1}{2\sqrt{2\pi}} \int_{\tau - \mu}^{\tau + \mu} e^{-t^2/2} dt = \Theta(\mu)$$
and is positive since $e^{-t^2/2}$ is bounded on $[\tau - \mu, \tau + \mu]$ as $\tau$ is constant and $\mu \to 0$. Furthermore, note that
\begin{align*}
\mu_2 &= \frac{1}{2} \left( 2 \cdot \Phi(\tau) - \Phi(\tau + \mu) - \Phi(\tau - \mu) \right) = \frac{1}{2\sqrt{2\pi}} \int_{\tau - \mu}^{\tau} e^{-t^2/2} dt - \frac{1}{2\sqrt{2\pi}} \int_{\tau}^{\tau + \mu} e^{-t^2/2} dt \\
&= \frac{1}{2\sqrt{2\pi}} \int_{\tau}^{\tau + \mu} \left( e^{-(t - \mu)^2/2} - e^{-t^2/2}\right) dt = \frac{1}{2\sqrt{2\pi}} \int_{\tau}^{\tau + \mu}  e^{-t^2/2} \left(e^{t\mu - \mu^2/2} - 1 \right) dt 
\end{align*}
Now note that as $\mu \to 0$ and for $t \in [\tau, \tau + \mu]$, it follows that $0 < e^{t\mu - \mu^2/2} - 1= \Theta(\mu)$. This implies that $0 < \mu_2 = \Theta(\mu^2)$, as claimed.
\end{proof}

\subsection{Universality of the $n = \tilde{\Theta}(k^2)$ Computational Barrier in Sparse Mixtures}
\label{subsec:universalitybounds}

\begin{figure}[t!]
\begin{algbox}
\textbf{Algorithm} $k$\textsc{-pds-to-glsm}

\vspace{1mm}

\textit{Inputs}: $k$\pr{-pds} instance $G \in \mG_N$ with dense subgraph size $k$ that divides $N$, partition $E$ of $[N]$ and edge probabilities $0 < q < p \le 1$, constant threshold $\tau > 0$, slow-growing function $w(N) = \omega(1)$, target $\pr{glsm}$ parameters $(n, k, d)$ with $wn \le cN$ and $d \ge c^{-1} N$ for a sufficiently small constant $c > 0$, mixture distribution $\mD$ and target distributions $\{ \mP_{\nu} \}_{\nu \in \mathbb{R}}$ and $\mQ$

\begin{enumerate}
\item \textit{Map to Gaussian Sparse Mixtures}: Form the sample $Z_1, Z_2, \dots, Z_n \in \mathbb{R}^d$ by setting
$$(Z_1, Z_2, \dots, Z_n) \gets k\pr{-pds-to-isgm}(G, E)$$
where $k\pr{-pds-to-isgm}$ is applied with $r = 2$, slow-growing function $w(N) = \omega(1)$, $t = \lceil \log_2(c^{-1} N/k) \rceil$, target parameters $n, k, d$, $\epsilon = 1/2$ and $\mu = c\sqrt{\frac{k}{N \log n}}$.
\item \textit{Truncate and 3-ary Rejection Kernels}: Sample $\nu_1, \nu_2, \dots, \nu_n \sim_{\text{i.i.d.}} \mD$, truncate the $\nu_i$ to lie within $[-1, 1]$ and form the vectors $X_1, X_2, \dots, X_n \in \mathbb{R}^d$ by setting
$$X_{ij} \gets 3\pr{-srk}(\pr{tr}_{\tau}(Z_{ij}), \mP_{\nu_i}, \mP_{-\nu_i}, \mQ)$$
for each $i \in [n]$ and $j \in [d]$. Here $3\pr{-srk}$ is applied with $N = \lceil 4 \log (dn) \rceil$ iterations and with the parameters
\begin{align*}
a &= \Phi(\tau) - \Phi(-\tau), \quad \mu_1 = \frac{1}{2} \left( \Phi(\tau + \mu) - \Phi(\tau - \mu) \right), \\
\mu_2 &= \frac{1}{2} \left( 2 \cdot \Phi(\tau) - \Phi(\tau + \mu) - \Phi(\tau - \mu) \right)
\end{align*}
\item \textit{Output}: The vectors $(X_1, X_2, \dots, X_n)$.
\end{enumerate}
\vspace{0.5mm}

\end{algbox}
\caption{Reduction from $k$-partite planted dense subgraph to general learning sparse mixtures.}
\label{fig:universalityreduction}
\end{figure}

In this section, we combine symmetric 3-ary rejection kernels with the reduction $k\pr{-pds-to-isgm}$ to map from $k\pr{-pds}$ to generalized sparse mixtures. The details of this reduction $k$\textsc{-pds-to-glsm} are shown in Figure \ref{fig:universalityreduction}. Throughout this section, we will denote $\mathcal{A} = k\textsc{-pds-to-glsm}$. In order to apply symmetric 3-ary rejection kernels, we will need a set of conditions on the target distributions $\mD, \mQ$ and $\{ \mP_{\nu} \}_{\nu \in \mathbb{R}}$. These conditions will also be the conditions for our universality result. As will be discussed in Section \ref{subsec:universalitydiscussion}, these conditions turn out to faithfully map the computational barrier of $k\pr{-pds}$ to learning sparse mixtures in a number of natural cases, including learning sparse Gaussian mixtures and sparse PCA in the spiked covariance model. Our universality conditions are as follows.

\begin{definition}[Universality Conditions]
Given parameters $(n, k, d)$, define the collection of distributions $(\mD, \mQ, \{ \mP_{\nu} \}_{\nu \in \mathbb{R}})$ to be in $\pr{uc}(n, k, d)$ if:
\begin{itemize}
\item $\mD$ is a symmetric distribution about zero and $\bP_{\nu \sim \mD}[\nu \in [-1, 1]] = 1 - o(n^{-1})$; and
\item for all $\nu \in [-1, 1]$, it holds that
$$\frac{1}{\sqrt{k \log n}} \gg \left| \frac{d\mP_{\nu}}{d\mQ} (x) - \frac{d\mP_{-\nu}}{d\mQ} (x) \right| \quad \textnormal{and} \quad \frac{1}{k \log n} \gg \left|\frac{d\mP_{\nu}}{d\mQ} (x) + \frac{d\mP_{-\nu}}{d\mQ} (x) - 2 \right|$$
with probability at least $1 - o(n^{-3} d^{-1})$ over each of $\mP_{\nu}, \mP_{-\nu}$ and $\mQ$.
\end{itemize}
\end{definition}

Let $\mathcal{A}_2$ denote Step 2 of $\mathcal{A}$ with input $(Z_1, Z_2, \dots, Z_n)$ and output $(X_1, X_2, \dots, X_n)$. We now prove total variation guarantees for $\mathcal{A}_2$, which follow from an application of tensorization of $\TV$.

\begin{lemma}[$\pr{isgm}$ to $\pr{glsm}$] \label{lem:univlem}
Suppose that $\tau > 0$ is a fixed constant and $\mu = \Omega(1/\sqrt{wk\log n})$ for a sufficiently slow-growing function $w$. If $(\mD, \mQ, \{ \mP_{\nu} \}_{\nu \in \mathbb{R}}) \in \pr{uc}(n, k, d)$, then
$$\TV\left( \mathcal{A}_2\left( \pr{isgm}(n, k, d, \mu, 1/2) \right), \pr{glsm}\left(n, k, d, \{ \mP_{\nu} \}_{\nu \in \mathbb{R}}, \mQ, \mD \right) \right) = o(1)$$
under both $H_0$ and $H_1$.
\end{lemma}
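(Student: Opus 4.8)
The plan is to apply the symmetric 3-ary rejection kernel guarantee of Lemma~\ref{lem:srk} entrywise, then tensorize over all $nd$ coordinates using Fact~\ref{tvfacts}, and finally handle the truncation step via Lemma~\ref{lem:truncgauss}. The key observation is that, conditioned on the latent $k$-subset $S \subseteq [d]$ and on the mixture choices inside the \pr{isgm} sample, each coordinate $Z_{ij}$ is marginally one of exactly \emph{three} Gaussian distributions: $\mN(\mu, 1)$ or $\mN(\mu', 1)$ when $j \in S$ (depending on which component of the $\epsilon = 1/2$ mixture that sample came from, and with $\mu' = -\mu$ since $\epsilon = 1/2$), and $\mN(0,1)$ when $j \notin S$. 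This is precisely where the $k$-partite promise is used upstream: it forces the rotated signal to take only two values, so the mixture has only the two symmetric means $\pm\mu$ rather than a spread.

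\textbf{Step 1: reduce to the conditional-on-$S$ problem.} Fix a $k$-subset $S$. Under $H_1$, $\pr{isgm}_{H_1}(n,k,d,\mu,1/2)$ conditioned on the subset being $S$ has all $nd$ coordinates independent, with $Z_{ij} \sim \pr{mix}_{1/2}(\mN(\mu,1),\mN(-\mu,1))$ for $j \in S$ and $Z_{ij} \sim \mN(0,1)$ for $j \notin S$. I would show that applying $\pr{tr}_\tau$ followed by $3\pr{-srk}(\cdot,\mP_{\nu_i},\mP_{-\nu_i},\mQ)$ to each coordinate produces, coordinatewise, something within small total variation of the corresponding $\pr{glsm}_{H_1}$ coordinate (which is $\mP_{\nu_i}$ or $\mP_{-\nu_i}$ for $j \in S$ — note each equally likely since $\mD$ is symmetric, matching the $1/2$–$1/2$ mixture — and $\mQ$ for $j \notin S$). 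For $j \notin S$ the relevant statement is $\pr{tr}_\tau(\mN(0,1)) \sim \textnormal{Tern}(a,0,0)$ by Lemma~\ref{lem:truncgauss}, so $3\pr{-srk}$ outputs something close to $\mQ$; for $j \in S$ the input is $\pr{tr}_\tau(\mN(\pm\mu,1)) \sim \textnormal{Tern}(a,\pm\mu_1,\mu_2)$, again by Lemma~\ref{lem:truncgauss}, so the output is close to $\mP_{\pm\nu_i}$. Crucially, since $\mu_i$ in the kernel is set to match exactly the truncation parameters of $\mathcal{A}$, these are the right \pr{Tern} distributions, and since $\mD$-sampled $\nu_i$ is symmetric the sign of the mean in the mixture corresponds correctly to the sign $\pm\nu_i$.

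\textbf{Step 2: bound the per-coordinate error via Lemma~\ref{lem:srk}.} By Lemma~\ref{lem:truncgauss}, $\mu_1 = \Theta(\mu)$ and $\mu_2 = \Theta(\mu^2)$, so $|\mu_1|^{-1} + |\mu_2|^{-1} = O(\mu^{-2}) = O(wk\log n)$. The universality conditions $(\mD,\mQ,\{\mP_\nu\}) \in \pr{uc}(n,k,d)$ say exactly that the set $S$ from Lemma~\ref{lem:srk} — which here, after checking that the thresholds $2|\mu_1|$ and $2|\mu_2|/\max\{a,1-a\}$ are $\Omega(1/\sqrt{k\log n})$ and $\Omega(1/(k\log n))$ respectively — has $\mP_{\pm\nu},\mQ$-measure at least $1 - o(n^{-3}d^{-1})$, so we may take $\delta = o(n^{-3}d^{-1})$. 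Then Lemma~\ref{lem:srk} with $N = \lceil 4\log(dn)\rceil$ iterations gives per-coordinate error $2\delta(1 + \mu^{-2}) + (1/2 + \delta(1+\mu^{-2}))^N = o(n^{-3}d^{-1}) \cdot O(wk\log n) + 2^{-\Omega(\log(dn))}$. Since $w$ is slow-growing and $k = O(d)$, this is $o(n^{-1}d^{-1})$. I also need the easy observation that the marginal over $\nu_i \sim \mD$ (after truncating to $[-1,1]$, which costs only $o(n^{-1})$ per sample by the first \pr{uc} condition) only helps, since the kernel bound holds for every fixed $\nu_i \in [-1,1]$.

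\textbf{Step 3: tensorize and average over $S$.} Applying the tensorization property of $\TV$ (Fact~\ref{tvfacts}, part 1) across all $nd$ coordinates and across the $n$ truncation errors for the $\nu_i$ gives total error $nd \cdot o(n^{-1}d^{-1}) + n \cdot o(n^{-1}) = o(1)$, conditionally on each fixed $S$. Since both $\mathcal{A}_2(\pr{isgm}_{H_1})$ and $\pr{glsm}_{H_1}$ draw $S \sim \mU_{d,k}$ and are conditionally of the above form, the conditioning-on-a-random-variable property (Fact~\ref{tvfacts}, part 3) yields the $H_1$ bound. The $H_0$ case is simpler: $\pr{isgm}$ under $H_0$ is $\mN(0,I_d)^{\otimes n}$, all coordinates $\mN(0,1)$, truncation gives $\textnormal{Tern}(a,0,0)$, and $3\pr{-srk}$ maps each to within $o(n^{-1}d^{-1})$ of $\mQ$, so tensorization gives $\TV(\mathcal{A}_2(\mN(0,I_d)^{\otimes n}), \mQ^{\otimes d \otimes n}) = o(1)$, matching $\pr{glsm}$ under $H_0$.

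\textbf{Main obstacle.} The delicate point is verifying that the thresholds defining the set $S$ in Lemma~\ref{lem:srk} — namely $2|\mu_1|$ and $2|\mu_2|/\max\{a,1-a\}$ — scale as $\Theta(1/\sqrt{k\log n})$ and $\Theta(1/(k\log n))$ respectively, so that the \pr{uc} conditions (which are phrased with $1/\sqrt{k\log n}$ and $1/(k\log n)$) are exactly what is needed to certify $\delta = o(n^{-3}d^{-1})$. This requires tracking that the \pr{isgm} instance produced in Step 1 of $\mathcal{A}$ has $\mu = c\sqrt{k/(N\log n)}$ with $wn \le cN$, hence $\mu = \Theta(1/\sqrt{wk\log n})$, so that $|\mu_1|^{-1} = \Theta(\sqrt{wk\log n})$ and $|\mu_2|^{-1} = \Theta(wk\log n)$; the slow growth of $w$ is what makes the product $\delta \cdot (|\mu_1|^{-1}+|\mu_2|^{-1})$ still $o(n^{-1}d^{-1})$ after tensorizing. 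Getting these constant and polylog factors to line up cleanly — and confirming that the strict inequalities "$\gg$" in the \pr{uc} definition give enough room to absorb the $w$ — is the only real work; everything else is bookkeeping with Facts~\ref{tvfacts} and Lemmas~\ref{lem:srk},~\ref{lem:truncgauss}.
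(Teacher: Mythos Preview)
Your approach is essentially the paper's: condition on the latent structure, apply Lemma~\ref{lem:srk} entrywise via Lemma~\ref{lem:truncgauss}, tensorize with Fact~\ref{tvfacts}, then un-condition and handle the truncation of $\mD$. One slip in Step~1: conditioning only on $S$ does \emph{not} make all $nd$ coordinates independent---for each sample $i$, the coordinates $Z_{ij}$ with $j\in S$ share the same mixture component, so they are dependent (and likewise on the \pr{glsm} side the coordinates in $S$ share the sign of $\nu_i$). The paper fixes this by also conditioning on the set $P\subseteq[n]$ of samples drawn from the positive mixture component and on the sampled $\nu_i$'s; you implicitly do the same when you track the ``$\pm$'' sign, and under that full conditioning all coordinates are genuinely independent so tensorization applies. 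After un-conditioning on $P$, the effective mixture parameter for sample $i$ becomes $\pm\nu_i$ with a uniform sign, which by symmetry of $\mD$ is distributed as the truncated $\mD$---exactly the $\mD'$ the paper then compares to $\mD$ with cost $o(n^{-1})$ per sample. With that clarification your argument is the paper's argument.
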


\begin{proof}
Let $(Z_1, Z_2, \dots, Z_n)$ be an instance of $\pr{isgm}(n, k, d, \mu, 1/2)$ under $H_1$. In other words, $Z_i \sim_{\text{i.i.d.}} \pr{mix}_{1/2}\left( \mN( \mu \cdot \mathbf{1}_S, I_d), \mN( -\mu \cdot \mathbf{1}_S, I_d) \right)$ where $S$ is a $k$-subset of $[d]$ chosen uniformly at random. For the next part of this argument, we condition on: (1) the entire vector $(\nu_1, \nu_2, \dots, \nu_n)$; (2) the latent support $S \subseteq [d]$ with $|S| = k$ of the planted indices of the $Z_i$; and (3) the subset $P \subseteq [n]$ of sample indices corresponding to the positive part $\mN(\mu \cdot \mathbf{1}_S, I_d)$ of the mixture. Let $\mathcal{C}$ denote the event corresponding to this conditioning. After truncating according to $\pr{tr}_{\tau}$, by Lemma \ref{lem:truncgauss} the resulting entries are distributed as
$$\pr{tr}_{\tau}(Z_{ij}) \sim \left\{ \begin{array}{ll} \text{Tern}(a, \mu_1, \mu_2) &\text{if } (i, j) \in S \times P \\ \text{Tern}(a, -\mu_1, \mu_2) &\text{if } (i, j) \in S \times P^C \\ \text{Tern}(a, 0, 0) &\text{if } i \not \in S \end{array} \right.$$
Furthermore, given our conditioning, these entries are all independent. Since $\tau$ is constant, Lemma \ref{lem:truncgauss} also implies that $a \in (0, 1)$ is constant, $\mu_1 = \Theta(\mu)$ and $\mu_2 = \Theta(\mu^2)$. Let $S_\nu$ be
$$S_\nu = \left\{ x \in X : 2|\mu_1| \ge \left| \frac{d\mP_{\nu_i}}{d\mQ} (x) - \frac{d\mP_{-\nu_i}}{d\mQ} (x) \right| \quad \textnormal{and} \quad \frac{2|\mu_2|}{\max\{a, 1 - a\}} \ge \left|\frac{d\mP_{\nu_i}}{d\mQ} (x) + \frac{d\mP_{-\nu_i}}{d\mQ} (x) - 2 \right| \right\}$$
as in Lemma \ref{lem:srk}. The second implication of $(\mD, \mQ, \{ \mP_{\nu} \}_{\nu \in \mathbb{R}}) \in \pr{uc}(n, k, d)$ gives that $\{x \in S_{\nu_i}\}$ occurs with probability at least $1 - \delta$ over each of $\mP_{\nu_i}, \mP_{-\nu_i}$ and $\mQ$, where $\delta = o(n^{-3} d^{-1})$, for each $i \in [n]$. This holds for a sufficiently slow-growing choice of $w$. Therefore we can apply Lemma \ref{lem:srk} to each application of $3\pr{-srk}$ in Step 2 of $\mathcal{A}$. Note that $|\mu_1|^{-1} = O(\sqrt{n \log n})$ and $|\mu_2|^{-1} = O(n \log n)$ since $\mu = \Omega(1/\sqrt{wk\log n})$ and $k \ge 1$. Now consider the $d$-dimensional vectors $X_1', X_2', \dots, X_n'$ with independent entries distributed as
$$X'_{ij} \sim \left\{ \begin{array}{ll} \mP_{\nu_i} &\text{if } (i, j) \in S \times P \\ \mP_{-\nu_i} &\text{if } (i, j) \in S \times P^C \\ \mQ &\text{if } i \not \in S \end{array} \right.$$
The tensorization property of $\TV$ from Fact \ref{tvfacts} implies that
\begin{align*}
&\TV\left( \mL(X_1, X_2, \dots, X_n | \mathcal{C}), \mL(X_1', X_2', \dots, X_n'| \mathcal{C}) \right) \\
&\quad \quad \le \sum_{i = 1}^n \sum_{j = 1}^d \TV\left( \mL(X_{ij} | \mathcal{C}), \mL(X_{ij}' | \mathcal{C}) \right) \\
&\quad \quad \le \sum_{i = 1}^n \sum_{j = 1}^d \TV\left( 3\pr{-srk}(\pr{tr}_{\tau}(Z_{ij}), \mP_{\nu_i}, \mP_{-\nu_i}, \mQ), \mL(X_{ij}' | \mathcal{C}) \right) \\
&\quad \quad \le nd \left[ 2\delta \left(1 + |\mu_1|^{-1} + |\mu_2|^{-1} \right) + \left( \frac{1}{2} + \delta \left( 1 + |\mu_1|^{-1} + |\mu_2|^{-1} \right) \right)^N \right] = o(1)
\end{align*}
by the total variation upper bounds in Lemma \ref{lem:srk}. Now consider dropping the conditioning on $\mathcal{C}$. It follows by the definition of $\pr{glsm}$ that $(X_1', X_2', \dots, X_n')$, when no longer conditioned on $\mathcal{C}$, is distributed as $\pr{glsm}\left(n, k, d, \{ \mP_{\nu} \}_{\nu \in \mathbb{R}}, \mQ, \mD' \right)$ where $\mD'$ is the distribution sampled by first sampling $x \sim \mD$, truncating $x$ to lie in $[-1, 1]$ and then multiplying $x$ by $-1$ with probability $1/2$. It therefore follows by the conditioning property of $\TV$ in Fact \ref{tvfacts} that
$$\TV\left( \mathcal{A}_2\left( \pr{isgm}(n, k, d, \mu, 1/2) \right), \pr{glsm}\left(n, k, d, \{ \mP_{\nu} \}_{\nu \in \mathbb{R}}, \mQ, \mD' \right) \right) = o(1)$$
Now note that since $\mD$ is symmetric, it follows that $\TV( \mD, \mD') = o(n^{-1})$ since $x \in \mD$ lies in $[-1, 1]$ with probability $1 - o(n^{-1})$. Another application of tensorization yields that $\TV( \mD^{\otimes n}, \mD'^{\otimes n}) = o(1)$. Coupling the latent $\nu_1, \nu_2, \dots, \nu_n$ sampled from $\mD'$ and $\mD$ in $\pr{glsm}\left(n, k, d, \{ \mP_{\nu} \}_{\nu \in \mathbb{R}}, \mQ, \mD' \right)$ and $\pr{glsm}\left(n, k, d, \{ \mP_{\nu} \}_{\nu \in \mathbb{R}}, \mQ, \mD \right)$, respectively, yields by the conditioning property that their total variation distance is $o(1)$. The desired result in the $H_1$ case then follows from the triangle inequality. Under $H_0$, an identical argument shows the desired result without conditioning on $\mathcal{C}$ being necessary. This completes the proof of the lemma.
\end{proof}

We now use this lemma, Theorem \ref{thm:isgmreduction} and Lemma \ref{lem:3a} to formally deduce our universality principle for lower bounds at the threshold $n = \tilde{\Theta}(k^2)$ in $\pr{glsm}$. The proof follows a similar structure to that of Theorems \ref{thm:rsmeopt} and \ref{thm:semicrhardness}. We omit details where they are the same.

\begin{theorem}[Lower Bounds for General Sparse Mixtures] \label{thm:univlowerbounds}
Let $(n, k, d)$ be parameters such that $n = o(k^2)$ and $k^2 = o(d)$. Suppose that $(\mD, \mQ, \{ \mP_{\nu} \}_{\nu \in \mathbb{R}}) \in \pr{uc}(n, k, d)$ and $\mathcal{B}$ is a randomized polynomial time test for $\pr{glsm}\left(n, k, d, \{ \mP_{\nu} \}_{\nu \in \mathbb{R}}, \mQ, \mD \right)$. Then $\mathcal{B}$ has asymptotic Type I$+$II error at least 1 assuming either the $k$\pr{-pc} conjecture or the $k$\pr{-pds} conjecture for some fixed edge densities $0 < q < p \le 1$.
\end{theorem}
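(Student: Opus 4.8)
The plan is to run the composite reduction $\mathcal{A} = k\textsc{-pds-to-glsm}$ of Figure~\ref{fig:universalityreduction}, which factors as $\mathcal{A} = \mathcal{A}_2 \circ \mathcal{A}_1$ where $\mathcal{A}_1 = k\textsc{-pds-to-isgm}$ is Step~1 and $\mathcal{A}_2$ is the truncate-and-$3\textsc{-srk}$ gadget of Step~2, and to chain the total variation guarantees of Theorem~\ref{thm:isgmreduction} and Lemma~\ref{lem:univlem} through Lemmas~\ref{lem:tvacc} and~\ref{lem:3a}. Fixing any constants $0 < q < p \le 1$ (with $p = 1$ recovering the $k\textsc{-pc}$ case) and the induced constant $Q = 1 - \sqrt{(1-p)(1-q)} + \mathbf{1}_{\{p = 1\}}(\sqrt{q}-1) \in (0,1)$, I would assume the $k\textsc{-pds}$ conjecture at $(p,q)$ and argue that a randomized polynomial-time test for $\pr{glsm}(n,k,d,\{\mP_\nu\},\mQ,\mD)$ with asymptotic Type~I$+$II error below $1$ would, after precomposition with $\mathcal{A}$, yield such a test for $k\textsc{-pds}(N,k,p,q)$ on a sequence with $k = o(\sqrt{N})$, a contradiction.

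The first step is parameter selection. Let $w = w(n) = \omega(1)$ be sufficiently slowly growing and let $N$ be the smallest multiple of $k$ that is at least $w n / c$. Since $n = o(k^2)$ this gives $N = o(k^2 w)$, so for slow enough $w$ we still have $N = \omega(k^2)$ (forcing $k = o(\sqrt{N})$, hence $k\textsc{-pds}(N,k,p,q)$ hard), together with $wn \le cN$; and since $k^2 = o(d)$ while $N = o(k^2 w)$ we may also arrange $d \ge c^{-1}N$ for slow enough $w$. With $r = 2$, $t = \lceil \log_2(c^{-1}N/k)\rceil$ so that $r^t = \Theta(N/k)$, and $\mu = c\sqrt{k/(N\log n)}$ as in Figure~\ref{fig:universalityreduction}, one checks the hypotheses of Theorem~\ref{thm:isgmreduction}: $\epsilon = 1/r = 1/2$; $wn \le k(r^t-1)/(r-1) = k(2^t-1)$ because $k2^t = \Theta(N)$; the embedding dimension $m = \Theta(N)$ satisfies $m \le d$ and $kr^t \le \mathrm{poly}(N)$; and since $1/\sqrt{r^t(r-1)} = \Theta(\sqrt{k/N})$ while $\log(kmr^t) = \Theta(\log n)$, the value $\mu = c\sqrt{k/(N\log n)}$ lies in the admissible window for $c$ small. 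Theorem~\ref{thm:isgmreduction} then gives that $\mathcal{A}_1$ maps both hypotheses of $k\textsc{-pds}(N,k,p,q)$ to the corresponding hypotheses of $\pr{isgm}(n,k,d,\mu,1/2)$ within total variation $O(w^{-1} + k^2/(wN) + k/\sqrt{N} + e^{-\Omega(N/k)} + N^{-1}) = o(1)$, using $N = \omega(k^2)$.

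Next, $\mu = c\sqrt{k/(N\log n)} = \Omega(1/\sqrt{wk\log n})$ because $N = o(k^2 w)$, and $(\mD, \mQ, \{\mP_\nu\}) \in \pr{uc}(n,k,d)$ by hypothesis, so Lemma~\ref{lem:univlem} applies and shows $\mathcal{A}_2$ maps both hypotheses of $\pr{isgm}(n,k,d,\mu,1/2)$ to the corresponding hypotheses of $\pr{glsm}(n,k,d,\{\mP_\nu\},\mQ,\mD)$ within total variation $o(1)$. Composing via Lemma~\ref{lem:tvacc} shows $\mathcal{A}$ carries $H_0$ of $k\textsc{-pds}$ to $H_0$ of $\pr{glsm}$ and $H_1$ to $H_1$, each within $o(1)$ total variation, and $\mathcal{A}$ runs in randomized polynomial time in $N$, hence in $n$ up to the polynomial relations $n, kr^t \le \mathrm{poly}(N)$ and $N \le cd$. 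Applying Lemma~\ref{lem:3a} with $\delta = o(1)$ transfers any test for $\pr{glsm}$ with asymptotic Type~I$+$II error below $1$ back to a polynomial-time test for $k\textsc{-pds}(N,k,p,q)$ with asymptotic Type~I$+$II error below $1$ on a sequence with $k = o(\sqrt{N})$, contradicting the $k\textsc{-pds}$ conjecture at $(p,q)$ (and, specializing $p = 1$, the $k\textsc{-pc}$ conjecture); this gives the claimed lower bound.

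The main work is the simultaneous parameter matching in the second paragraph: one must thread $N$ through the band $wn/c \le N \le cd$ while keeping $N = \omega(k^2)$ and $k \mid N$, and then verify that the single scalar $\mu$ output by $k\textsc{-pds-to-isgm}$ lies in the intersection of the window of Theorem~\ref{thm:isgmreduction} (an upper bound of order $\sqrt{k/(N\log n)}$) and the lower bound $\Omega(1/\sqrt{wk\log n})$ needed for Lemma~\ref{lem:univlem}; both reduce to $N$ being polynomially related to $n$, which the choice $N = \Theta(wn)$ together with $k^2 = o(d)$ guarantees. The genuinely new content — faithfully producing three distinct planted marginals from a binary-valued rotated Gaussian — has already been isolated into Lemmas~\ref{lem:srk} and~\ref{lem:univlem}, so the theorem itself is a bookkeeping composition. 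The one conceptual point to be careful about is that the $\pr{uc}$ tail parameter $\delta = o(n^{-3}d^{-1})$ must dominate the blowups $|\mu_1|^{-1} = O(\sqrt{n\log n})$ and $|\mu_2|^{-1} = O(n\log n)$ in the $3\textsc{-srk}$ guarantee even after multiplying by the $nd$ union bound over entries, which is precisely why $\pr{uc}$ is stated with those exponents.
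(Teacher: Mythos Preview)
Your overall plan—compose Theorem~\ref{thm:isgmreduction} with Lemma~\ref{lem:univlem} via Lemma~\ref{lem:tvacc} and then invoke Lemma~\ref{lem:3a}—is exactly right, but the parameter selection has a genuine gap. You take $N$ to be the smallest multiple of $k$ at least $wn/c$, so $N = \Theta(wn)$, and then write ``for slow enough $w$ we still have $N = \omega(k^2)$.'' This implication is backward: with $n = o(k^2)$, the choice $N = \Theta(wn)$ gives $N = \omega(k^2)$ only when $w = \omega(k^2/n)$, and the theorem places no lower bound on $n$, so $k^2/n$ may grow polynomially. But you then reuse this same $w$ to check $\mu = \Omega(1/\sqrt{wk\log n})$ and invoke Lemma~\ref{lem:univlem}, whose hypothesis requires $w$ to be \emph{slow}-growing: its role there is to undercut the unspecified $\gg$ rate in the $\pr{uc}$ likelihood-ratio bounds so that $\mu_1 = \Theta(\mu)$ and $\mu_2 = \Theta(\mu^2)$ dominate them and the $3\textsc{-srk}$ acceptance set $S_\nu$ carries mass $1 - o(n^{-3}d^{-1})$. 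A $w$ growing like $k^2/n$ makes $\mu$ too small for this, and the rejection kernel step fails. So your two uses of $w$ pull in opposite directions and cannot be reconciled as written.

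The paper avoids this by anchoring $N$ to $k$ rather than to $n$: it sets $2^t$ to be the smallest power of two exceeding $wk$ and then takes $N$ to be the largest multiple of $k$ below $(p/Q+1)^{-1}k\cdot 2^t$, so that $N = \Theta(wk^2)$ directly. This single choice makes every constraint fall out with the \emph{same} slow $w$: one gets $k = o(\sqrt{N})$ immediately from $w\to\infty$; the sample constraint $wn \le k(2^t - 1) = \Theta(wk^2)$ is automatic from $n = o(k^2)$; the dimension constraint $N = o(d)$ follows from $k^2 = o(d)$ and $w$ slow; and $\mu \asymp 1/\sqrt{2^t\log n} \asymp 1/\sqrt{wk\log n}$ lands exactly on the threshold of Lemma~\ref{lem:univlem} with that same slow $w$. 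Swapping your $N = \Theta(wn)$ for $N = \Theta(wk^2)$ in this way repairs the argument; the rest of your bookkeeping is correct.
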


\begin{proof}
Assume the $k$\pr{-pds} conjecture for some fixed edge densities $0 < q < p \le 1$ and let $Q = 1 - \sqrt{(1 - p)(1 - q)} + \mathbf{1}_{\{ p = 1\}} \left( \sqrt{q} - 1 \right) \in (0, 1)$. Let $w = w(n) = \omega(1)$ be a sufficiently slow-growing function and let $t$ be such that $2^t$ is the smallest power of two larger than $wk$. Note that $wn = o(k(2^t - 1))$. Now let $N$ be the largest multiple of $k$ less than $(\frac{p}{Q} + 1)^{-1} k \cdot 2^t$. By construction, we have that $N = \Theta(k \cdot 2^t) = \Theta(wk^2)$. For a sufficiently slow-growing choice of $w$, it follows that $N \le d/2$. Now consider $\mathcal{A}_1$, which applies $k\pr{-pds-to-isgm}$ to map from $k\pr{-pds}(N, k, p, q)$ to $\pr{isgm}(n, k, d, \mu, 1/2)$ with
$$\mu = \frac{c'\delta}{2 \sqrt{6\log (k \cdot 2^t) + 2\log (p - Q)^{-1}}} \cdot \frac{1}{\sqrt{2^t}} \asymp \sqrt{\frac{1}{wk \log n}}$$
where $\delta = \min \left\{ \log \left( \frac{p}{Q} \right), \log \left( \frac{1 - Q}{1 - p} \right) \right\}$ and $c' > 0$ is a sufficiently small constant. These parameters satisfy the conditions needed to apply Theorem \ref{thm:isgmreduction} and thus $\mathcal{A}_1$ performs this mapping within $o(1)$ total variation error. Combining this with Lemma \ref{lem:univlem} and applying Lemma \ref{lem:tvacc} implies that $\mathcal{A}$ maps $k\pr{-pds}(N, k, p, q)$ to $\pr{glsm}\left(n, k, d, \{ \mP_{\nu} \}_{\nu \in \mathbb{R}}, \mQ, \mD \right)$ within $o(1)$ total variation error under both $H_0$ and $H_1$. Since $N = \omega(k^2)$, if $\mathcal{B}$ were to have an asymptotic Type I$+$II error less than 1, then this would contradict the $k$\pr{-pds} conjecture at edge densities $0 < q < p \le 1$ by Lemma \ref{lem:3a}. This completes the proof of the theorem.
\end{proof}

\subsection{The Universality Class UC$(n, k, d)$}
\label{subsec:universalitydiscussion}

The result in Theorem \ref{thm:univlowerbounds} shows universality of the computational sample complexity of $n = \Omega(k^2)$ for learning sparse mixtures under the conditions of $\pr{uc}(n, k, d)$. In this section, we make several remarks on the implications of showing hardness for $\pr{glsm}$ under our conditions $\pr{uc}(n, k, d)$.

\paragraph{Remarks on UC$(n, k, d)$.} The conditions for $(\mD, \mQ, \{ \mP_{\nu} \}_{\nu \in \mathbb{R}}) \in \pr{uc}(n, k, d)$ have the following two notable properties.
\begin{itemize}
\item \textit{They are conditions on marginals}: The conditions in $\pr{uc}(n, k, d)$ are entirely in terms of the likelihood ratios $d\mP_\nu/d\mQ$ between the planted and non-planted marginals. In particular, they do not depend on any properties of high-dimensional distributions constructed from the $\mP_{\nu}$ and $\mQ$. Thus Theorem \ref{thm:univlowerbounds} extracts the high-dimensional structure leading to statistical-computational gaps in instances of $\pr{glsm}$ as a condition on the marginals $\mP_{\nu}$ and $\mQ$.
\item \textit{Their dependence on $n$ and $d$ is negligible}: As noted in Section \ref{sec:summary}, when the likelihood ratios are relatively concentrated the dependence of the conditions in $\pr{uc}(n, k, d)$ on $n$ and $d$ is nearly negligible. Specifically, the upper bounds on the functions of the likelihood ratios $d\mP_\nu/d\mQ$ only depend logarithmically on $n$. If the ratios $d\mP_\nu/d\mQ$ are concentrated under $\mP_{\nu}$ and $\mQ$ with exponentially decaying tails, then the tail probability bounds of $o(n^{-3} d^{-1})$ in $\pr{uc}(n, k, d)$ only impose a mild condition on the $\mP_{\nu}$ and $\mQ$. Instead, the conditions in $\pr{uc}(n, k, d)$ almost exclusively depend on $k$, implying that they will not implicitly require a stronger dependence between $n$ and $k$ to produce hardness than the $n = \tilde{o}(k^2)$ condition that arises from our reductions. Thus Theorem \ref{thm:univlowerbounds} does show a universality principle for the computational sample complexity of $n = \tilde{\Theta}(k^2)$.
\end{itemize}

\paragraph{$\mD$ and Parameterization over $[-1,1]$.} As remarked in Section \ref{sec:summary}, $\mD$ and the indices of $\mP_{\nu}$ can be reparameterized without changing the underlying problem. The assumption that $\mD$ is symmetric and mostly supported on $[-1, 1]$ is for notational convenience.

While the output vectors $(X_1, X_2, \dots, X_n)$ of our reduction $k$\textsc{-pds-to-glsm} are independent, their coordinates have dependence induced by the mixture $\mD$. The fact that our reduction samples the $\nu_i$ implies that if these values were revealed to the algorithm, the problem would still remain hard: an algorithm for the latter could be used together with the reduction to solve \kpc. However, even given the $\nu_i$ for the $i$th sample, our reduction is such that whether the planted marginals in the $i$th sample are distributed according to $\mP_{\nu_i}$ or $\mP_{-\nu_i}$ remains unknown to the algorithm. Intuitively, our setup chooses to parameterize the distribution $\mD$ over $[-1, 1]$ such that the sign ambiguity between $\mP_{\nu_i}$ or $\mP_{-\nu_i}$ is what is producing hardness below the sample complexity of $n = \tilde{\Omega}(k^2)$.

\paragraph{Implications for Concentrated LLR.} We now give several remarks on the conditions $\pr{uc}(n, k, d)$ in the case that the log-likelihood ratios (LLR) $\log d\mP_{\nu}/d\mQ (x)$ are sufficiently well-concentrated if $x \sim \mQ$ or $x \sim \mP_{\nu}$. Suppose that $(\mD, \mQ, \{ \mP_{\nu} \}_{\nu \in \mathbb{R}}) \in \pr{uc}(n, k, d)$, fix some function $w(n) \to \infty$ as $n \to \infty$ and fix some $\nu \in [-1,1]$. If $S_{\mQ}$ is the common support of the $\mP_{\nu}$ and $\mQ$, define $S$ to be
$$S = \left\{ x \in S_{\mQ} : \frac{1}{\sqrt{wk \log n}} \ge \left| \frac{d\mP_{\nu}}{d\mQ} (x) - \frac{d\mP_{-\nu}}{d\mQ} (x) \right| \quad \textnormal{and} \quad \frac{1}{wk \log n} \ge \left|\frac{d\mP_{\nu}}{d\mQ} (x) + \frac{d\mP_{-\nu}}{d\mQ} (x) - 2 \right| \right\}$$
Now note that
\begin{align*}
\TV\left( \mP_{\nu}, \mP_{-\nu} \right) &= \frac{1}{2} \cdot \bE_{x \in \mQ} \left[ \left| \frac{d\mP_{\nu}}{d\mQ} (x) - \frac{d\mP_{-\nu}}{d\mQ} (x) \right| \right] \\
&\le \frac{1}{2} \cdot \bE_{x \in \mQ} \left[ \left| \frac{d\mP_{\nu}}{d\mQ} (x) - \frac{d\mP_{-\nu}}{d\mQ} (x) \right| \cdot \mathbf{1}_S(x) \right] + \frac{1}{2} \cdot \mP_{\nu}\left[S^C\right] + \frac{1}{2} \cdot \mP_{-\nu}\left[S^C\right] \\
&\le \frac{1}{2\sqrt{wk\log n}} + o(n^{-3} d^{-1}) \lesssim \frac{1}{\sqrt{k \log n}}
\end{align*}
A similar calculation with the second condition defining $S$ shows that
$$\TV\left( \pr{mix}_{1/2}\left(\mP_{\nu}, \mP_{-\nu} \right), \mQ \right) \lesssim \frac{1}{k \log n}$$
If the LLRs $\log d\mP_{\nu}/d\mQ$ are sufficiently well-concentrated, then the random variables
$$\left| \frac{d\mP_{\nu}}{d\mQ} (x) - \frac{d\mP_{-\nu}}{d\mQ} (x) \right| \quad \text{and} \quad \left|\frac{d\mP_{\nu}}{d\mQ} (x) + \frac{d\mP_{-\nu}}{d\mQ} (x) - 2 \right|$$
will also concentrate around their means if $x \sim \mQ$. LLR concentration also implies that this is true if $x \sim \mP_\nu$ or $x \sim \mP_{-\nu}$. Thus, under sufficient concentration, the conditions in $\pr{uc}(n, k, d)$ reduce to the much more interpretable conditions
$$\TV\left( \mP_{\nu}, \mP_{-\nu} \right) = \tilde{o}(k^{-1/2}) \quad \text{and} \quad \TV\left( \pr{mix}_{1/2}\left(\mP_{\nu}, \mP_{-\nu} \right), \mQ \right) = \tilde{o}(k^{-1})$$
These conditions directly measure the amount of statistical signal present in the planted marginals $\mP_{\nu}$. The relevant calculations for an example application of Theorem \ref{thm:univlowerbounds} when the LLR concentrates is shown below for sparse PCA. In \cite{brennan2019universality}, various assumptions of concentration of the LLR and analogous implications for computational lower bounds in submatrix detection are analyzed in detail. We refer the reader to Sections 3 and 9 of \cite{brennan2019universality} for the calculations needed to make the discussion here precise.

We remark that, assuming sufficient concentration on the LLR, the analysis of the $k$-sparse eigenvalue statistic from \cite{berthet2013complexity} yields an information-theoretic upper bound for $\pr{glsm}$. Given samples $(X_1, X_2, \dots, X_n)$, consider forming the LLR-processed samples $Z_i$ with
$$Z_{ij} = \bE_{\nu \sim \mD} \left[ \log \frac{d\mP_{\nu}}{d\mQ} (X_{ij}) \right]$$
for each $i \in [n]$ and $j \in [d]$. Now consider taking the $k$-sparse eigenvalue of the samples $Z_1, Z_2, \dots, Z_n$. Under sub-Gaussianity assumptions on the $Z_{ij}$, the analysis in Theorem 2 of \cite{berthet2013complexity} applies. Similarly, the analysis in Theorem 5 of \cite{berthet2013complexity} continues to hold, showing that the semidefinite programming algorithm for sparse PCA yields an algorithmic upper bound for $\pr{glsm}$. In many setups captured by $\pr{glsm}$ such as sparse PCA, learning sparse mixtures of Gaussians and learning sparse mixtures of Rademachers, these analyses and our lower bounds together confirm a $k$-to-$k^2$ gap. As information-theoretic limits and algorithms are not the focus of this paper, we omit these details.

\paragraph{Sparse PCA and Specific Distributions.} One specific example captured by our universality principle and that falls under the concentrated LLR setup discussed above is sparse PCA in the spiked covariance model. The statistical-computational gaps of sparse PCA have been characterized based on the planted clique conjecture in a line of work \cite{berthet2013optimal, berthet2013complexity, wang2016statistical, gao2017sparse, brennan2018reducibility, brennan2019optimal}. We show that our universality principle faithfully recovers the $k$-to-$k^2$ gap for sparse PCA shown in \cite{berthet2013optimal, berthet2013complexity, wang2016statistical, gao2017sparse, brennan2018reducibility} assuming the $k\pr{-pc}$ conjecture up to $k = o(\sqrt{n})$. We remark that \cite{brennan2019optimal} shows stronger hardness based on weaker forms of the $\pr{pc}$ conjecture.

We show that sparse PCA corresponds to $\pr{glsm}\left(n, k, d, \{ \mP_{\nu} \}_{\nu \in \mathbb{R}}, \mQ, \mD \right)$ for the proper choice of $(\mD, \mQ, \{ \mP_{\nu} \}_{\nu \in \mathbb{R}}) \in \pr{uc}(n, k, d)$ exactly at its conjectured computational barrier. In particular, we have the following corollary of Theorem \ref{thm:univlowerbounds}.

\begin{corollary}[Lower Bounds for Sparse PCA]
Let $\pr{spca}(n, k, d, \theta)$ be the testing problem
$$H_0 : (X_1, X_2, \dots, X_n) \sim_{\textnormal{i.i.d.}} \mN(0, I_d)^{\otimes n} \quad H_1 : (X_1, X_2, \dots, X_n) \sim_{\textnormal{i.i.d.}} \mN\left(0, I_d + \theta vv^\top\right)^{\otimes n}$$
where $v$ is a $k$-sparse unit vector in $\mathbb{R}^d$ chosen uniformly at random among all such vectors with nonzero entries equal to $1/\sqrt{k}$. If $n = o(k^2)$, $k^2 = o(d)$ and $\theta = o(1/(\log n)^4)$, then $\pr{spca}(n, k, d, \theta)$ can be expressed as $\pr{glsm}\left(n, k, d, \{ \mP_{\nu} \}_{\nu \in \mathbb{R}}, \mQ, \mD \right)$ for some choice $(\mD, \mQ, \{ \mP_{\nu} \}_{\nu \in \mathbb{R}}) \in \pr{uc}(n, k, d)$.
\end{corollary}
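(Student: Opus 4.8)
The plan is to realize $\pr{spca}$ \emph{exactly} as a Gaussian scale mixture carrying the product structure of $\pr{glsm}$, and then to verify the two conditions of $\pr{uc}(n,k,d)$ by a Taylor expansion of Gaussian likelihood ratios. First I would use the standard representation of the spiked covariance model: if $v=\bI_S/\sqrt k$, then $\mN(0,I_d+\theta vv^\top)$ is the law of $\sqrt\theta\,g\,v+Z$ with $g\sim\mN(0,1)$ and $Z\sim\mN(0,I_d)$ independent, so that conditionally on $g$ the coordinates are independent with $(X)_j\sim\mN(\sqrt{\theta/k}\,g,1)$ for $j\in S$ and $(X)_j\sim\mN(0,1)$ for $j\notin S$; the same scalar $g$ is shared across the coordinates of one sample and is drawn afresh per sample. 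This is precisely the generative process of $\pr{glsm}$ with noise marginal $\mQ=\mN(0,1)$. To make the latent scalar essentially supported on $[-1,1]$, I would reparameterize as in the remark following Theorem~\ref{thm:univlowerbounds}: fix $\epsilon=\epsilon(n)=\Theta(1/\log n)$, put $\nu=\epsilon g$ and $\lambda:=\sqrt{\theta/(k\epsilon^2)}$, so $\sqrt{\theta/k}\,g=\lambda\nu$ and $\pr{spca}(n,k,d,\theta)$ coincides with $\pr{glsm}\bigl(n,k,d,\{\mP_\nu\}_{\nu\in\bR},\mQ,\mD\bigr)$ where $\mP_\nu=\mN(\lambda\nu,1)$ and $\mD=\mN(0,\epsilon^2)$. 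Each $(\mP_\nu,\mQ)$ is a computable pair: both are sampleable, mutually absolutely continuous, the ratio $\frac{d\mP_\nu}{d\mQ}(x)=\exp(\lambda\nu x-\frac12\lambda^2\nu^2)$ is explicit, and the normalization identities of Definition~\ref{def:computable} are immediate Gaussian moment computations.

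Next I would check the two bullets of $\pr{uc}(n,k,d)$. The mixture $\mD=\mN(0,\epsilon^2)$ is symmetric about zero, and since $\epsilon^{-2}=\omega(\log n)$ the Gaussian tail bound gives $\bP_{\nu\sim\mD}[\nu\notin[-1,1]]\le\exp(-\tfrac12\epsilon^{-2})=o(n^{-1})$. For the likelihood-ratio conditions, fix $\nu\in[-1,1]$ and write
\[
\frac{d\mP_\nu}{d\mQ}(x)-\frac{d\mP_{-\nu}}{d\mQ}(x)=2e^{-\lambda^2\nu^2/2}\sinh(\lambda\nu x),\qquad \frac{d\mP_\nu}{d\mQ}(x)+\frac{d\mP_{-\nu}}{d\mQ}(x)-2=2\bigl(e^{-\lambda^2\nu^2/2}\cosh(\lambda\nu x)-1\bigr).
\]
I would condition $x$ on the typical set $\{|x|\le t_0\}$ with $t_0=\Theta(\sqrt{\log(nd)})$; since each of $\mP_\nu,\mP_{-\nu},\mQ$ is $\mN(0,1)$ shifted by $|\lambda\nu|=o(1)$, this event has probability $1-o(n^{-3}d^{-1})$ under all three. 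On it, $|\lambda\nu x|\le\lambda t_0$, and using $|\sinh u|\le 2|u|$ and $|e^{-a}\cosh u-1|\le 2(a+u^2)$ for small $a\ge0,|u|$, the first expression is $O(\lambda t_0)$ and the second is $O(\lambda^2 t_0^2)$, uniformly in $\nu\in[-1,1]$ (the worst case being $|\nu|=1$). It then suffices that $\lambda t_0=o(1/\sqrt{k\log n})$ and $\lambda^2 t_0^2=\theta t_0^2/(k\epsilon^2)=o(1/(k\log n))$; with $\epsilon^2=\Theta(1/\log n)$ this is $\theta\,t_0^2(\log n)^2=o(1)$, which follows from $\theta=o(1/(\log n)^4)$ whenever $\log d=O((\log n)^2)$, in particular for $d$ polynomial in $n$. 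Both bullets of $\pr{uc}(n,k,d)$ are then verified, so $(\mD,\mQ,\{\mP_\nu\}_{\nu\in\bR})\in\pr{uc}(n,k,d)$ and the corollary follows; hardness of $\pr{spca}$ for $n=\tilde o(k^2)$ is then immediate from Theorem~\ref{thm:univlowerbounds}.

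The main obstacle is precisely the parameter balancing in the last step: shrinking $\epsilon$ concentrates $\mD$ near $[-1,1]$ but inflates the effective signal scale $\lambda=\sqrt{\theta/(k\epsilon^2)}$, and one needs a single choice of $\epsilon$ for which simultaneously the tail of $\mD$ is $o(n^{-1})$ and both likelihood-ratio functions are pointwise $\tilde o(k^{-1/2})$ and $\tilde o(k^{-1})$ on the typical set. The hypothesis $\theta=o(1/(\log n)^4)$ is exactly what supplies this slack; it is also consistent with the more interpretable reformulation of $\pr{uc}$ under concentrated log-likelihood ratios discussed in Section~\ref{subsec:universalitydiscussion}, namely $\TV(\mP_\nu,\mP_{-\nu})=\tilde o(k^{-1/2})$ and $\TV(\pr{mix}_{1/2}(\mP_\nu,\mP_{-\nu}),\mQ)=\tilde o(k^{-1})$, which for the Gaussian family $\mP_\nu=\mN(\lambda\nu,1)$ both reduce to $\lambda=\tilde o(k^{-1/2})$ and so pin down the conjectured computational threshold of sparse PCA.
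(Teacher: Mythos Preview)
Your proposal is correct and follows essentially the same route as the paper: represent the spiked covariance sample as a Gaussian scale mixture, rescale the latent scalar so that $\mD$ is a centered Gaussian concentrating on $[-1,1]$, set $\mP_\nu=\mN(\lambda\nu,1)$ and $\mQ=\mN(0,1)$, and verify both $\pr{uc}(n,k,d)$ conditions by Taylor-expanding $e^{\pm\lambda\nu x-\lambda^2\nu^2/2}$ on the high-probability event $|x|\lesssim\sqrt{\log(nd)}$. One small slip to clean up: you write $\epsilon=\Theta(1/\log n)$ (so $\epsilon^{-2}=\omega(\log n)$) in the tail check but then use $\epsilon^2=\Theta(1/\log n)$ in the likelihood-ratio step; either normalization works under $\theta=o(1/(\log n)^4)$, and the paper's choice corresponds to the latter (variance $1/(3\log n)$), so just pick one and carry it through consistently.
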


\begin{proof}
Note that if $X \sim \mN(0, I_d + \theta vv^\top)$ then $X$ can be written as $X = \sqrt{\theta} \cdot gv + G$ where $G \sim \mN(0, I_d)$ and $g \sim \mN(0, 1)$; $X$ can be rewritten as $X = \sqrt{3\theta \log n} \cdot g'v + G$ where $g' \sim \mN(0, 1/\sqrt{3\log n})$. Now consider setting
$$\mD = \mN(0, 1/\sqrt{3\log n}), \quad \mP_{\nu} = \mN\left( \nu \sqrt{\frac{3\theta \log n}{k}}, 1 \right), \quad \mQ = \mN(0, 1)$$
Note that the probability that $x \sim \mD$ satisfies $x \in [-1, 1]$ is $1 - o(n^{-1})$ by standard Gaussian tail bounds. Fix some $\nu$ and let $t = \nu \sqrt{\frac{3\theta \log n}{k}}$. Note that if $\theta = o(1/(\log n)^4)$, we have that
$$\left|\frac{d\mP_\nu}{d\mQ}(x) - \frac{d\mP_{-\nu}}{d\mQ}(x) \right| = \left| e^{tx - t^2/2} - e^{-tx - t^2/2}\right| = \Theta(tx)$$
ift $tx = o(1)$. Now note that this quantity is $o(1/\sqrt{k \log n})$ as long as $x = o(\log n)$. Note that $x = o(\log n)$ occurs with probability at least $1 - (n^{-3} d^{-1})$ as long as $d = \text{poly}(n)$ by standard Gaussian tail bounds. Now note that
$$\left|\frac{d\mP_\nu}{d\mQ}(x) + \frac{d\mP_{-\nu}}{d\mQ}(x) - 2\right| = \left| e^{tx - t^2/2} + e^{-tx - t^2/2} - 2\right| = \Theta(t^2)$$
holds if $tx = o(1)$, which follows from $x = o(\log n)$. As shown above, this occurs with probability at least $1 - (n^{-3} d^{-1})$. Since $t^2 = o(1/k \log n)$, we have that these $(\mD, \mQ, \{ \mP_{\nu} \}_{\nu \in \mathbb{R}})$ are in $\pr{uc}(n, k, d)$, completing the proof.
\end{proof}

Combining this lower bound with the subsampling internal reduction in Section 8.1 of \cite{brennan2019optimal} extends this reduction to the small signal regime of $\theta = \tilde{\Theta}(n^{-\alpha})$, recovering the optimal computational threshold of $n = \tilde{\Omega}(k^2/\theta^2)$ for all $(n, k, d, \theta)$ with $\theta = \tilde{o}(1)$ and $k^2 = o(d)$. Similar calculations to those in the above corollary can be used to show many other choices of $(\mD, \mQ, \{ \mP_{\nu} \}_{\nu \in \mathbb{R}})$ are in $\pr{uc}(n, k, d)$. Some examples are:
\begin{itemize}
\item Balanced sparse Gaussian mixtures where $\mQ = \mN(0, 1)$ and $\mP_{\nu} = \mN(\theta \nu, 1)$ where $\theta = \tilde{o}(k^{-1/2})$ and $\mD$ is any distribution over $[-1, 1]$.
\item The Bernoulli case where $\mQ = \text{Bern}(1/2)$ and $\mP_{\nu} = \text{Bern}(1/2 + \theta \nu)$ where $\theta = \tilde{o}(k^{-1/2})$ and $\mD$ is any distribution over $[-1, 1]$.
\item Sparse mixtures of exponential distributions where $\mQ = \text{Exp}(\lambda)$ and $\mP_{\nu} = \text{Exp}(\lambda + \theta \nu)$, $\mD$ is any distribution over $[-1, 1]$ and
$$\theta = \tilde{o}\left( k^{-1/2} \cdot \min \left\{ \lambda^{1/2}, (1 + \lambda)^{-1} \right\} \right)$$
\item Sparse mixtures of centered Gaussians with difference variances where $\mQ = \mN(0, 1)$ and $\mP_{\nu} = \mN(0, 1 + \theta \nu)$ where $\theta = \tilde{o}\left( k^{-1/2} \right)$ and $\mD$ is any distribution over $[-1, 1]$.
\end{itemize}
We remark that the conditions of $\pr{uc}(n, k, d)$ can be verified for many more choices of $\mD, \mQ$ and $\mP_{\nu}$ using the computations outlined in the discussion above on the implications of our result for concentrated LLR. Furthermore, tradeoffs between $n$, $k$ and $\theta$ at smaller levels of signal $\theta$ can be obtained from the lower bounds above through subsampling internal reductions, analogously to sparse PCA.

\section{Evidence for the $k$-Partite Planted Clique Conjecture}
\label{sec:evidencekpc}

Our $k$-partite versions of planted clique and planted dense subgraph, $k$-\textsc{pc} and $k$-\textsc{pds}, seem to be just as hard as the standard versions. While the partition $E$ in their definitions contains a slight amount of information about the position of the clique, in this section we provide evidence that the hardness threshold for $k$ is unchanged by considering two restricted classes of algorithms: tests based on low-degree polynomials as well as statistical query algorithms. For the expert on either of these, the high-level message substantiated below is that $k$-\textsc{pc} and ordinary \textsc{pc} are virtually identical in both their Fourier spectrum and statistical dimension (and the same is true for $k$-\textsc{pds} versus ordinary \textsc{pds}). We emphasize that this section contains no new ideas, and repeats the arguments of \cite{hopkinsThesis} and \cite{feldman2013statistical} with a few tiny changes. If anything, however, this only supports the goal of the section which is to substantiate our claim that {\kpc } is extremely similar to standard \textsc{pc}.

We remark here that whenever we refer to {\kpds } in this section we will have $q=1/2$ and $p=1/2 + n^{-\delta}$ for arbitrary $\delta>0$. Furthermore, the analysis of low-degree polynomial and statistical query algorithms for {\kpds} where $p - q = \Omega(1)$ is essentially the same as for {\kpc}. Also, these analyses remain qualitatively the same when $q$ is replaced by a constant other than $1/2$.

\subsection{Low-degree Likelihood Ratio}

This subsection draws heavily from the paper by Hopkins and Steurer \cite{hopkins2017efficient} and on Hopkins's thesis \cite{hopkinsThesis}. Based their understanding of the sum-of-squares (SOS) hierarchy applied to statistical inference problems, they conjecture that \emph{low-degree polynomial tests} capture the full power of SOS and, more generally, all efficient hypothesis testing algorithms. 

%

\begin{conjecture} \label{c:lowDeg}
	For a broad class of hypothesis testing problems $H_0$ versus $H_1$, there is a test running in time $n^{\Ot(D)}$ with Type I + II errors tending to zero if and only if there is a successful $D$-simple statistic, i.e. a polynomial $f$ of degree at most $D$ such that $\E_{H_0}f(X)=0$ and $\E_{H_0}f(X)^2=1$ yet $\E_{H_1}f(X)\to \infty$.
\end{conjecture}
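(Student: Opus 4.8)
The plan is first to recognize that Conjecture~\ref{c:lowDeg} is, as its name indicates, a conjecture and not a theorem: the ``only if'' direction would in particular imply that \pr{pc} has no polynomial-time algorithm at $k = o(\sqrt{n})$ and, more broadly, would place many planted problems outside $\mathrm{P}$, so there is no hope of an unconditional proof. Accordingly a ``proof proposal'' here can only do two things --- give the short rigorous argument for the ``if'' direction, and assemble the heuristic and partial evidence underlying the ``only if'' direction --- and it is worth noting at the outset that Section~\ref{sec:evidencekpc} never actually invokes the conjecture itself, only the comparison that the low-degree moments of \kpc{} and of ordinary \pr{pc} agree up to $1 - o(1)$ factors.

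\textbf{The ``if'' direction.} Suppose $f$ is a polynomial of degree at most $D$ in the $m = \mathrm{size}(X)$ coordinates of the input with $\E_{H_0} f(X) = 0$, $\E_{H_0} f(X)^2 = 1$ and $\E_{H_1} f(X) \to \infty$. The test computes $f(X)$ --- which takes $m^{O(D)} = n^{\Ot(D)}$ time since $f$ has at most $\sum_{j \le D} \binom{m}{j}$ monomials --- and outputs $1$ iff $f(X) \ge \tfrac{1}{2} \E_{H_1} f$. Under $H_0$, Chebyshev's inequality together with $\E_{H_0} f^2 = 1$ bounds the Type~I error by $O\big((\E_{H_1} f)^{-2}\big) \to 0$. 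Under $H_1$ one additionally needs $\mathrm{Var}_{H_1}(f) = o\big((\E_{H_1} f)^2\big)$ for the Type~II error to vanish; this is automatic once one takes $f$ to be (a suitable truncation of) the low-degree likelihood ratio $\lrd$, which is exactly the degree-$\le D$ polynomial maximizing $\E_{H_1} f$ subject to $\E_{H_0} f^2 = 1$ and for which $\E_{H_1} \lrd = \| \lrd \|_{H_0}^2$. So the argument reduces to a two-moment computation with the canonical choice $f = \lrd$.

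\textbf{The ``only if'' direction.} Here there is no proof, and the plan can only be to marshal the known evidence: (i) for every natural planted detection problem examined in the literature, the best known polynomial-time test succeeds exactly where $\|\lrd\|_{H_0} = \omega(1)$ for $D = \Theta(\log n)$, and no efficient algorithm is known to do better; (ii) low-degree lower bounds are implied by, and via pseudocalibration are closely tied to, degree-$\tilde\Theta(D)$ sum-of-squares lower bounds, and SOS subsumes spectral methods, local statistics, and the other techniques relevant to average-case detection; and (iii) in restricted models --- most cleanly additive-Gaussian-noise and related well-behaved planted models --- the Hopkins--Steurer framework and its refinements \cite{hopkins2017efficient, hopkinsThesis} make the implication precise enough to serve as a rigorous heuristic. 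The main obstacle, and exactly why this is stated as a conjecture, is (iii) in full generality: there is no known reduction from ``$\lrd$ is bounded'' to ``no efficient test exists,'' and supplying one would be a breakthrough in complexity theory. For the purposes of Section~\ref{sec:evidencekpc} this is a non-issue, since nothing there depends on the conjecture --- only on the fact, verified below, that the degree-$\le D$ moments and statistical dimension of \kpc{} match those of ordinary \pr{pc} up to $1 - o(1)$ factors.
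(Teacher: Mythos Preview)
You correctly identify that this is a conjecture, not a theorem: the paper states it as such and offers no proof of either direction. There is nothing in the paper to compare your argument against; the conjecture is simply quoted from \cite{hopkins2017efficient,hopkinsThesis} as motivation for the Fourier-energy computation in Proposition~\ref{prop:lowdegree}.

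That said, your sketch of the ``if'' direction contains a genuine gap. You assert that $\mathrm{Var}_{H_1}(f) = o\big((\E_{H_1} f)^2\big)$ is ``automatic once one takes $f$ to be (a suitable truncation of) the low-degree likelihood ratio $\lrd$.'' It is not. The identity $\E_{H_1} \lrd = \|\lrd\|_{H_0}^2$ controls the mean under $H_1$ and the second moment under $H_0$, but says nothing about the second moment under $H_1$, which involves degree-$2D$ moments of the planted distribution. This is exactly the subtlety flagged in \cite{hopkinsThesis} (see the discussion surrounding the formal statement of the low-degree conjecture there): in well-behaved models such as additive Gaussian noise one can often verify the $H_1$ variance condition by hand, but in general it is an additional hypothesis, not a consequence of $\|\lrd\|_{H_0} \to \infty$. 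Truncation does not rescue this without further argument, since truncating a polynomial that is large in $H_0$-norm need not produce one whose $H_1$-variance is small relative to its $H_1$-mean squared. So even the ``easy'' direction of the conjecture, as literally stated, is not a theorem in full generality.
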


By the Neyman-Pearson lemma, the likelihood ratio test is the optimal test with respect to Type I + II errors: given a sample $X$, declare $H_1$ if $\lr(X) = \frac{\P_{H_1}(X)}{\P_{H_0}(X)}>1$ and $H_0$ otherwise. Of course, computing the likelihood ratio is intractable for the problems of interest. The low-degree likelihood ratio $\lrd$ is the orthogonal projection of the likelihood ratio onto the subspace of polynomials of degree at most $D$, and as stated in the following theorem is the optimal test of a given degree. We take $H_0$ to be the uniform distribution on the appropriate dimension hypercube $\{-1,+1\}^N$, and here the projection is with respect to the inner product $\la f,g\ra = \E_{H_0} f(X) g(X)$, which also defines a norm $\|f\|_2 = \la f,f\ra$. 

\begin{theorem}[Page 35 of \cite{hopkinsThesis}] The optimal $D$-simple statistic is the low-degree likelihood ratio, i.e. it holds that
$$
\max_{{f\in \bR[x]_{{\leq D}}}\atop \E_{H_0}f(X)=0} \frac{\E_{H_1} f(X)}{\sqrt{\E_{H_0} f(X)^2}} = \|\lrd - 1\|_2
$$	
\end{theorem}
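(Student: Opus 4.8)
The plan is to prove that the low-degree likelihood ratio $\lrd$ is the unique optimizer of the stated ratio by a Hilbert space projection argument. Work in the Hilbert space $L^2(H_0)$ of functions on $\{-1,+1\}^N$ equipped with the inner product $\la f,g\ra = \E_{H_0} f(X)g(X)$, and let $V_{\leq D}$ be the (finite-dimensional) subspace of polynomials of degree at most $D$. The first observation is that for any $f$, the quantity $\E_{H_1} f(X)$ depends only on the projection of $f$ onto the span of $\lr$ in a suitable sense; more precisely, $\E_{H_1} f(X) = \E_{H_0}[\lr(X) f(X)] = \la \lr, f\ra$. This is just the change-of-measure identity, valid since $H_1$ is absolutely continuous with respect to $H_0$ (true for these discrete problems).

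Next I would restrict attention to $f \in V_{\leq D}$ with $\E_{H_0} f(X) = 0$, i.e. $f \perp 1$. For such $f$, write $\la \lr, f\ra = \la \lr - 1, f\ra$ since $\la 1, f\ra = 0$. Now decompose $\lr - 1 = (\lrd - 1) + (\lr - \lrd)$, where by definition $\lrd$ is the orthogonal projection of $\lr$ onto $V_{\leq D}$; note $\lrd - 1 \in V_{\leq D}$ and is orthogonal to $1$ (since $\E_{H_0}\lr = 1$ forces $\la \lr, 1\ra = 1 = \|1\|_2^2$, so the projection of $\lr-1$ onto $V_{\leq D}$ has zero component along $1$). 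Because $f \in V_{\leq D}$, the component $\lr - \lrd$, which is orthogonal to all of $V_{\leq D}$, contributes nothing: $\la \lr - \lrd, f\ra = 0$. Hence $\E_{H_1} f(X) = \la \lrd - 1, f\ra$ for every admissible $f$.

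The last step is Cauchy--Schwarz: $\la \lrd - 1, f\ra \leq \|\lrd - 1\|_2 \cdot \|f\|_2 = \|\lrd - 1\|_2 \cdot \sqrt{\E_{H_0} f(X)^2}$, with equality precisely when $f$ is a positive scalar multiple of $\lrd - 1$. Dividing through by $\sqrt{\E_{H_0}f(X)^2}$ gives
\[
\frac{\E_{H_1} f(X)}{\sqrt{\E_{H_0} f(X)^2}} \leq \|\lrd - 1\|_2,
\]
and the choice $f = \lrd - 1$ (which lies in $V_{\leq D}$, has $\E_{H_0} f = 0$, and is nonzero whenever the problem is nontrivially distinguishable at degree $D$) attains the bound, so the supremum equals $\|\lrd - 1\|_2$. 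I would remark that the normalization in the theorem statement is stated as $\E_{H_0} f(X)^2 = 1$; this is just a rescaling of the free optimization over $f$, so the two formulations of the supremum agree.

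The argument is essentially routine once the setup is in place; the only genuine point requiring a word of care is the degenerate case where $\lrd = 1$ (no degree-$D$ polynomial distinguishes $H_0$ from $H_1$), in which both sides are zero and there is no optimizing $f$ in the open constraint set — this is not really an obstacle but should be noted. Everything else is a clean application of orthogonal projection in $L^2(H_0)$ and Cauchy--Schwarz, exactly as on page 35 of \cite{hopkinsThesis}.
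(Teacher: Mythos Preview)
Your argument is correct and is the standard Hilbert-space projection plus Cauchy--Schwarz proof of this fact. Note, however, that the paper does not actually prove this theorem: it is cited as a result from \cite{hopkinsThesis} (page 35) and simply invoked, so there is no ``paper's own proof'' to compare against. Your write-up is exactly the argument one finds in that reference.
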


Thus existence of low-degree tests for a given problem boils down to computing the norm of the low-degree likelihood ratio.
In order to bound the norm on the right-hand side it is useful to re-express it in terms of the standard Boolean Fourier basis. The collection of functions $\{\chi_\alpha(X) = \prod_{e\in \alpha} X_e: \alpha \subseteq [N]\}$ is an orthonormal basis over the space $\{-1,1\}^{N}$ with inner product defined above.
By orthonormality of the basis, for any basis function $\chi_\alpha$ with $1\leq |\alpha|\leq D$,
$$
\la \chi_\alpha, \lrd -1\ra = \la \chi_\alpha, \lr \ra = \E_{H_0} \chi_\alpha(X) \lr(X) = \E_{H_1}\chi_\alpha(X)
$$
and $\E_{H_0} \lrd=\E_{H_1} 1=1$ so that  $\la 1, \lrd -1\ra=0$. 
It then follows by Parseval's identity that 
\begin{equation}\label{e:energy}
	\|\lrd - 1\|_2 = \left( \sum_{1\leq|\alpha|\leq D} \big(\E_{H_1}\chi_\alpha(X)\big)^2\right)^{1/2}
\end{equation}
which is exactly the Fourier energy up to degree $D$. By directly computing the Fourier coefficients of $\lr^{\le D}$ for {\kpc} and {\kpds}, we arrive at the following proposition.

\begin{proposition}[Failure of low-degree tests] \label{prop:lowdegree}
Consider {\kpc } and {\kpds } with $k= n^{1/2-\epsilon}$. Tests of degree at most $D$ fail, i.e., $\|\lrd - 1\|=O(1)$ in the following cases:
\begin{enumerate}
\item[(i)] in {\kpc }, if $D \leq C\log n $ for a sufficiently small constant $C$
\item[(ii)] in {\kpds}, if $D \leq n^{\delta / 10}$.
	\end{enumerate}
\end{proposition}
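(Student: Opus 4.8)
The plan is to compute the Fourier energy $\sum_{1 \le |\alpha| \le D} (\E_{H_1} \chi_\alpha(X))^2$ in \eqref{e:energy} directly for both $k$\textsc{-pc} and $k$\textsc{-pds}, and show it is $O(1)$ in the stated regimes. First I would set up the Fourier coefficients. For $k$\textsc{-pc}, recall the observation is the $\pm 1$-valued adjacency matrix on $\binom{n}{2}$ edge-slots, and under $H_1$ a hidden $k$-clique $R$ is planted with exactly one vertex in each part $E_i$. For a set $\alpha$ of edge-slots, $\E_{H_1}\chi_\alpha(X)$ is nonzero only when every vertex touched by an edge of $\alpha$ lies inside the (random) clique $R$ — otherwise one of the factors is an unbiased $\pm 1$ bit — so $\E_{H_1}\chi_\alpha(X) = \E_R[\prod_{e \in \alpha} \mathbbm{1}\{e \subseteq R\}\cdot \mu] $ where $\mu = \E[X_e \mid e \subseteq R]$ and the product over $e \notin$ (clique) vanishes. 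Writing $V(\alpha)$ for the vertex set spanned by $\alpha$, this becomes $\mu^{|\alpha|}\,\P_R[V(\alpha) \subseteq R]$. The only change from ordinary \textsc{pc} is the value of $\P_R[V(\alpha) \subseteq R]$: for ordinary \textsc{pc} it is $\binom{n-|V(\alpha)|}{k-|V(\alpha)|}/\binom{n}{k} \le (k/n)^{|V(\alpha)|}$, while for $k$\textsc{-pc} it is $0$ unless $V(\alpha)$ hits each part at most once, in which case it is exactly $(k/n)^{|V(\alpha)|}$ since the choices in distinct parts are independent and uniform over $n/k$ candidates. In particular the $k$-partite coefficient is \emph{pointwise} bounded by the ordinary one, so the low-degree likelihood ratio norm for $k$\textsc{-pc} is at most that for ordinary \textsc{pc}.

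Given this pointwise domination, part (i) follows immediately by invoking the known bound $\|\lr^{\le D} - 1\|_2 = O(1)$ for ordinary planted clique at $k = n^{1/2 - \epsilon}$, $D \le C \log n$ — this is exactly the computation carried out in \cite{hopkinsThesis} (and \cite{hopkins2017efficient}), which I would cite. If a self-contained argument is preferred, I would reproduce the standard combinatorial bound: group the sum over $\alpha$ with $|\alpha| = m$ and $|V(\alpha)| = v$ by first choosing the $v$ vertices ($\binom{n}{v}$ ways, or for $k$\textsc{-pc} at most $\binom{k}{v}(n/k)^v \le n^v$ ways since one vertex is picked per chosen part) and then the $m$ edges among them ($\binom{\binom{v}{2}}{m}$ ways), giving
\begin{align*}
\sum_{1 \le |\alpha| \le D} (\E_{H_1}\chi_\alpha(X))^2 &\le \sum_{v \ge 2}\sum_{m=1}^{\binom{v}{2}} n^v \binom{\binom{v}{2}}{m} \mu^{2m} (k/n)^{2v},
\end{align*}
and since $\mu = \Theta(1)$ for \textsc{pc} with constant $p$, $\sum_m \binom{\binom v 2}{m}\mu^{2m} \le (1+\mu^2)^{\binom v 2} \le 2^{v^2}$, so the sum is at most $\sum_v n^v 2^{v^2} (k/n)^{2v} = \sum_v 2^{v^2}(k^2/n)^v$; with $k^2/n = n^{-2\epsilon}$ this converges provided the truncation $D$ forces $v \le O(\log n / \log n)$... more carefully, $v \le 2D$ and the term is $2^{v^2} n^{-2\epsilon v}$, which is $O(1)$ summably as long as $v^2 = o(\epsilon v \log n)$, i.e. $v = o(\epsilon \log n)$, hence $D = o(\epsilon \log n)$; choosing the constant $C$ small enough gives (i). For part (ii), the same expansion applies to $k$\textsc{-pds} with $q = 1/2$, $p = 1/2 + n^{-\delta}$, where now $\mu = \E[X_e \mid e \subseteq R] = 2p - 1 = 2n^{-\delta}$ (after the $\pm 1$ recentering), so $\mu^2 = 4n^{-2\delta}$; the bound becomes $\sum_v n^v (1 + 4n^{-2\delta})^{\binom v 2}(k/n)^{2v} \le \sum_v n^v e^{2 v^2 n^{-2\delta}} n^{-2\epsilon v}$, and for $v \le 2D \le n^{\delta/5}$ the exponential factor is $e^{O(n^{2\delta/5 - 2\delta})} = e^{o(1)} = O(1)$, leaving a convergent geometric-type sum $\sum_v n^{(1-2\epsilon)v - v} \cdot \mathrm{poly}$ — wait, I need $n^v \cdot (k/n)^{2v} = n^v n^{(1-2\epsilon)v} n^{-2v} = n^{-2\epsilon v}$, which is summable; this yields (ii) with the stated threshold $D \le n^{\delta/10}$ (the constant $1/10$ giving room in the exponent).

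The main obstacle is bookkeeping rather than conceptual: one must carefully verify the $k$-partite constraint only ever \emph{decreases} each Fourier coefficient (which follows because requiring $V(\alpha)$ to hit each part at most once can only shrink the event $\{V(\alpha) \subseteq R\}$, and conditioned on this being possible the probability is exactly the product formula), and then check that the truncation levels $D = C \log n$ and $D = n^{\delta/10}$ indeed keep the exponential-in-$v^2$ terms under control against the $n^{-2\epsilon v}$ decay — the interplay $k = n^{1/2-\epsilon}$ is what makes the geometric ratio less than $1$. I would also remark explicitly, as the excerpt's preamble does, that this is precisely the argument of \cite{hopkinsThesis} with the single modification that the planting distribution over $R$ is $\mU_N(E)$ rather than $\mU_{N,k}$, and that because the per-part marginals of $\mU_N(E)$ are uniform and independent, every moment computation that appears is either identical to or dominated by its ordinary-\textsc{pc} counterpart; hence no new estimates are needed beyond those already in the literature.
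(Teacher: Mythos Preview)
Your proposal is correct and takes essentially the same approach as the paper: compute the Fourier coefficients $\E_{H_1}\chi_\alpha = \mu^{|\alpha|}\,\P[V(\alpha)\subseteq R]$, observe that the $k$-partite constraint only zeroes out some coefficients and leaves the rest equal to $(k/n)^{|V(\alpha)|}$ (hence pointwise dominated by ordinary \textsc{pc}), and bound the Fourier energy by grouping over $v=|V(\alpha)|$. The paper's bookkeeping differs only cosmetically---it counts the $\alpha$ with $|V(\alpha)|=t$ by $n^t t^{\min(2D,2t^2)}$ and for part (i) splits the sum at $t=\sqrt{C\log n}$ rather than using your $(1+\mu^2)^{\binom v 2}$ bound, and for part (ii) keeps the factor $(2n^{-\delta})^{2|\alpha|}$ explicit---and there are a couple of harmless arithmetic slips in your part (ii) exponents (e.g.\ $2D\le 2n^{\delta/10}$ gives $v^2 n^{-2\delta}=O(n^{\delta/5-2\delta})$), but the arguments are otherwise the same.
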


Combining the proposition with Conjecture~\ref{c:lowDeg} implies that in the case of {\kpc } there is no polynomial time algorithm and for {\kpds } there is no algorithm with runtime less than $\exp(n^{\delta/10})$. A proof of this proposition can be found in Appendix \ref{sec:appendix}, which is similar to the computation of the Fourier spectrum of $\pr{pc}$ in \cite{hopkinsThesis}. We also make the following remark.

\begin{remark}
The computational threshold for $k$ in both $k\textsc{-pds}(n,k,\frac12 + n^{-\delta},\frac12)$ and $\textsc{pds}(n,k,\frac12 + n^{-\delta},\frac12)$ are no longer at $k=\sqrt{n}$, but rather at $k=n^{1/2+\delta'}$ for some $\delta'>0$. For this reason we see failure of low-degree tests up to degree (say) $n^{\delta/10}$ even for $\epsilon=0$, i.e. $k=\sqrt{n}$. 
\end{remark}

%

\subsection{Statistical Query Algorithms and Statistical Dimension}

In this section we verify that the lower bounds shown by \cite{feldman2013statistical} for \textsc{pc} for a generalization of statistical query algorithms hold essentially unchanged for \kpc. The same approach results in lower bounds for {\kpds } that are essentially the same as for \textsc{pds}, which we omit to avoid redundancy. 

The Statistical Algorithm framework of \cite{feldman2013statistical} applies to distributional problems, where the input is a sequence of i.i.d. observations from a distribution $D$. We therefore define distributional versions of {\kpds } and \kpc. Just as in \cite{feldman2013statistical}, we first define a bipartite version of each problem, and will then define corresponding distributional versions by thinking of samples as generated by observing neighborhoods of right-hand side vertices. 

We consider the following bipartite versions of $k$-\textsc{pds} and {\kpc } with $n$ vertices per side. Let $k$ divide $n$ and $E=E_1\cup E_2\cup \dots \cup E_k$ be a known partition of left-hand side vertices $[n]$ with $|E_i|=n/k$ for each $i$. A $k\times k$ bipartite clique $S$ is formed by selecting a single vertex u.a.r. from each $E_i$ on the LHS and including each of the RHS vertices independently with probability $k/n$ each. Note that only the LHS respects the partition $E$. Now in {\kpc } all edges between LHS and RHS vertices in $S$ are included, and remaining edges are included with probability $1/2$ each, with the obvious generalization to edge probabilities $p$ and $q$ for \kpds. We now define the corresponding distributional version of \kpc.

\begin{definition}
Let $k$ divide $n$ and fix a known partition of $E=E_1\cup E_2\cup \dots \cup E_k$ of $[n]$ with $|E_i|=n/k$. Let $S\subset [n]$ be a subset of indices with $|S\cap E_i|=1$ for each $i\in [k]$. The distribution $D_S$ over $\{0,1\}^n$ produces with probability $1-k/n$ a uniform point $X\sim \mathrm{Unif}(\{0,1\}^n)$ and with probability $k/n$ a point $X$ with $X_i=1$ for all $i\in S$ and $X_{S^c}\sim \mathrm{Unif}(\{0,1\})^{n-k}$. The \textbf{distributional bipartite $k$-\textsc{PC}} problem is to find the subset $S$ given some number of independent samples $m$ from $D_S$. 
\end{definition} 

The correspondence between this distributional problem where samples are observed and the bipartite version can be seen by considering algorithms that sequentially examine the neighborhoods of the RHS vertices in the bipartite graph. Because there are only $n$ RHS vertices, meaningful conclusions regard number of samples $m\leq n$. We now make an important remark on the relationship between these formulations and our reductions.

\begin{remark}
Our main reductions from {\kpc} and {\kpds} to $\pr{rsme}$ and $\pr{glsm}$ both only require the partition structure of $E$ along one axis of the adjacency matrix of the input graph. Therefore both of these reductions can easily be adapted to begin with the distributional bipartite variants of {\kpc} and {\kpds}. However, the reductions to $\pr{semi-cr}$ require the partition structure along both the rows and columns of the adjacency matrix of the input graph.
\end{remark}

 Let $\cX=\{-1,+1\}^n$ denote the space of configurations and let
 $\cD$ be a set of distributions over $\cX$. Let $\cF$ be a set of solutions (in our case, clique positions) and $\cZ:\cD\to 2^\cF$ be a map taking each distribution $D\in \cD$ to a subset of solutions $\cZ(D)\subseteq \cF$ that are defined to be valid solutions for $D$. In our setting, since each clique position is in one-to-one correspondence with distributions, there is a single clique $\cZ(D)$ corresponding to each distribution $D$. For $m>0$, the \emph{distributional search problem} $\cZ$ over $\cD$ and $\cF$ using $m$ samples is to find a valid solution $f\in \cZ(D)$ given access to $m$ random samples from an unknown $D\in \cD$. 

One class of algorithms we are interested in are called \emph{unbiased statistical algorithms}, defined by access to an unbiased oracle.

\begin{definition}[Unbiased Oracle]
Let $D$ be the true distribution. A query to the oracle consists of any function $h:\cX\to \{0,1\}$, and the oracle then takes an independent random sample $X\sim D$ and returns $h(X)$.
\end{definition}

These algorithms access the sampled data only through the oracle: unbiased statistical algorithms outsource the computation. Because the data is accessed through the oracle, it is possible to prove \emph{unconditional} lower bounds using information-theoretic methods. Another oracle, VSTAT, is similar but the oracle is allowed to make an adversarial perturbation of the function evaluation. It is shown in \cite{feldman2013statistical} via a simulation argument that the two oracles are approximately equivalent. 

\begin{definition}[VSTAT Oracle]
Let $D$ be the true distribution and $t>0$ a sample size parameter. A query to the oracle again consists of any function $h:\cX\to \{0,1\}$, and the oracle returns an arbitrary value $v\in[\E_{ D}h(X)-\tau, \E_{ D}h(X)+\tau]$, where $\tau = \max\{1/t,\sqrt{\E_{ D}h(X)(1-\E_{ D}h(X))/t}\}$.
\end{definition}

%
%
%

We borrow some definitions from \cite{feldman2013statistical}. For a distribution $D$ we define the inner product $\la f,g\ra_D = \E_{X\sim D}f(X) g(X)$ and the corresponding norm $\|f\|_D = \sqrt{\la f,f\ra_D}$. For distributions $D_1$ and $D_2$ both absolutely continuous with respect to $D$, the pairwise correlation is defined to be 
$$
\chi_D(D_1,D_2) = \Big|\Big\la \frac{D_1}D-1,\frac{D_2}D-1\Big\ra_D \Big|=|\la \Dh_1, \Dh_2\ra_D|\,.
$$ 
Here we defined $ \Dh_1 = \frac{D_1}D-1$.
The \emph{average correlation} $\rho(\cD, D)$ of a set of distributions $\cD$ relative to distribution $D$ is defined as 
$$
\rho(\cD, D) = \frac1{|\cD|^2} \sum_{D_1,D_2\in \cD}\chi_D(D_1,D_2) = \frac1{|\cD|^2} \sum_{D_1,D_2\in \cD}\Big|\Big\la \frac{D_1}D-1,\frac{D_2}D-1\Big\ra_D \Big|\,.
$$

We quote the definition of statistical dimension with average correlation from \cite{feldman2013statistical}, and then state a lower bound on the number of queries needed by any statistical algorithm.
\begin{definition}[Statistical dimension]\label{d:statDimProblem}
Fix $\gamma>0,\eta>0$, and search problem $\cZ$ over set of solutions $\cF$ and class of distributions $\cD$ over $\cX$. 
We consider pairs $(D,\cD_D)$ consisting of a ``reference distribution" $D$ over $\cX$ and a finite set of distributions $\cD_D\subseteq \cD$ with the following property: for any solution $f\in \cF$, the set $\cD_f = \cD_D\setminus \cZ\inv (f)$ has size at least $(1-\eta)\cdot |\cD_D|$.
Let $\ell(D,\cD_D)$ be the largest integer $\ell$ so that for any subset $\cD'\subseteq \cD_f$ with $|\cD'|\geq |\cD_f|/\ell$, the average correlation is $|\rho(\cD',D)|<\gamma$ (if there is no such $\ell$ one can take $\ell=0$). The \emph{statistical dimension} with average correlation $\gamma$ and solution set bound $\eta$ is defined to be the largest $\ell(D,\cD_D)$ for valid pairs $(D,\cD_D)$ as described, and is  denoted by $\mathrm{SDA}(\cZ,\gamma,\eta)$.
\end{definition}

\begin{theorem}[Theorems 2.7 and 3.17 of \cite{feldman2013statistical}]\label{t:sampleBound}
  Let $\cX$ be a domain and $\cZ$ a search problem over a set of solutions $\cF$ and a class of distributions $\cD$ over $\cX$. For $\gamma>0$ and $\eta\in (0,1)$, let $\ell = \mathrm{SDA}(\cZ,\gamma,\eta)$. Any (possibly randomized) statistical query algorithm that solves $\cZ$ with probability $\delta>\eta$ requires at least $\ell$ calls to the $VSTAT(1/(3\gamma))$ oracle to solve $\cZ$. 
  
Moreover, any statistical query algorithm requires at least  $m$ calls to the Unbiased Oracle for
  $
  m = \min\left\{ \frac{\ell(\delta- \eta)}{2(1-\eta)},\frac{(\delta-\eta)^2}{12\gamma}\right\}
  $.
  In particular, if $\eta \leq 1/6$, then any algorithm with success probability at least $2/3$ requires at least $\min\{ \ell/4,1/48\gamma\}$ samples from the Unbiased Oracle.
\end{theorem}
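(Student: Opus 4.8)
The plan is to reconstruct the decision-tree-versus-adversary argument of \cite{feldman2013statistical} in two stages: a lower bound against $\mathrm{VSTAT}(1/(3\gamma))$, followed by a transfer to the Unbiased Oracle. First I would fix a valid pair $(D,\cD_D)$ witnessing $\mathrm{SDA}(\cZ,\gamma,\eta)\ge\ell$, so that for every solution $f$ the set $\cD_f=\cD_D\setminus\cZ^{-1}(f)$ has size at least $(1-\eta)|\cD_D|$ and every $\cD'\subseteq\cD_f$ with $|\cD'|\ge|\cD_f|/\ell$ satisfies $\rho(\cD',D)<\gamma$. I would then model a (randomized) statistical algorithm as a distribution over decision trees whose internal nodes are Boolean queries $h:\cX\to\{0,1\}$, and let the oracle always answer with the reference value $\E_D h$; this is a legal $\mathrm{VSTAT}(1/(3\gamma))$ response for a candidate distribution $D_i$ exactly when $|\E_{D_i}h-\E_D h|\le\tau_{D_i,h}$ with $\tau_{D_i,h}=\max\{3\gamma,\sqrt{3\gamma\,\E_{D_i}h(1-\E_{D_i}h)}\}$.

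The core step is a combinatorial lemma: for a \emph{single} query $h$, the bad set $A_h=\{D_i\in\cD_f:|\E_{D_i}h-\E_D h|>\tau_{D_i,h}\}$ has $|A_h|<|\cD_f|/\ell$. Writing $\Dh_i=D_i/D-1$ gives $\E_{D_i}h-\E_D h=\la h,\Dh_i\ra_D$, which on $A_h$ exceeds $\tau_{D_i,h}$; summing $\la h,\Dh_i\ra_D\,\la h,\Dh_j\ra_D$ over $D_i,D_j\in A_h$ and exploiting that $h$ is Boolean (so $\|h\|_D^2=\E_D h$) to control $\sum_{i,j\in A_h}|\la h,\Dh_i\ra_D\,\la h,\Dh_j\ra_D|$ by a combination of $|A_h|^2\,\rho(A_h,D)$ and lower-order terms, I would obtain a Cauchy--Schwarz-type inequality forcing $\rho(A_h,D)\ge\gamma$ whenever $|A_h|\ge|\cD_f|/\ell$, contradicting the statistical-dimension hypothesis. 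With the lemma in hand I would argue: let $f^\star$ be the solution produced along the all-reference path, using $k<\ell$ queries $h_1,\dots,h_k$; then $|\bigcup_{j\le k}A_{h_j}|<k|\cD_{f^\star}|/\ell\le|\cD_{f^\star}|$, so there is $D_i\in\cD_{f^\star}$ for which every reference answer is legal. On the instance $D_i$ the oracle may return precisely those answers, the algorithm outputs $f^\star\notin\cZ(D_i)$, and so it fails; tallying the fraction of such $D_i$ (at least $(1-k/\ell)$ of $\cD_{f^\star}$) and averaging over the algorithm's randomness yields $\delta\le\eta$ whenever $k<\ell$, i.e.\ at least $\ell$ queries are needed when $\delta>\eta$.

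For the Unbiased-Oracle statement I would proceed along two independent lines. One: a $\mathrm{VSTAT}(t)$ query can be simulated, up to logarithmic factors, from $O(t)$ Unbiased-Oracle samples by forming the empirical mean of $h$ and applying a Chernoff/Bernstein bound to land inside the tolerance window with high probability; composed with the $\mathrm{VSTAT}$ bound above this produces the term $\ell(\delta-\eta)/(2(1-\eta))$. Two: a direct information-theoretic bound, controlling the mutual information between the index of the true distribution and an $m$-sample transcript through the average correlation $\rho$, gives the competing term $(\delta-\eta)^2/(12\gamma)$; taking the minimum and specializing to $\eta\le 1/6$, $\delta\ge 2/3$ yields $\min\{\ell/4,1/(48\gamma)\}$.

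I expect the main obstacle to be the combinatorial lemma of the second paragraph: making the second-moment/Cauchy--Schwarz estimate quantitatively sharp enough to turn ``$|A_h|$ large'' into ``$\rho(A_h,D)\ge\gamma$'' while cleanly handling both branches of $\tau=\max\{1/t,\sqrt{p(1-p)/t}\}$ and the degenerate case where $\E_D h$ is tiny. Secondary nuisances are the bookkeeping for randomized algorithms (fixing internal randomness, Yao-type averaging, and separating the roles of $\delta$ and $\eta$) and tracking constants so that the sample-size parameter comes out exactly $1/(3\gamma)$ and the final bounds carry the stated $12\gamma$ and $48\gamma$.
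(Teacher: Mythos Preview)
The paper does not prove this theorem at all: it is quoted verbatim as Theorems~2.7 and~3.17 of \cite{feldman2013statistical} and used as a black box. There is therefore no ``paper's own proof'' to compare against; your proposal is an attempt to reconstruct the argument of \cite{feldman2013statistical}, which the present paper simply cites.

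That said, your outline is broadly faithful to the Feldman--Grigorescu--Reyzin--Vempala--Xiao argument: the adversary-answers-with-the-reference-distribution strategy, the key combinatorial lemma that a single query can only ``distinguish'' a $1/\ell$ fraction of $\cD_f$ (else the average correlation on the distinguished set would be at least $\gamma$), and the simulation/information-theoretic transfer to the Unbiased Oracle are exactly the ingredients. The place where your sketch is vaguest is also where the real work lies in \cite{feldman2013statistical}: the second-moment inequality showing that $|A_h|\ge|\cD_f|/\ell$ forces $\rho(A_h,D)\ge\gamma$ requires a careful case split on whether $\E_D h$ is small and on which branch of $\tau$ is active, and the constants $3\gamma$, $12\gamma$, $48\gamma$ come out of that analysis rather than being adjustable at the end. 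If you intend to actually write this out, that lemma is the only place you need to be precise; the rest is bookkeeping.
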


%
%
%
%

%


Let $\cS$ be the set of all size $k$ subsets of $[n]$ respecting the partition $E$, i.e., $\cS = \{S:|S|=k\text{ and } |S\cap E_i|=1\text{ for }i\in [k]\}$, and note that $|\cS| = (n/k)^k$. We henceforth use $D$ to denote the uniform distribution on $\{0,1\}^n$. The following lemma is exactly the same as in \cite{feldman2013statistical}, except that we further restrict $S$ and $T$ to be in $\cS$ rather than arbitrary size $k$ subsets of $[n]$, which does not change the bound.

\begin{lemma}[Lemma 5.1 in \cite{feldman2013statistical}]\label{l:avgCorr}
For $S,T\in \cS$, $\chi_D(D_S,D_T) = |\la \Dh_S, \Dh_T\ra_D|\leq 2^{|S\cap T|} k^2 / n^2$.  	
\end{lemma}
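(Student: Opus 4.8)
The plan is to establish Lemma~\ref{l:avgCorr} by computing the Fourier coefficients of $\Dh_S = D_S/D - 1$ with respect to the standard Boolean basis $\{\chi_\alpha\}_{\alpha \subseteq [n]}$ on $\{0,1\}^n$ (equivalently $\{-1,+1\}^n$ after the affine change of variables), and then reading off the pairwise correlation $\chi_D(D_S, D_T)$ via Parseval. First I would recall the structure of $D_S$: with probability $1-k/n$ it outputs a uniform point, and with probability $k/n$ it outputs a point that is $1$ on all coordinates in $S$ and uniform on $S^c$. Writing the density ratio explicitly, $\Dh_S(X) = \frac{D_S(X)}{D(X)} - 1$, one finds that $\Dh_S$ depends only on the coordinates $X_S$, and in fact $\Dh_S(X) = \frac{k}{n}\big(2^k \mathbf{1}\{X_i = 1 \ \forall i \in S\} - 1\big)$ in $\{0,1\}$-coordinates. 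Expanding the indicator $\mathbf{1}\{X_i=1\ \forall i\in S\}$ in the Fourier basis over the hypercube gives $2^{-k}\sum_{\alpha \subseteq S} \chi_\alpha(X)$ (in $\pm 1$ coordinates), so that $\Dh_S = \frac{k}{n}\sum_{\emptyset \neq \alpha \subseteq S}\chi_\alpha$; the $\alpha = \emptyset$ term cancels against the $-1$, which is exactly the statement $\la 1, \Dh_S\ra_D = 0$ noted in the excerpt.

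Next I would compute the inner product. Since $\{\chi_\alpha\}$ is orthonormal under $\la f, g\ra_D = \E_D f g$, we get
\begin{equation*}
\la \Dh_S, \Dh_T\ra_D = \frac{k^2}{n^2} \sum_{\emptyset \neq \alpha \subseteq S \cap T} 1 = \frac{k^2}{n^2}\left(2^{|S\cap T|} - 1\right).
\end{equation*}
Taking absolute values and using $2^{|S\cap T|} - 1 \le 2^{|S \cap T|}$ yields $\chi_D(D_S, D_T) \le 2^{|S\cap T|} k^2/n^2$, which is the claimed bound. The restriction to $S, T \in \cS$ plays no role in this computation — it is only used downstream, when averaging over pairs to control the statistical dimension — so the bound is literally the same as the one in \cite{feldman2013statistical}; this matches the remark in the excerpt that restricting to the partition-respecting subsets "does not change the bound."

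The only mild subtlety — and the step I would be most careful about — is the bookkeeping of the affine change of coordinates between the $\{0,1\}^n$ description of $D_S$ and the $\{-1,+1\}^n$ Fourier basis used for the inner product, making sure the density ratio $D_S/D$ and the basis expansion of the "all-ones on $S$" event are normalized consistently so that the $\frac{k}{n}$ prefactor and the $2^{|S\cap T|}$ come out correctly; a sign or factor-of-$2^k$ slip here would change the final constant. Beyond that, there is no real obstacle: this is a direct Fourier-analytic computation, essentially identical to the planted clique calculation in \cite{feldman2013statistical}, and the $k$-partite promise enters only through which subsets $S$ are admitted, not through the form of $\Dh_S$. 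I would therefore present the proof as: (1) write $\Dh_S$ explicitly; (2) Fourier-expand it and observe $\widehat{\Dh_S}(\alpha) = \frac{k}{n}\mathbf{1}\{\emptyset \neq \alpha \subseteq S\}$; (3) apply Parseval to the pair $(\Dh_S, \Dh_T)$ to get the sum over $\emptyset \neq \alpha \subseteq S \cap T$; (4) bound the geometric-series count by $2^{|S\cap T|}$.
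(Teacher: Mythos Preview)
Your proof is correct. The paper does not actually supply its own proof of this lemma: it simply cites Lemma~5.1 of \cite{feldman2013statistical} and remarks that restricting $S,T$ to the partition-respecting family $\cS$ does not change the bound. Your Fourier-analytic computation fills in exactly the argument one would expect from that reference, yielding the exact value $\la \Dh_S,\Dh_T\ra_D = (k^2/n^2)(2^{|S\cap T|}-1)$ and hence the stated bound; your observation that the $k$-partite restriction plays no role in the calculation matches the paper's one-line justification.
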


We now proceed to derive the main statistical query lower bounds for the bipartite formulations of {\kpc} and {\kpds}. The lemma below is similar to Lemma 5.2 in \cite{feldman2013statistical}. Its proof is deferred to Appendix \ref{sec:appendix}.

\begin{lemma}[Modification of Lemma 5.2 in \cite{feldman2013statistical}]\label{l:avgCorrLargeSets}
	Let $\delta \geq 1/\log n$ and $k\leq n^{1/2 - \delta}$. For any integer $\ell \leq k$, $S\in \cS$, and set $A\subseteq \cS$ with $|A|\geq 2|\cS|/ n^{2\ell \delta}$, 
	$$
	\frac1{|A|} \sum_{T\in A} \big| \la \Dh_S, \Dh_T\ra_D \big| \leq 2^{\ell + 3}\frac{k^2}{n^2}\,.
	$$
\end{lemma}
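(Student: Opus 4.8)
The plan is to mirror the proof of Lemma 5.2 in \cite{feldman2013statistical}, changing only the combinatorial input so that it refers to transversals of the partition $E$ rather than to arbitrary $k$-subsets of $[n]$. The one fact I need about the $k$-partite structure is the following overlap count: for a fixed $S \in \cS$ and an integer $j$ with $0 \le j \le k$, a transversal $T \in \cS$ with $|S \cap T| \ge j$ is determined by choosing $j$ of the $k$ parts $E_i$ on which $T$ agrees with $S$ and then picking an arbitrary vertex of $E_i$ in each of the remaining $k-j$ parts, so that
$$\bigl| \{ T \in \cS : |S \cap T| \ge j \} \bigr| \le \binom{k}{j} (n/k)^{k-j}.$$
Dividing by $|\cS| = (n/k)^k$ and using $\binom{k}{j} \le k^j$ together with $k \le n^{1/2 - \delta}$ gives
$$\frac{\bigl| \{ T \in \cS : |S \cap T| \ge j\} \bigr|}{|\cS|} \le \Big(\frac{k^2}{n}\Big)^{j} \le n^{-2\delta j}.$$
This is the only place the partition promise is used, and (if anything) the estimate is tighter than the corresponding bound $\binom{k}{j}\binom{n-k}{k-j}/\binom{n}{k}$ in the non-partite setting; this is why the conclusion of the lemma is identical to the one in \cite{feldman2013statistical}.

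Given this, I would split $A = A_{\le\ell} \cup A_{>\ell}$ according to whether $|S \cap T| \le \ell$ or $|S \cap T| > \ell$. On $A_{\le\ell}$, Lemma \ref{l:avgCorr} bounds each summand by $2^{\ell} k^2/n^2$, contributing at most $|A| \cdot 2^{\ell} k^2/n^2$. On $A_{>\ell}$, grouping by $j = |S \cap T|$ and combining Lemma \ref{l:avgCorr} with the overlap count above yields
$$\sum_{T \in A_{>\ell}} \bigl| \la \Dh_S, \Dh_T \ra_D \bigr| \le \sum_{j = \ell+1}^{k} 2^{j}\frac{k^2}{n^2}\, |\cS|\, n^{-2\delta j} = |\cS|\,\frac{k^2}{n^2}\sum_{j = \ell+1}^{k}\bigl(2 n^{-2\delta}\bigr)^{j}.$$
Since $\delta \ge 1/\log n$ forces $n^{2\delta} \ge e^{2} > 4$, the ratio $2 n^{-2\delta}$ is below $1/2$, so the geometric tail is at most $2(2 n^{-2\delta})^{\ell+1}$ and the quantity above is at most $|\cS| \cdot 2^{\ell+2} n^{-2\delta(\ell+1)}\, k^2/n^2$. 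Finally the hypothesis $|A| \ge 2|\cS|/n^{2\ell\delta}$ rewrites as $|\cS| \le |A|\, n^{2\ell\delta}/2$; substituting this bounds the $A_{>\ell}$ contribution by $|A| \cdot 2^{\ell+1} n^{-2\delta}\, k^2/n^2 \le |A| \cdot 2^{\ell+1} k^2/n^2$. Adding the two pieces and dividing by $|A|$ gives $\frac{1}{|A|}\sum_{T\in A}\bigl|\la \Dh_S,\Dh_T\ra_D\bigr| \le (2^{\ell} + 2^{\ell+1})k^2/n^2 \le 2^{\ell+3}k^2/n^2$, as claimed.

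There is no substantive difficulty here; the proof is a direct adaptation of \cite{feldman2013statistical}. The only points needing care are getting the transversal count right --- in particular, that summing over ``$\ge j$'' rather than ``$= j$'' loses only a harmless constant while preserving the $\binom{k}{j}$ shape --- and checking that $\delta \ge 1/\log n$ is precisely what keeps $2 n^{-2\delta}$ bounded away from $1$, so that the tail series decays geometrically in $\ell$. I would also note the trivial edge cases: when $\ell = k$ the set $A_{>\ell}$ is empty and only the first term survives, and since the statement is asymptotic any finitely many small values of $n$ may be ignored.
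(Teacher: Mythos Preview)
Your proof is correct. The combinatorial overlap bound $|\{T\in\cS:|S\cap T|\ge j\}|\le \binom{k}{j}(n/k)^{k-j}\le |\cS|\,n^{-2\delta j}$ is exactly the $k$-partite analogue of what is needed, and the split into $A_{\le\ell}$ and $A_{>\ell}$ followed by a geometric-tail estimate goes through cleanly. The check that $\delta\ge 1/\log n$ forces $2n^{-2\delta}\le 2e^{-2}<1/2$ is also correct, and the final arithmetic $2^{\ell}+2^{\ell+1}=3\cdot 2^{\ell}\le 2^{\ell+3}$ is fine.

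The paper's proof takes a different route: rather than splitting $A$ by intersection size, it first reduces to the extremal case where $A$ consists of the transversals with largest overlap with $S$, introduces $\lambda_0=\min\{\lambda:T_\lambda\cap A\neq\varnothing\}$, and shows via the ratio $|T_j|/|T_{j+1}|\ge jn/k^2$ that the size hypothesis on $|A|$ forces $\lambda_0\le\ell$; the sum $\sum_{T\in A}2^{|S\cap T|}$ is then bounded by $2^{\lambda_0+3}|A|$. Your direct splitting argument avoids the extremal reduction and the $\lambda_0$ bookkeeping altogether, trading them for the explicit use of $|\cS|\le |A|\,n^{2\ell\delta}/2$ in the tail. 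Both arguments rest on the same combinatorial decay $(k^2/n)^j\le n^{-2\delta j}$; yours is somewhat more elementary and self-contained, while the paper's stays closer to the original presentation in \cite{feldman2013statistical}.
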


This lemma now implies the following statistical dimension lower bound.

\begin{theorem}[Analogue of Theorem 5.3 of \cite{feldman2013statistical}]\label{t:SQdim}
	For $\delta\geq 1/\log n$ and $k\leq n^{1/2-\delta}$ let $\cZ$ denote the distributional bipartite {\kpc } problem. If $\ell\leq k$ then $SDA(\cZ, 2^{\ell+3} k^2/n^2,\big(\frac nk\big) ^{-k} ) \geq n^{2\ell \delta}/8$. 
\end{theorem}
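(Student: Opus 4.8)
The plan is to follow the argument of Theorem 5.3 in \cite{feldman2013statistical} essentially verbatim, the only changes being bookkeeping to account for the partition-restricted solution set $\cS$. Take the reference distribution $D$ to be the uniform distribution on $\{0,1\}^n$ and let $\cD_D = \{D_S : S \in \cS\}$, so that $|\cD_D| = |\cS| = (n/k)^k$. First I would check that the pair $(D,\cD_D)$ is valid for the solution-set bound $\eta = (n/k)^{-k}$: since distinct subsets in $\cS$ index distinct distributions $D_S$, the set $\cZ^{-1}(f)$ has size exactly one for every solution $f$, so $\cD_f = \cD_D \setminus \cZ^{-1}(f)$ satisfies $|\cD_f| = |\cS| - 1 = (1 - (n/k)^{-k})\,|\cS|$, meeting the requirement in Definition~\ref{d:statDimProblem}.

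Next, fix the integer $\ell \le k$ appearing in the statement and set $L = n^{2\ell\delta}/8$. Given an arbitrary solution $f$ and an arbitrary subset $\cD' \subseteq \cD_f$ with $|\cD'| \ge |\cD_f|/L$, let $A \subseteq \cS$ be the index set with $\cD' = \{D_T : T \in A\}$. For $n$ large, $|\cD_f| = |\cS| - 1 \ge |\cS|/2$, hence $|A| = |\cD'| \ge |\cS|/(2L) = 4|\cS|/n^{2\ell\delta} \ge 2|\cS|/n^{2\ell\delta}$, which is exactly the size hypothesis needed to invoke Lemma~\ref{l:avgCorrLargeSets}. I would then expand the average correlation as
$$
|\rho(\cD', D)| = \frac{1}{|A|^2}\sum_{S,T\in A}\big|\la \Dh_S, \Dh_T\ra_D\big| = \frac{1}{|A|}\sum_{S\in A}\left(\frac{1}{|A|}\sum_{T\in A}\big|\la \Dh_S, \Dh_T\ra_D\big|\right),
$$
apply Lemma~\ref{l:avgCorrLargeSets} to the inner average for each fixed $S \in A \subseteq \cS$ to bound it by $2^{\ell+3}k^2/n^2$, and then average the resulting constant over $S \in A$. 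This yields $|\rho(\cD', D)| \le 2^{\ell+3}k^2/n^2 = \gamma$, so $\ell(D,\cD_D) \ge L$, which gives $\mathrm{SDA}(\cZ, 2^{\ell+3}k^2/n^2, (n/k)^{-k}) \ge n^{2\ell\delta}/8$.

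Since Lemma~\ref{l:avgCorrLargeSets} is already in hand (it is where the real work — the dyadic decomposition of $\cS$ by intersection size and the counting bound from Lemma~\ref{l:avgCorr} — takes place), there is no genuinely hard step here: the argument is a plain double averaging over the pairwise-correlation bound. The one point requiring a little care is that Definition~\ref{d:statDimProblem} asks for the strict inequality $|\rho(\cD', D)| < \gamma$ whereas the averaging above produces $\le \gamma$; this is handled by observing that the size hypothesis is satisfied with slack ($|A| \ge 4|\cS|/n^{2\ell\delta}$ versus the threshold $2|\cS|/n^{2\ell\delta}$), so re-running the proof of Lemma~\ref{l:avgCorrLargeSets} with this slack pushes the constant strictly below $2^{\ell+3}$, or one simply absorbs it into an infinitesimally larger $\gamma$. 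Finally, combining this statistical-dimension lower bound with Theorem~\ref{t:sampleBound} yields the claimed VSTAT and unbiased-oracle query lower bounds for distributional bipartite $k$\textsc{-pc}, and the identical computation with edge probabilities $p = 1/2 + n^{-\delta}$, $q = 1/2$ in place of $p = 1$ gives the corresponding bound for $k$\textsc{-pds}, which we omit.
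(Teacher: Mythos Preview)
Your proposal is correct and follows essentially the same approach as the paper: take the uniform reference distribution and $\cD_D=\{D_S:S\in\cS\}$, verify the solution-set bound since each $\cZ^{-1}(f)$ is a singleton, and then for any sufficiently large $\cD'$ invoke Lemma~\ref{l:avgCorrLargeSets} (via the double average you wrote out) to bound $\rho(\cD',D)$ by $2^{\ell+3}k^2/n^2$. The paper's proof is terser and does not spell out the bookkeeping for the factor of $8$ or the strict-inequality issue, but the argument is the same.
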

\begin{proof}
For each clique position $S$ let $\cD_S = \cD\setminus\{D_S\}$. Then $|\cD_S| = \big(\frac nk\big) ^k -1=\big(1-\big(\frac nk\big) ^{-k}\big)|\cD|$. Now for any $\cD'$ with $|\cD'|\geq 2|\cS|/ n^{2\ell \delta}$ we can apply Lemma~\ref{l:avgCorrLargeSets} to conclude that $\rho(\cD',D)\leq 2^{\ell + 3}k^2/n^2$. By Definition~\ref{d:statDimProblem} of statistical dimension this implies the bound stated in the theorem. 
\end{proof}

Applying Theorem~\ref{t:sampleBound} to this statistical dimension lower bound yields the following hardness for statistical query algorithms.

\begin{corollary}[SQ lower bound for recovery in \kpc]
	For any constant $\delta>0$ and $k\leq n^{1/2-\delta}$, any SQ algorithm that solves the distributional bipartite {\kpc } problem requires $\Omega(n^2/k^2\log n)=\tilde \Omega(n^{1+2\delta})$ queries to the Unbiased Oracle.
\end{corollary}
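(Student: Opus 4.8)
The plan is to combine the statistical dimension lower bound of Theorem~\ref{t:SQdim} with the sample-complexity bound of Theorem~\ref{t:sampleBound}. First I would fix a constant $\delta > 0$ and $k \leq n^{1/2 - \delta}$, and choose the correlation parameter in Theorem~\ref{t:SQdim}. The natural choice is to take $\ell$ a small constant, say $\ell = 1$, which gives $\mathrm{SDA}\big(\cZ, 2^{4} k^2/n^2, (n/k)^{-k}\big) \geq n^{2\delta}/8$. More precisely, to squeeze out the $\tilde\Omega(n^{1+2\delta})$ bound one wants $\gamma$ as small as possible while keeping $\ell$ large enough that $n^{2\ell\delta}$ dominates; since the query lower bound from the Unbiased Oracle is $\min\{\ell/4, 1/(48\gamma)\}$ up to constants, and $1/\gamma \asymp n^2/k^2$, I would actually take $\ell$ to be a slowly growing quantity like $\ell = \lceil 1/\delta \rceil$ or simply note that for $k = n^{1/2-\epsilon}$ with small $\epsilon$ one has $n^2/k^2 = n^{1+2\epsilon}$, so the bottleneck term is $1/(48\gamma) = \Theta(n^2/k^2) = \Omega(n^2/(k^2 \log n))$.

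Second, I would verify the hypotheses of Theorem~\ref{t:sampleBound}: set $\eta = (n/k)^{-k}$, which is certainly at most $1/6$ for $n$ large, and take the target success probability $\delta_{\mathrm{succ}} = 2/3 > \eta$. The theorem then says any SQ algorithm solving $\cZ$ with probability $2/3$ needs at least $\min\{\ell/4, 1/(48\gamma)\}$ samples from the Unbiased Oracle, where $\gamma = 2^{\ell+3} k^2/n^2$ and $\ell$ is the statistical dimension bound $n^{2\ell'\delta}/8$ for the chosen $\ell'$. Plugging in, $1/(48\gamma) = n^2/(2^{\ell'+3} \cdot 48 k^2) = \Omega(n^2/k^2)$ treating $\ell'$ as a constant, and $\ell/4 = n^{2\ell'\delta}/32$, which for $\ell' \geq 1$ and $k = \Theta(n^{1/2-\delta})$ is $n^{2\delta}/32$ — this is the smaller of the two, giving the bound $\Omega(n^{2\delta})$; but by instead choosing $\ell'$ large enough (a constant depending on $\delta$, e.g. $\ell' \geq 1/(2\delta)$) that $n^{2\ell'\delta} \geq n \geq n^2/k^2$, the $1/(48\gamma)$ term becomes the minimum and one recovers $\Omega(n^2/k^2) = \Omega(n^2/(k^2\log n))$. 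Using $k^2 \leq n^{1-2\delta}$ this is $\tilde\Omega(n^{1+2\delta})$.

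Third, I would write this up cleanly: one short paragraph selecting $\ell' = \lceil 1/\delta\rceil$ (a constant since $\delta$ is constant), applying Theorem~\ref{t:SQdim} to get $\ell := \mathrm{SDA} \geq n^{2\ell'\delta}/8 \geq n^2/8$, and applying Theorem~\ref{t:sampleBound} with $\gamma = 2^{\ell'+3}k^2/n^2$, $\eta = (n/k)^{-k} \leq 1/6$, success probability $2/3$, to conclude the number of Unbiased Oracle calls is at least $\min\{\ell/4, 1/(48\gamma)\} = \min\{n^2/32, \, n^2/(48 \cdot 2^{\ell'+3} k^2)\} = \Omega(n^2/k^2)$. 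Finally substitute $k^2 = \Theta(n^{1-2\delta})$ and absorb the $\log n$ from the phrasing $n^2/(k^2\log n)$ (which is even weaker, and follows a fortiori) to state the result as $\Omega(n^2/(k^2\log n)) = \tilde\Omega(n^{1+2\delta})$.

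I do not anticipate a genuine obstacle here, since the statement is explicitly the analogue of \cite{feldman2013statistical} and all the real work (Lemma~\ref{l:avgCorr}, Lemma~\ref{l:avgCorrLargeSets}, Theorem~\ref{t:SQdim}) has already been done; the only mild care needed is the bookkeeping in optimizing the free parameter $\ell'$ so that the $1/(48\gamma)$ branch of the $\min$ is the active one and the constant-vs-growing distinction is handled correctly. The one place to be slightly careful is that $\eta = (n/k)^{-k}$ must be verified to be $\leq 1/6$ and also $\leq \delta_{\mathrm{succ}}$, but this is immediate for any $k \geq 1$ and $n \geq 7$. So the proof will be about a paragraph of parameter selection followed by a direct invocation of the two quoted theorems.
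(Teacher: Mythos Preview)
Your approach is correct and matches the paper's, which simply states that the corollary follows by applying Theorem~\ref{t:sampleBound} to the statistical dimension bound of Theorem~\ref{t:SQdim} without spelling out the parameter choice. One small slip: the chain $n^{2\ell'\delta} \geq n \geq n^2/k^2$ in your second paragraph is false (since $k \leq n^{1/2-\delta}$ gives $n^2/k^2 \geq n^{1+2\delta} > n$), but your clean write-up in the third paragraph with $\ell' = \lceil 1/\delta\rceil$, yielding $n^{2\ell'\delta} \geq n^2 \geq n^2/k^2$, fixes this and is exactly right.
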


This is to be interpreted as impossible, as there are only $n$ RHS vertices/samples available in the actual bipartite graph. Because all the quantities in Theorem~\ref{t:SQdim} are the same as in \cite{feldman2013statistical} up to constants, the same logic as used there allows to deduce a statement regarding the hypothesis testing version, stated there as Theorems 2.9 and 2.10.

\begin{corollary}[SQ lower bound for decision version of \kpc]
	For any constant $\delta>0$, suppose $k\leq n^{1/2-\delta}$. Let $D=\mathrm{Unif}(\{0,1\}^n)$ and let $\cD$ be the set of all planted bipartite {\kpc } distributions (one for each clique position). Any SQ algorithm that solves the hypothesis testing problem between $\cD$ and $D$ with probability better than $2/3$ requires $\Omega(n^2/k^2)$ queries to the Unbiased Oracle. 
A similar statement holds for VSTAT. There is a $t = n^{\Omega(\log n)}$ such that any randomized SQ algorithm that solves the hypothesis testing problem between $\cD$ and $D$ with probability better than $2/3$ requires at least $t$ queries to $VSTAT(n^{2-\delta}/k^2)$. 
\end{corollary}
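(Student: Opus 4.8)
The plan is to transfer the argument of \cite{feldman2013statistical} for ordinary \textsc{pc} (their Theorems~2.9 and~2.10) essentially verbatim: the only problem-specific inputs to that argument are bounds on the pairwise and average correlations of the planted distributions, and these are already supplied, unchanged up to absolute constants, by Lemma~\ref{l:avgCorr}, Lemma~\ref{l:avgCorrLargeSets} and Theorem~\ref{t:SQdim}. Recall the shape of the decision-version machinery of \cite{feldman2013statistical}: if for some $\gamma>0$ and integer $\ell$ every subset $\cD'\subseteq\cD$ with $|\cD'|\geq|\cD|/\ell$ has average correlation $\rho(\cD',D)<\gamma$, then any SQ algorithm distinguishing $\cD$ from $D$ with probability better than $2/3$ requires at least $\ell$ queries to $VSTAT(1/(3\gamma))$ and $\Omega(\min\{\ell,\gamma\inv\})$ queries to the Unbiased Oracle.

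First I would extract from Lemma~\ref{l:avgCorrLargeSets} a whole family of correlation bounds. Fix an integer $L\leq k$. For any $A\subseteq\cS$ with $|A|\geq 2|\cS|/n^{2L\delta}$, bounding the average over pairs by the maximum over one index and then invoking Lemma~\ref{l:avgCorrLargeSets} gives
$$\rho\big(\{D_T:T\in A\},D\big)\;=\;\frac{1}{|A|^2}\sum_{S,T\in A}\big|\la\Dh_S,\Dh_T\ra_D\big|\;\leq\;\max_{S\in A}\frac{1}{|A|}\sum_{T\in A}\big|\la\Dh_S,\Dh_T\ra_D\big|\;\leq\;2^{L+3}\frac{k^2}{n^2}.$$
Writing $\gamma_L:=2^{L+3}k^2/n^2$ and $\ell_L:=n^{2L\delta}/2$, this is exactly the hypothesis of the decision-version machinery with parameters $(\gamma_L,\ell_L)$. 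Hence, for every $L\leq k$, distinguishing the family $\cD$ of planted bipartite {\kpc } distributions from $D=\mathrm{Unif}(\{0,1\}^n)$ with probability better than $2/3$ requires at least $\ell_L$ queries to $VSTAT(1/(3\gamma_L))$ and $\Omega(\min\{\ell_L,\gamma_L\inv\})$ queries to the Unbiased Oracle.

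It remains only to choose $L$. For the Unbiased Oracle bound, take $L=\lceil 1/\delta\rceil$, a constant (which is $\leq k$ for all large $n$ in the regime of interest); then $2^{L+3}=O(1)$, so $\gamma_L\inv=\Theta(n^2/k^2)$, while $\ell_L=n^{2L\delta}/2\geq n^{2}/2\geq\gamma_L\inv$, and therefore $\min\{\ell_L,\gamma_L\inv\}=\Theta(n^2/k^2)$, which is the claimed $\Omega(n^2/k^2)$ bound. For the $VSTAT$ statement, take $L=\lfloor\tfrac{\delta}{2}\log_2 n\rfloor$, so that $1/(3\gamma_L)=n^{2}/(3\cdot 2^{L+3}k^2)\geq n^{2-\delta/2}/(24k^2)\geq n^{2-\delta}/k^2$ once $n^{\delta/2}\geq 24$; then $t:=\ell_L=n^{2L\delta}/2=n^{\Omega(\log n)}$, and since $VSTAT(n^{2-\delta}/k^2)$ is a weaker oracle than $VSTAT(1/(3\gamma_L))$, distinguishing $\cD$ from $D$ requires at least $t$ queries to $VSTAT(n^{2-\delta}/k^2)$.

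There is no genuine obstacle here; the one point of substance is that passing to the partition-respecting family $\cS=\{S:|S|=k,\ |S\cap E_i|=1\text{ for all }i\}$, with $|\cS|=(n/k)^k$, must not weaken any estimate — and this is precisely what Lemma~\ref{l:avgCorr} and Lemma~\ref{l:avgCorrLargeSets} already verify, the decisive input being the count $|\{T\in\cS:|S\cap T|=j\}|=\binom{k}{j}(n/k-1)^{k-j}$, a clean analogue of the unrestricted count in \cite{feldman2013statistical} producing the same geometric-sum bound. Everything else is bookkeeping: tracking which $\delta$ is which, checking $L\leq k$ for the two settings above, and translating between the search-problem formalism of Definition~\ref{d:statDimProblem}/Theorem~\ref{t:sampleBound} and the hypothesis-testing formalism of Theorems~2.9 and~2.10 of \cite{feldman2013statistical}, whose constants all match ours up to absolute factors.
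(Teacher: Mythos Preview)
Your proposal is correct and takes essentially the same approach as the paper: the paper does not give a standalone proof of this corollary but simply observes that since the statistical dimension bounds in Theorem~\ref{t:SQdim} match those in \cite{feldman2013statistical} up to constants, the hypothesis-testing argument of their Theorems~2.9 and~2.10 carries over verbatim. You in fact supply more detail than the paper does, making explicit the two parameter choices $L=\lceil 1/\delta\rceil$ and $L=\lfloor\tfrac{\delta}{2}\log_2 n\rfloor$ that extract the Unbiased Oracle and $VSTAT$ bounds respectively from the family of correlation estimates indexed by $L$.
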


\section*{Acknowledgements}

We are greatly indebted to Jerry Li for introducing the conjectured statistical-computational gap for robust sparse mean estimation and for discussions that helped lead to this work. We thank Ilias Diakonikolas for pointing out the statistical query model construction in \cite{diakonikolas2017statistical}. We also thank Frederic Koehler, Sam Hopkins, Philippe Rigollet, Enric Boix Adser\`{a}, Dheeraj Nagaraj and Austin Stromme for inspiring discussions on related topics.

\bibliography{GB_BIB.bib}
\bibliographystyle{alpha}

\appendix

\section{Deferred Proofs}
\label{sec:appendix}

In this section, we present the deferred proofs from the body of the paper. We first present the proof of Lemma \ref{lem:tvacc}. 

\begin{proof}[Proof of Lemma \ref{lem:tvacc}]
This follows from a simple induction on $m$. Note that the case when $m = 1$ follows by definition. Now observe that by the data-processing and triangle inequalities of total variation, we have that if $\mathcal{B} = \mathcal{A}_{m-1} \circ \mathcal{A}_{m-2} \circ \cdots \circ \mathcal{A}_1$ then
\begin{align*}
\TV\left( \mathcal{A}(\mP_0), \mP_m \right) &\le \TV\left( \mathcal{A}_m \circ \mathcal{B}(\mP_0), \mathcal{A}_m(\mP_{m - 1}) \right) + \TV\left(\mathcal{A}_m(\mP_{m - 1}), \mP_m \right) \\
&\le \TV\left( \mathcal{B}(\mP_0), \mP_{m - 1} \right) + \epsilon_m \\
&\le \sum_{i = 1}^m \epsilon_i
\end{align*}
where the last inequality follows from the induction hypothesis applied with $m - 1$ to $\mathcal{B}$. This completes the induction and proves the lemma.
\end{proof}

We now present the proof of Proposition \ref{prop:lowdegree}, which is similar to the computation of the Fourier spectrum of $\pr{pc}$ in \cite{hopkinsThesis}. We only provide a sketch of details similar to \cite{hopkinsThesis} for brevity.

\begin{proof}[Proof of Proposition \ref{prop:lowdegree}]
Recall that in $k$-\textsc{pc} the nodes are partitioned into $k$ sets $E_1,\dots, E_k$ of size $n/k$ each. Denote by $S$ the clique vertices. We are guaranteed that $|S\cap E_i|=1$ for all $1\leq i\leq k$, and thus the edges between nodes in any given $E_i$ contain no information and can be removed without changing the clique. We take the set of possible edges $\mathcal{E}_0\subset {[n]\choose 2}$ in an instance of $k$-\textsc{pc} to be pairs $ij$ with $i$ and $j$ from different partitions. 
Let $\mathcal{S}=\{S:|S\cap E_i|=1\}$ be the collection of all size $k$ subsets respecting the given partition $E$. Note that choosing an $S$ uniformly from $\mathcal{S}$ amounts to selecting a single node uniformly at random from each set in the partition.  Let $P_S$ be the distribution on graphs such that $X_{ij}=1$ if $i\in S$ and $j\in S$ and otherwise $X_{ij}=\pm 1$ with probability half each. The uniform mixture over valid $S$ is denoted by $P=\E_{S\sim \mathrm{Unif}(\mathcal{S})} P_S$.

Now let $\alpha\subseteq \mathcal{E}_0$ be a subset of possible edges. The set of functions $\{\chi_\alpha(X) = \prod_{e\in \alpha} X_e: \alpha \subseteq \mathcal{E}_0\}$ comprises the standard Fourier basis on $\{\pm 1\}^{\mathcal{E}_0}$. Consider a fixed clique $S$. Just as for standard \textsc{pc}, because $\E_{P_S} X_e=0$ if $e\notin {S\choose 2}$ and non-clique edges are independent, we see that $\E_{P_S} [\chi_\alpha (X) ]= \ind \{V(\alpha)\subseteq S\}$ where $V(\alpha)$ is the set of nodes covered by edges in $\alpha$. Thus, if $V(\alpha)$ has at most one node per size $n/k$ set, then the Fourier coefficients are
$$
\E_P[\chi_\alpha (X)] = \P_{S\sim \mathrm{Unif}}(V(\alpha)\subseteq S) = \Big(\frac{1}{n/k}\Big)^{|V(\alpha)|} = \Big(\frac{k}n\Big)^{|V(\alpha)|} \,,
$$
and otherwise $\E_P[\chi_\alpha (X)] =0$. 

Remarkably, as can be seen from \cite{hopkinsThesis} or \cite{barak2016nearly} this is precisely the same Fourier coefficient as for the version of planted clique where each node is independently included in $S$ with probability $k/n$. Because the set of Fourier coefficients is indexed by $\mathcal{E}_0$ in $k$-\textsc{pc} and this is a subset of the set of Fourier coefficients in standard \textsc{pc}, it immediately follows that the quantity of interest in \eqref{e:energy} is smaller in $k$-\textsc{pc} relative to \textsc{pc}. Thus $k$-\textsc{pc} is at least as hard as \textsc{pc} from the perspective of low-degree polynomial tests. 

We briefly sketch the calculation showing a constant bound on the Fourier energy of $k$-\textsc{pc} for sets of size $|\alpha|\leq D$ for $D = C\log n$, following the calculation for \textsc{pc} in \cite{hopkinsThesis}. 
Note that if $|\alpha|\leq D$, then $|V(\alpha)|\leq 2D $ and for every $t\leq 2D$ we may bound the number of sets $\alpha\in \mathcal{E}_0$ with $|V(\alpha)|=t$ and $|\alpha| \leq D$ as 
\begin{equation}\label{e:FourierSets}
{k\choose t} \Big(\frac nk\Big)^t {{t\choose 2}\choose |\alpha|}
\leq
{k\choose t} \Big(\frac nk\Big)^t t^{\min(2D, 2t^2)}\leq n^t t^{\min(2D, 2t^2)}\,.
\end{equation}
The total Fourier energy for $|\alpha|\leq D=C\log n$ is
\begin{align*}
\sum_{\alpha \subseteq \mathcal{E}_0\atop 0< |\alpha|\leq D} (\E_{H_0} \chi_\alpha(X))^2
\leq
 \sum_{t\leq \sqrt{C\log n}} \Big(\frac kn\Big)^{2t} n^{t}t^{2t^2} + \sum_{\sqrt{C\log} n< t\leq 2C\log n}\Big(\frac kn\Big)^{2t}n^{t}t^{2C\log n}
\end{align*}
and if $k = n^{1/2 - \epsilon}$ then this is at most
$$
\sum_{t\leq \sqrt{C\log n}} n^{-2\epsilon t} t^{2t^2} + \sum_{\sqrt{C\log} n< t\leq 2C\log n} n^{-2\epsilon t}
t^{2C\log n} = O(1)\,.
$$

To compute the Fourier coefficients for $k$-$\textsc{pds}(n,k,p,q)$ with $p= 1/2 + n^{-\delta}$ and $q=1/2$, we express $\mathrm{Bern}(p)$ as the mixture $\mathrm{Bern}(p) = (2-2p)\cdot \mathrm{Bern}(1/2) + (2p-1)\cdot \mathrm{Bern}(1)$. The Fourier coefficient corresponding to set $\alpha$ with $V(\alpha)\subseteq S$ is then nonzero only if each of the edges selected the $\mathrm{Bern}(1)$ component of the mixture, so $\E \chi_\alpha(X) = (2p-1)^{|\alpha|} = (2n^{-\delta})^{|\alpha|}$. We will now take $D=n^{\delta/10}$ and again $k = n^{1/2 - \epsilon}$. 
By \eqref{e:FourierSets} the number of sets with $|\alpha|=r$ and $|V(\alpha)|=t$ is bounded by $n^t t^{2r}$, so 
\begin{align*}
\sum_{\alpha \subseteq \mathcal{E}_0\atop 0< |\alpha|\leq D} (\E_{H_0} \chi_\alpha(X))^2
&\leq 
\sum_{ t\leq 2D}\sum_{\alpha: |V(A)|=t\atop 0< |\alpha|\leq D} \Big(\frac k n\Big)^{2t} (2n^{-\delta})^{2|\alpha|}
\\&\leq
 \sum_{t\leq 2D}\sum_{\frac t2 \leq r\leq t^2/2}
n^t t^{2r} \Big(\frac k n\Big)^{2t}(2n^{-\delta})^{2r} 
\\&\leq
\sum_{t\leq 2D}\sum_{\frac t2 \leq r\leq t^2/2} n^{-2\epsilon t} (4n^{-\delta/2})^{2r}
\end{align*}
where the last inequality used $t\leq 2n^{\delta/10}$. This last quantity is $O(1)$.
\end{proof}

We now present the proof of Lemma \ref{l:avgCorrLargeSets}, which is similar to Lemma 5.2 in \cite{feldman2013statistical}.

\begin{proof}[Proof of Lemma \ref{l:avgCorrLargeSets}]
The proof is almost identical to Lemma 5.2 in \cite{feldman2013statistical} and we give a sketch here. Lemma~\ref{l:avgCorr} implies that $\sum_{T\in A}\big| \la \Dh_S, \Dh_T\ra_D \big| \leq \sum_{T\in A} 2^{|S\cap T|} k^2 / n^2$. If the only constraint on $A$ is its cardinality, then the maximum value for the RHS is obtained by adding $S$ to $A$, next $\{T:|T\cap S|=k-1\}$, and so forth with decreasing size of $|T\cap S|$, and we assume that $A$ is defined in this manner. Letting $T_\lambda = \{T: |T\cap S|=\lambda\}|$, set $\lambda_0 = \min\{\lambda: T_\lambda\neq \varnothing\}$ so that $T_\lambda\subseteq A$ for $\lambda>\lambda_0$. We bound the ratio
$$
\frac{|T_j|}{|T_{j+1}|} = \frac{{k\choose j}\big(\frac nk\big)^{k-j}}{{k\choose j+1}\big(\frac nk\big)^{k-j-1}}\geq \frac{jn}{k^2}=j n^{2\delta}\quad \text{hence}\quad |T_j|\leq \frac{|T_0|}{(j-1)! n^{2\delta j}}\leq \frac{|\cS|}{(j-1)! n^{2\delta j}}\,.
$$ 
Now
$$
|A|\leq \sum_{j\geq \lambda_0} |T_j| \leq |\cS|n^{-2\delta \lambda_0}\sum_{j\geq \lambda_0} \frac1{(j-1)!n^{2\delta(j-\lambda_0)}}\leq 2 |\cS|n^{-2\delta \lambda_0}
$$ for $n$ greater than some constant. Thus if $|A|\geq 2|\cS|/ n^{2\ell \delta}$, we must conclude that $\ell \geq \lambda_0$. We bound the quantity $\sum_{T\in A} 2^{|S\cap T|} \leq \sum_{j=\lambda_0}^k 2^j|T_j\cap A|\leq 2^{\lambda_0}|T_{\lambda_0}\cap A|+\sum_{j=\lambda_0+1}^k 2^j|T_j|\leq 2^{\lambda_0}|A| + 2^{\lambda_0+2}|T_{\lambda_0+1}|\leq 2^{\lambda_0+3}|A|\leq 2^{\ell +3}|A|$. Here we used that $|T_{j+1}|\leq |T_j| n^{-2\delta}$ to bound by a geometric series and also that $T_{\lambda_0+1}\subseteq A$. Rearranging and combining with the inequality at the start of the proof concludes the argument.
\end{proof}

\end{document}